\keywords{coalgebraic predicate logic,  coalgebraic model theory, coalgebraic proof theory}
\newlength{\croutw}
\newlength{\crouth}
\newcommand{\crossout}[1]%
        {\settowidth{\croutw}{$#1$}\settoheight{\crouth}{$#1$}#1%
        \hspace{-1.0\croutw}\raisebox{0.3\crouth}{\rule{\croutw}{0.1ex}}}
\newcommand{\commentout}[1]{\ignorespaces}
\newcommand{\Set}{\mathsf{Set}}
\newcommand{\bit}{\begin{itemize}}
\newcommand{\eit}{\end{itemize}}
\newcommand{\comp}{\circ}
\newcommand{\Cls}{\mathcal}
\newcommand{\Op}{{op}}
\newcommand{\CE}{{\Cls E}}
\newcommand{\Sem}[1]{{[\![#1]\!]}}
\newcommand{\modimpl}{\to}
\newcommand{\Pfin}{\mathcal P_{\mathit{fin}}}
\newlength{\myboxwidth}
\newcommand{\Lang}{\mathcal{L}}
\newcommand{\gldiamond}[1]{\Diamond_{#1}}
\newcommand{\Nat}{{\mathbb{N}}}
\newcommand{\Rat}{{\mathbb{Q}}}
\newcommand{\SDist}{\mathcal{S}}%% Provisionary!!
\newcommand{\SDistN}{\mathcal{S}^{\mathit{rc}}}%% Provisionary!!
\newcommand{\SDistZ}{\mathcal{S}^Z}%% Provisionary!!
\newcommand{\Bag}{\mathcal{B}}
\newcommand{\Prop}{\mathsf{Prop}}
\newcommand{\Rules}{\mathcal{R}}
\newcommand{\hearts}{\heartsuit}
\newcommand{\finsubset}{\subseteq_{\mathit{fin}}}
\newcommand{\FA}{\mathfrak{A}}
\newcommand{\Pow}{\mathcal{P}}
\newcommand{\Land}{\bigwedge}
\newcommand{\Lor}{\bigvee}
\newcommand{\Hilb}{\mathcal{H}}
\newcommand{\Bound}{\mathcal{B}}
\newcommand{\Con}{\mathsf{Con}}
\newcommand{\Cut}{\mathsf{Cut}}
\newcommand{\MCut}{\mathsf{MCut}}
\newcommand{\Seq}{\mathcal{S}}
\newcommand{\entails}{\vdash}
\newcommand{\To}{\Rightarrow}
\newcommand{\col}{\!:\!}
\newcommand{\lcomp}{\lceil}
\newcommand{\rcomp}{\rceil}
\newcommand{\lsem}{\llbracket}
\newcommand{\rsem}{\rrbracket}
\newcommand{\inv}{^{-1}}
\newcommand{\CPow}{\mathcal{Q}}
\newcommand{\Moni}{\rul{Mon}_i}
\newcommand{\Pastek}{\rul{Paste}^k_i}
\newcommand{\STcpl}{\mathit{ST}}			%standard translation
\newcommand{\ST}[1]{\mathit{ST}_{#1}}			%standard translation
\newcommand{\STwr}[1]{\mathit{STmod}_{#1}}
\newcommand{\HT}{\mathit{HT}}
\newcommand{\deq}{=}%{:=}					%definitional equality
\newcommand{\PrL}{\heartsuit}				%predicate lifting
\newcommand{\PrLSet}{\mathrm{\Lambda}}			%set of predicate liftings
\newcommand{\arty}{\mathsf{ar}}
\newcommand{\FoV}{\ensuremath{\mathsf{iVar}}}		%first-order variables
\newcommand{\SoV}{\Sigma}%\ensuremath{\mathsf{pSymb}}}		%second-order variables (predicate symbols)
\newcommand{\WoV}{\ensuremath{\mathsf{wVar}}}		%world variables
\newcommand{\eqF}{=}%{\underline{=}}			%equality formula
\newcommand{\inF}[2]{#2(#1)}				%in formula
\newcommand{\botF}{\bot}				%bot formula
\newcommand{\impF}{\to}					%implication formula
\newcommand{\uqF}{\forall}
\newcommand{\eqvF}{\leftrightarrow}			%equivalence formula
\newcommand{\orF}{\vee}					%or formula
\newcommand{\andF}{\wedge}				%and formula	
\newcommand{\tsc}[1]{\textsc{#1}}		
\newcommand{\botM}{\bot}				%bot modal formula (same as predicate case?)
\newcommand{\impM}{\to}					%imp modal formula (same as predicate case?)
\newcommand{\gloM}{\mathsf{A}}				%global modality
\newcommand{\daa}{\downarrow\!}				%downarrow ("naked")
\newcommand{\FoF}{\mathrm{CPL}}				%first-order formulas (Chang style)
\newcommand{\MoF}{\ensuremath{\mathrm{CML}_{\PrLSet}}}
\newcommand{\HyFA}{\ensuremath{\mathrm{H}_{\PrLSet}(@)}}
\newcommand{\HyFDA}{\ensuremath{\mathrm{H}_{\PrLSet}(\daa,@)}}
\newcommand{\HyFDG}{\ensuremath{\mathrm{H}_{\PrLSet}(\daa,\gloM)}}
\newcommand{\HyFUA}{\ensuremath{\mathrm{H}_{\PrLSet}(\forall,@)}}
\newcommand{\finin}{\finsubset}			%finite subsets
\newcommand{\sbst}[3]{#1[#3/#2]}		%"substitution" (capture avoiding, sem. effect of)
\newcommand{\ntoc}{\mbox{ fresh for }} %\# }				%not occurs
\newcommand{\axiom}[1]{\tsc{#1}}			%axiom labels
\newcommand{\axref}[1]{\tsc{#1}}			%referring to axioms (same?)
\newcounter{ender}
\newcommand{\yprf}{\forall\vec{y}.}		%"y prefix" of Enderton axioms
\newcommand{\lb}{\left(}				%left bracket
\newcommand{\rb}{\right)}				%right bracket
\newcommand{\card}[1]{|#1|}				%cardinality
\newcommand{\Pvr}{P}%\mathtt{P}}
\newcommand{\Svr}{S}%\mathtt{S}}
\newcommand{\vmd}[2]{v[#1/#2]}
\newcommand{\lcmh}{\lceil}% other candidates {\{\!\!\{}, \ulcorner, \textlquill, ( or { 
\newcommand{\rcmh}{\rceil}%{\}\!\!\}}
\newcommand{\cmh}[2]{\lcmh#1:#2\rcmh}
\newcommand{\ultra}{\mathfrak{U}}
\newcommand{\eat}[1]{}
\newcommand{\schv}[1]{\boldsymbol{#1}}
\newcommand{\sva}{\schv{p}}
\newcommand{\svb}{\schv{q}}
\newcommand{\svc}{\schv{r}}
\newcommand{\svd}{\schv{s}}
\newcommand{\svA}{\schv{A}}
\newcommand{\svB}{\schv{B}}
\newcommand{\svW}{\schv{W}}
\newcommand{\svX}{\schv{P}}
\newcommand{\RSch}{\mathsf{Rank1}}%{\OMoF^1}%[1]{\ensuremath{\mathsf{R}1\mathsf{Frm}_{#1}}}
\newcommand{\BSch}{\mathsf{Prop}}%{\MSym^0}%{\ensuremath{\mathsf{BFrm}}}
\newcommand{\SchV}{\ensuremath{\mathsf{sVar}}}
\newcommand{\FV}{\ensuremath{\mathsf{FV}}}
\newcommand{\gM}{\mathfrak{M}}
\newcommand{\gN}{\mathfrak{N}}
\newcommand{\CPL}{\mathrm{CPL}}%\mathcal{CPL}_{\Lambda}}%\SoV}
\newcommand{\bel}{\flat\Lambda}
\newcommand{\bbe}{\flat\kern-1.4pt\flat}
\newcommand{\bbel}{\bbe\Lambda}
\newcommand{\gdash}{\vdash^{\Hilb\Rules}_{\bel}}
\newcommand{\bdpl}{\tsc{BdPL}${}_{\bel}$}
\newcommand{\colo}{\ensuremath{\mathfrak{c}}}
\newcommand{\Spa}{\mathcal{S}}
\newtheorem{claim2}{\textbf{Claim}}
\newcommand{\valu}{v}
\newcommand{\rul}[1]{\mathsf{#1}}
\newcommand{\bro}{\textup{(}\!}
\newcommand{\brc}{\textup{)}}
\newcommand{\nrule}[3]{\infer[#1]{#3}{#2}}
\newcommand{\refeq}[1]{(\ref{#1})}
\newcommand{\Dist}{\mathcal{D}}
\newcommand{\Sel}{\mathcal{S}}
\newcommand{\Atoms}{\mathsf{At}}
\newcommand{\fin}{\mathit{fin}}
\theoremstyle{plain}\newtheorem{defprop}[thm]{Definition and Proposition}
\ifdef{\clm}{}{ % only define claim if needed. % by thorsten, 2016-10-24
\newtheorem{clm}[thm]{Claim}
}
\definecolor{cobalt}{rgb}{0.0, 0.28, 0.67}
\newcommand{\tlnew}[1]{\tlnnote[inline,marginclue]{\textcolor{cobalt}{#1}}}
\begin{document}
\title[Model Theory and Proof Theory of Coalgebraic Predicate Logic]{Model Theory and Proof Theory of \\Coalgebraic Predicate Logic}

\author{Tadeusz Litak}
\address{ Friedrich-Alexander-Universität Erlangen-N\"{u}rnberg }
\email{tadeusz.litak@fau.de}
%\thanks{The first author gratefully acknowledges the support of EPSRC grant EP/G041296/1. Thanks are also due to all the people who kindly discussed these ideas with us, in particular to Alexander Kurz, Drew Moshier and referees of all versions of this paper.}
 
\author{Dirk Pattinson}
\address{Australian National University}
\email{dirk.pattinson@anu.edu.au}

\author{Katsuhiko Sano}
\address{Graduate School of Letters, Hokkaido University}
\email{v-sano@let.hokudai.ac.jp}

\author{Lutz Schr\"{o}der}
\address{ Friedrich-Alexander-Universität Erlangen-N\"{u}rnberg }
\email{lutz.schroeder@fau.de}

\begin{abstract} 
  We propose a generalization of first-order logic originating in a
  neglected work by C.C. Chang: a natural and generic correspondence
  language for any types of structures which can be recast as
  Set-coalgebras. We discuss axiomatization and completeness results
  for several natural classes of such logics. Moreover, we show that an
  entirely general completeness result is not possible. We study the
  expressive power of our language, both in comparison with
  coalgebraic hybrid logics and with existing first-order proposals
  for special classes of Set-coalgebras (apart from relational
  structures, also neighbourhood frames and topological spaces).
  Basic model-theoretic constructions and results, in particular
  ultraproducts, obtain for the two classes that allow
  completeness---and in some cases beyond that. Finally, we discuss a
  basic sequent system, for which we establish a syntactic
  cut-elimination result.
\end{abstract}

%\date{}

\maketitle

%\begin{multicols}{2}
%\tableofcontents

% \begin{keyword}
% Coalgebra, Modal Logic, Poset 
% \end{keyword}

\newcommand{\takeout}[1]{}

%\takeout{
\begin{flushright}
\emph{Dedicated to Ji\v{r}\'i Ad\'amek on the occasion of his seventieth birthday}
\end{flushright}
%}

%\tlntnew{don't forget to restore dedication}

%%%%%%%%%%%%%%%%%%%%%%%%%%%%%%%%%%%%%%%%%%%%%%%%%%%%%%%%%%%%%%%%%%%%%%%%%%%%%%%%%%%%%%%%%%%%%%%%%%%%%%%%%%%%%%%%%%%%%%%%%%%%%%%%%%%%%%
%%%%%%%%%%%%%%%%%%%%%%%%%%%%%%%%%%%%%%%%%%%%%%%%%%%%%%%%%%%%%%%%%%%%%%%%%%%%%%%%%%%%%%%%%%%%%%%%%%%%%%%%%%%%%%%%%%%%%%%%%%%%%%%%%%%%%
%%%%%%%%%%%%%%%%%%%%%%%%%%%%%%%%%%%%%%%%%%%%%%%%%%%%%%%%%%%%%%%%%%%%%%%%%%%%%%%%%%%%%%%%%%%%%%%%%%%%%%%%%%%%%%%%%%%%%%%%%%%%%%%%%%
\section{Introduction}\label{sec:intro}
\newcommand{\Neigh}{\mathcal{N}}

Modal logics are traditionally a core formalism in computer
science. Classically, their semantics is relational, i.e.\ a model
typically comes with a set of states and one or several binary
accessibility relations on the state set. However, non-relational
semantics of various descriptions have come to play an increasing
role, e.g.\ in concurrency, reasoning about knowledge and agency,
description logics and ontologies. Models may involve such diverse
features as concurrent games, as in coalition logic and
alternating-time temporal logic~\cite{AlurEA02,Pauly02};
probabilities~\cite{LarsenSkou91,FaginHalpern94,HeifetzMongin01};
integer weights as in the multigraph semantics of graded modal
logic~\cite{DAgostinoVisser02}; neighbourhoods~\cite{Chellas80}; and
selection functions or preference orderings as in the different
variants of conditional
logic~\cite{Lewis73,Chellas80}. \emph{Coalgebraic modal logic} serves
as a unifying framework for such non-relational modal
logics~\cite{CirsteaEA11}.

Relational modal logic  can be seen as a subset of first-order logic,
specifically as the bisimulation-invariant fragment as shown by van
Benthem for arbitrary models and later shown for finite models by
Rosen~\cite{BenthemThesis,Rosen97}. An analogous first-order
counterpart for coalgebraic modal logic has been introduced in
previous work by two of the authors~\cite{SchroderP10fossacs}. The
language described there does support a van Benthem/Rosen-style
theorem. It is quite expressive but has a fairly complex syntax with
three sorts, modelling states, sets of states, and composite states,
respectively, and is equipped with a carefully tuned Henkin-style
semantics. In the current work we develop \emph{coalgebraic predicate
  logic (CPL)}, a first-order correspondence language for coalgebraic
predicate logic that is slightly less expressive than the language
proposed originally but has a simpler syntax and a straightforward
semantics that does not require any design decisions. The naturality
of CPL is further corroborated by the fact that CPL is expressively
equivalent to hybrid logic (see the overview article by Areces and ten
Cate~\cite{ArecesTenCate07}) with satisfaction operators and universal
quantification (equivalently with the downarrow binder $\downarrow$
and a global modality). Thus, CPL not only serves as a correspondence
language for coalgebraic modal logic but also arises by adding a
standard set of desirable expressive features widely used in
specification and knowledge representation.

Our proposal originates in a largely forgotten paper by C.C. Chang
\cite{Chang73} who introduces a first-order logic of
\emph{Scott-Montague neighbourhood frames}, which in coalgebraic terms
can be seen as coalgebras for the doubly contravariant powerset
functor. Chang's original motivation was to simplify model theory for
what Montague called \emph{pragmatics} and to replace Montague's
many-sorted setting by a single-sorted one. Chang's contributions were
primarily of a model-theoretic nature. He provided
%\begin{itemize}
%\item 
adaptations of the notions of (elementary) submodel/extension,
elementary chain of models and ultraproduct and established
%\item adaptations of 
a Tarski-Vaught theorem as well as downward and upward
L\"{o}wenheim-Skolem theorems.
%\end{itemize}
%Curiously, Chang did not prove any completeness results. \todo{But someone else did.} 
%Notwithstanding the impressive collection of technical results, 
Our syntax is a notational variant of Chang's syntax; semantically, we
generalize from neighbourhood frames to coalgebras for an underlying
set functor, thus capturing the full range of non-relational
modalities indicated above.

Our semantics naturally extends coalgebraic modal logic in that it is
parametrized over an interpretation of the modal operators as
predicate liftings~\cite{Pattinson03,Schroder08}. It can thus be
instantiated with modalities such as, for instance, the standard
relational $\Diamond$; with neighbourhood-based modalities as in
Chang's original setup; with probabilistic operators $L_p$ `with
probability at least $p$'; or with a binary conditional $\Rightarrow$
`if -- then normally'. We incorporate a unary $\hearts$ into a
first-order language by allowing formulas of the form %(in case
%$\hearts$ is unary)
\begin{equation*}
t \hearts \lcmh z: \phi \rcmh  
\end{equation*}
where $t$ is a term, $\phi$ is a formula of coalgebraic predicate
logic, $z$ is a (comprehension) variable.  Such a formula
stipulates that $t$ satisfies $\hearts$, applied to the set of all $z$
that satisfy $\phi$. For example, in standard modal logic over
relational semantics, the formula $x \Diamond \lcmh z: z = y \rcmh$
says that $x$ has $y$ as a (relational) successor. In the
probabilistic setting, the formula $x L_p \lcmh y: y \neq x \rcmh$
states that the probability of moving from $x$ to a different state is
at least~$p$.

As indicated above, CPL supports a van Benthem / Rosen type result
stating essentially that coalgebraic modal logic is the
bisimulation-invariant fragment of CPL both over the class of all
structures and over the class of finite structures; this result is
proved in a companion paper~\cite{SchroderPL15:jlc}, which also
establishes a Gaifman-type theorem for CPL. In the current paper, we
establish the following results on CPL:
\begin{itemize}
\item We give a Hilbert-style axiomatization that we prove strongly
  complete for two particular classes of coalgebraic structures, viz.\
  structures that are either neighbourhood-like or \emph{bounded},
  where the latter type includes the relational and the graded case as
  well as positive Presburger modalities. As usual in model theory, strong completeness can be supplemented with a suitable variant of the Omitting Types Theorem.
\item While boundedness is a rather strong condition on structures, we
  show that the condition is fairly essential for completeness in the
  sense that within a much broader type of \emph{$\omega$-bounded}
  structures, the bounded structures are the only ones that allow for
  strong completeness. 
\item As indicated above, we establish the equivalence of CPL and
  several natural variants of coalgebraic hybrid logic.
\item We prove some basic model-theoretic results. Specifically, we
  show that, under the same (alternative) assumptions as for our
  completeness result, ultraproducts exist and a downward
  L\"owenheim-Skolem theorem holds; in fact, it turns our that the
  latter is applicable more broadly, requiring as it does only
  $\omega$-boundedness in place of boundedness in its corresponding
  variant.
\item We give sequent systems complementing the above-mentioned
  Hilbert system, and establish completeness, under the same
  (alternative) assumptions as for the Hilbert system, and more interestingly, 
  syntactic cut-elimination for the ``neighbourhood-like'' case. %\lsnote{@Tadeusz: adapt or remove depending on how
%    this point develops}
\end{itemize}

\noindent The material is organized as follows.  In \S
\ref{sec:ss} we introduce the syntax and semantics of CPL and give a
number of intuitive examples. In \S~\ref{sec:completeness} we
discuss the Hilbert-style axiomatization and associated completeness
results. We proceed to clarify the relationship between CPL and
several variants of coalgebraic modal and hybrid logic in \S\ 
\ref{sec:cml}.
% gives both syntactic and semantic
%characterization 
%coalgebraic modal logic as a fragment of CPL.
%The semantic characterisation naturally generalizes 
 %the van Benthem-Rosen characterization of ordinary modal logic.\tlntnew{@Tadeusz: remove the section, changed in the light of JAL publication} %as the bisimulation invariant fragment of
%first-order logic. 
In \S~\ref{sec:modeltheory} we take first steps in the model theory of
CPL, and \S~\ref{sec:proof} deals with proof theory. While our
presentation of the basic definitions in coalgebraic logic is
self-contained in principle, we do import some of its basic
results. Additional information is found in work on coalgebraic finite
models~\cite{Schroder06}, one-step
rules~\cite{SchroderPattinson09a,SchroderPattinson10b}, and
modularization of coalgebraic logics~\cite{SchroderPattinson11}.

%We begin with introducing syntax and semantics of coalgebraic
%predicate logic in \S~\ref{sec:ss}, where we also provide some examples of natural statements which our generic machinery allows to express when applied to specific structures.%, demonstrating both
%naturality and breadth of our approach.

%Finally, we demonstrate the richness of the ensuing theory by taking
%first steps in the model theory of coalgebraic predicate logic and
%generalize some classic model theoretic constructions in
%\S~\ref{sec:modeltheory}. %We describe the construction of
%ultraproducts in presence of completeness, establish a downward
%L\"owenheim-Skolem Theorem and embark on constructing the finitary
%compactification of a logic \todo{do we embark then?}. 

%\todo{Mention back and forth section}

%We conclude and muse about future work in \S~\ref{sec:conclusions}. 

\subsection{Related Work.} \label{sec:other}
%It will be discussed in detail in \Ss \ref{sec:other} and \ref{sec:cml}, but let us summarize briefly main points here. 
As already discussed, the syntax of our logic follows Chang's
first-order logic of neighbourhood frames~\cite{Chang73}. %
%\subsubsection*{Two-sorted logic for neighbourhood frames}
An alternative, two-sorted language for neighbourhood frames has been
proposed by Hansen et al.~\cite{HansenKP09lmcs}. Over neighbourhood
frames, the language studied in the present work is a fragment of the
two-sorted one; we give details in
\S~\ref{sec:ss}. %\cite{HansenKP09lmcs}.
%The authors did not study completeness issues. The syntax required both the presence of an explicit neighbourhood sort (absent in our syntax) and an axiom (A1) explicitly ruling out a full secon-order interpretation of that neighbourhood sort.

 %We are not aware of a generalization to
%the coalgebraic setting and it appears that the authors have not been aware of Chang's work. They also do not provide a completeness theorem. The setting of \emph{op.cit.} not only requires the presence of an explicit neighbourhood sort (absent in our syntax), but in addition it requires an axiom (A1) explicitly ruling out a full-second order interpretation of that neighbourhood sort.

%\todo{Lutz claimed the paragraph sounds unnecessarily hostile, need to think about it \dots}

%\subsubsection*{Topological FOL}

%\todo{TL: here is where we need most changes due to our recent bibliographical discoveries. In particular we need to mention somewhere that completeness for Chang's logic signature actually has been proved in the topo-related literature!}

First-order formalisms have also been considered for topological
spaces, which are particular instances of neighbourhood frames when
defined in terms of local neighbourhood bases. In particular,
Sgro~\cite{Sgro80ams} studies interior operator logic in topology with
interior modalities for finite topological powers of the
space. % , which do not seem meaningful in the topological
% context.
% proving completeness, interpolation, omitting types and Beth definability theorems for it.
This language is the weakest one in the hierarchy of topological
languages considered in an early overview by
Ziegler~\cite{Ziegler85}. Makowsky and Marcja \cite{MakowskyM77:mlq}
prove a range of completeness theorems for topological logics,
including a completeness result for the Chang language itself, i.e., a
special version of our Theorem~\ref{th:completeness}. See also ten
Cate et al. \cite{CateGS09apal} for a more contemporary
reference.  %Our formalism interpreted over topological spaces would be a fragment of that of Sgro; apart from the original Chang paper itself, the syntax of \cite{Sgro80ams} seems most closely related to ours (curiously, \cite{Sgro80ams} also does not cite \cite{Chang73}). Obviously, Sgro is working with a significantly restricted class of neighbourhood structures satisfying additional axioms: those of the topological interior operator. Besides, Sgro's syntax is richer than that of Chang and ourselves: it includes interior modalities also for all finite topological powers of the space.% (although the author says that J. A. Makowsky and M. Ziegler \emph{have obtained similar results} for the language without product operators, i.e., exactly that of Chang over topological spaces).
%Those additional product operators do not seem meaningful in general coalgebraic context. 
%Let us note here that  the language proposed by Sgro is 
Despite the fact
that CPL combines quantifiers and modalities, it should not be
confused with what is usually termed quantified or first-order modal
logic; see Remark~\ref{rem:qml}.

%\subsubsection*{Many-sorted coalgebraic logic}
%An alternative,
%two-sorted language has been studied in 
%\cite{HansenKP09lmcs} but we are not aware of a generalization to
%the coalgebraic setting. 

As mentioned above, our logic is less expressive but more naturally
defined than the correspondence language used in the first van
Benthen/Rosen type characterization result for coalgebraic modal
logic~\cite{SchroderP10fossacs}. Axiomatizations and model-theoretic
results as we develop here are not currently available for the more
expressive language of~\cite{SchroderP10fossacs}. 
% Example 27 in \emph{op.cit}. shows inclusion is strict.

%\subsubsection*{Modal and hybrid languages} 
%We defer this discussion to \S~\ref{sec:cml}.

%\subsubsection*{Predicate logic for functors and monads}
A different generic first-order logic largely concerned with the
Kleisli category of a monad rather than with coalgebras for a functor
is introduced and studied in~\cite{Jacobs10predicate}. Of all the
languages discussed above, this one seems least related to the present
one; indeed, the study of connections with languages like that of the
original, three-sorted variant \cite{SchroderP10fossacs} is mentioned
by Jacobs \cite{Jacobs10predicate} as a subject for future
research. 

This paper is based on results first announced in earlier conference
papers~\cite{LitakPSS12:icalp,LitakSP13:tbillc}. Compared to the
conference versions, it features full proofs and additional examples. Some results previously only mentioned such as the Omitting Types Theorem (Theorem \ref{th:omt}) are explicitly stated and proved here for the first time. We also corrected a number of errors and typos. Most notably, as reconstructing the proof of cut-elimination for the $\mathsf{G3c}$-style system proposed in \cite{LitakSP13:tbillc} proved problematic, we replaced it with a $\mathsf{G1c}$-style system in this version, with a different treatment of equality and provided all the proof details. 

%\todo{One of referees wanted us to discuss something from the Elephant.}

\section{Syntax, Semantics and Examples} \label{sec:ss}

We proceed to give a formal definition of \emph{coalgebraic predicate
  logic} (CPL). We fix a set $\SoV$ of predicate symbols and a modal
similarity type $\PrLSet$, i.e.\ a set of modal operators. Modal
operators $\PrL \in \PrLSet$ and predicate symbols $P \in \SoV$ both
come with fixed \emph{arities} $\arty{\PrL},\arty{\Pvr}\in\Nat$. The
set $\CPL(\PrLSet,\SoV)$ of CPL \emph{formulas} over $\PrLSet$ and
$\SoV$ is given by the grammar
%\begin{align*}
\begin{equation*}
 \CPL(\PrLSet,\SoV) \owns \phi, \psi \;  ::= \;  %& 
 y_1 \eqF y_2\mid 
 \Pvr(\vec x)
 \mid \botF \mid \phi \impF \psi \mid \uqF x.\phi \mid %\\
%& \quad 
 x \PrL \cmh{y_1}{\phi_1}\dots\cmh{y_n}{\phi_n}
%\end{align*}
\end{equation*}
where $\PrL \in \PrLSet$ is an $n$-ary modal operator and
$\Pvr \in \SoV$ a $k$-ary predicate symbol, $x,y_i$ are variables from
a fixed set $\FoV$ we keep implicit. We just write $\CPL(\PrLSet)$ for
$\CPL(\PrLSet,\emptyset)$ and sometimes we omit $\PrLSet$ and $\Sigma$ altogether. % $t \in \Trm(\FoV)$ and $\vec t \in \Trm(\FoV)^k$ %according to the arity of $\Pvr$
%(fixing a notion
%$\Trm(\FoV)$ of terms
%over $\FoV$). 
Booleans and the existential quantifier are defined in the standard
way.  We do not include function symbols, which can be added in a
standard way \cite{Chang73}. We adopt the usual convention that the
scope of a quantifier extends as far to the right as possible.
%We do not assume here the presence of function symbols (nor individual
%constants), but there is nothing that precludes
%us from treating complex terms on an equal footing
%with individual variables, as is in fact the case in
%\cite{Chang73}.
%For an $n$-ary operator $\heartsuit$, we read the formula $t
%\heartsuit \lcmh y_1 : \phi_1 \rcmh, \dots, \lcmh y_n : \phi_n
%\rcmh$ as expressing that the individual $t$ satisfies the (modal)
%property $\hearts$ which depends on $\phi_1(y_1), \dots,
%\phi_n(y_n)$ 
In the $\cmh{y_i}{\phi_i}$ component, $y_i$ is used as a comprehension
variable, i.e.,
$\cmh{y_i}{\phi_i}$ %so that (semantically) $\lcmh y_i : \phi_i \rcmh$
denotes a subset of the carrier of the model, to which modal operators
can be applied in the usual way. In
$x\PrL\cmh{y_1}{\phi_1}\dots\cmh{y_n}{\phi_n}$, $x$ is free and $y_i$
is bound in $\phi_i$, otherwise the notions of freeness and
boundedness are standard. We write $\FV(\phi)$ for the set of free
variables of a formula $\phi$. A variable is \emph{fresh} for a
formula if it does not have free occurrences in it; to save space, we
will also sometimes say that $x \in \FoV$ is fresh for $y \in \FoV$
whenever it is distinct from it. A
\emph{sentence}, %or a \emph{closed formula},
as usual, is a formula without free
variables. %otherwise, we speak of \emph{open formulas}.

As usual, some care is needed when defining substitution to avoid, on
the one hand, capture of newly substituted variables by quantifiers
and on the other hand, substituting for a bound
variable.  %Inasmuch as possible, we are following Enderton \cite{Enderton72}
% in all matters syntactic, and also on this point 
We take as our model the discussion in Enderton's monograph
\cite[p.~112--113]{Enderton72}.  As we now have two ways in which a
variable can become bound and the binder $\PrL$ involves also a variable/term in a non-binding way, it is desirable to spell out
details. %To avoid any misunderstandings, let us spell it out
We thus define---prima facie not necessarily capture-avoiding---substitution $\alpha[t/x]$ with $t, x \in \FoV$ (had we allowed for
function symbols, $t$ could be any term) as replacing $x$ with $t$ in
atomic formulas and commuting with implication (and of course other
Boolean connectives, were they taken as primitives). For binders, the
clauses are:
\begin{gather*}
  \begin{aligned}
    (\forall x.\phi)[t/y] & = \begin{cases} \forall x. \phi & x =y \\ \forall x. \phi[t/y] & \text{ otherwise, } \end{cases} \\
    (x\PrL\cmh{z_1}{\phi_1}\dots\cmh{z_n}{\phi_n})[t/y] & =
    u\PrL\cmh{z_1}{\phi'_1}\dots\cmh{z_n}{\phi'_n}
  \end{aligned}
\\
 \text{ where }  \phi'_i   = \begin{cases} \phi_i & y =z_i \\  \phi'_i[t/y] & \text{ otherwise, } \end{cases}\text{ and }u=
 \begin{cases}
   t & x = y\\
   x & \text{otherwise.} 
 \end{cases}
\end{gather*}
This of course cannot work without restrictions, so we follow Enderton
in defining the notion of \emph{substitutability} of $t$ for $x$ in a
term. There are no restrictions on substitutability in atomic
formulas, and for implications, it is defined as substitutability in
the two argument formulas. Finally, $t$ is substitutable for $x$ in
$z\PrL\cmh{y_1}{\phi_1}\dots\cmh{y_n}{\phi_n}$ whenever for every $i$,
$t$ is either
\begin{itemize}
\item fresh for $\cmh{y_i}{\phi_i}$ (this includes the case $t = y_i$) or
\item different from $y_i$ and substitutable for $x$ in $\phi_i$.
\end{itemize}
Note that in the first alternative, substituting $t$ for $x$ has no
effect on $\cmh{y_i}{\phi_i}$. In a language with more general terms,
the second alternative would require that $y_i$ is fresh for $t$
rather than different from $t$. Substitutability of $t$ for $x$ in
$\forall y_1.\phi_1$ is defined similarly (and standardly).

We depart from Enderton's conventions by restricting, from now on, the
usage of the $\alpha[t/x]$ notation to the case where $t$ is
substitutable for $x$ in $\alpha$ (as usual, when this is not the case
the substitution can still be applied after suitably renaming bound
variables in $\alpha$). For example, the axiom scheme
$\forall \vec{y}. (\forall x. \phi \to \phi[z/x])$ denoted as
\axref{En2} in Table \ref{tab:folaxioms} has as its valid instances
only those formulas where $z$ is substitutable for $x$.

%The notion of a (capture-avoiding) substitution is defined in the expected way: all the usual caveats for quantified variables have to apply now to comprehension variables as well.  %A formal definition of semantics follows.
%\ednote{DP: streamline? but we should say something about
%comprehension and intuition}

%It is also possible to include an arbitrarily chosen supply of predicate symbols $\SoV$ (of arbitrary arity) and in fact they are present in Chang's original paper. However, as often happens in coalgebraic logic (see, e.g., \cite{SchroderP10fossacs}),  we chose a signature not including them; if necessary, they can be modelled by \emph{modal constants}, i.e., elements of $\PrLSet$ of arity zero.

The semantics of CPL is parametrized over the choice of an endofunctor
$T$ on $\Set$ that determines the underlying system type: models are
based on \emph{$T$-coalgebras}, i.e.\ pairs $(C, \gamma: C \to TC)$
consisting of a carrier set $C$ of worlds or \emph{states} and a
\emph{transition function} $\gamma$. We think of the elements of $TC$
as being \emph{composite states}; e.g.\ if~$T$ is the identity functor
then a composite state is just a state, and if $T$ is powerset, then a
composite state is a set of states. Thus, the transition function
assigns to each state $c$ a composite state $\gamma(c)$ that
represents the successors of $c$ and that we correspondingly refer to
as the \emph{composite successor} of~$c$. E.g.\ in case $T$ is
powerset, a $T$-coalgebra assigns to each state a set of successor
states, and hence is essentially a Kripke frame.

To interpret the modal operators, we extend $T$ to a
\emph{$\Lambda$-structure}, i.e.\ we associate to every $n$-ary modal
operator $\hearts \in \Lambda$ a set-indexed family of mappings
\begin{equation*}
  \lsem \hearts \rsem_C: (\CPow C)^n \to \CPow TC
\end{equation*}
where $\CPow$ denotes the contravariant powerset functor, subject to
\emph{naturality}, i.e.
\begin{equation*}
(Tf)\inv \circ \lsem \hearts \rsem_C = \lsem \hearts \rsem_D \circ
(f\inv)^n
\end{equation*}
for every set-theoretic function $f: C \to D$. In categorical
parlance, this means that $\lsem\hearts\rsem$ is a natural
transformation
%\begin{equation*}
  $\CPow^n\to\CPow\comp T^\Op$;
%\end{equation*}
  we recall that the contravariant powerset functor $\CPow$ maps a set
  $X$ to the powerset of $X$ and a map $X\to Y$ to the preimage map
  $\CPow f:\CPow Y\to\CPow X$, i.e.\ $\CPow f(A)=f^{-1}[A]$ for
  $A\subseteq Y$. Each such $\lsem \hearts \rsem$ is called a \emph{predicate lifting}~\cite{Pattinson03,Schroder08}. Formally speaking, we should define a
  $\Lambda$-structure as a pair
  $(T, (\lsem \PrL \rsem)_{\PrL \in \Lambda})$, but to avoid
  cumbersome notation and terminology, we will speak about a
  $\Lambda$-structure \emph{based} on $T$ (or a $\Lambda$-structure
  \emph{over} $T$) and suppress the second component of the pair
  whenever $(\lsem \PrL \rsem)_{\PrL \in \Lambda}$ is clear from the
  context.

A triple $\gM = (C, \gamma, I)$ consisting of a coalgebra
$\gamma: C \to TC$ and a predicate interpretation
$I: \SoV \to \bigcup\limits_{n \in \omega}\CPow(C^n)$ respecting
arities of symbols will be called a \emph{(coalgebraic) model}. In
other words, a coalgebraic model consists simply of a $\Set$-coalgebra
and an ordinary first-order model whose universe coincides with the
carrier of the coalgebra.  Given a model $\gM = (C, \gamma,
I)$ %coalgebra $\gamma: C \to TC$, a predicate interpretation $I:
%\SoV \to \bigcup\limits_{n \in \omega}\CPow(C^n)$ respecting
%arities of symbols 
and a valuation $v: \FoV \to C$, we define
satisfaction $\gM, v \models \phi$ in the standard way for
first-order connectives and for $\hearts$ by the clause
%\begin{multline*} 
%\
\begin{equation*}
\gM, v \models x \hearts \lcmh y_1 : \phi_1 \rcmh
\dots \lcmh y_n: \phi_n \rcmh \iff %\\
%\quad \iff 
\gamma(v(x)) \in \lsem \hearts \rsem_C( \lsem \phi_1 \rsem^{y_1}_{\gM},
\dots, \lsem \phi_n \rsem^{y_n}_{\gM}) 
\end{equation*}
%\end{multline*}
where 
\begin{equation} \label{eq:deny}
\lsem \phi \rsem^y_{\gM} \deq \lbrace c \in C \mid \gM, \vmd{c}{y}
\models \phi \rbrace
\end{equation}
 %denotes the extension of $\phi$ relative to
%$y$ and 
and $\vmd{c}{y}$ is $v$ modified by mapping $y$ to $c$.%\ednote{LS:

\begin{rem} \label{rem:qml}
  \emph{Quantified} or \emph{first-order modal logic} in the sense
  used widely in the literature~(see, e.g., \cite{Garson01}) combines
  quantification and modalities in a two-sorted and, effectively,
  two-dimensional semantics: One has an underlying set of worlds as
  well as an underlying set of individuals, with modalities
  interpreted as moving between worlds and quantification interpreted
  as ranging over individuals in the current world. We emphasize that
  although CPL also combines modalities and quantifiers, it is not a
  quantified modal logic in this sense: it is interpreted over a
  single set of individuals, and both the modalities and the
  quantifiers move within this set. In particular, the instance of CPL
  induced by the standard modalities equipped with their usual
  predicate liftings is standard first-order logic rather than a
  quantified modal logic, as we discuss below in some detail.
\end{rem}
\smallskip\noindent In a companion paper on the van Benthem-Rosen
theorem for CPL \cite{SchroderPL15:jlc} and in conference papers the
present work is based upon \cite{LitakPSS12:icalp,LitakSP13:tbillc},
we have focused on Chang's original motivation for this
language~\cite{Chang73}. Namely, Chang saw his setup as a modification
of Montague's account of \emph{pragmatics}, tailored to reasoning
about social situations and relationships between an individual and
sets of individuals. We have proposed a series of examples kept in the
same spirit, utilizing Facebook, Twitter and social networks.  In the
present paper, we offer examples based on, so to say, networking of a
more low-level character, especially \emph{delay-} or
\emph{disruption-tolerant} networking
(\emph{DTN}). %routing protocols, packet forwarding, clusters, subnetting and route reflectors.
We do not claim to be very accurate with respect to specifications of
concrete protocols; our examples are of purely inspirational and
illustrative character. It is worth mentioning, though, that such
routing and forwarding protocols can be backed by social insights
\cite{HuiCY08}, so in a sense we are still following the spirit of our
original examples.\footnote{In particular, Hui et al. \cite{HuiCY08}
  gave us the idea of using \emph{subcommunities} in this context.}
%We have the following examples of our setting.

\smallskip
\subsection*{Neighbourhood Frames.}
Scott/Montague \emph{neighbourhood semantics} is captured
coalgebraically using $\Lambda \deq \lbrace \Box \rbrace$ and putting
$TC \deq \CPow\CPow C$ (the doubly contravariant powerset functor),
which extends to a $\Lambda$-structure by 
\begin{equation*}
\lsem \Box \rsem_C (A) \deq \lbrace \sigma \in TC \mid A \in \sigma \rbrace.
\end{equation*}
A $T$-coalgebra then associates to each state a set of sets of states,
i.e.\ a system of neighbourhoods; thus, $T$-coalgebras are just
neighbourhood frames. In the presence of a binary relation
$\Svr(x, y)$ that we read as `node/router $y$ is in the forwarding
table of $x$' and interpreting $\Box$ as `is a recognized
subcommunity', the formula
\begin{equation*}
  \exists y. x \Box \lcmh z: \Svr(z, y) \rcmh 
\end{equation*}
reads as `there exists a certain $y$ such that amongst the
subcommunities recognized by $x$, there is one formed exactly by those
having $y$ in its forwarding table'.

The instance of CPL that we obtain in this way is, up to quite minor
syntactic differences, Chang's original language~\cite{Chang73}. As
mentioned in \S~\ref{sec:intro}, it embeds as a fragment into
Hansen et al.'s two-sorted correspondence
language~\cite{HansenKP09lmcs}. We refrain from giving full syntactic
details; roughly, the setup is as follows. The two-sorted language has
sorts $s$ for states and $n$ for neighbourhoods, and features binary
infix predicates $\mathsf{N}$ and $\mathsf{E}$ respectively modelling
the neighbourhood relation between states and neighbourhoods, and the
inverse elementhood relation $\owns$ between neighbourhoods and
states. Then our $x\PrL\cmh{y}{\phi(y)}$ can be translated as
$\exists u.(x\mathsf{N}u\wedge \forall y.(u\mathsf{E}y \eqvF
\phi(y)))$. 

% is spoken to, and may come to different conclusions depending on
% which is modelled by neighbourhood semantics.

\subsection*{Relational first-order logic.}
Instantiating CPL with the usual modalities of relational modal logic,
specifically the logic $K$,
we obtain a notational variant of ordinary FOL over relational
structures, that is, of the usual correspondence language. The main
idea has already been indicated in the introduction: we encode the
successor relation in formulas of the form $x\Diamond\lcmh
z: y=z\rcmh$, which states that $y$ is a successor of
$x$.
Formally, we capture the standard modality and the propositional atoms
of the relational modal logic $K$ in the similarity type
\begin{equation*}
  \Lambda=\{\Diamond\}\cup\Atoms
\end{equation*}
where $\Atoms$
is a set of propositional atoms; as expected, $\Diamond$
is unary, and $a\in\Atoms$
is nullary. We interpret these operators over the functor $T$
given on objects by 
\begin{equation*}
  TX=\Pow X \times\Pow\Atoms
\end{equation*}
where $\Pow$
denotes the covariant powerset functor. That is, a coalgebra
$\gamma:C\to
TC$ assigns to each state $c\in
C$ a set of successors as well as a set of propositional atoms valid
in~$c$. The interpretation is defined by means of predicate liftings
\begin{align*}
  \Sem{\Diamond}_X(A)&=\{(Y,U)\in\Pow X\times\Pow\Atoms\mid A\cap Y\neq\emptyset\}\\
  \Sem{a}_X & =\{(Y,U)\in\Pow X\times\Pow\Atoms\mid a\in U\} 
\end{align*}
where, corresponding to the arity of the modal operators, the
predicate lifting for $\Diamond$ is unary and the predicate liftings
for the $a\in\Atoms$ are nullary. These predicate liftings capture
precisely the standard semantics of both $\Diamond$ and the
propositional atoms. In particular, the above-mentioned formula
$x\Diamond\lcmh z: y=z\rcmh$ really does say that $y$ is a successor
of $x$. (Notice that in the nullary case, our syntax instantiates to
formulas $x\,a$ saying that $x$ satisfies the propositional atom~$a$).

The standard first-order correspondence language of modal logic has
unary predicates $a$ for the atoms $a\in\Atoms$ and a binary predicate
$R$ to represent the successor relation. We translate $\CPL(\PrLSet)$
as defined above into the standard correspondence language by just
extending the standard translation of modal logic to CPL, with the
modification that the current state is represented by an explicit
variable in CPL so that it is no longer necessary to index the
standard translation with a variable name. That is, our translation
$\STcpl$ is defined in the modal cases (which by our conventions
include the case of propositional atoms) by
\begin{align*}
  \STcpl(x\Diamond\lcmh y:\phi\rcmh) & = \exists y.\,R(x,y)\land\STcpl(\phi)\\
  \STcpl(x\,a) & = a(x) && (a\in\Atoms)
\end{align*}
and by commutation with all other constructs. In the converse
direction, we translate $R(x,y)$ into $x\Diamond\lcmh z:z=y\rcmh$ and
$a(x)$ into $x\,a$. In summary, \emph{CPL over
$\Lambda=\{\Diamond\}\cup\Atoms$ with the above semantics is
expressively equivalent to the standard first-order correspondence
language of modal logic}.

\subsection*{Graded Modal Logic.} %\lsnote{has been used in jlc paper}
We obtain a variant of graded modal logic \cite{Fine72} if we consider
the similarity type
$\Lambda = \lbrace \langle k \rangle \mid k \geq 0 \rbrace$ where
$\langle k \rangle$ reads as `more than $k$ successors satisfy
\dots'. We interpret the ensuing logic over
\emph{multigraphs}~\cite{DAgostinoVisser02}, which are coalgebras for
the \emph{multiset functor} $\Bag$ given on objects by
\begin{equation*}
\Bag X \deq \lbrace \mu: X \to \Nat\cup\{\infty\} \mid f\text{ a map}\rbrace.
\end{equation*} 
We see such a map $\mu: X \to \Nat\cup\{\infty\}$ as an
integer-valued discrete measure on $X$, i.e.\ we write
$\mu(A)=\sum_{x\in A}\mu(x)$ for $A\subseteq X$. Then, $\Bag$ acts on
maps $f:X\to Y$ by taking image measures; i.e.\
$\Bag f(\mu)(B)=\mu(f^{-1}[B])$ for $B\subseteq Y$. We extend $\Bag$
to a $\PrLSet$-structure by stipulating
\begin{equation*}\textstyle
\lsem \langle k \rangle \rsem_X(A) = \lbrace \mu \in \Bag X \mid
\mu(A) > k \rbrace
\end{equation*}
to express that more than $k$ successors (counted with multiplicities)
have property $A$. Note that over Kripke frame, graded operators can
be coded into standard first order logic; the difference with standard
first-order logic arises through the multigraph semantics, for which
the requisite expressive means arise only through the graded
operators.

Continuing our line of routing examples, we can, given a
$\Bag$-coalgebra $\gamma:C \to \Bag C$, think
of %elements of $C$ as individuals, and of
$\gamma(c)(c')$ as the number of packets forwarded from $c$ to $c'$ in
the past hour.\lsnote{Changed this slightly to motivate use of
  integers}
%In other words, $\gamma(c)(c') = n$ models the fact that $c$ has pressed the `like'-button on $c'$s page $n$ times. 
In the presence of a binary relation $\Svr(x, y)$ interpreted as
above, the formula
% $\gamma(c)(c')$ as the number of `likes' (in
%the sense of Facebook) that $c'$ has received from $c$. 
%In other words, $\gamma(c)(c') = n$ models the fact that $c$ has pressed the `like'-button on $c'$s page $n$ times. In the presence of a binary relation $F(x, y)$ expressing that $y$ is a Facebook-friend of $x$, the formula 
\begin{equation*}
\neg\exists y. (x \langle k \rangle \lcmh z: \Svr(y, z) \rcmh)
\end{equation*}
then expresses that there is no router $y$ s.t. the total number of
packets sent by $x$ to nodes in $y$'s forwarding table in the past
hour exceeds $k$.

 %expresses that $x$ likes more than $k$ activities of friends of his/her friends. 

%\todo{rather than having two examples for graded logic, can you please convert the one below for something for Presburger logic?}

%\item If we interpret the elements of $C$ as cities, and $\gamma(c)(c')$ as the number of direct flights from $c$ to $c'$, the formula $x \langle 0 \rangle \lbrace y: P(y) \rcmh$ where $P$ is a predicate symbol (denoting a region) then stipulates that there's a direct flight from $x$ to some city in $P$. If we consider an itinerary between two cities $x$ and $y$ as safe if it involves a stopover with more than $k$ flights per day between the respective destinations, we would ask for the validity of the formula $x \langle k \rangle \lcmh s : s \langle k \rangle \lcmh d : d = y \rcmh \rcmh$ (where we read $s$ as 'stopover' and $d$ as'destination'). 

\medskip

\subsection*{Presburger modal logic and arithmetic.}
A more general set of operators than graded modal logic is that of
positive Presburger modal logic~\cite{DemriLugiez06}, which admits
integer linear inequalities
$\sum_{i=1}^n a_{i} \cdot \# (\phi_{i}) > k$ among formulas where
$a_{i} \geq 0$ for all~$i$. We see such a formula as an application of
an $n$-ary modality $L_k(a_1,\dots,a_n)$ to formulas
$\phi_1,\dots,\phi_n$, and interpret this modality over the multiset
functor $\Bag$ as introduced above by the $n$-ary predicate lifting
\begin{equation*}\textstyle
  \Sem{L_k(a_1,\dots,a_n)}_X (A_1, \dots, A_n) =\{\mu\in\Bag X\mid 
  \sum_{i=1}^n a_{i} \cdot \mu (A_{i}) > k\}. 
\end{equation*}
 %The formula %As before, let $C$ be the supply of individuals but $\gamma(c)(c')$ will be now the number of {\em posts} of $c$ to $c'$'s wall. In addition to $F(x,y)$ as above, we introduce $T(x,y)$ expressing that $y$ is a follower of $x$ in Twitter and $I(x)$ expressing that $x$ is influential. Then, the formula
In addition to the binary predicate $S$, let us also introduce unary
predicate $O(x)$ expressing that $x$ is an overloaded node. The
formula
\begin{equation*}
\forall x. ((xL_{10{,}000}(1,3)\lcmh y : S(x,y) \rcmh  \lcmh y : O(y) \rcmh) \to O(x))
\end{equation*}
%\tlnew{Referee comment 3(5): formal syntax?}
means that, if the weighted number of packets sent by $x$ to
overloaded nodes combined with packets $x$ sends to all nodes in its
forwarding table exceeds $10{,}000$, then $x$ itself is overloaded.
 
 %wall posts to his/her Facebook friends and Twitter-followers is greater than ten thousands, then $x$ is influential, provided that Facebook is three times as influential as Twitter. 

\subsection*{Combination of Frame Classes.} 
Frame classes can be combined: %instead of using the relation symbol $\Rvr$ in the previous
%example, we could consider coalgebras $(C, \gamma: C \to TC)$ where
%$TC \deq \Bag C \times \Pow C$ gives a multigraph structure and a
%relational structure, and interpret the operators $\langle k \rangle$ and $\Box$
%by projecting out the components. %We leave it to the reader to express `$x$ likes more than $k$ activities of friends of his/her friends' in this setting. 
% Alternatively, 
  we can take $T \deq \Bag \times \CPow\CPow$ and combine operators for packet counting and subcommunity recognition. A formula 
\begin{equation*}
\neg x\Box\cmh{y}{y \langle 30 \rangle\cmh{u}{u \neq z}}
\end{equation*}
 expresses then that the collection of those nodes  which have forwarded  more than 30 packages to servers different than $z$ in the past hour is not a subcommunity recognized by $x$.
 %$x$ does not fancy the perspective of liking strictly more than 2 of Facebook activities of $z$ (or, to be more precise, the general company of people who do so). %The reader may find it entertaining to compare our Facebook examples with these of \cite{Seligman11:community}.
%Unpopular individuals are then
%characterised as satisfying $x \MM \lcmh y: y \Box \lcmh z :
%\notF (z \eqF x) \rcmh\rcmh$.

%\subsection*{Agents and Coalition Logic.} \tlntnew{this removed altogether, right?}
% In this context,
% the formula
% $x [ \emptyset] \lcmh y: y = x \rcmh$ describes that the state
% $x$ is a dead end: independent of the choice of strategies of the
% players, the next position will be $x$ itself. The formula
% \begin{equation*}
% \forall y ( x [Q] \lcmh z: z = y \rcmh \to y [Q] \lcmh z: z =
% x \rcmh)
% \end{equation*}
%  expresses that---given position $x$ on the board---whenever coalition $Q$ can force a
% position $y$ on the game board, they also have a (collective)
% strategy to revert back to $x$. Universal quantification over $x$
% would then ensure that coalition $Q$ enjoys this power, irrespective of the state of the game. 

\subsection*{Probabilistic Modal Logic.}
The \emph{discrete distribution functor} $\Dist$ is defined on objects
by
\begin{equation*}\textstyle
\Dist X = \lbrace \mu: X \to [0, 1] \mid \sum_x \mu(x) = 1 \rbrace,
\end{equation*}
and on morphisms by taking image measures exactly as for the multiset
functor $\Bag$ discussed earlier. Coalgebras for $\Dist$ thus
associate to every state a probability distribution over successor
states; such structures are known as \emph{Markov chains}, or 
\emph{probabilistic transition systems}, or \emph{type spaces}. Taking the
similarity type
$\Lambda = \lbrace \langle p \rangle \mid p \in [0, 1] \cap \Rat\}$,
with $\langle p \rangle$ read as `with probability more than $p$'
(thus departing from the choice of operators $L_p$ ` with probability
at least $p$' that we used in the introduction), we formally interpret
$\langle p \rangle$ using the predicate lifting
\begin{equation*}
  \lsem \langle p \rangle \rsem_X (A) = \lbrace \mu \in \Dist X \mid
  \mu(A) > p \rbrace
\end{equation*}
over $\Dist$. We thus obtain a form of probabilistic first-order logic
for probabilistic transition systems that extends probabilistic modal
logic~\cite{LarsenSkou91,FaginHalpern94,HeifetzMongin01}. Continuing
our line of routing examples, if we interpret the transition
probabilities as the likelihood of a server forwarding any given
packet to another, then the formula
\begin{equation*}
  \forall x,y.( x \langle 1/2\rangle\lcmh z: z=y\rcmh \to  y \langle 1/2
  \rangle \lcmh z: z =
  x \rcmh )
\end{equation*}
expresses a partial form of symmetric connectivity: whenever a server
$x$ prefers the connection to $y$ in the sense that it will more
likely than not route any given packet through $y$, then the same will
hold in the other direction.

We obtain a version of this logic with finitely many modal operators
in situations where all possible probabilities are contained in some
finite set of rationals (such as when rolling a fair die). We then
consider substructures of the form
\begin{equation*}
\Dist_k X = \lbrace
\mu\in\Dist(X)\mid \mu(x)\in\{i/k\mid i=0,\dots,k\} \rbrace,
\end{equation*}
restricting the modal operators to come from
$\Lambda_k = \lbrace \langle n/k \rangle \mid n = 0, \dots, k
\rbrace$.

\newcommand{\CTo}{\Rightarrow} 
\newcommand{\CToDual}{>} \newcommand{\ff}{\mathsf{ff}}
\newcommand{\tf}{\mathsf{tf}}
\subsection*{Non-Monotonic Conditionals.}
An example of a binary modality is provided by (conditional)
implication $\CToDual$. Such operators are
interpreted over a variety of semantic structures; one of these
involves \emph{selection function frames}, which in our terminology
can be defined as coalgebras for the \emph{selection function functor}
$\Sel$. The latter acts on objects by
\begin{equation*}
  \Sel X = \lbrace f: \CPow X \to \Pow X \rbrace
\end{equation*}
and, correspondingly, on maps $f:X\to Y$ by
$\Sel f = \CPow f\to \Pow f:\Sel X\to\Sel Y$, i.e.\
$\Sel f(g)(A)=f[g(f^{-1}[A])]$ for $A\subseteq X$ (recall that $\CPow$
denotes the contravariant powerset functor). We think of $f_x \in\Sel X$
as selecting the set $f_x(A)$ of worlds  which $x$ sees as `most typical' given a condition
$A\subseteq X$. %\tlnew{Is this enough to handle ref 2(2)?}
 Over this functor, we interpret the conditional $\CToDual$ 
by the predicate lifting
\begin{equation*}
  \lsem \CToDual \rsem_X(A, B)  = \lbrace f \in
  \Sel X \mid
  f(A) \cap B \neq \emptyset \rbrace.
\end{equation*}
The formula $\phi \CToDual \psi$ expresses that $\psi$ is typically
possible under condition $\phi$. This presentation of conditional
logic is dual to the standard presentation \cite{Chellas80} in terms
of a binary operator~$\CTo$ `if -- then normally', related to
$\CToDual$ by $a\CToDual b\equiv \neg(a\CTo\neg b)$. For our purposes,
$\CToDual$ has the technical advantage of being bounded in the second
argument in a sense that we will introduce in \S~
\ref{sec:completeness}. 

We continue to interpret our examples in the context of routing: Given
an $\Sel$-coalgebra (i.e.\ selection function frame)
$\gamma:C\to\Sel C$, we may read $\gamma(c)(A)$ as the set of those
servers through which server $c$ will normally route an incoming
packet if $c$ is currently active in the subcommunity $A\subseteq C$.
Then a formula of the form
\begin{equation*}
  \forall x,u.(\phi(u)\to x\CToDual \lceil y:\phi(y)\rceil \lceil z:z=u \rceil)
\end{equation*}
says that if $x$ is currently active in a subcommunity delineated by
the formula $\phi(y)$ and $u$ belongs to that subcommunity, then $u$
is normally a possible target for packets forwarded
by~$x$.

We have not mentioned propositional atoms other than in the example on
relational first-order logic. We introduce an explicit notion of
propositional atom as part of a modal signature:
\begin{defi}\label{def:atom}
  A nullary modality $p\in\Lambda$ is a \emph{propositional atom} if
  $T$ decomposes as $T=T'\times 2$ and under this decomposition,
  $\Sem{p}_X= T'X\times\{\top\}$.
\end{defi}
\begin{rem} \label{rem:atomsafe}
  Propositional atoms can easily be added to all our examples by just
  extending the functor with a component for their interpretation, as
  indicated in the above definition. Explicitly, if $\Atoms$ is a set
  of propositional atoms and $T'$ is a functor, then the atoms
  $p\in\Atoms$ give rise to nullary modalities $p$, interpreted over
  $T=T'\times\Pow(\Atoms)$ by
  $\Sem{p}_X=\{(t,U)\in T'X\times\Pow(\Atoms)\mid p\in U\}$. In fact,
  this is an instance of what we have called \emph{combination of
    frame classes} above: Systems featuring only propositional atoms
  with pointwise valuations, and no further transition structure, are
  captured as coalgebras for the constant functor $\Pow(\Atoms)$, and
  this system type can be freely combined with others. A coalgebraic
  framework for such combinations of system types and modalities is
  afforded by multisorted coalgebra, essentially following the
  principle of converting components of the coalgebraic type functor
  into sorts~\cite{SchroderPattinson11}. General results in this
  framework imply that completeness properties transfer smoothly to
  such logic combinations (essentially, to \emph{fusions} in standard
  terminology). In particular, completeness will always extend without
  further ado under adding propositional atoms to a logic, so we will
  continue to largely elide them in the discussion of examples.
\end{rem}
 %In the spirit of
%Chang's original examples concerning social situations, we may read
%the antecedent of the conditional as 'invited' and the consequent as
%'happy'.  Given a binary relation $\ff$ ('facebook friend') the
%formula
%\begin{equation*}
%\exists y (x(\lcmh y:\ff(x,y)\rcmh \CToDual
%\lcmh z:z=y \rcmh ))
%\end{equation*}
% describes that there is a person ($y$ -- possibly Mark%
%Zuckerberg) who is happy if 
%$x$ invites \emph{precisely} her facebook friends to her birthday
%party. If $x$ also invites non-facebook friends, then the
%non-monotonicity of the conditional does not allow us to infer
%anything about $y$'s emotional state.

%\noindent
%We will see some additional examples in \S~\ref{sec:omega}.

%%%%%%%%%%%%%%%%%%%%%%%%%%%%%%%%%%%%%%%%%%%%%%%%%%%%%%%%%%%%%%%%%%%%%%%%%%%%%%%%%%%%%%%%%%%%%%%%%%%%%%%%%%%%%%%%%%%%%%%%%%%%%%%%%%%%%%%%%%
%%%%%%%%%%%%%%%%%%%%%%%%%%%%%%%%%%%%%%%%%%%%%%%%%%%%%%%%%%%%%%%%%%%%%%%%%%%%%%%%%%%%%%%%%%%%%%%%%%%%%%%%%%%%%%%%%%%%%%%%%%%%%%%%%%%%%%%%%%%%
%%%%%%%%%%%%%%%%%%%%%%%%%%%%%%%%%%%%%%%%%%%%%%%%%%%%%%%%%%%%%%%%%%%%%%%%%%%%%%%%%%%%%%%%%%%%%%%%%%%%%%%%%%%%%%%%%%%%%%%%%%%%%%%%%%%%%
\section{Completeness} \label{sec:completeness}

In \S~ \ref{sec:hilb-axiom} below, we propose an axiom system for CPL,  sound wrt arbitrary structures  (Theorem \ref{thm:hilb-soundness}) and in \S~\ref{sec:proofco} we show its completeness wrt structures s.t. each operator on every coordinate is either ``neighbourhood-like'' or ``Kripke-like'' (Theorem \ref{thm:hilb-complete}). As discussed in \S~\ref{sec:omega}, even a mild relaxation of these conditions makes a generic completeness result impossible.

However, not only for the proof, but even for the statements of our completeness result, or of the axiomatization itself, we need some spadework.\lsnote{@Tadeusz: I strongly suggest to reduce the technical development to unary modalities from here on. Section 5 already assumes this anyway.}

%This is analogous to \cite{BlackburnC06sl} and allows us to transfer  completeness results of \cite{SchroderPattinson10} to our setting: %. %\cite{SchroderPattinson10} provided Henkin-style completeness proofs for coalgebraic $\HyFA$ and noted that analogously to the relational case, the completeness proof applies to extensions of $\HyFA$ with connectives \emph{one-step definable in named models}, $\HyFDA$ being a case in point. But as noted in \cite{BlackburnC06sl}, the global modality $\gloM$ can be axiomatized by a single \emph{pure axiom} and this yields completeness for $\HyFDG$, equipollent with our first-order logic. Then we can translate proofs and techniques of \cite{SchroderPattinson10} to 

\subsection{S1SC and Boundedness} \label{sec:sone}

In order to state our axiomatization and completeness results, we need several notions from coalgebraic model theory. The first of them, central to the entire edifice, is that of  \emph{one-step satisfiability}. %This notion is central to %However, as we want to treat here 

\begin{defi}[One-step logic] %[Schemes and one-step rules]
%Fix the collection $\SchV$ of schematic variables $\sva, \svb, \svc \dots$ The \emph{Boolean schemes $\BSch$} are defined as 
%$\svA, \svB :\deq \sva \mid \svA \to \svB \mid \bot$. For any variable $x \in \FoV$, \emph{rank-1 $x$-schemes $\RSch{x}$} are defined as $\svX, \svY :: = x\PrL \lcmh x : \svA_1 \rcmh \dots \lcmh x : \svA_n \rcmh \mid \svX \to \svY \mid \bot$.  
\begin{itemize}\
\item Given a supply of primitive symbols $D$ (which can be any set),
  define the set $\BSch(D)$ of \emph{Boolean $D$-formulas} (or
  \emph{propositions}) as
  $$\svA, \svB ::=  d \mid \svA \to \svB \mid \bot$$ 
  where $d \in D$, and the set $\Lambda (D)$ of \emph{modalized
    $D$-formulas} as
  $$\Lambda (D) = \{ \PrL d_1\dots d_n \,|\,d_{1},
  \dots, d_{n} \in D \text{ and } \heartsuit \in \Lambda \mbox{ is
    $n$-ary}\}.$$
  Then the set $\RSch(D)$ of \emph{rank-1 $D$-formulas} is defined as
  % $$\svX, \svY ::= \PrL\svA_1\dots\svA_n \mid \svX \impF \svY \mid \bot;$$ 
  % in other words,
  \begin{equation*}
    \RSch(D) = \BSch(\Lambda(\BSch(D)));
  \end{equation*}
  in other words, a rank-1 formula is a Boolean combination of
  formulas consisting of a modality from $\Lambda$ applied to Boolean
  combinations of atoms from $D$.
  
\item Given a set $C$ and a valuation $\tau: D \to \Pow(C)$, we extend
  $\tau$ to $\BSch(D)$ using the Boolean algebra structure of
  $\Pow(C)$, and then write $C, \tau \models \svA$ if
  $\tau(\svA) = C$, for $A\in\BSch(D)$.
\item Given the same data, we define the extension
  $\Sem{\phi}_{TC,\tau}\subseteq TC$ of $\phi\in\RSch(D)$ by extending
  the assignment 
  \begin{equation*}
    \lsem \PrL\svA_1\dots\svA_n \rsem_{TX, \tau} = \lsem \hearts
  \rsem_C( \tau(\svA_1), \dots, \tau(\svA_n))
  \end{equation*}
  using the Boolean algebra structure of $\Pow(TC)$.
 \item  We then write $TC, \tau \models \phi$ if
  $\lsem \phi \rsem_{TC, \tau} = TC$, and $t\models_{TC,\tau}\phi$ if
  $t\in\lsem \phi \rsem_{TC, \tau}$. 
\item If $D \subseteq \Pow(C)$ and $\tau$ is just the inclusion, we will
  usually drop it from the notation; in particular, for subsets
  $Y_1,\dots,Y_n\subseteq C$ and $\hearts\in\Lambda$ $n$-ary, we write
  $t\models\hearts(Y_1,\dots,Y_n)$ to mean
  $t\in\Sem{\hearts}_C(Y_1,\dots,Y_n)$.
\item A set $\Xi \subseteq \RSch$ is \emph{one-step satisfiable}
  w.r.t.\ $\tau$ if
  $\bigcap_{\phi \in \Xi} \lsem \phi \rsem_{TC, \tau} \neq \emptyset$.
\end{itemize}
 %We occasionally apply this notation to sets $Z\subseteq
  %\Pow(X)$ with $\tau$ being just inclusion, in which case mention of
  %$\tau$ is suppressed.
\end{defi}

%\begin{defi}[One-step rule]
%\end{defi}

\noindent Just like in the case of coalgebraic modal logic (see \S~\ref{sec:cml} below),  proof systems for CPL are best described in terms of rank-1 rules -- or, more precisely, rule schemes --, which describe the geometry  of the $\Lambda$-structure under
consideration~\cite{Schroder06}. 

\begin{defi}[One-Step Rules]\label{def:hilonestep} %[Strong one-step completeness]
  Fix a collection $\SchV$ of schematic variables
  $\sva, \svb, \svc \dots$
\begin{itemize}
\item A  \emph{one-step rule} is of the form $\svA / \svX$ where % $x \in \FoV$ (it will be called \emph{the designated individual variable of the rule}),
 $\svA \in \BSch(\SchV)$ and %  $\svX \in \RSch(\SchV)$. 
 % \item A one-step rule is a \emph{one-step axiom scheme} if its premise is empty.  %\emph{A $\CPL(\PrLSet,\SoV)$-instance of $\forall x.\svA / \forall x.\svX$} is of the form $\forall x.\phi / \forall x.\psi$ obtained by replacing the same schematic variables by the same formulas of $\CPL(\PrLSet,\SoV)$ in both the antecedent and the consequent. 
 % \item A one-step rule $\svA/\svX$ is \emph{clausal} if
 $\svX \in \RSch(\SchV)$ is a disjunctive clause over
 $\Lambda(\SchV)$, i.e.\ a finite disjunction of formulas that are
 either in $\Lambda(\SchV)$ or negations of formulas in
 $\Lambda(\SchV)$. As usual, we impose moreover that every schematic
 variable is mentioned at most once in $\svX$, and every schematic
 variable occurring in $\svA$ occurs also in $\svX$.
 \item A rule $\svA / \svX$ is \emph{one-step sound} if

\begin{center}
$TC, \tau \models \svX$
whenever $C, \tau \models \svA$ for a valuation $\tau: \SchV \to \Pow(C)$.
\end{center}

%A set $\Rules$ of one-step rules is
%\emph{one-step complete} if, whenever $TX, \tau \models \psi$
%for $\psi \in \Prop(\Lambda(\PV))$ then
%$\psi$ is a propositional consequence of formulas
%$\phi_0\sigma$ 
%where $\phi_1 \dots \phi_n / \phi_0$ is
%in $\Rules$ and $\sigma: \PV \to \Prop(\PV)$ is a substitution such that
%$X, \tau \models \phi_i \sigma$ for all $1 \leq  i \leq n$. 
% If moreover
% $\bigvee\Delta$ is always propositionally entailed by a single such
% formula $\bigvee\Gamma_0\sigma$, in which case
% $\Gamma_0\sigma\subseteq\Delta$ unless $\Delta$ is a tautology, then
% $\Rules$ is \emph{one-step cut-free complete}.
\item Given a set $\Rules$ of one-step rules and a valuation $\tau: \SchV \to \Pow(C)$, a set $\Xi \subseteq \RSch(\SchV)$ is \emph{one-step consistent \bro with respect to $\tau$\brc}~\cite{SchroderPattinson10b} if the set 
\begin{equation*}
\Xi\cup\lbrace \svX
\sigma \mid\sigma:\SchV \to\BSch(\SchV) \text{ and } \svA/ \svX \mbox{ is a rule in }
\Rules \text{ s.t. } C, \tau \models \svA \sigma\rbrace
\end{equation*}
is propositionally consistent, where $\svA\mapsto\svA\sigma$ and
$\svX\mapsto\svX\sigma$ denote the obvious inductive extensions of
$\sigma$ to $\BSch(\SchV)$ and $\RSch(\SchV)$, respectively (in other
words, we see $\sigma$ as a substitution, and use postfix notation to
denote application of substitutions).
\end{itemize}
\end{defi}

\begin{asm} \label{conv:1ss} For purposes of the technical development
  \bro excluding the examples\brc, we fix from now on a set $\Rules$ of
  one-step sound one-step rules.
\end{asm}

\noindent We next introduce the two variants of \emph{one-step}
completeness that we need for our global completeness proof. By
\emph{one-step}, we mean that the completeness assumption is only made
for a very simple logic that precludes nesting of modal operators, and
hence can be interpreted over single elements of $TC$ rather than full
coalgebraic models.
\begin{defi}[Strong One-Step Completeness~\cite{SchroderPattinson10b}] \label{def:sosc}
%\begin{fact}
%  A rule set $\Rules$ is \emph{one-step complete} for a $\Lambda$-structure if
% \begin{center}
%  for every $C \in \Set$,  $\phi\in \RSch(\SchV)$, and  $\tau: \SchV \to \Pow(C)$, \\ $\phi$ is one-step satisfiable wrt $\tau$ whenever it is one-step consistent wrt $\tau$. 
% \end{center}
  The rule set $\Rules$ is \emph{strongly one-step complete \bro S1SC, neighbourhood-like\brc} for
  a $\Lambda$-structure if
  \begin{center}
    for every $C \in \Set$, $\Xi \subseteq \RSch(\SchV)$, and
    $\tau: \SchV \to \Pow(C)$, \\ $\Xi$ is one-step satisfiable wrt
    $\tau$ whenever $\Xi$ is one-step consistent wrt $\tau$.
  \end{center}
  Similarly, $\Rules$ is \emph{finitary S1SC} if the same condition
  holds for $\tau$ restricted to be of type $\SchV \to \Pfin(C)$.
%\end{fact}
%We say that a set of rules is \emph{finitary S1SC} if the above holds whenever $\tau: \SchV \to \Pfin(C)$ (but not necessarily for arbitrary $\tau$).
\end{defi}
\noindent By the usual argument, both forms of strong one-step
completeness imply corresponding forms of compactness:
\begin{defi}[One-step compactness]\label{def:os-compact}
  $\Lambda$-structure is \emph{one-step compact} if for every set $X$,
  every finitely satisfiable set
  $\Phi\subseteq\BSch(\PrLSet(\Pow X))$ of one-step formulas is
  satisfiable.  Similarly, a $\Lambda$-structure is \emph{finitary
    one-step compact} if for every set $X$, every finitely satisfiable
  set $\Phi\subseteq\BSch(\PrLSet(\Pfin X))$ of one-step formulas is
  satisfiable.
\end{defi}
\begin{lem}
  Every \bro finitary\brc\ S1SC $\Lambda$-structure is  \bro finitary\brc\ one-step
  compact. \qed
\end{lem}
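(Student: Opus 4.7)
The plan is the standard derivation of compactness from strong completeness, so I would set it up by reducing the ``external'' syntax $\BSch(\PrLSet(\Pow X))$ of Definition~\ref{def:os-compact} to the ``internal'' syntax $\RSch(\SchV)$ over which S1SC is stated. Fix a set $X$ and a finitely satisfiable $\Phi\subseteq\BSch(\PrLSet(\Pow X))$. I would pick an injective tagging of the subsets of $X$ occurring as atoms in $\Phi$ by schematic variables (extending arbitrarily to all of $\Pow X$), thereby obtaining both a valuation $\tau:\SchV\to\Pow X$ and a rewritten set $\Phi'\subseteq\RSch(\SchV)$ satisfying $\Sem{\phi'}_{TX,\tau}=\Sem{\phi}_{TX}$ for every $\phi\in\Phi$ and its tag $\phi'\in\Phi'$.

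The next step is to verify that $\Phi'$ is one-step consistent with respect to $\tau$ in the sense of Definition~\ref{def:hilonestep}. Any finite subset of the relevant set decomposes into a finite $\Psi_0\subseteq\Phi'$ together with finitely many conclusion instances $\svX_j\sigma_j$ of one-step sound rules $\svA_j/\svX_j\in\Rules$ whose premises satisfy $X,\tau\models\svA_j\sigma_j$. By finite satisfiability of $\Phi$, untranslating $\Psi_0$ yields a finite subset of $\Phi$ that is satisfied by some $t\in TX$; by Assumption~\ref{conv:1ss} and one-step soundness of each $\svA_j/\svX_j$, the \emph{same} $t$ lies in $\Sem{\svX_j\sigma_j}_{TX,\tau}$. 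Since membership in $t$ induces a Boolean valuation on $\RSch(\SchV)$, this propositional model witnesses consistency. With one-step consistency established, S1SC delivers a single $t\in TX$ satisfying all of $\Phi'$ wrt $\tau$, which by construction of the tagging satisfies $\Phi$; this is exactly one-step compactness. The finitary variant runs through verbatim, because $\Phi\subseteq\BSch(\PrLSet(\Pfin X))$ forces the tagging $\tau$ to take values in $\Pfin X$, so finitary S1SC applies.

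There is no conceptual obstacle and the only point requiring care is the bookkeeping between the two syntaxes: one must handle repeated occurrences of the same subset in $\Phi$ (absorbed by the injective tagging), and one must check that a \emph{single} semantic witness $t\in TX$ really induces a propositional model of both $\Psi_0$ and all the rule-conclusion instances simultaneously. Both are immediate from the Boolean-algebra definition of $\Sem{\cdot}_{TX,\tau}$ for rank-1 formulas and from the clause-form assumption on rule conclusions, so the lemma reduces to an unpacking of definitions once the tagging is in place.
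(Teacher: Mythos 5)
Your proof is correct and is exactly the ``usual argument'' the paper alludes to (the paper gives no proof beyond the \qed): tag the subsets occurring in $\Phi$ by schematic variables to obtain $\tau$ and $\Phi'$, derive one-step consistency of $\Phi'$ from finite satisfiability of $\Phi$ together with one-step soundness of the rules, and then invoke (finitary) S1SC. The only implicit assumption---that $\SchV$ has enough schematic variables to tag all subsets occurring in $\Phi$---is one the paper itself makes elsewhere (e.g.\ in the completeness proof), so nothing essential is missing.
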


%\begin{defi}
%\label{def:onestep} %[Strong one-step completeness]
%\tlntnew{Even the def:onestep label multiply defined here. Which definition are we using?}
%Let $\mathsf{P}$ be the set of propositional variables and $R$ = $\Gamma_{1} \To \Delta_{1}$, ..., $\Gamma_{k} \To \Delta_{k}$ / 
%$\Gamma_{R} \To \Delta_{R}$ a one-step rule. 
%$\Rules$ is \emph{one-step sound} if $TC, \tau \models  \mathrm{Conseq}(R)$ whenever $C, \tau \models \mathrm{Prem}(R)$ for a valuation $\tau: \mathsf{P} \to \Pow(C)$, for all $R \in \Rules$. 
%Given a set $\Rules$ of one-step rules and a valuation $\tau: \mathsf{P} \to \Pow(C)$, a set $\Xi \subseteq \mathsf{Prop}(\Lambda(\mathsf{Prop}(\mathsf{P})))$ is \emph{one-step consistent $($with respect to $\tau$$)$}
%%~\cite{SchroderPattinson10b}
%if the set $\Xi \cup \lbrace \mathrm{Conseq}(R) \sigma \mid\sigma:\mathsf{P} \to \mathsf{Prop}(\mathsf{P}) ;  R \in \Rules; C, \tau \models \mathrm{Prem}(R) \sigma  \rbrace$ is propositionally consistent. 
%A rule set $\Rules$ is \emph{strongly one-step complete $($S1SC$)$} for a $\Lambda$-structure if for every set $C$, any $\Xi \subseteq \mathsf{Prop}(\Lambda(\mathsf{Prop}(\mathsf{P})))$ and any $\tau: \mathsf{P} \to \Pow(C)$, $\Xi$ is one-step satisfiable with respect to $\tau$ whenever it is one-step consistent with respect to $\tau$. We say that a set of rules is \emph{finitary S1SC} if the above holds whenever $\tau: \mathsf{P} \to \Pow_{\mathrm{fin}}(C)$ (but not necessarily for arbitrary $\tau$). 
%\end{defi}

\begin{rem}\label{rem:cong}
  Every (finitary) S1SC rule set derives the \emph{congruence rule}
  \begin{equation*}
    \frac{\sva_1\eqvF\svb_1\quad\dots\quad\sva_{\arty{\PrL}}\eqvF\svb_{\arty{\PrL}}}{\PrL(\sva_1,\dots,\sva_{\arty{\PrL}})\to\PrL(\svb_1,\dots,\svb_{\arty{\PrL}})}
  \end{equation*}
  for every modal operator $\PrL\in\Lambda$.
  The reason is that the congruence rule is clearly one-step sound,
  and even under the much weaker asssumption of \emph{one-step
    completeness} (obtained by restricting $\Xi$
  to be finite in Definition~\ref{def:sosc}), all one-step sound rules
  are derivable~\cite{SchroderVenema17}.
\end{rem}
\renewcommand{\modimpl}{\To}

\begin{exa}\label{ex:s1sc}\
\begin{itemize}
\item Modal neighbourhood semantics is axiomatised by
\begin{equation*} 
\nrule{\rul{C}}{\sva \eqvF \svb}{\Box \sva \to \Box \svb}
\end{equation*}
which expresses that $\Box$ is a congruential operator (where we write $\Box\sva\to\Box\svb$ in place of $\neg\Box\sva\orF\Box\svb$ for readability). This is the paradigmatic example of S1SC; see the discussion in Remark \ref{rem:s1sc} below. 
\item The rule set for the normal modal logic $\mathsf{K}$  consists of the rules
\begin{equation*} 
\nrule{\rul{K_n}}{\sva \to \svb_1\orF \dots\orF \svb_n}{\Diamond \sva \to \Diamond \svb_1\orF \dots\orF
\Diamond \svb_n}
\end{equation*}
for all $n \geq 0$. As we are going to see in Lemma \ref{lem:kr1sc}, $\Rules$ is finitary S1SC. On the other hand, for reasons detailed in Remark \ref{rem:s1sc}, $\Rules$ is not S1SC.  An  S1SC semantics for this particular rule set would be provided by \emph{normal neighbourhood \bro i.e., filter\brc\ frames}.
\item For graded modal logic, the proof of
  one-step completeness of the rule set  
        \begin{gather*}
     \nrule{\rul{RG1}}{\sva\to \svb}{\gldiamond{n+1}\sva\to\gldiamond{n}\svb}
    \qquad
    \nrule{\rul{A1}}
	  {\svc\to \sva\orF \svb}
	  {\gldiamond{n_1+n_2}\svc\to\gldiamond{n_1}\sva\orF\gldiamond{n_2}\svb}\\[2mm]
    \nrule{\rul{A2}}
	  {	\sva \to \svc \quad \svb\to \svc \quad
	\sva\andF \svb\to \svd}
	  {\gldiamond{n_1}\sva\orF\gldiamond{n_2}\svb\to
	    \gldiamond{n_1+n_2+1}\svc\orF\gldiamond{0}\svd}
	  \qquad
    %\takeout{\nrule{\rul{RN}}{\sva\to}{\gldiamond{0}\sva\to}}
    \nrule{\rul{RN}}{\neg\sva}{\neg\Diamond_0 \sva} %\quad 
  \end{gather*}
  
  %\tlnew{adjust this to the present notation}
  \cite[Lemma~3.10]{SchroderVenema17} upon inspection effectively
  establishes finitary S1SC. 
  \item For positive Presburger modal logic, we
  similarly have that the proof of one-step completeness of a natural
  rule system for full Presburger modal
  logic~\cite[Lemma~4.5]{KupkePattinson10} straightforwardly adapts to
  a) positive Presburger modal logic, and b) finitary S1SC, provided
  that one generalizes the semantics to infinite multisets as we do
  here.
    \item Conditional logic provides a curious mixed case, due to its being neighbourhood-like in one coordinate and Kripke-like in another. We will introduce  adequate apparatus and analyse it further in Example \ref{ex:cs1sc}.
  \item In Remark \ref{rem:pauly}, we discuss the issue of S1SC of  coalition logic interpreted over effectivity functions. %We discuss this in detail 
\end{itemize}
\end{exa}

\begin{lem} \label{lem:kr1sc} The $\Lambda$-structure for relational
  modal logic \bro Section~\ref{sec:ss}\brc\ is finitary S1SC.
\end{lem}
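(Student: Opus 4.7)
The plan is to unfold finitary S1SC directly: given a one-step consistent $\Xi \subseteq \RSch(\SchV)$ and a finite-image valuation $\tau: \SchV \to \Pfin(C)$, we must produce $(Y, U) \in \Pow(C) \times \Pow(\Atoms) = TC$ witnessing one-step satisfiability of $\Xi$ wrt $\tau$. By definition of one-step consistency, $\Xi$ together with all conclusions $\svX\sigma$ of rule instances whose premise-instances $\svA\sigma$ are forced by $\tau$ forms a propositionally consistent set, so we obtain a propositional valuation $w$ on the modal atoms $\PrL\svA_1 \dots \svA_n$ (with the $\svA_i \in \BSch(\SchV)$) making every member of this combined set true. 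Call $\Diamond \svA$ \emph{positive} if $w(\Diamond\svA) = \top$ and \emph{negative} otherwise, and likewise for nullary propositional atoms $a \in \Atoms$.

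Realising $w$ geometrically is then the main task. For the atomic component, simply take $U := \{a \in \Atoms : w(a) = \top\}$; propositional atoms are pairwise independent so no clash can arise. For the successor component, I want $Y \subseteq C$ to meet $\tau(\svA)$ for every positive $\Diamond\svA$ and miss $\tau(\svB)$ for every negative $\Diamond\svB$. I will select, for each positive $\Diamond\svA$, a witness
\begin{equation*}
  c_\svA \in \tau(\svA) \setminus \bigcup\{\tau(\svB) : \Diamond\svB \text{ is negative}\}
\end{equation*}
and set $Y := \{c_\svA : \Diamond\svA \text{ is positive}\}$.

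The crucial step is the non-emptiness of the above set, and this is where the rule scheme $\rul{K_n}$ and the finiteness of $\tau(\svA)$ come into play jointly. Suppose for contradiction that $\tau(\svA) \subseteq \bigcup_{\Diamond\svB \text{ negative}} \tau(\svB)$. Since $\tau(\svA)$ is \emph{finite}, some finite subfamily already covers it, yielding negative $\Diamond\svB_1, \dots, \Diamond\svB_n$ with $\tau(\svA) \subseteq \tau(\svB_1) \cup \dots \cup \tau(\svB_n)$, which is exactly $C, \tau \models \svA \to \svB_1 \orF \dots \orF \svB_n$. This is a valid premise-instance of $\rul{K_n}$, so $\Diamond\svA \to \Diamond\svB_1 \orF \dots \orF \Diamond\svB_n$ belongs to the conclusion set satisfied by $w$; but $w(\Diamond\svA) = \top$ together with $w(\Diamond\svB_i) = \bot$ for every $i$ contradicts this. (The degenerate $n=0$ case handles a positive $\Diamond\svA$ with $\tau(\svA) = \emptyset$.)

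The principal obstacle is precisely this passage from an arbitrary cover to a finite one that matches some finite instance $\rul{K_n}$: with unrestricted $\tau : \SchV \to \Pow(C)$ this step fails and one would need a filter-like closure instead, which anticipates why unrestricted S1SC breaks down for $\mathsf{K}$, as alluded to in Remark~\ref{rem:s1sc}. Once the witnesses $c_\svA$ are chosen, the set $Y$ by construction meets every positive $\tau(\svA)$ and avoids every negative $\tau(\svB)$, so $(Y, U)$ realises $w$ on every modal atom, hence $(Y, U) \models_{TC, \tau} \phi$ for all $\phi \in \Xi$, completing the proof.
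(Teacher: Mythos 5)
Your construction isolates the right mechanism but reaches the conclusion by a different route than the paper. The paper passes w.l.o.g.\ to a maximally consistent $\Xi\subseteq\BSch(\Lambda(\Pfin X))$, takes the canonical candidate $A=\{x\mid\Diamond\{x\}\in\Xi\}$, and proves a truth lemma by induction on formulas, the modal case resting on the derived equivalence of $\Diamond B$ with $\Diamond\{b_1\}\lor\dots\lor\Diamond\{b_n\}$ for finite $B$ (one direction from $\rul{K_n}$, the other from monotonicity, i.e.\ $\rul{K_1}$). You instead fix a propositional valuation $w$ witnessing consistency and select, for each positive $\Diamond\svA$, a point of $\tau(\svA)$ outside all negative extensions, justifying existence by a finite-subcover argument played off against $\rul{K_n}$. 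Both proofs turn on the same finiteness phenomenon --- a diamond with a finite argument reduces to singletons, resp.\ finite subcovers --- and both are acceptable; the paper's canonical $A$ avoids choice and the positive/negative bookkeeping, while yours avoids the detour through maximal consistency and also treats the $\Pow(\Atoms)$ component, which the paper elides by setting $\Lambda=\{\Diamond\}$.

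One step does need repair: you assert that $\tau(\svA)$ is finite for the arguments $\svA\in\BSch(\SchV)$ of the modal atoms. Under the letter of the definition $\RSch(\SchV)=\BSch(\Lambda(\BSch(\SchV)))$ this is false, since $\tau(\neg\sva)=C\setminus\tau(\sva)$ is cofinite and hence typically infinite even when $\tau$ lands in $\Pfin(C)$; and with such arguments admitted your finite-subcover step (and indeed the statement itself) breaks down. For instance, over $C=\Nat$ with $\tau(\sva_n)=\{n\}$, the set $\{\Diamond(\sva_0\lor\neg\sva_0)\}\cup\{\neg\Diamond\sva_n\mid n\in\Nat\}$ is one-step consistent (make $\Diamond\svB$ true exactly when $\tau(\svB)$ is infinite; every forced instance of $\rul{K_n}$ then holds because a finite union of finite sets is finite) yet it is not one-step satisfiable. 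The paper's own proof quietly sidesteps this by restating the claim over $\BSch(\Lambda(\Pfin X))$, so that the arguments of the modalities are literally finite subsets of $X$ --- which is also how the notion is used in the completeness proof, where the relevant modal atoms carry single schematic variables with finite $\tau$-images. Your argument is correct for that reading, but you should state the restriction explicitly rather than claiming finiteness of $\tau(\svA)$ for arbitrary Boolean combinations.
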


\begin{proof}
Consider any consistent $\Xi \subseteq \BSch(\Lambda(\Pfin X))$, where $\Lambda = \{\Diamond\}$; w.l.o.g.,  $\Xi$ is maximally consistent. We need to find $A \in \Pow X$ s.t. $A \vDash_{\Pow X} \Xi$. Let us choose 
\[
A = \{ x \mid \Diamond\{x\} \in \Xi \}. 
\]
One shows by induction on formulas that 
\begin{center}
$A \vDash_{\Pow X}  \phi$ iff $\phi \in \Xi$.
\end{center}
The only non-trivial case is the modal one. For any $B = \{b_1, \dots, b_n\} \in \Pfin X$, we have that
\begin{align*}
A \vDash_{\Pow X} \Diamond B & \text{ iff } \exists i \leq n.A \vDash_{\Pow X}\Diamond\{b_i\}  \\
&  \text{ iff } \exists i \leq n. \Diamond\{b_i\} \in \Xi \\
&  \text{ iff } \Diamond B = \Diamond\{b_1\} \cup \dots \cup   \Diamond\{b_n\} \in \Xi.
\tag*{\qEd}
\end{align*}
\def\popQED{} % this must be immediately before \end{proof}
\end{proof}

\begin{rem}\label{rem:s1sc}
  As noted in~\cite[Remark~55]{SchroderPattinson10b}, we can give a
  more abstract characterization of S1SC, recognizable also to readers
  familiar with more categorical presentations of coalgebraic modal
  logic (cf., e.g., \cite{KurzRosicky}). %As %noted there, it can be
  %phrased more abstractly as follows:
  Every signature $\Lambda$
  together with a given set of one-step axiom schemes (equivalently,
  one-step rules) can be encoded disregarding concrete syntax by its
  \emph{functorial presentation}~\cite{KupkeEA04} (cf.  also
  \cite[Definition 28]{SchroderPattinson10b}) as an endofunctor
  $L_\Lambda$
  on the category $\mathsf{BA}$
  of Boolean algebras. $\mathsf{BA}$
  is dually adjoint to $\Set$,
  with the adjunction given by the contravariant powerset
  functor\footnote{We write here $\overline{\CPow}$
    to stress that we change the target category.}
  $\overline{\CPow}$
  and the functor $\Spa$
  taking a Boolean algebra to the set of its ultrafilters:
  \begin{equation}\label{equ:LPT}
    \vcenter{
      \xymatrix@C=15pt{ 
        {\mathsf{BA}} \POS!R(-.5) \ar@(dl,ul)[]^{L_\Lambda} 
        \ar@/_/[rr]_{\Spa}  
        & &  
        {\Set}\ar@/_/[ll]_{\overline{\CPow}} \POS!R(.5)\ar@(dr,ur)[]_{T}  
      }
    }
  \end{equation} 
  The information contained in each $\Lambda$-structure
  can be then more abstractly encoded by $\delta:
  L_\Lambda\overline{\CPow} \to
  \overline{\CPow}T$~\cite{KupkeEA04} and the \emph{canonical}
  structure for $\Lambda$
  is given by $M_\Lambda
  = \Spa
  L_\Lambda\overline{\CPow}$. Coalgebras for the canonical structure
  can equivalently be described as generalized neighbourhood frames
  (where by \emph{generalized} we mean that for every $n$-ary
  modality in $\Lambda$,
  we have $n$-ary
  neighbourhoods, i.e.\ subsets of the $n$-th
  power of the state set), subject to satisfaction of the frame
  conditions embodied in the given
  one-step rules~\cite[Remark~34]{SchroderPattinson10b}. For every
  $\Lambda$-structure,
  we can define a canonical \emph{structure morphism}
  \cite[p. 1121]{SchroderPattinson10b} to $M_\Lambda$
  by composing the counit of the above adjunction with
  $\Spa\delta$,
  and S1SC effectively requires that this structure morphism is
  surjective. In other words, a $\Lambda$-structure
  is S1SC iff its functor surjects onto the canonical neighbourhood
  semantics; it is for this reason that we refer to the S1SC case as
  ``neighbourhood-like''. In fact, as we explain in the next
  remark, we do not currently have an in-the-wild example of an S1SC
  structure that is not actually isomorphic to the canonical
  neighbourhood semantics. %We can, however, mix the S1SC case with the
  %bounded case that we introduce following that Remark.

%The canonical structure
  %captures generalized neighbourhood systems, as used in Chang's
  %original setup. Outside of neighbourhood-like structures, examples
  %of S1SC structures are rare, as pointed out already in \emph{loc.\
   % cit.}; see additionally Remark~\ref{rem:pauly}. %We can however
  %proceed argumentwise, and combine strong 1-step completeness (in
  %some arguments of modalities) with boundedness, thus covering
  %additional examples in our generic completeness result; see
  %Example~\ref{expl:completeness}.
\end{rem}

\begin{rem}\label{rem:pauly}
  \newcommand{\Gam}{\mathcal{G}} Coalition logic \cite{Pauly02} and,
  essentially equivalently, the next-step fragment of alternating-time
  temporal logic \cite{AlurEA02}, have modalities $[Q]$ indexed over
  \emph{coalitions} $Q$, which are subsets of a fixed finite set $N$
  of \emph{agents}; the operator $[Q]$ reads `the coalition $Q$ of
  players can enforce \dots in the next step'. The semantics is
  formulated over structures called \emph{game frames} or
  \emph{concurrent game structures}, i.e.,
  coalgebras for the functor
  \begin{equation*}\textstyle
    \Gam X = \lbrace ((S_i)_{i \in N}, f: \big(\prod_{i \in N} S_i\big) \to X)
    \mid \emptyset \neq S_i \subseteq \Nat
    \rbrace
  \end{equation*}
  where $S_i$ is thought of as the set of moves available to agent
  $i \in N$ and $f$ is an \emph{outcome function} that determines the
  next state of the game, depending on the moves chosen by the agents
  (we restrict to finitely many moves per agent as in alternating-time
  temporal logic). For notational convenience, given a coalition
  $Q = \{q_1, \dots, q_k\} \subseteq N$ and moves
  $s_{q_1} \in S_{q_1}$, \dots, $s_{q_k} \in S_{q_k}$, we write
  $s_Q \deq (s_q)_{q \in Q}$ and
  $S_Q=S_{q_1}\times\dots\times S_{q_k}$ (so that $s_Q\in S_Q$). Given
  $s_Q\in S_Q$ and $s_{N\setminus Q}\in S_{N\setminus Q}$, we write
  $(s_Q,s_{N\setminus Q})$ for the evident induced element of $S_N$.

  An alternative semantics of the coalitional operators is provided by
  \emph{effectivity functions}. These are functions $E$ assigning to
  each coalition $Q$ a set $E(Q)$ of properties that $Q$ can
  enforce. Explicitly, a concurrent game
  $G=((S_i)_{i \in P}, f)\in\Gam X$ induces an effectivity function
  $E_G$ by
  \begin{equation*}
    E_G(Q)=\{A\subseteq X\mid \exists s_Q\in S_Q.\,\forall s_{N\setminus Q}\in S_{N\setminus Q}.
    \, f(s_Q,s_{N\setminus Q})\in A\}.
  \end{equation*}
  Effectivity functions congregate into a functor $\CE$, a subfunctor
  of a product of neighbourhood functors. The modal operators $[Q]$
  are interpreted over effectivity functions in the usual style of
  neighbourhood semantics, i.e.\ by
  \begin{equation*}
    \Sem{[Q]}_X(A)=\{E\in\CE X\mid A\in E(Q)\}.
  \end{equation*}
  Composing this semantics with the above-defined projection from
  concurrent games to effectivity functions yields the interpretation
  of the coalitional modalities $[Q]$ over $\Gam$; this reproduces the
  standard semantics of coalition logic and alternating time temporal
  logic.  

  Now Theorem~3.2 in~\cite{Pauly02} states that an effectivity
  function $E\in\CE(X)$ is of the form $E_G$ for some $G \in\Gam X$ iff it is
  \emph{playable}, i.e.\ satisfies the following properties:
  \begin{itemize}
  \item For all $Q$, $\emptyset\notin E(Q)\owns X$
  \item $E$ is \emph{outcome-mononotic}, i.e.\ each $E(Q)$ is upwards
    closed under set inclusion.
  \item $E$ is \emph{$N$-maximal}, i.e.\ for all $A\subseteq X$,
    either $X\setminus A\in E(\emptyset)$ or $A\in E(N)$.
  \item $E$ is \emph{superadditive}, i.e.\ whenever $A_1\in E(Q_1)$
    and $A_2\in E(Q_2)$ for disjoint coalitions $Q_1,Q_2$, then
    $A_1\cap A_2\in E(Q_1\cup Q_2)$. 
  \end{itemize}
  If this were the case, then coalition logic interpreted over either
  concurrent games or playable effectivity functions would be S1SC,
  and we claimed as much in the conference
  version~\cite{LitakPSS12:icalp}: The above conditions amount to
  playable effectivity functions being just neighbourhood systems that
  satisfy a set of one-step rules:
  \begin{gather*}
    \neg[Q]\bot\qquad [Q]\top\qquad \frac{\sva\to\svb}{[Q] \sva\to [Q]\svb}\\[1ex]
    \frac{\sva\orF\svb}{[\emptyset]\sva\lor[N]\svb}\qquad
    \frac{\sva\land\svb\to\svc}{[Q_1]\sva\land[Q_2]\svb\to[Q_1\cup Q_2]\svc}\;\text{ for
    } Q_1\cap Q_2=\emptyset,
  \end{gather*}
  and such structures are always S1SC~(this
  is by the development in \cite{SchroderPattinson10b}, see in
  particular Remarks~34 and~55 in op.\ cit., and also
  Remark~\ref{rem:s1sc} above). However, it turns out that Theorem~3.2
  in~\cite{Pauly02} is not in fact entirely correct, and once fixed no
  longer implies that coalition logic is S1SC. To see this, note that
  for every effectivity function of the form $E_G$, $E_G(\emptyset)$
  must have a least element, equivalently be closed under
  intersections: every element of $E_G(\emptyset)$ must contain the
  set
  \begin{equation*}
    A=\{f(s_N)\mid s_N\in S_N\},
  \end{equation*}
  and this set is itself in $E_G(\emptyset)$. This condition is
  however not satisfied by all playable effectivity functions in the
  above sense: take $X$ to be some infinite set, pick a non-principal
  ultrafilter $U$ on $X$, and put $E(Q)=U$ for all coalitions
  $Q$. This defines a playable effectivity function but
  $E(\emptyset)=U$ has no least element. Adding the condition that
  $E(\emptyset)$ has a least element to the definition of playability
  does fix the theorem, but this condition is not expressible by a
  finitary one-step axiom and hence we do not obtain S1SC for %the result no longer implies that
  coalition logic as a corollary.
\end{rem}
%\lsnote{@Tadeusz: Positive examples here}
%\tlnew{Is it taken care by the example you added below, or do we need more?}

%In previous papers, e.g. in \cite{SchroderPattinson10}, one-step compactness was often termed \emph{strong one-step completeness} and finitary one-step compactness \emph{strong finitary one-step completeness}; we hereby depreciate this terminology. 

\noindent As indicated above, we have alternative conditions that
ensure completeness~\cite{SchroderPattinson10}:
\begin{defi}\label{def:bounded}\
 A modal operator $\PrL$ is \emph{$k$-bounded} in the $i$-th argument
  for $k \in \Nat$ and with respect to a $\Lambda$-structure $T$ if
  for every $C \in \Set$ and every $\vec{A} \subseteq C$,
 % \begin{multline*}
\begin{equation}\label{eq:bounded}
    \Sem{\PrL}_C(A_1,\dots,A_n) = %\\ = 
\bigcup_{B\subseteq A_i, \#B\le k}\Sem{\PrL}_C(A_1,\dots,A_{i-1},B,A_{i+1},\dots,A_n).
\end{equation}
  %\end{multline*}
\end{defi}
\noindent Boundedness of $\PrL$ in the $i$-th argument implies in
particular that $\PrL$ is monotonic in the $i$-th argument.  %Like in
%the construction of named models in coalgebraic hybrid
%logic~\cite{SchroderPattinson10}, we will see that
 We can replace the
assumption that the rule set $\Rules$ is S1SC with the weaker
assumption that $\Rules$ is finitary S1SC, provided that  modal
operators are bounded on respective coordinates. The technical details of
a suitably general setup are as follows:

\takeout{\begin{exa}
  Let us examine some examples of bounded structures. Under
  monotonicity, the right-to-left inclusion in~\eqref{eq:bounded} is
  automatic, so we only discuss the left-to right inclusion:
  \begin{itemize}
  \item    The
  standard Kripke modality $\Diamond$ is $1$-bounded: If
  $B\in\Sem{\Diamond}_C(A)$, then $B\cap A\neq\emptyset$ by
  definition. So fix $c\in B\cap A$; then
  $B\in\Sem{\Diamond}_C(\{c\})$. 
 \item Similarly, the non-monotonic
  conditional $>$ is $1$-bounded in its second argument. 
  \item The graded
  modality $\langle k\rangle$ is $k+1$-bounded: If
  $\mu\in\Sem{\langle k\rangle}_C(A)$, then $\mu(A)>k$. Hence there
  exist (not necessarily distinct) elements $c_1,\dots,c_{k+1}\in C$
  such that $\mu(\{c_1,\dots,c_{k+1}\})>k$, i.e.\
  $\mu\in\Sem{\langle k\rangle}_C(\{c_1,\dots,c_{k+1}\})$. 
  \item Similarly,
  in positive Presburger modal logic, the operator
  $L_k(a_1,\dots,a_n)$ is bounded in all its arguments, more
  precisely, $((k+1)\text{ div }a_i)$-bounded in the $i$-th argument.
  \end{itemize}

  %As discussed in Example \ref{ex:s1sc}, all these examples are finitary $S1SC$. 
  \end{exa}}
%\noindent We can also deal with mixed situations where boundednes
%holds only in some arguments of a modality. 
\begin{defi}\
  \begin{itemize}
\item A \emph{boundedness signature} for $\Lambda$ is a function $\bel$ assigning to every $\PrL \in \Lambda$ a vector of elements of $\Nat \cup \{\infty\}$ of length $\arty{\PrL}$, i.e. an element of $(\Nat \cup \{\infty\})^{\arty{\PrL}}$. 
\item Being \emph{$\infty$-bounded} is a condition trivially satisfied by all operators, i.e., every operator is ``$\infty$-bounded'' in each coordinate.
\item  We say that $\bel$ is \emph{adequate}  for  a $\Lambda$-structure
 over  $T$  if every modal operator $\PrL \in \Lambda$ is $\bel(\PrL)(i)$-bounded in $i$ every $i \leq \arty{\PrL}$. 
\item We say that
  $\Lambda$ is $\bel$-\emph{bounded} w.r.t $T$ if $\bel$ is adequate for  the structure in question and the codomain of $\bel$ does not contain $\infty$. 
  \item We say that
  $\Lambda$ is \emph{bounded} w.r.t. $T$ if it is $\bel$-\emph{bounded} w.r.t. $T$ for \emph{some} $\bel$. That is, 
  every modal operator $\PrL \in \Lambda$ for every $i \leq \arty{\PrL}$  is
  $k_{\PrL,i}$-bounded in $i$ for some $k_{\PrL,i} < \infty$.  
\end{itemize}
\end{defi}

%To make this notion precise, let us 
%\begin{defi\alb}

\begin{exa}\ \label{ex:bounded}
Here are some examples of boundedness signatures adequate for structures under consideration:
\begin{itemize}
\item for the neighbourhood case,  $\bel(\Box) = (\infty)$,
\item for the Kripke case,  $\bel(\Diamond) = (1)$: \newline
if
  $B\in\Sem{\Diamond}_C(A)$, then $B\cap A\neq\emptyset$ by
  definition. So fix $c\in B\cap A$; then
  $B\in\Sem{\Diamond}_C(\{c\})$,
\item for  graded modalities, $\bel(\langle k \rangle) = (k+1)$: \newline
 if
  $\mu\in\Sem{\langle k\rangle}_C(A)$, then $\mu(A)>k$. Hence there
  exist (not necessarily distinct) elements $c_1,\dots,c_{k+1}\in C$
  such that $\mu(\{c_1,\dots,c_{k+1}\})>k$, i.e.\
  $\mu\in\Sem{\langle k\rangle}_C(\{c_1,\dots,c_{k+1}\})$, 
\item for positive Presburger logic, 
$$\bel(L_k(a_1,\dots,a_n)) = ((k + 1) \;\mathrm{div}\; a_1 + 1, \dots, (k + 1) \;\mathrm{div}\; a_n + 1),$$
\item for the discrete distribution functor $\Dist$, $\bel(\langle p \rangle) = (\infty)$,
\item for its finite variant $\Dist_k$, $\bel(\langle n/k \rangle) = (n)$,
\item for non-monotonic conditionals,  $\bel(\CToDual) = (\infty,1)$.
\end{itemize}
\end{exa}

%\noindent Examples of bounded operators include---apart from Kripke
%frames (1-bounded)---graded operators over multigraphs and
%\textbf{positive} Presburger logic. See \cite{SchroderPattinson10}
%for details. 
\noindent Note that, e.g., the neighbourhood modality clearly
fails to be bounded. %boundedness is a ``Kripke-like'' property.
Boundedness allows us to broaden the scope of our completeness results to setups where full S1SC would be too much to ask, i.e., to leave the  neighbourhood-like setting.  This is done by requiring S1SC  on suitable coordinates only for  valuations of schematic variables in \emph{finite} sets. Modalities which are both finitary S1SC and bounded will be called \emph{Kripke-like}. In order to make this precise so that we can cover mixed cases, such as those of non-monotonic conditionals, some care is needed.

\begin{defi}\label{def:finitarysc}\
\begin{itemize}
\item The \emph{colouring} function $\bbe: \Nat \cup \{\infty\} \to \{\fin, \infty\}$ assigns $\fin$ to elements of $\Nat$ and   $\bbe(\infty) = \infty$. It is extended pointwise to $(\Nat \cup \{\infty\})^{\arty{\PrL}}$ and $\bbel$ is defined as the composition of $\bel$ with this pointwise extension. 
\item Let $\colo: \SchV \to \{\fin, \infty\}$ be a colouring of the set of schematic variables. Define the set of \emph{$\bbel,\colo$-coloured
    modalities} as
  $$\bbel_\colo  = \{ \PrL \sva_1\dots \sva_{\arty{\Lambda}} \,|\, \sva_1\dots \sva_{\arty{\Lambda}} \in \SchV \text{ and }  (\colo\sva_1,\dots,\colo\sva_{\arty{\Lambda}}) = \bbel(\PrL) \}.$$
%  where $\sqsubseteq$ is the pointwise extension of $fin \sqsubseteq \infty$. Colouring is extended inductively from $D$ to $\BSch(D)$ by setting $\colo(\bot) = fin$ and $\colo(d_1 \to d_2) = \colo(d_2) \sqcup -\colo(d_1)$, where $\sqcup$ is join in $\sqsubseteq$ and $-$ flips $fin$ and $\infty$. 
 % Then the set $\RSch(D)$ of \emph{$\bel$-compatibly \colo-coloured rank-1 $D$-formulas} is defined as
  % $$\svX, \svY ::= \PrL\svA_1\dots\svA_n \mid \svX \impF \svY \mid \bot;$$ 
  % in other words,
  %\begin{equation*}
   % \RSch^{\bbel,\colo}(D) = \BSch(\Lambda^{\bbe,\colo}(\BSch(D))).
  %\end{equation*}
\item A valuation $\tau: \SchV \to \Pow(C)$ respects \colo\ iff
  $\tau(\sva_i) \in \Pow_{\colo\sva_i}$, where we recall that
  $\Pow_\fin$ is finite powerset, and $\Pow_\infty$ is simply $\Pow$.
\item  A Gentzen-style one-step rule $R$ is  \emph{$\bbel,\colo$-compatible} if it is of the form %$\svA / \svX$, %where $x \in \FoV$ (it will be called \emph{the designated individual variable of the rule}),
% $\svA \in \BSch(\SchV)$ and $\svX \in \RSch(\SchV)$$. 
 
\begin{equation*}
  \frac{\Gamma_1 \To \Delta_1 \qquad \cdots \qquad \Gamma_k \To
	\Delta_k}{\Gamma_{R} \To \Delta_{R}}
	\hspace*{\fill}(R)
\end{equation*}
where 
\begin{itemize}
\item $\Gamma_1, \dots, \Gamma_k, \Delta_1, \dots, \Delta_k$  are multisets of elements of $\SchV$,
\item $\Gamma_{R}$ and $\Delta_{R}$ are multisets of elements of $\bbel_\colo$. %For a Hilbert-style rule, $\bbel_\colo(R)$ is obtained via its Gentzen-style counterpart produced in Lemma \ref{lem:rulestoaxioms}.
\end{itemize}
\item For a Gentzen-style rule, write
$\bbel_\colo(R)$ for the set of  $\bbel,\colo$-compatible variants of  $R$ obtained by renaming of schematic variables. For a Hilbert-style rule, $\bbel_\colo(R)$ is obtained via its Gentzen-style counterpart. %produced in Lemma \ref{lem:rulestoaxioms}. 
Finally, $\bbel_\colo(\Rules) = \{ \bbel_\colo(R) \mid R \in \Rules\}$.
\item A set $\Xi \subseteq \RSch(\SchV)$ is \emph{$\colo$-consistent wrt $\tau$} if its union with the set 
\begin{equation*}
\lbrace \svX
\sigma \mid\sigma:\SchV \to\BSch(\SchV) \text{ and } \svA/ \svX \in
\bbel_\colo(\Rules) \text{ s.t. } C, \tau \models \svA \sigma\rbrace
\end{equation*}
 is
propositionally consistent.

\item We say that a set of rules $\Rules$ is \emph{$\bel$-S1SC} if  
\begin{center}
 for every $C \in \Set$, any $\Xi \subseteq \RSch(\SchV)$, any colouring $\colo$ and any $\tau$ respecting $\colo$, \\ $\Xi$ is one-step satisfiable wrt $\tau$ whenever it is $\colo$-consistent wrt $\tau$. 
 \end{center}
\end{itemize}
\end{defi}
\noindent That is, the notion of $\bel$-S1SC instantiates to finitary
S1SC in bounded arguments, and to S1SC in unbounded ones. 
\begin{exa} \label{ex:cs1sc}
In the
case of the modal signature $\{\CToDual\}$ of non-monotonic
conditionals with $\bel(\CToDual) = (\infty,1)$, the associated
one-step completeness condition may be called \emph{\bro S1SC, finitary
  S1SC\brc}~\cite{SchroderPattinson10}. A suitable axiomatization can be extracted from existing references \cite{Chellas80,SchroderPattinson10b,PattinsonSchroder10,SchroderPattinson10}:
  \takeout{\begin{itemize}
  \item $\neg(\sva > \bot)$
  \item $(\sva > (\svb \wedge \svc)) \vee ((\sva > \neg\svb) \wedge (\sva > \svc))$
  \end{itemize}}
  
  \[
  \nrule{\rul{RCK}}{\sva \to \svb_1\orF \dots\orF \svb_n}{\svc > \sva \to \svc > \svb_1\orF \dots\orF \svc > \svb_n} \qquad \nrule{\rul{RE}}{\sva \eqvF \svb}{\sva > \svc \to \svb > \svc}
  \]
    %\begin{gather*}
  %\nrule{\rul{CK1}}{\svb \to }{\sva > \svb \to}
  %\end{gather*}
  A proof that this axiomatization is indeed $\bel$-S1SC  combines the neighbourhood argument in the first coordinate (cf.\ Remark~\ref{rem:s1sc}) with the Kripke argument in the second (cf.\ Lemma~\ref{lem:kr1sc}).
  \end{exa}

%\begin{defi} \label{def:katsutrick}
%Let $\sigma: \SchV \to \CPL(\PrLSet,\SoV)$ be a substitution. Then for any $x,y \in \FoV$, let $[\sigma,y,x]$ denote the mapping $\RSch(\SchV) \to \CPL(\PrLSet,\SoV)$ defined as the inductive extension of the mapping sending each $\PrL (\svA_1\dots\svA_n)$ to $x \PrL\cmh{y}{\widehat{\sigma}(\svA_1)}\dots\cmh{y}{\widehat{\sigma}(\svA_n)}$, where $\widehat{\sigma}$ is the inductive extension of $\sigma$ to $\BSch$.
%\end{defi}

%\todo{T: should we include these examples here?}

%\todo{T: also, should we include the example we've been working on in June 2011 in Leicester?}

\subsection{Hilbert-style Calculus} \label{sec:hilb-axiom}

%\subsection{Goldblatt's Abstract Henkin Principle}

\def\tbskip{0.7mm}
% \tlnt{INTEGRATE WITH THE PROOF THEORETICAL SECTION! Also, the same comments apply as in that section: we need examples of how one-step axiomatizations of propositional part translate into corresponding CPL axiomatizations}
 
%\begin{landscape}
\begin{table*}[t]
%\scriptsize
%\small
\hrule
\vspace{\tbskip}%1mm}

\caption{\label{tab:folaxioms}\label{tab:fo-axioms}\strut  Hilbert-style Calculus  $\Hilb\Rules$} %$\FOL$} %for $\gdash$}

\small

The axioms are modelled after those of Enderton \cite{Enderton72}.

\vspace{\tbskip}

Everywhere below, $\yprf$ denotes a sequence of universal quantifiers of arbitrary length, possibly empty.

\setlist[description]{font=\normalfont,}

%\vspace{1mm}
\begin{description}
\item[\axref{En1}] all propositional tautologies. These can be axiomatized, e.g., by%}\\} \\

\medskip

\begin{itemize}
\item $\yprf\lb\phi \impF \lb\psi\impF\phi\rb\rb$
\item $\yprf\lb\lb\phi \impF \lb\psi \impF \chi\rb\rb \impF \lb\lb\phi \impF \psi\rb \impF \lb\phi \impF \chi\rb\rb\rb$
\item $\yprf\lb\bot \impF\phi\rb$
\item $\yprf\lb\lb\lb \phi \impF \bot\rb \impF \bot\rb \impF \phi\rb$
%\item $\yprf\lb\lb\phi \impF \psi\rb \impF \lb\lb\psi \impF \chi\rb \impF \lb\phi \impF \chi\rb\rb\rb$
%\item $\yprf\lb\lb\lb \phi \impF \bot\rb \impF \phi\rb \impF \phi\rb$%\item $\yprf\lb\phi \impF \lb\lb\phi \impF \bot\rb \impF \psi\rb\rb$
\end{itemize}

\medskip

\noindent For \axref{En2} and \axref{Alpha} below, recall that whenever we write a substitution we implicitly impose the assumption that the substituted term is actually substitutable.
 
\medskip

\item[\axref{En2}] $\forall \vec{y}. (\forall x. \phi \to \phi[z/x])$
\item[\axref{En3}] $\forall \vec{y}. (\forall x. (\phi \to \psi) \to (\forall
x. \phi \to \forall x. \psi))$
\item[\axref{En4}] $\forall \vec{y}. (\phi \to \forall x. \phi)$ if $x$ is
fresh for $\phi$
\item[\axref{En5}] $\forall \vec{y}.(x = x)$
\item [\axref{En6}.1] $\forall \vec{y}. (x = z \to P(\vec{u}, x, \vec{v}) \to
P(\vec{u}, z, \vec{v}))$ for $P \in \Sigma \cup \lbrace = \rbrace$
\item[\axref{En6}.2] $\forall \vec{y}. (x = z \to x \hearts \lcomp y_1 \col
\phi_1 \rcomp \dots \lcomp y_n \col \phi_n \rcomp \to  z
\hearts \lcomp y_1 \col
\phi_1 \rcomp \dots \lcomp y_n \col \phi_n \rcomp)$
\item[\axref{Alpha}] $\forall \vec{y} ((x \PrL  \dots \lcomp z : \phi \rcomp \dots) \to (x \PrL  \dots \lcomp u : \phi[u/z] \rcomp \dots)) $
% \item[\axref{Cong}] $\forall \vec y. (\forall x. (\Land_{i=1}^n (\phi_i \biimpl \psi_i)) 
% \to \forall x. ((x \hearts \lcomp x:
% \phi_1 \rcomp \dots \lcomp x: \phi_n \rcomp) \biimpl (x \hearts \lcomp
% x:
% \psi_1 \rcomp \dots \lcomp x: \psi_n \rcomp)))$ \newline
% redundant for \emph{one-step cut-free complete} rule sets, see Remark \ref{rem:abso})
\item[\axref{Onestep($\Rules$)}] $\forall \vec{y}. \forall z. 
  (\forall x. \svA\sigma \to [\sigma, x, z]\svX)$ where
\begin{itemize}
\item $\svA/\svX$ ranges over the one-step rules in $\Rules$, and 
\item $\sigma$ is a substitution sending each $\schv{p}_{i}$ to a formula of $\Lang$, and $[\sigma, x, z]$ is 
the inductive extension of the map sending each $\hearts_i \vec{\schv{p}_i}$ to 
$z \hearts_i \lcomp x : \sigma(\schv{p}_{i}^{1}) \rcomp \cdots \lcomp x : \sigma(\schv{p}_{i}^{a(i)}) \rcomp)$
% \item %$C$ = $ [\sigma, x, z] (\Land \Gamma_R \to \Lor \Delta_R)$
% $\Prem{R} = \bigwedge \{ \bigwedge \Gamma_{i}\to \bigvee \Delta_{i} \,\mid \, 1\leq i \leq k \}$ and $\Conseq{R} = \bigwedge \Gamma_{R} \to \bigvee \Delta_{R}$ represent the premises and conclusion of the rule as in Definition \ref{def:onestep}. 
%\item  $P_i = (\Land \Gamma_i \to \Lor \Delta_i)\sigma$ its premises
\end{itemize}
\end{description}

%\bigskip

\bigskip

\begin{description}
\item[\bdpl] An additional axiom scheme when $\bel(\PrL)(i) \neq \infty$, with $\vec{z}\ntoc y_i,\vec{\phi}$ %predicate liftings $k$-bounded in argument $i$
\[
\yprf (x\PrL\dots\cmh{y_i}{\phi_i} \dots
\eqvF
\exists z_1\dots z_{\bel(i)}.(\bigwedge\limits_{j \leq \bel(\PrL)(i)}\sbst{\phi_i}{y_i}{z_j} \wedge
 x\PrL \dots \cmh{y_i}{\bigvee\limits_{j \leq \bel(\PrL)(i)}y_i \eqF z_j}\dots )) 
 \] 
\end{description}

%\begin{description}
%\item[\axiom{bdpl}${}_{k,i}$] $\yprf (x\PrL\dots\cmh{y_{i-1}}{\phi_{i-1}}\cmh{y_i}{\phi_i}\cmh{y_{i+1}}{\phi_{i+1}}\dots
%\eqvF$ \\
%$\exists z_1\dots z_k.(\bigwedge\limits_{j \leq k}\sbst{\phi_i}{y_i}{z_j} \wedge
% x\PrL \dots\cmh{y_{i-1}}{\phi_{i-1}}
% \cmh{y_i}{\bigvee\limits_{j \leq k}y_i \eqF z_j}\cmh{y_{i+1}}{\phi_{i+1}}\dots ))$ $(\vec{z}\ntoc y_i,\vec{\phi})$ 
%\end{description}

%\begin{description}
%\item[\axiom{bdpl}${}_{k,i}$] $\yprf (x\PrL\cmh{y_1}{\phi_1}\dots\cmh{y_n}{\phi_n} 
%\eqvF 
%\exists z_1\dots z_k.(\bigwedge\limits_{j \leq k}\sbst{\phi_i}{y_i}{z_j} \wedge
 %x\PrL\cmh{y_1}{\phi_1}\dots\cmh{y_{i-1}}{\phi_{i-1}}$ \\ 
  %\qquad $\cmh{y_i}{y_i \eqF z_1 \vee\dots \vee y_i \eqF z_k}\cmh{y_{i+1}}{\phi_{i+1}}\dots\cmh{y_n}{\phi_n}))$ $(\vec{z}\ntoc y_i,\vec{\phi})$ 
%\end{description}

\vspace{1mm}
\hrule
\end{table*}%

%\vspace{1mm}

We are finally ready to present our axioms for CPL in
Table~\ref{tab:folaxioms}. Axioms \axref{En1}--\axref{En6} are just
those of Enderton, with \axref{En6}.2 an additional clause to cover
the case of modal formulas. The $\alpha$-renaming axiom \axref{Alpha}
is needed because our syntax features separate comprehension
variables. %(recall that Chang's original syntax offered more compact modal formulas, but was also potentially more confusing).
Our \axref{Onestep($\Rules$)} axiom scheme generalizes what was
originally just the congruence rule (Remark~\ref{rem:cong}) in Chang's
formalism, corresponding to the fact that the S1SC rule system for
neighbourhood semantics consists of only the congruence
rule. Axiom \axref{\bdpl} applies to operators that are bounded in
suitable coordinates. It is important to notice that boundedness is
not expressible as a \emph{sentence} or \emph{formula} in weak
frameworks; in languages like $\HyFA$, it can only be expressed by a
non-standard rule \cite{SchroderPattinson10}.
 
% \begin{rem} \label{rem:abso}
% In fact, for most natural sets $\Rules$ of one-step rules, %$\Rules$ is actually
%  the corresponding instances of  \axref{Onestep($\Rules$)}  actually make \axref{Cong} redundant. This follows from earlier results on \emph{one-step cut-free completeness} \cite[Proposition 5.12]{PattinsonSchroder10}. We are going to devote more space to the issue in the cut-elimination section, see in particular Corollary \ref{cor:onestepcompl} below.
% \end{rem}
%In fact, in the basic modal signature boundedness is not detectable at all: propositional modal logic 

Let $\Gamma,\Delta \subseteq \CPL(\PrLSet,\SoV)$, let $\Rules$ be a set of one-step rules and $\phi \in \CPL(\PrLSet,\SoV)$. Write %$\Gamma \vdash_{\Delta,\Rules} \phi$ (or 
$\Gamma \gdash \phi$ if  
 there are $\gamma_1, \dots, \gamma_n \in \Gamma$ s.t. $\gamma_1 \impF \dots \impF \gamma_n \impF \phi$
 can be deduced from %\emph{all substitution instances} of formulas in $\Delta$,
 \axref{En1}--\axref{En6}, \axref{Alpha}, \axref{Onestep($\Rules$)} and \axref{\bdpl} in Table \ref{tab:folaxioms} using \textbf{only Modus Ponens}. 
%(For \axref{En2} and \axref{Alpha}, recall that whenever we write a substitution we implicitly impose the assumption that the substituted term is actually substitutable.) 
  This clearly defines \emph{a finitary deducibility relation} in the sense of Goldblatt \cite[Sec. 8.1]{goldblatt1993:abstract} and being \emph{$\gdash$-consistent} is equivalent with being \emph{finitely $\gdash$-consistent} in his sense, that is, $\Gamma \gdash \bot$ iff there is $\Gamma_0 \finin \Gamma$ s.t. $\Gamma_0 \gdash \bot$. 

\begin{thm}[Soundness] \label{thm:hilb-soundness}
Whenever a $\Lambda$-structure
 over  $T$ is adequate for $\bel$, %and $\Rules$ is a one-step sound set of rules, 
 all the axioms in Table \ref{tab:folaxioms} hold in every coalgebraic $\Lambda$-model  and the set of formulas valid in such a model is closed under $\gdash$.
\end{thm}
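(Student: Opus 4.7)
\medskip

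\noindent\textbf{Proof proposal.} The plan is to check each axiom scheme in Table~\ref{tab:folaxioms} individually, arguing that the underlying semantic closure properties mirror the syntactic shape of each scheme, and then to conclude closure of the set of valid formulas under $\gdash$ by the trivial observation that Modus Ponens preserves validity, together with the fact that validity in a model (i.e.\ truth under every valuation) is closed under universal generalisation so that the leading $\yprf$ prefixes cause no trouble. Since \axref{En1} consists of propositional tautologies and \axref{En2}--\axref{En5} are Enderton's first-order axioms whose validity is entirely standard (no modal apparatus is touched), and \axref{En6}.1 is just the Leibniz scheme, the real work lies in the four genuinely coalgebraic cases \axref{En6}.2, \axref{Alpha}, \axref{Onestep($\Rules$)} and \axref{\bdpl}.

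For \axref{En6}.2, fix $\gM=(C,\gamma,I)$ and a valuation $v$ with $v(x)=v(z)$. Then $\gamma(v(x))=\gamma(v(z))$ and so $\gM,v\models x\hearts\lcomp y_1\col\phi_1\rcomp\cdots\lcomp y_n\col\phi_n\rcomp$ iff $\gM,v\models z\hearts\lcomp y_1\col\phi_1\rcomp\cdots\lcomp y_n\col\phi_n\rcomp$, because neither $x$ nor $z$ occurs free in the comprehension bodies and the extensions $\lsem\phi_i\rsem^{y_i}_{\gM}$ are computed from $v$ alone. For \axref{Alpha}, I would first prove by induction on $\phi$ a \emph{comprehension-renaming lemma}: if $u$ is fresh for $\phi$ and substitutable for $z$ in $\phi$, then $\lsem\phi\rsem^z_{\gM,v}=\lsem\sbst{\phi}{z}{u}\rsem^u_{\gM,v}$. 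This is the standard coincidence/substitution argument, but spelled out for the binder $\PrL$ using the clauses of the preceding definition of substitution and substitutability. Given this lemma, validity of \axref{Alpha} is immediate.

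For \axref{Onestep($\Rules$)}, fix a one-step sound rule $\svA/\svX\in\Rules$, a substitution $\sigma:\SchV\to\CPL(\PrLSet,\SoV)$, and a valuation $v$. Define $\tau:\SchV\to\Pow(C)$ by $\tau(\sva)=\lsem\sigma(\sva)\rsem^{x}_{\gM,v}$. A direct unfolding of the semantic clauses shows that $\gM,v\models\forall x.\,\svA\sigma$ iff $C,\tau\models\svA$, and that for every $t\in TC$, $t\models_{TC,\tau}\svX$ iff the element $t$, viewed under the translation $[\sigma,x,z]$, satisfies the CPL-counterpart of $\svX$ at the ``root state'' $z$; concretely, by the definition of $\Sem{\PrL}$ and of the satisfaction clause for modalised atoms we have $\gamma(v(z))\in\lsem\PrL\vec{\sva}\rsem_{TC,\tau}$ iff $\gM,v\models z\PrL\lcomp x\col\sigma(\sva_1)\rcomp\cdots\lcomp x\col\sigma(\sva_n)\rcomp$. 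One-step soundness of $\svA/\svX$ (Assumption~\ref{conv:1ss}) then gives exactly the required implication; the outer $\forall\vec{y}\forall z$ is handled by varying the valuation. This is the step I expect to be the main obstacle, because one has to get the bookkeeping between $\sigma$, $\tau$, and the variable $x$ used for all comprehensions precisely right, and crucially to verify that the ``current state'' parameter $z$ in the translation $[\sigma,x,z]$ corresponds, under the semantics, to evaluating $\svX$ at $\gamma(v(z))\in TC$.

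Finally, \axref{\bdpl} is the direct semantic mirror of the defining equation \eqref{eq:bounded} for $k$-boundedness with $k=\bel(\PrL)(i)$. From left to right: if $\gamma(v(x))\in\Sem{\PrL}_C(\dots,\lsem\phi_i\rsem^{y_i},\dots)$, then by adequacy of $\bel$ there is $B\subseteq\lsem\phi_i\rsem^{y_i}$ with $\#B\leq k$ and $\gamma(v(x))\in\Sem{\PrL}_C(\dots,B,\dots)$; enumerate $B$ (with possible repetitions) as $\{c_1,\dots,c_k\}$ and take $v(z_j)=c_j$, noting that $\lsem\bigvee_j y_i\eqF z_j\rsem^{y_i}_{\gM,v[\vec{c}/\vec{z}]}=B$ and that the freshness condition $\vec{z}\ntoc y_i,\vec\phi$ ensures the extensions of the other arguments are unchanged. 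The converse direction reduces, via the right-to-left inclusion in \eqref{eq:bounded} (which is just monotonicity, itself implied by boundedness), to the observation that $\lsem\bigvee_j y_i\eqF z_j\rsem^{y_i}\subseteq\lsem\phi_i\rsem^{y_i}$ whenever the witnesses $z_j$ satisfy $\phi_i[z_j/y_i]$. Closure under Modus Ponens then follows immediately, which concludes soundness.
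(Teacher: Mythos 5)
The paper states Theorem~\ref{thm:hilb-soundness} without proof, treating the verification as routine, so there is no in-paper argument to compare yours against line by line; the closest thing is Proposition~\ref{prop:soundness_seq_rule}, where soundness of the sequent rules $\Seq(R)$ is proved by exactly the bookkeeping you describe for \axref{Onestep($\Rules$)} (define $\tau(\sva)=\lsem\sigma(\sva)\rsem^{x}_{\gM,v}$, use freshness of the comprehension variable to pass between $C,\tau\models\svA$ and $\gM,v\models\forall x.\svA\sigma$, then invoke one-step soundness from Assumption~\ref{conv:1ss}). Your decomposition --- standard Enderton axioms, the four coalgebraic schemes, closure under Modus Ponens --- is the natural one and is essentially correct.

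Two loose ends are worth tightening. First, in \axref{En6}.2 your justification appeals to $x$ and $z$ not occurring free in the comprehension bodies; this is neither guaranteed nor needed. The point is simply that the bodies $\phi_i$ are syntactically identical on both sides and are evaluated under the same valuation $v$, so the extensions $\lsem\phi_i\rsem^{y_i}_{\gM,v}$ coincide and only the head state changes, from $\gamma(v(x))$ to $\gamma(v(z))=\gamma(v(x))$. Second, in \axref{\bdpl} your left-to-right direction enumerates the witnessing set $B$ ``with possible repetitions'' into $k=\bel(\PrL)(i)$ elements, which silently assumes $B\neq\emptyset$. When the witness provided by Equation~\eqref{eq:bounded} is empty you must argue separately: if $A_i=\lsem\phi_i\rsem^{y_i}\neq\emptyset$ you can pass, via the monotonicity implied by boundedness, to a nonempty $B'\subseteq A_i$ of size at most $k$ and proceed as before; if $A_i=\emptyset$ the conjunct $\bigwedge_j\phi_i[z_j/y_i]$ on the right-hand side is unsatisfiable, so you need $\Sem{\PrL}_C(\dots,\emptyset,\dots)$ not to contain $\gamma(v(x))$, which holds for all of the paper's bounded examples but is not literally forced by Definition~\ref{def:bounded}. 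Spelling out how you handle this degenerate case is the one genuinely missing step in your sketch; everything else (the comprehension-renaming lemma for \axref{Alpha}, the handling of the $\yprf$ prefixes by quantifying over valuations, and Modus Ponens) is fine.
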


\noindent Recall that by Convention \ref{conv:1ss}, one-step soundness of $\Rules$ is not mentioned explicitly.

%\begin{fact}[Deduction Theorem for Enderton-style Axiomatization of FOL]
%$\Gamma \cup \{\phi\} \gdash \psi$ iff $\Gamma \gdash \phi \impF \psi$.
%\end{fact}

%The ``if'' direction follows from the fact that Modus Ponens is an inference rule, but ``only if'' requires that MP is the \textbf{only} inference rule also for open formula, i.e., formulas where not all variables are bound. If we use a system including the rule of generalization $\phi\vdash\forall x.\phi$ also in case $x$ occurs freely in $\phi$, this direction would obviously fail. 

%Note, however, that Goldblatt's framework does not rely in any way on such a strong form of deduction theorem, nor even on the very presence of implication connective in the language! The main reason why we chose an axiomatization disallowing the generalization rule is that it allows us to use variables instead of Henkin constants, modulo only the assumption that the cardinality of $\FoV$ is greater than $\SoV \cup \PrLSet$. 

%Finally, we can also define the notion of strong completeness we are after. 

\begin{defi}
For %any set of additional axioms $\Delta \subseteq \CPL(\PrLSet,\SoV)$ and
 any $\Lambda$, $\Rules$ and $\bel$, we say that the inference system given by $\gdash$ is \emph{strongly complete} %a logic given by %$\Delta$ and 
  % $\Rules$ is \emph{strongly complete} 
  for a given $\Lambda$-structure based on $T$ if for any set of sentences $\Gamma \in \CPL(\PrLSet,\SoV)$, $\Gamma \not\gdash \bot$   %$\Gamma \nvdash_{\Delta,\Rules} \bot$  
holds \textbf{if and only if} there is a coalgebraic $\Lambda$-model for $\Gamma$.%where %axioms given by $\Delta$ hold (and, obviously,
%all the rules in $\Rules$ are sound under the reading of all $\PrL \in \PrLSet$ given by the structure.
\end{defi}

%We are almost ready to formulate our completeness results, but let us introduce one more technicality. In order to demonstrate the convenience and elegance of an Enderton-style axiomatization for a language with explicit binding and quantification mechanism, we will give our Henkin-style proof without introducing additional constants---the role of Henkin constants will be played by variables themselves. This is possible, because

%We can finally state our completeness theorem. Its assumptions are analogous to the completeness result of \cite{SchroderPattinson10}---see \S~\ref{sec:cml} below for  discussion of connections. 

%\begin{thm}[Completeness] \label{th:completeness}
%The  set of axioms given in Table \ref{tab:folaxioms} is a strongly complete axiomatization of $\CPL(\PrLSet,\SoV)$ whenever the $\Lambda$-structure satisfies either of the following conditions:
%\begin{itemize}
%\item there exists a S1SC rank-1 rule set. %In this case, ; the set of additional axioms $\Delta$ is empty.
%\item there exists a finitary S1SC rank-1 rule set and each $\PrL \in \PrLSet$ is bounded. %In this case, the set of additional axioms $\Delta$ consists of suitable instances of  \axref{bdpl}${}_{k,i}$. %in Table \ref{tab:bdaxioms} below.
%\item 
%\end{itemize}
%\end{thm}

%\tlnt{What do we mean by "there exists"? As pointed out by Drew, the  set of axioms given in Table \ref{tab:folaxioms} assumes we already fixed the rule set LS: Yes, absolutely. We need to think how to reformulate precisely the statements in this section. PROPOSED REFORMULATION (LS: Right.)}

\begin{thm}[Strong
  Completeness] \label{th:completeness}\label{thm:hilb-complete}
  Whenever a set of rules $\Rules$ is \emph{$\bel$-S1SC} for a
  $\Lambda$-structure over $T$ that is adequate for $\bel$, then
  $\gdash$ is strongly complete for this structure.
 \end{thm}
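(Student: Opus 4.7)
The plan is to lift the standard Henkin-style completeness proof of first-order logic to the coalgebraic setting; the novel ingredient is the construction of a transition function on the term model, where the $\bel$-S1SC hypothesis and the \bdpl\ axiom do the decisive work. The architecture is essentially a first-order adaptation of the canonical-model construction in coalgebraic modal logic.

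Starting from a $\gdash$-consistent set $\Gamma$ of sentences, I would expand the variable supply with a countable reservoir of fresh \emph{witnessing} variables and build, by the usual enumeration, a maximally $\gdash$-consistent $\Gamma^{\ast}\supseteq\Gamma$ with the Henkin property: for every $\exists x.\phi\in\Gamma^\ast$ some $\phi[c/x]$ also lies in $\Gamma^\ast$. This relies only on \axref{En1}--\axref{En4} and the finitariness of $\gdash$. The term model carrier $C$ is obtained as the quotient of the extended variable set under $t\sim t'$ iff $(t=t')\in\Gamma^\ast$; axioms \axref{En5}--\axref{En6} make $\sim$ a congruence for both predicate symbols and modal operators, and the predicate interpretation $I$ is defined in the obvious way. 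For each formula $\phi$ and variable $y$ I write $\lsem\phi\rsem^y=\{[t]\in C\mid\phi[t/y]\in\Gamma^\ast\}$ for the induced truth set; \axref{Alpha} ensures that $\lsem\phi\rsem^y$ depends only on the formal meaning of $\cmh{y}{\phi}$, not on the particular choice of comprehension variable.

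The heart of the proof is to define $\gamma\colon C\to TC$. For each $[t]\in C$ collect the set
\[
\Xi_{[t]}=\bigl\{\hearts(\lsem\phi_1\rsem^{y_1},\ldots,\lsem\phi_n\rsem^{y_n})\mid (t\hearts\cmh{y_1}{\phi_1}\cdots\cmh{y_n}{\phi_n})\in\Gamma^\ast\bigr\}\cup\bigl\{\neg\hearts(\cdots)\mid \neg(\cdots)\in\Gamma^\ast\bigr\}.
\]
Fix the colouring $\colo$ marking coordinate $i$ of $\hearts$ as $\fin$ whenever $\bel(\hearts)(i)<\infty$, and as $\infty$ otherwise. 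The \bdpl\ axiom lets me replace, already inside $\Gamma^\ast$, every modal formula in a bounded coordinate by one whose comprehension term is a finite disjunction $\bigvee_j y_i=z_j$ over Henkin witnesses; this collapses the relevant truth set to a finite one, so that the associated valuation respects the $\fin$-marking. I then check that $\Xi_{[t]}$ is $\colo$-consistent against the obvious valuation sending schematic variables to truth sets: for any matching rule instance $\svA/\svX\in\bbel_\colo(\Rules)$ whose premise $\svA\sigma$ holds on $C$, the \axref{Onestep($\Rules$)} scheme forces the CPL translation of $\svX\sigma$ into $\Gamma^\ast$. Invoking $\bel$-S1SC then produces the desired element $\gamma([t])\in TC$ realising every positive and falsifying every negative constraint in $\Xi_{[t]}$.

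A routine induction on $\phi$ finishes the proof by establishing the truth lemma $\gM,v_{\id}\models\phi\iff\phi\in\Gamma^\ast$ for $\gM=(C,\gamma,I)$ with $v_{\id}(x)=[x]$: propositional and equality cases are standard, the quantifier case exploits the Henkin witnesses, and the modal case falls straight out of the construction of $\gamma$ and clause \eqref{eq:deny}, using naturality of the predicate liftings to push the valuation through. Since $\Gamma\subseteq\Gamma^\ast$, $\gM$ models $\Gamma$. The main obstacle is precisely the mixed-coordinate coalgebra construction: in bounded coordinates one must systematically \bdpl-rewrite infinite truth sets to finite enumerations before finitary S1SC applies, while unbounded coordinates demand the full infinitary strength of S1SC, and the colouring $\colo$ has to be deployed so that both regimes cohere into a single element of $TC$.
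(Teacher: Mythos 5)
Your proposal is correct and follows essentially the same route as the paper's own proof: a Henkin-style term model on (classes of) variables, with the coalgebra structure obtained by collecting each point's modal constraints, rewriting bounded coordinates via \axref{\bdpl} into finite disjunctions of equalities with Henkin witnesses, checking $\colo$-consistency against the truth-set valuation using \axref{Onestep($\Rules$)}, and invoking $\bel$-S1SC to realise the constraints, followed by the Truth Lemma. The only difference is one of packaging — the paper routes the Lindenbaum/witnessing step through Goldblatt's Abstract Henkin Principle with explicit inference families $\mathit{Inf}_{\tsc{name}}$ and $\mathit{Inf}_{\bel}$ (reusing variables as witnesses rather than adjoining fresh ones), which it later exploits again for the Omitting Types Theorem — and this does not change the substance of the argument.
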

 
%\todo{TL: Lutz earlier remark---``not leave it to the reader to discover the new completeness results implied for the given examples by the generic theorem, but rather we should list explicitly what we get''?}

%The proof is given in the appendix. Here let us only state that 

\begin{exa}\label{expl:completeness}
For concrete instances of this completeness result, combine Examples \ref{ex:s1sc}, \ref{ex:bounded} and \ref{ex:cs1sc}.
\takeout{For the examples presented in \S~\ref{sec:ss}, the situation is
as follows. Completeness holds for neighbourhood models as they have
a strongly one-step complete axiomatisation. For all others, but
excluding non-monotonic conditionals, finitary one-step complete
axiomatisations exist, cf. Example \ref{ex:s1sc}. \tlnew{This is either obsolete or requires updating. Are we consistently using finite probabilites, for example?} As discussed above (cf. Example \ref{ex:bounded}), boundedness holds for relational models,
graded modal logic and the logic of finite probabilities
(interpreted over $\mathcal{D}_k$-coalgebras) whereas conditional logic is covered as a mixed case. }%which gives
%completeness using Theorem \ref{th:completeness}. The binary
%operator $\Rightarrow$ of conditional logic is strongly one-step
%complete in the first argument and $1$-bounded in the second, and,
%as a consequence, the first-order logic of non-monotonic
%conditionals is also complete. %see \cite[\S
%2.3]{SchroderPattinson10} for more details.\lsnote{This should be expanded} \tlntnew{@Lutz:indeed}%Not only expanded, the present statement of the completeness theorem does not even cover this case at all; it needs to be stated and proved in full generality}
\end{exa}

\iffalse
\begin{rem} \label{rem:mixed}
Analogously to \cite{SchroderPattinson10}, we can also consider the mixed case of (S1SC, finitary S1SC)-operators which are bounded in those coordinates where full S1SC does not obtain. This would cover, for example, conditional logic over selection function models \cite{Chellas80}. The interested reader is referred to \cite[\S~2.3]{SchroderPattinson10}; the adaptation to the present setting is straightforward.
\end{rem}
\fi

\subsection{Proof of The Completeness Theorem} %(Theorem \ref{th:completeness})}

\label{sec:proofco}

First, we introduce  machinery proposed in \cite{goldblatt1993:abstract}. Consider any $Fr \subseteq \CPL(\PrLSet,\SoV)$ closed under propositional connectives. $Fr$ can be, for example, the set of all formulas whose free variables are contained in a fixed subset of $\FoV$, the set of all sentences and the entire $\CPL(\PrLSet,\SoV)$ itself being the two borderline cases.  Any set $\emph{Inf} \subseteq \Pow(Fr) \times Fr$ will be called, following Goldblatt, \emph{a set of inferences}. For any $\mathit{inf} \deq (\Pi,\chi) \in \mathit{Inf}$ and any $\Gamma \subseteq Fr$, we say that 
\begin{itemize}
\item $\Gamma$ \emph{respects} $\mathit{inf}$ if $\Gamma \gdash \chi$ whenever $\Gamma \gdash \phi$ for all $\phi \in \Pi$,
\item $\Gamma$ \emph{is closed under} $\mathit{inf}$ if $\chi \in \Gamma$ whenever $\Pi \subseteq \Gamma$,
\item  $\Gamma$ respects  $\mathit{Inf}$ iff it respects each member of $\mathit{Inf}$,
\item $\Gamma$ is closed under $\mathit{Inf}$ iff it is closed under each member of $\mathit{Inf}$.
\end{itemize}

\begin{thm}[Goldblatt's Abstract Henkin Principle \cite{goldblatt1993:abstract}] \label{th:goldhenkin}
If $\mathit{Inf}$ is a set of inferences in $Fr$ of an infinite cardinality $\kappa$ and $\Gamma$ is a $\gdash$-consistent subset of $Fr$ satisfying in addition:
\begin{equation} \label{eq:finrespect}
\forall X \subseteq Fr. \card{X} < \kappa \text{ implies that } \Gamma \cup X \text{ respects } \mathit{Inf} 
\end{equation}
(i.e., \emph{every $\kappa$-finite extension of $\Gamma$ respects $\mathit{Inf}$}), then $\Gamma$ has a maximally $\gdash$-consistent extension in $Fr$ which is closed under $\mathit{Inf}$.
\end{thm}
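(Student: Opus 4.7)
The plan is a transfinite Lindenbaum-style construction of length $\kappa$, interleaving an enumeration of $Fr$ (which we may assume has cardinality $\kappa$; otherwise pad the smaller of the two enumerations) with an enumeration $\mathit{Inf} = \{(\Pi_\alpha,\chi_\alpha) : \alpha < \kappa\}$. Starting from $\Gamma_0 \deq \Gamma$ and passing to unions at limit stages, at every successor $\alpha+1$ I would perform two sub-steps. The first is a classical Lindenbaum step for $\phi_\alpha$: add $\phi_\alpha$ to $\Gamma_\alpha$ if this preserves $\gdash$-consistency, otherwise add $\neg\phi_\alpha$, yielding some $\Gamma_\alpha'$. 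The second is a \emph{witnessing} step for $(\Pi_\alpha,\chi_\alpha)$: if $\Gamma_\alpha' \gdash \chi_\alpha$, leave $\Gamma_{\alpha+1} \deq \Gamma_\alpha'$; otherwise pick any $\phi \in \Pi_\alpha$ with $\Gamma_\alpha' \not\gdash \phi$ and set $\Gamma_{\alpha+1} \deq \Gamma_\alpha' \cup \{\neg\phi\}$.

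The crux is verifying that the witnessing step always succeeds and yields a consistent extension. Since $\Gamma_\alpha' = \Gamma \cup X_\alpha'$ with $|X_\alpha'| \leq |\alpha| < \kappa$, hypothesis (\ref{eq:finrespect}) furnishes that $\Gamma_\alpha'$ respects $\mathit{Inf}$. Therefore whenever $\Gamma_\alpha' \not\gdash \chi_\alpha$, the contrapositive of respecting delivers some $\phi \in \Pi_\alpha$ with $\Gamma_\alpha' \not\gdash \phi$, i.e.\ such that $\Gamma_\alpha' \cup \{\neg\phi\}$ is $\gdash$-consistent. So the construction is well-defined, and consistency then propagates through successors by design and through limits by the finitariness of $\gdash$ recorded in the remarks preceding Theorem~\ref{th:goldhenkin}.

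Setting $\Gamma^* \deq \bigcup_{\alpha<\kappa}\Gamma_\alpha$, maximality of $\Gamma^*$ follows from the Lindenbaum half of the recursion, while closure under $\mathit{Inf}$ is an elementary consequence of the witnessing half: if $\Pi_\alpha \subseteq \Gamma^*$ but $\chi_\alpha \notin \Gamma^*$, then $\neg\chi_\alpha \in \Gamma^*$ by maximality; at stage $\alpha+1$ one either had $\Gamma_\alpha' \gdash \chi_\alpha$, so that $\chi_\alpha \in \Gamma^*$ and consistency with $\neg\chi_\alpha$ fails, or one added $\neg\phi$ for some $\phi \in \Pi_\alpha \subseteq \Gamma^*$, contradicting consistency in the same way. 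The only real obstacle is the witnessing step, and it is precisely to make that step go through that hypothesis (\ref{eq:finrespect}) is imposed; every other ingredient is routine Henkin-style bookkeeping.
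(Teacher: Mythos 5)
The paper does not prove Theorem~\ref{th:goldhenkin} at all: it is imported verbatim from Goldblatt \cite{goldblatt1993:abstract}, so there is no in-paper argument to compare against. Your reconstruction is, in substance, the standard one. Goldblatt separates the two halves --- first a transfinite recursion of length $\kappa$ over $\mathit{Inf}$ alone, producing a consistent extension that \emph{decides} every inference (either derives its conclusion or derives the negation of one of its premises), and then an ordinary Lindenbaum extension, together with the observation that any maximally consistent extension of a deciding set is closed under $\mathit{Inf}$ --- whereas you interleave the two recursions. The individual steps are sound: the deduction theorem and classical negation (both available here, cf.\ Remark~\ref{rem:noinfinitary}) make $\Gamma'\not\gdash\phi$ equivalent to consistency of $\Gamma'\cup\{\neg\phi\}$, the bound $\card{X_\alpha'}<\kappa$ is correct since at most two formulas are added per stage, finitariness of $\gdash$ handles limit stages, and your final closure-under-$\mathit{Inf}$ argument is right.

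The one point to tighten is the parenthetical ``pad the smaller of the two enumerations''. Padding is harmless when $\card{Fr}\le\kappa$, which covers every application in the paper (there $Fr$ and $\mathit{Inf}$ both have cardinality $\card{\FoV}$). But the theorem as stated places no bound on $\card{Fr}$, and if $\card{Fr}>\kappa$ your single recursion must run for $\card{Fr}$ stages; at stages $\alpha\ge\kappa$ the set $X_\alpha'$ may have cardinality $\ge\kappa$, so hypothesis \eqref{eq:finrespect} no longer guarantees that $\Gamma_\alpha'$ respects the inference scheduled there, and the witnessing step is not obviously well defined. The clean repair is exactly Goldblatt's decoupling: exhaust $\mathit{Inf}$ within the first $\kappa$ stages, where \eqref{eq:finrespect} applies, and then finish with a pure Lindenbaum extension, noting that an inference once decided remains decided in every consistent extension. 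With that adjustment (or with the explicit standing assumption $\card{Fr}\le\kappa$, which is all the paper ever needs) your proof is complete.
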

%The first case covers, e.g., neighbourhood semantics, the second case covers, e.g., the Kripke semantics or the multigraph semantics. %and \emph{Presburger logic of linear inequalities}. 

%We will see below how this result is applied to obtain specific instances of Henkin-style completeness proofs. 

\begin{rem} \label{rem:noinfinitary} We emphasize that speaking about
  \emph{inferences} in Goldblatt's sense being infinite sets does
  \emph{not} mean that \emph{deductions} in the axiom system for for
  CPL use infinitary rules.  As stated above, the only inference rule
  in our system is ordinary Modus Ponens. Even the one-step rules
  defined above (which are not infinitary anyway) can be written as
  sentence schemes \axref{Onestep} thanks to the use of quantifiers.
  
  We further point out that an Enderton-style axiomatization does not
  involve the generalization rule: if $x$ is a free variable in
  $\phi$, it is not necessarily the case that
  $\phi \gdash \forall x.\phi$ (this is not in contradiction to
  completeness: the rule is sound in the sense that validity of the
  premise implies validity of the conclusion, but its conclusion is
  not a logical consequence of its premise). This makes it enjoy a
  rather rare property for an axiomatization of FOL: a deduction
  theorem in exactly the same form as propositional logic, i.e.,
  $\Gamma \cup \{\phi\} \gdash \psi$ iff
  $\Gamma \gdash \phi \impF \psi$
  (cf. \cite[p. 118]{Enderton72}). This will also allow us to give our
  Henkin-style proofs \emph{without introducing additional
    constants}---the role of Henkin constants for existentially
  quantified variables will be played by the variables themselves.\footnote{Recall that we have been working in a setup without functions symbols (including $0$-ary ones) anyway; extending our original syntax with constants just for the sake of this particular proof does not even seem particularly hygienic.} The only disadvantage of this approach would be that if we consider uncountable $\PrLSet$ or $\SoV$, we would also need to allow uncountably many elements of $\FoV$, something we highlight in the statement of several lemmas and claims below.
\end{rem}

\noindent Let us  recall the crucial ingredient in Henkin-style completeness proofs: the notion of quasi-Henkin model and its associated Truth Lemma. This is inspired by previously announced completeness proofs for coalgebraic hybrid logic  \cite{SchroderPattinson10}; we discuss the relationship in detail in Remark \ref{rem:hyfa} and \S~\ref{sec:cml} below.

\begin{defi} \label{def:quasih}
Let $\Gamma$ be a maximal consistent set (MCS) of formulas. Define $C_{\Gamma}$ = $\lbrace |x| : x \text{ is a variable} \rbrace$, where $|x|$ = $\lbrace z : x \eqF z \in \Gamma \rbrace$, and put $I_{\Gamma}(P)$ \deq $\lbrace (|x_{1}|,\dots,|x_{n}|) : P(x_{1},\dots,x_{n}) \in \Gamma \rbrace$. Set $\widehat{\phi}^{y_{i}}$ \deq $\lbrace |z| : \phi[z/y_{i}] \in \Gamma \rbrace$, to be thought of as \emph{the set of variables satisfying $\phi$ according to $\Gamma$} (when $y_i$ is taken to be the \emph{argument variable} or the \emph{context hole}). Given a $T$-coalgebra structure $\gamma: C_{\Gamma} \to TC_{\Gamma}$, we say that $(C_{\Gamma},\gamma,I_{\Gamma})$ is a {\em quasi-Henkin coalgebraic model} if, for any variables $x$, $y_{1}$, \dots, $y_{n}$ and any formulas $\psi$, $\phi_{1}$, \dots, $\phi_{n}$, 
\begin{equation} \label{eq:henkinqu}
\exists x.\psi \in \Gamma \Longrightarrow \text{ for some } y_i, y_i \in \widehat{\psi}^{x}.
\end{equation}
(note that the converse implication holds for any MCS) and 
\begin{equation} \label{eq:henkinmo}
x \heartsuit \lceil y_{1} : \phi_{1} \rceil \cdots \lceil y_{n} : \phi_{n} \rceil \in \Gamma \Longleftrightarrow \gamma(|x|) \in  \lsem \heartsuit \rsem_{C_{\Gamma}}(\widehat{\phi_{1}}^{y_{1}},\dots,\widehat{\phi_{n}}^{y_{n}}).
\end{equation}
\end{defi}

In a quasi-Henkin model, define the \emph{canonical variable assignment}  $v_{\Gamma}$ by $v_{\Gamma}(x)$ = $|x|$.

\begin{lem}[Truth Lemma]
Let $\Gamma$ be a maximal consistent set of formulas and $\mathfrak{M}_{\Gamma}$ = $(C_{\Gamma},\gamma,I_{\Gamma})$ a quasi-Henkin coalgebraic model.  Then, for every formula $\phi$, 
\begin{equation}\label{eq:truth}
\mathfrak{M}_{\Gamma},v_{\Gamma} \models \phi 
\Longleftrightarrow 
\phi \in \Gamma.
\end{equation}
\end{lem}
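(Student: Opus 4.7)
The plan is to prove the Truth Lemma by induction on the construction of $\phi$, with atomic and propositional cases routine, and with the quantifier and modal cases reduced to an auxiliary substitution lemma plus the respective Henkin conditions on $\mathfrak{M}_\Gamma$. Before the main induction, I would verify that $\widehat{\phi}^{y}$ is well-defined, i.e.\ independent of the choice of representative of $|z|$: if $|z|=|z'|$ then $z\eqF z'\in\Gamma$, and axiom \axref{En6} (together with its modal clause \axref{En6}.2 and, where needed, \axref{Alpha} to avoid capture) yields $\phi[z/y]\in\Gamma\iff\phi[z'/y]\in\Gamma$.

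The key auxiliary result I would prove first is a \emph{substitution lemma}: whenever $z$ is substitutable for $x$ in $\phi$,
\[
\mathfrak{M}_\Gamma,v\models \phi[z/x]\iff \mathfrak{M}_\Gamma,v[v(z)/x]\models\phi,
\]
by a direct induction on $\phi$; the only non-standard case is the modal one, where the clause in the definition of substitution has been set up precisely so that the two alternatives (fresh-for-$\cmh{y_i}{\phi_i}$, or substitutable into $\phi_i$) match the semantic clause~(\ref{eq:deny}), and where $\alpha$-renaming via axiom \axref{Alpha} lets one assume the $y_i$ are fresh for $z$ if required.

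I would then run the main induction. The atomic equality case uses $|y_1|=|y_2|\iff y_1\eqF y_2\in\Gamma$, which is immediate from the definition of $|\cdot|$ combined with the equality axioms \axref{En5}--\axref{En6}; the atomic predicate case and the $\bot$ case are immediate from the definition of $I_\Gamma$ and consistency; the implication case is the standard MCS manipulation. For $\forall x.\phi$: if $\forall x.\phi\in\Gamma$, then by \axref{En2}, $\phi[z/x]\in\Gamma$ for every variable $z$; the induction hypothesis gives $\mathfrak{M}_\Gamma,v_\Gamma\models\phi[z/x]$, hence by the substitution lemma $\mathfrak{M}_\Gamma,v_\Gamma[|z|/x]\models\phi$, and because every element of $C_\Gamma$ is of the shape $|z|$, we get $\mathfrak{M}_\Gamma,v_\Gamma\models\forall x.\phi$. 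Conversely, if $\forall x.\phi\notin\Gamma$, then $\exists x.\neg\phi\in\Gamma$, so~(\ref{eq:henkinqu}) supplies $y\in\FoV$ with $\neg\phi[y/x]\in\Gamma$; the induction hypothesis and substitution lemma then yield $\mathfrak{M}_\Gamma,v_\Gamma[|y|/x]\not\models\phi$, refuting $\forall x.\phi$.

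The modal case $x\hearts\lcomp y_1\col\phi_1\rcomp\dots\lcomp y_n\col\phi_n\rcomp$ is the place where~(\ref{eq:henkinmo}) pays off: I reduce it to showing $\lsem\phi_i\rsem^{y_i}_{\mathfrak{M}_\Gamma}=\widehat{\phi_i}^{y_i}$ for each $i$, after which the semantic clause and~(\ref{eq:henkinmo}) give the conclusion. For that equality, pick $c\in C_\Gamma$ and a representative $z$ with $c=|z|$, $\alpha$-renamed (via \axref{Alpha}) to be substitutable for $y_i$ in $\phi_i$; then $c\in\lsem\phi_i\rsem^{y_i}_{\mathfrak{M}_\Gamma}$ iff $\mathfrak{M}_\Gamma,v_\Gamma[|z|/y_i]\models\phi_i$ iff (substitution lemma) $\mathfrak{M}_\Gamma,v_\Gamma\models\phi_i[z/y_i]$ iff (induction hypothesis) $\phi_i[z/y_i]\in\Gamma$ iff $c\in\widehat{\phi_i}^{y_i}$. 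The main obstacle I expect is bookkeeping around substitutability and capture in the modal clause, since the binder $\hearts$ has the extra non-binding argument $x$; handling this cleanly is exactly what \axref{Alpha} and the substitutability restriction on \axref{En2} were put in place for, so the real content of the modal case ultimately reduces to the bi-implication~(\ref{eq:henkinmo}) that is built into the notion of a quasi-Henkin model.
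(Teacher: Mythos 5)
Your proposal is correct and follows essentially the same route as the paper's proof: induction on $\phi$, with the auxiliary identity $\lsem\phi\rsem^{y}_{\gM_\Gamma}=\widehat{\phi}^{y}$ established via the equivalence $\mathfrak{M}_\Gamma,v_\Gamma[|z|/y]\models\phi\iff\mathfrak{M}_\Gamma,v_\Gamma\models\phi[z/y]$ together with the induction hypothesis, the quantifier case handled by Condition~(\ref{eq:henkinqu}), and the modal case discharged by~(\ref{eq:henkinmo}). The only difference is that you make explicit two points the paper leaves implicit (the substitution lemma and the well-definedness of $\widehat{\phi}^{y}$ on equivalence classes), which is a welcome but not structurally different elaboration.
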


\begin{proof}
By induction on $\phi$. An auxiliary fact we need is that whenever
$\phi$ satisfies the inductive claim~\eqref{eq:truth}, then
\begin{equation} \label{eq:semfa}
\lsem \phi \rsem^{y}_{\gM_\Gamma} = \widehat{\phi}^{y}
\end{equation}
(recall $\lsem \phi \rsem^{y}$ is defined in \refeq{eq:deny}), which can be shown in the following way. Let
$|z| \in C_{\Gamma}$. Then we have
\begin{align*}
  \mathfrak{M}_{\Gamma},v_{\Gamma}[|z|/y] \models \phi 
  &\Longleftrightarrow 
    \mathfrak{M}_{\Gamma},v_{\Gamma} \models \phi[z/y]\\
  &\Longleftrightarrow \phi[z/y] \in \Gamma \qquad \text{by \eqref{eq:truth},}
\end{align*}
as desired. 

%The lemma itself is now proved as follows.
The base case of induction for atomic formulas follows now from the
definitions of $C_\Gamma$ and $I_\Gamma$, the Boolean cases from
the fact that we are dealing with a MCS, and the case for quantifiers
directly from Condition~\ref{eq:henkinqu}. For the modal case, %\tlnew{refere: n-ary here}
 where
$\phi \equiv  x \hearts \lcmh y_1 : \phi_1 \rcmh
\dots \lcmh y_n: \phi_n \rcmh$: %, we have:
 %Let us proceed with the main proof: %where $\phi \equiv x \heartsuit \lceil y : \psi \rceil$: %We can proceed as follows:
\begin{align*}
\mathfrak{M}_{\Gamma},v_{\Gamma} \models x \hearts \lcmh y_1 : \phi_1 \rcmh
\dots \lcmh y_n: \phi_n \rcmh
&\Longleftrightarrow \gamma(v_{\Gamma}(x)) \in  \lsem \heartsuit \rsem_{\gM_{\Gamma}} ( \lsem \phi_1 \rsem^{y_1}_{\gM_\Gamma},
\dots, \lsem \phi_n \rsem^{y_n}_{\gM_\Gamma}) & \text{by def.} \\
&\Longleftrightarrow \gamma(|x|) \in  \lsem \heartsuit \rsem_{\gM_{\Gamma}} ({\widehat{\phi_1}}^{y_1},\dots,{\widehat{\phi_n}}^{y_n}) & \text{by \refeq{eq:semfa}}\\
&\Longleftrightarrow  x\heartsuit \lceil y_{1} : \phi_{1} \rceil \cdots \lceil y_{n} : \phi_{n} \rceil \in \Gamma & \text{by \refeq{eq:henkinmo}.}% \qedhere} 
\end{align*}
%%\qed
\end{proof}

\tlnew{Referee complains, but qedhere does not work}

\noindent Next, we need to find a suitable candidate for an MCS from
which to build our quasi-Henkin model. Consider the following sets of
inferences:

\begin{align*}
\mathit{Inf}_{\tsc{name}a} \deq & \{ \langle\{\sbst{\phi}{x}{z} \mid z \in \FoV\}, \forall x.\phi \rangle \mid \phi \in \CPL(\PrLSet,\SoV), x \in \FoV \} \\
\mathit{Inf}_{\tsc{name}b} \deq & \{ \langle\{\sbst{\lb\phi_1 \eqvF \psi_1\rb}{x}{z}, \dots, \sbst{\lb\phi_n \eqvF \psi_n\rb}{x}{z} \mid z \in \FoV\}, \\
\forall x.&\lb x\PrL\cmh{x}{\phi_1}\dots\cmh{x}{\phi_n}\eqvF x\PrL\cmh{x}{\psi_1}\dots\cmh{x}{\psi_n}\rb \rangle \mid \vec{\phi}, \vec{\psi} \in \CPL(\PrLSet,\SoV), x \in \FoV \} \\
\mathit{Inf}_{\tsc{name}} \deq & \mathit{Inf}_{\tsc{name}a} \cup \mathit{Inf}_{\tsc{name}b} \\
\mathit{Inf}_{\bel} \deq & \{ \langle\{\bigwedge\limits_{j \leq \bel(\PrL)(i)}\sbst{\phi_i}{y_i}{z_j} \impF \neg x\PrL\dots\cmh{y_i}{\bigvee\limits_{j \leq \bel(\PrL)(i)}y_i \eqF z_j}\dots   \mid \vec{z} \in \FoV\}, \\
& \qquad\qquad \neg x\PrL\dots\cmh{y_i}{\phi_i}\dots \rangle \mid \vec{\phi} \in \CPL(\PrLSet,\SoV), x \in \FoV, \PrL \in \Lambda, \bel(\PrL)(i) \neq \infty\} \\
 \mathit{Inf} \deq & \mathit{Inf}_{\tsc{name}} \cup \mathit{Inf}_{\bel}
\end{align*}

%\mathit{Inf}_{\tsc{bg}{}_{k,i}} \deq & \{ \langle\{\neg \lb x\PrL\phi_1\dots\phi_{i-1}\lb x \eqF z_1 \vee\dots \vee x \eqF z_k \rb \phi_{i+1}\dots\phi_n \wedge \bigwedge\limits_{j \leq k}\sbst{\phi_i}{x}{z_j}\rb  \mid \vec{z} \in \FoV\}, \\
%& \neg x\PrL\phi_1\dots\phi_n \rangle \mid \vec{\phi} \in \FoF(\FoV,\SoV), x \in \FoV\} 
%\end{align*}
%and for the second clause:
%{\small
%\begin{align*}

Let us begin with

\begin{clm} %\label{lem:nmrespect}
Assume $\card{\FoV} = \kappa \geq \card{\PrLSet \cup \SoV \cup \omega}$. Then any $\gdash$-consistent set of formulas $\Gamma$ s.t. $\card{\{x \in \FoV \mid x \ntoc \Gamma\}} = \kappa$ (in particular, any consistent set of sentences) satisfies condition \ref{eq:finrespect} of Theorem \ref{th:goldhenkin} for $\mathit{Inf}_{\tsc{name}}$.
\end{clm}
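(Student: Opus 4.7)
The plan is to reduce condition \refeq{eq:finrespect} for $\mathit{Inf}_{\tsc{name}}$ to the standard metatheorem that fresh variables can serve as witnesses for universal quantification. First I would verify the \emph{generalization metatheorem} for our system: if $\Delta \gdash \psi$ and $z$ is not free in any element of $\Delta$, then $\Delta \gdash \forall z.\psi$. As remarked in Remark \ref{rem:noinfinitary}, $\phi \gdash \forall x.\phi$ is \emph{not} admissible in general, but this restricted form is, by the usual induction on derivation length: every axiom in Table \ref{tab:folaxioms} is presented in universally closed form $\yprf\alpha$, and \axref{En4} handles the Modus Ponens step.

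Fix $\Gamma$ as in the claim and $X \subseteq Fr$ with $\card{X} < \kappa$. Since each formula uses only finitely many variables and $\kappa \geq \omega$, the set of variables actually occurring in $X$ has cardinality strictly less than $\kappa$; combined with the assumption that $\kappa$-many variables are fresh for $\Gamma$, this guarantees that for any finite additional list of formulas there is a variable $z^*$ that is fresh for $\Gamma \cup X$ and for those formulas, and substitutable wherever we shall need it.

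For the $\mathit{Inf}_{\tsc{name}a}$ clause, suppose $\Gamma \cup X \gdash \sbst{\phi}{x}{z}$ for every $z \in \FoV$. Picking a variable $z^*$ fresh for $\Gamma \cup X \cup \{\phi\}$ and applying the hypothesis at $z^*$ yields $\Gamma \cup X \gdash \sbst{\phi}{x}{z^*}$. Since $z^*$ is not free in $\Gamma \cup X$, the generalization metatheorem gives $\Gamma \cup X \gdash \forall z^*. \sbst{\phi}{x}{z^*}$, which is provably equivalent to $\forall x.\phi$ by the standard $\alpha$-equivalence for first-order quantifiers (derivable from \axref{En2} and \axref{En4}) and, for any comprehension binders occurring inside $\phi$, by \axref{Alpha}.

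For the $\mathit{Inf}_{\tsc{name}b}$ clause, apply the previous paragraph componentwise to obtain $\Gamma \cup X \gdash \forall x.(\phi_i \eqvF \psi_i)$ for each $i \le n$. What remains is the main technical obstacle: deriving $\forall x.(x\PrL\cmh{x}{\phi_1}\dots\cmh{x}{\phi_n} \eqvF x\PrL\cmh{x}{\psi_1}\dots\cmh{x}{\psi_n})$ from these $n$ quantified equivalences. The strategy I would follow is to choose yet another fresh variable $w$, instantiate each universal equivalence at $w$ via \axref{En2}, rename the comprehension binders inside the modality away from $x$ using \axref{Alpha}, apply the derived congruence rule for $\PrL$ (available by Remark \ref{rem:cong} as an instance of \axref{Onestep($\Rules$)} under one-step soundness), and finally re-apply the generalization metatheorem at $w$ to close off the outer universal quantifier. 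Tracking substitutability side-conditions and performing the simultaneous $\alpha$-renaming of all $n$ comprehension variables cleanly is the most delicate part of the argument; the cardinality bookkeeping that ensures all the required fresh variables exist is precisely what is packaged by the hypothesis $\card{\{x \in \FoV \mid x \ntoc \Gamma\}} = \kappa$ together with $\card{X} < \kappa$.
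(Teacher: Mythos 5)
Your proof is correct and follows essentially the same route as the paper's: both reduce condition \refeq{eq:finrespect} to the fact that a single fresh-variable witness yields the universal closure (your generalization metatheorem is precisely the context-sensitive Generalization Theorem that the paper obtains by passing through the Deduction Theorem and empty-context generalization, as it notes in a footnote), both handle the cardinality bookkeeping identically, and both finish the $\mathit{Inf}_{\tsc{name}b}$ case with the derived congruence rule of Remark~\ref{rem:cong}. One small correction: in the induction establishing your generalization metatheorem, it is \axref{En3} (distribution of $\forall$ over $\impF$) that handles the Modus Ponens step, while \axref{En4} is what handles hypotheses in which the generalized variable does not occur free.
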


\begin{proof}
%The proof will be broken in a series of claims.
%\begin{enumerate}

We begin by observing that 
\begin{center}
  (a) If $\Gamma' \gdash \sbst{\phi}{x}{z}$ and $z$ is fresh for
  $\Gamma', x, \phi$, then $\Gamma' \gdash \forall x.\phi$
\end{center}
The proof of this fact is perfectly standard, but working with an Enderton-style axiomatization is particularly convenient for such reasoning: We have a finite $\Gamma'_0 \finsubset \Gamma'$ s.t. $\Gamma'_0 \gdash \sbst{\phi}{x}{z}$. Then one uses the Deduction Theorem %(available without restrictions with an Enderton-style axiomatization, 
(cf. Remark \ref{rem:noinfinitary}) to obtain $\gdash \bigwedge \Gamma'_0 \to  \sbst{\phi}{x}{z}$. However, even with an Enderton-style axiomatization it is still the  case\footnote{In fact, a variant of the Generalization Theorem is available even for non-empty contexts as long as the quantified variable does not occur freely therein, cf. \cite[p. 117]{Enderton72}.} that $\gdash \chi$ implies $\gdash \forall z.\chi$, hence $\gdash \forall z. (\bigwedge \Gamma'_0 \to  \sbst{\phi}{x}{z})$. The rest is an easy exercise using \axref{En3}, \axref{En4} and renaming of bound variables thanks to \axref{En2}.

The condition (a) tells us that $\Gamma$ itself does respect $\mathit{Inf}_{\tsc{name}a}$ by assumption. But if $\card{X} < \kappa$, then there are $\kappa$-many $z \in \FoV$ that are fresh for $\Gamma \cup X \cup \{\phi\}$. For any such $z$, (a) would hold also for $\Gamma' = \Gamma \cup X$. This gives condition \ref{eq:finrespect} for $\mathit{Inf}_{\tsc{name}a}$.

For $\mathit{Inf}_{\tsc{name}b}$, let us observe that (a) allows to infer that 
\begin{center}
If $\Gamma' \gdash \sbst{\lb\phi_1 \eqvF \psi_1\rb}{x}{z} \wedge \dots \wedge \sbst{\lb\phi_n \eqvF \psi_n\rb}{x}{z}$ and $z \ntoc \Gamma', \vec{\phi},\vec{\psi},x$, then
$\Gamma \gdash \forall x.\lb\lb \phi_1 \eqvF \psi_1 \rb \wedge \dots \wedge \lb \phi_n \eqvF \psi_n \rb\rb$.
\end{center}
Now an application of the congruence rule (Remark~\ref{rem:cong})
completes the proof of the claim.
%\begin{proof}
%\end{proof}
%\item
%\end{enumerate}
\end{proof}

\begin{clm} %\label{lem:nmrespect}
Assume $\card{\FoV} = \kappa \geq \card{\PrLSet \cup \SoV \cup \omega}$. Then any $\gdash$-consistent set of formulas $\Gamma$ s.t. $\card{\{x \in \FoV \mid x \ntoc \Gamma\}} = \kappa$ (in particular, a consistent set of \emph{sentences}) satisfies condition \ref{eq:finrespect} of Theorem \ref{th:goldhenkin} for $\mathit{Inf}_{\bel}$.
\end{clm}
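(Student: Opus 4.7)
The plan is to mirror the structure of the previous claim, using the boundedness axiom \axref{\bdpl} as the workhorse in place of the congruence rule. Fix $X \subseteq Fr$ with $|X| < \kappa$ and an inference in $\mathit{Inf}_\bel$, say with conclusion $\chi \deq \neg x\PrL\dots\cmh{y_i}{\phi_i}\dots$ and premise set indexed by tuples $\vec z \in \FoV$. Assume $\Gamma \cup X$ proves every premise; I aim to show $\Gamma \cup X \gdash \chi$.

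Suppose for contradiction that $\Gamma \cup X \not\gdash \chi$, so that $\Gamma \cup X \cup \{x\PrL\dots\cmh{y_i}{\phi_i}\dots\}$ is $\gdash$-consistent. Applying instance of \axref{\bdpl} for this operator and coordinate (together with modus ponens) converts this into $\gdash$-consistency of
$\Gamma \cup X \cup \{\exists z_1\dots z_{\bel(\PrL)(i)}.\theta(\vec z)\}$, where $\theta(\vec z) \deq \bigwedge_{j}\sbst{\phi_i}{y_i}{z_j} \wedge x\PrL\dots\cmh{y_i}{\bigvee_{j} y_i \eqF z_j}\dots$. This is the crucial use of the boundedness axiom: it translates a statement about an arbitrary definable set into an existential statement over finitely many witnesses.

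Next I exploit the cardinality assumption. Since $|X| < \kappa$ and the free variables occurring in $X$, in $x, y_i$ and in $\vec\phi$ together form a set of cardinality strictly less than $\kappa$, the set of variables fresh for $\Gamma \cup X \cup \{x,y_i,\vec\phi\}$ still has cardinality $\kappa$, so in particular I can pick a tuple $\vec z$ of $\bel(\PrL)(i)$ distinct such fresh variables (substitutable wherever needed). The specific premise indexed by this $\vec z$ then gives $\Gamma \cup X \gdash \neg\theta(\vec z)$. Because $\vec z$ is fresh for $\Gamma \cup X$, an iterated application of fact (a) from the proof of the previous claim (which is precisely the admissible form of generalization over fresh variables available in an Enderton-style system) yields $\Gamma \cup X \gdash \forall \vec z.\neg\theta(\vec z)$, equivalently $\Gamma \cup X \gdash \neg\exists\vec z.\theta(\vec z)$. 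This flatly contradicts the consistency obtained in the previous paragraph, completing the argument.

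The main obstacle, as often in this sort of bookkeeping, will be verifying that the chosen $\vec z$ meets every freshness and substitutability side-condition involved: that each $z_j$ is substitutable for $y_i$ in $\phi_i$, that $\vec z$ does not clash with the comprehension variables of the $\PrL$-formula, and that it remains fresh for every formula in $\Gamma \cup X$ so the generalization step is legitimate. Once these syntactic details are dispatched using the freshness reserve guaranteed by $|X|<\kappa$ and the hypothesis on $\Gamma$, the logical skeleton is essentially the same ``fresh-witness plus generalization'' trick used for $\mathit{Inf}_{\tsc{name}}$, with \axref{\bdpl} replacing the role that \axref{En2}, \axref{En3}, \axref{En4} together with the congruence rule played before.
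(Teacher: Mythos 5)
Your proposal is correct and follows essentially the same route as the paper: the paper's proof establishes the auxiliary fact that a premise instantiated at a tuple of fresh variables yields the conclusion (via the deduction theorem, generalization over the fresh tuple as in fact~(a), and the axiom \axref{\bdpl}), and then invokes the freshness reserve guaranteed by $\card{X} < \kappa$ to pick such a tuple. You package the same three ingredients as a proof by contradiction rather than a direct derivation, which is only a cosmetic difference.
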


\begin{proof}
%The proof will be broken in a series of claims.
%\beg{enumerate}

We begin by observing that 

\begin{center}
(b) If $\Gamma' \gdash \neg(\bigwedge\limits_{j \leq \bel(\PrL)(i)}\sbst{\phi_i}{y_i}{z_j} \wedge x\PrL\dots\cmh{y_i}{\bigvee\limits_{j \leq \bel(\PrL)(i)}y_i \eqF z_j}\dots)$    for some $\vec{z}\ntoc \Gamma', x, y_i, \vec{\phi}$,  then $\Gamma' \gdash \neg x\PrL\dots\cmh{y_i}{\phi_i}\dots$
\end{center}

This is shown by first following the proof of (a) and finding  a finite $\Gamma'_0 \finsubset \Gamma'$ s.t. $$\gdash \Gamma'_0  \to \neg\exists z_1, \dots, z_{\bel(\PrL)(i)}.(\bigwedge\limits_{j \leq \bel(\PrL)(i)}\sbst{\phi_i}{y_i}{z_j} \wedge x\PrL\dots\cmh{y_i}{\bigvee\limits_{j \leq \bel(\PrL)(i)}y_i \eqF z_j}\dots).$$ Applying \axref{\bdpl}\ proves (b).

The condition (b) tells us that $\Gamma$ itself does respect $\mathit{Inf}_{\bel}$ by assumption. But if $\card{X} < \kappa$, then there are $\kappa$-many $z \in \FoV$ which are fresh for $\Gamma \cup X \cup \{\phi_1, \dots, \phi_{\arty{\PrL}}\}$ and distinct from $x$ and $\vec{y}$. For any tuple of such $z$'s, (b) would hold also for $\Gamma' = \Gamma \cup X$. This gives condition \ref{eq:finrespect} for $\mathit{Inf}_{\bel}$. \qedhere
\end{proof}

\begin{clm} \label{cl:named_and_pasted}
Assume $\card{\FoV} = \kappa \geq \card{\PrLSet \cup \SoV \cup \omega}$. Then any $\gdash$-consistent set of formulas $\Gamma$ s.t. $\card{\{x \in \FoV \mid x \ntoc \Gamma\}} = \kappa$ (in particular, a consistent set of \emph{sentences}) can be extended to a maximally $\gdash$-consistent set of formulas $\Gamma'$ s.t. 
\begin{itemize}
\item whenever $\exists x.\phi \in \Gamma'$, then $\sbst{\phi}{x}{z} \in \Gamma'$ for some $z \in \FoV$
\item whenever $\exists x.\lb x\PrL\cmh{x}{\phi_1}\dots\cmh{x}{\phi_n}\wedge \neg x\PrL\cmh{x}{\psi_1}\dots\cmh{x}{\psi_n} \rb \in \Gamma'$, then there is $z~\in~\FoV$ and $i\leq n$ s.t. $\neg\sbst{\lb\phi_i \eqvF \psi_i\rb}{x}{z} \in \Gamma'$.
\item whenever $x\PrL\dots\cmh{y_i}{\phi_i}\dots \in \Gamma'$ and $\bel(\PrL)(i) \neq \infty$, then there are $z_1, \dots, z_{\bel(\PrL)(i)}$ s.t. $x\PrL\dots\cmh{y_i}{\bigvee\limits_{j \leq \bel(\PrL)(i)}y_i \eqF z_j}\dots  \in \Gamma'$ and moreover $\sbst{\phi_i}{y_i}{z_j}  \in \Gamma'$ for each $j \leq \bel(\PrL)(i)$. 
\end{itemize}
\end{clm}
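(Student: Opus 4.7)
The plan is to apply Goldblatt's Abstract Henkin Principle (Theorem \ref{th:goldhenkin}) to $\Gamma$ with the combined set of inferences $\mathit{Inf} = \mathit{Inf}_{\tsc{name}} \cup \mathit{Inf}_{\bel}$. First I would observe that under the standing cardinality assumption $\card{\FoV} = \kappa \geq \card{\PrLSet \cup \SoV \cup \omega}$, the set $\CPL(\PrLSet,\SoV)$ itself has cardinality $\kappa$, and hence so does $\mathit{Inf}$ (each inference is indexed by finitely many formulas and a variable). The two preceding claims already established that every $\kappa$-finite extension of $\Gamma$ respects $\mathit{Inf}_{\tsc{name}}$ and $\mathit{Inf}_{\bel}$ individually, and respecting a union is just respecting each component, so condition \eqref{eq:finrespect} holds for $\mathit{Inf}$. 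Theorem \ref{th:goldhenkin} then yields a maximally $\gdash$-consistent extension $\Gamma' \supseteq \Gamma$ closed under $\mathit{Inf}$.

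The remaining task is to translate the closure-under-$\mathit{Inf}$ property of $\Gamma'$ into the three bulleted witnessing properties, using maximal consistency (so membership and non-membership of negations are interchangeable). For the first bullet, from $\exists x.\phi \in \Gamma'$ I would extract $\forall x.\neg\phi \notin \Gamma'$ and apply closure under $\mathit{Inf}_{\tsc{name}a}$ (contrapositively) to $\neg\phi$ in order to pick a $z \in \FoV$ with $\phi[z/x] \in \Gamma'$. For the second bullet, the existential of the conjunction entails $\forall x.(A \eqvF B) \notin \Gamma'$ (since $A \eqvF B$ refutes $A \wedge \neg B$), so closure under $\mathit{Inf}_{\tsc{name}b}$ supplies $z$ and $i$ with $(\phi_i \eqvF \psi_i)[z/x] \notin \Gamma'$, whence by maximality $\neg(\phi_i \eqvF \psi_i)[z/x] \in \Gamma'$.

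The third bullet follows in the same style from closure under $\mathit{Inf}_{\bel}$: assuming $x\PrL\dots\cmh{y_i}{\phi_i}\dots \in \Gamma'$, its negation lies outside $\Gamma'$, so some instance of the premise family must fail to be in $\Gamma'$; this gives a tuple $\vec{z}$ for which the implication is not in $\Gamma'$, and hence (by maximality) its negation, i.e.\ the conjunction of $\bigwedge_j \phi_i[z_j/y_i]$ and $x\PrL\dots\cmh{y_i}{\bigvee_j y_i \eqF z_j}\dots$, is in $\Gamma'$, which yields exactly the required witnesses.

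The only non-routine point is the cardinality/freshness bookkeeping needed to invoke Theorem \ref{th:goldhenkin}, and this is exactly what the two preceding claims were designed to handle. The translation of closure into witnessing properties is then a uniform contrapositive argument, modulo checking that the conclusions of each inference scheme align with the required existential witnesses; I expect no further genuine difficulty there.
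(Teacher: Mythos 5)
Your proposal is correct and follows exactly the route the paper takes: the paper's own proof is a one-line appeal to the two preceding Claims, Theorem~\ref{th:goldhenkin} applied to $\mathit{Inf}=\mathit{Inf}_{\tsc{name}}\cup\mathit{Inf}_{\bel}$, and maximal consistency of $\Gamma'$. You have merely spelled out the contrapositive translation of closure under each inference scheme into the three witnessing properties, and those details check out.
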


\begin{proof}
This immediately follows from the preceding Claim, Theorem \ref{th:goldhenkin} and the fact $\Gamma'$ is a  MCS.
\end{proof}

\begin{proof}[Proof of Theorem \ref{thm:hilb-complete}]
Recall Definition \ref{def:quasih}. We will build our quasi-Henkin model using $\Gamma'$. Satisfaction of condition \ref{eq:henkinqu} follows then directly from the first item in Claim \ref{cl:named_and_pasted}, i.e., from being closed under $\mathit{Inf}_{\tsc{name}a}$.  Hence, we just need to define a transition structure $\gamma$ on $C_{\Gamma'}$ and for this purpose, we need to find for each $|x|$ a suitable $t \in TC_{\Gamma'}$ s.t. when $\gamma(|x|)$ is defined as $t$, the condition \ref{eq:henkinmo} is satisfied. %This is of course where we use the notions of one-step satisfiability and $\bel$-S1SC: we find this $t$ in the non-empty intersection of the denotation of a certain one-step consistent subset of $\RSch(\SchV)$ under a suitable valuation of schematic variables in $C_{\Gamma'}$.

Assume then $\RSch$ has enough schematic variables to name all elements of $\CPL(\PrLSet,\SoV)$; let $\sva_\phi$ be the schematic variable corresponding to $\phi$ under some fixed assignment. For each $x$, we can define an evaluation $\tau_x(\sva_\psi) = \widehat{\psi}^{x}$. Note that for each pair of distinct $x$ and $y$ we have that  $\tau_x(\sva_{x= y})$ is a singleton, thanks to the definition of $C_{\Gamma'}$.

Thus, let us define for each $x \in \FoV$ the set 
$$
\Psi_x := \{ \epsilon\PrL\sva_{\psi^\circ_1}\dots\sva_{\psi^\circ_n} \mid \psi_1, \dots, \psi_n \in \CPL(\PrLSet,\SoV) \text{ and } \epsilon(x\PrL\lcomp x \col \psi^\circ_1 \rcomp\dots\lcomp x \col \psi^\circ_n \rcomp) \in \Gamma'\},
$$

where $\epsilon$ is either nothing or negation and for each $i \leq \arty{\PrL}$, $\psi^\circ_i$ is either:
\begin{itemize}
\item $\psi_i$ itself, if $\bel(\PrL)(i) = \infty$ or 
\item $\bigvee\limits_{j \leq \bel(\PrL)(i)}x \eqF z_j$ otherwise, where $z_1, \dots, z_{\bel(\PrL)(i)}$ are s.t. 
\begin{itemize}
\item $x\PrL\dots\cmh{x}{\bigvee\limits_{j \leq \bel(\PrL)(i)}x \eqF z_j}\dots  \in \Gamma'$ and moreover 
\item $\sbst{\psi_i}{x}{z_j}  \in \Gamma'$ for each $j \leq \bel(\PrL)(i)$. 
\end{itemize}
\end{itemize}

Furthermore, let us define a colouring $\colo_x$ of schematic variables which assigns  $fin$ to every $\sva_{\bigvee\limits_{j \leq m}x \eqF z_m}$, where $z_1, \dots, z_m$ is any finite sequence of variables and $\infty$ to every other $\sva_{\psi}$. It is clear that
%\begin{itemize}
%\item 
 $\tau_x$ respects $\colo_x$. We have: %The task of using Lemma \ref{lem:rulestoaxioms} to show that
 \begin{clm}  \label{clm:psix}
  $\Psi_x$ is  $\colo_x$-consistent wrt $\tau_x$, i.e., %that
 \begin{equation*}
\Psi_x \cup \lbrace \svX
\sigma \mid\sigma:\SchV \to\BSch(\SchV) \text{ and } \svA/ \svX \in
\bbel_\colo(\Rules) \text{ s.t. } C_{\Gamma'}, \tau_x \models \svA \sigma\rbrace.
\end{equation*}
is propositionally consistent.
\end{clm}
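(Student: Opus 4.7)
The plan is a contradiction argument that lifts a hypothetical one-step inconsistency to a propositional inconsistency inside the maximally consistent set $\Gamma'$. Suppose, toward a contradiction, that $\Psi_x$ fails to be $\colo_x$-consistent wrt $\tau_x$. Then there exist a finite $\Psi_0 \subseteq \Psi_x$, finitely many rule instances $\svA_1/\svX_1,\dots,\svA_k/\svX_k$ in $\bbel_{\colo_x}(\Rules)$, and substitutions $\sigma_1,\dots,\sigma_k: \SchV\to\BSch(\SchV)$ with $C_{\Gamma'},\tau_x\models \svA_i\sigma_i$ for every $i$, such that $\Psi_0\cup\{\svX_1\sigma_1,\dots,\svX_k\sigma_k\}$ is propositionally inconsistent. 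The task is to convert this to a proof of $\bot$ from $\Gamma'$, contradicting the consistency of $\Gamma'$.

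First I would fix a translation $(\cdot)^\dagger$ sending each $\sva_\psi \in \SchV$ to the CPL formula $\psi$, commuting with Booleans, and mapping each modal atom $\PrL\sva_{\chi_1}\dots\sva_{\chi_n}$ of $\RSch(\SchV)$ to the CPL formula $x\PrL\lcomp x\col\chi_1\rcomp\cdots\lcomp x\col\chi_n\rcomp$, where $x$ is the fixed variable of $\Psi_x$. By the very design of $\Psi_x$, every formula in $\Psi_0$ translates under $(\cdot)^\dagger$ into a member of $\Gamma'$. The crucial auxiliary step is a routine induction on the Boolean structure of $\svA_i\sigma_i$ showing
\begin{equation*}
  \tau_x(\svA_i\sigma_i) \;=\; \{\,|z|\in C_{\Gamma'}\mid (\svA_i\sigma_i)^\dagger[z/x]\in\Gamma'\,\};
\end{equation*}
the base case follows directly from the definition $\tau_x(\sva_\psi)=\widehat{\psi}^x$ together with the Boolean closure of MCSs. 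Since $C_{\Gamma'},\tau_x\models\svA_i\sigma_i$ means that the above set is all of $C_{\Gamma'}$, closure of $\Gamma'$ under $\mathit{Inf}_{\tsc{name}a}$ (Claim \ref{cl:named_and_pasted}) yields $\forall x.(\svA_i\sigma_i)^\dagger\in\Gamma'$.

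Next I would invoke \axref{Onestep($\Rules$)} with the choice $z:=x$, feeding in the CPL substitution $\sigma_i^\dagger\mathrel{\mathop:}=(\cdot)^\dagger\circ\sigma_i$; since $\bbel_{\colo_x}(\Rules)$ is obtained from $\Rules$ only by renaming schematic variables, the axiom applies verbatim to each $\svA_i/\svX_i$, so $(\svX_i\sigma_i)^\dagger\in\Gamma'$ for each $i$. At this point every element of $\Psi_0^\dagger \cup \{(\svX_1\sigma_1)^\dagger,\dots,(\svX_k\sigma_k)^\dagger\}$ belongs to $\Gamma'$. Because $(\cdot)^\dagger$ is a homomorphism with respect to the Boolean connectives, the assumed propositional inconsistency of $\Psi_0\cup\{\svX_1\sigma_1,\dots,\svX_k\sigma_k\}$ transfers to a propositional inconsistency of their $(\cdot)^\dagger$-images inside $\Gamma'$, whence $\Gamma'\gdash\bot$, a contradiction.

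The main bookkeeping obstacle, and the only nontrivial point, is the coloured bounded case: elements of $\Psi_x$ corresponding to a bounded coordinate carry a schematic variable $\sva_{\bigvee_j x = z_j}$ whose $\tau_x$-value is indeed finite and whose CPL translation is the corresponding finite equality disjunction. I need the substitutions $\sigma_i$ in the rule instances from $\bbel_{\colo_x}(\Rules)$ to respect $\colo_x$ precisely to guarantee that these finite/infinite slots line up between $\Psi_x$ and the conclusions $\svX_i\sigma_i$; this is exactly why the definition of $\bbel_{\colo_x}(\Rules)$ was arranged this way. Once the variable management (fresh witnesses from Claim \ref{cl:named_and_pasted}, capture-avoidance in \axref{Onestep($\Rules$)}) is handled carefully, the translation step becomes purely mechanical.
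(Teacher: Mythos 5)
Your argument is correct and takes essentially the same route as the paper's (much terser) proof: assume the set is propositionally inconsistent, extract a finite witness by compactness of propositional logic, and turn it into a derivation of $\bot$ from $\Gamma'$ via its closure under the instances of \axref{Onestep($\Rules$)}, contradicting maximal consistency. The paper leaves the translation $(\cdot)^\dagger$, the identification of $\tau_x(\svA\sigma)$ with the corresponding definable subset of $C_{\Gamma'}$, and the passage from validity under $\tau_x$ to $\forall x.(\svA\sigma)^\dagger\in\Gamma'$ entirely implicit; you spell these out, but nothing essential differs.
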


\begin{proof}
Assume it is not. By compactness of the classical propositional calculus, a contradiction can be then derived already from a certain finite subset of $\Psi_x$ and finitely many $\svX
\sigma$ s.t.  $\svA/ \svX \in
\bbel_\colo(\Rules)$ and $C_{\Gamma'}, \tau_x \models \svA \sigma$. Given the definition of $\tau_x$ and $\Psi_x$, this directly contradicts the fact that $\Gamma'$ is supposed to be a MCS closed under all instances of axioms  \axref{Onestep($\Rules$)} and \bdpl\ in Table \ref{tab:folaxioms}. %Furthermore, using the technique of Lemma \ref{lem:rulestoaxioms},  for each such $\svX\sigma$ we can find a corresponding substitution $\tau_{\svX}: \SchV \to  \BSch(\SchV)$ allowing us to replace $\svX\sigma$ with $\svX\tau_{\svX}\sigma$ and suitable instances of the congruence axiom.
\end{proof}
%
  %This means there exists a finite deduction of inconsistency from this set of premises. First, use 
  %is propositionally consistent is left to the reader; the basic ideas are available in several references %such as \cite{Schroder06}, \cite{SchroderPattinson10b} and \cite{SchroderPattinson10}.  %to ensure that apart from the congruence rule, the deduction uses only one-step \emph{axioms} \dots
%\end{itemize}
By Definition \ref{def:finitarysc} of $\bel$-S1SC, Claim \ref{clm:psix} implies that $\Psi_x$ is one-step satisfiable wrt $\tau_x$, i.e., 
\begin{multline*}\textstyle
\bigcap_{\phi \in \Psi_x} \lsem \phi \rsem_{TC_{\Gamma'}, \tau_x}  =  \{ \epsilon\Sem{\PrL}_{C_\Gamma'}\widehat{\psi^\circ_1}^x\dots\widehat{\psi^\circ_n}^x \mid  \psi_1, \dots, \psi_n \in \CPL(\PrLSet,\SoV) \text{ and } \\
  \epsilon(x\PrL\lcomp x \col \psi^\circ_1 \rcomp\dots\lcomp x \col \psi^\circ_n \rcomp) \in \Gamma'\} 
\neq \emptyset. 
\end{multline*}
Now use the Axiom of Choice to define $\gamma(|x|)$ to be a representative of this non-empty intersection for every $|x|$ (strictly speaking, for an arbitrarily chosen representative of this equivalence relation), automatically yielding the condition \ref{eq:henkinmo} of Definition \ref{def:quasih}.
\end{proof}

%\tlntnew{Mess below}

%This yields a \emph{named} MCS which can be turned into a Henkin-style coalgebraic model. The remaining details of the construction are analogous to those in the proof of Lemma 2.10 in \cite{SchroderPattinson10}. \tlntnew{NO, THIS HAS TO BE DONE IN DETAIL}

%For the second clause, the counterpart of the first claim %in the proof of the first clause 
%replaces (a) with

%\begin{center}
%\small
%(b) If $\Gamma' \vdash_{\axref{bdpl}{}_{k,i}} \neg (x\PrL\cmh{y_1}{\phi_1}\dots\cmh{y_{i-1}}{\phi_{i-1}}\cmh{y_i}{y_i \eqF z_1 \vee\dots \vee y_i \eqF z_k}$ \newline $\cmh{y_{i+1}}{\phi_{i+1}}\dots\cmh{y_n}{\phi_n} \wedge \bigwedge\limits_{j \leq k}\sbst{\phi_i}{y_i}{z_j})$ \newline for some $\vec{z}\ntoc \Gamma', x, y_i, \vec{\phi}$, then $\Gamma' \vdash {\axref{bdpl}{}_{k,i}} \neg x\PrL\cmh{y_1}{\phi_1}\dots\cmh{y_n}{\phi_n}$
%\end{center}
%
%and the counterpart of the second claim looks as follows

%This yields a \emph{1-pasted} MCS which can be turned into a Henkin-style coalgebraic model. The remaining details of the construction are analogous to those in the proof of Lemma 2.17 in \cite{SchroderPattinson10}. \tlntnew{NO, THIS HAS TO BE DONE IN DETAIL}

\begin{rem} \label{rem:hyfa}
The similarities and differences between CPL and languages like $\HyFA$ and its extensions to be discussed in \S~\ref{sec:cml} are best appreciated by comparing the proof of Theorem \ref{th:completeness} with earlier hybrid ones \cite{SchroderPattinson10}. In the predicate case:
\begin{itemize} 
\item not only one-step rules, but also non-standard naming and pasting rules of \cite{SchroderPattinson10} can be expressed as ordinary first-order axioms. %In fact, as we have seen above, even the congruence rule itself  can be written as an ordinary first-order axiom.
\item As we are going to discuss now, the Henkin-style completeness proof directly leads to the Omitting Types theorem. %(Remark \ref{rem:omitting}).
It is not clear how to obtain such a result for a language like $\HyFA$ studied in \cite{SchroderPattinson10}; the presence of a binding and/or quantification mechanism seems essential in the proof. Recall again that the presence of such a mechanism also allowed us to reuse (equivalence classes of) variables as building block of models instead of Henkin-style constants. 
%\item assuming a careful choice of axiomatization (such as Enderton-style one in our case), there is even no need to enrich the language with a supply of \textbf{new} Henkin-style constants (which is particularly bothersome in case of a language like $\HyFA$, where one would prefer to consider constants a part of the signature); for a countable language without the universal generalization rule, their role can be perfectly well performed by variables themselves.
\end{itemize}
\end{rem}

%\todo{TL: turn it into a proper statement?}

%\todo{Mixed case?}

\subsection{Omitting Types Theorem}

%\begin{rem}\label{rem:omitting}
The Omitting Types Theorem is a standard result of model theory. Goldblatt \cite[\S~8.2]{goldblatt1993:abstract} shows how to %use Theorem \ref{th:goldhenkin} to 
establish it using  the Abstract Henkin Principle. % wherever a Henkin-style completeness proof is available. We a
%We are going to do this in the present section. 
 Here is a more detailed description how to obtain it in our setting. In this section, we assume that the entire $\CPL(\PrLSet,\SoV)$ is countable and we keep these countable $\PrLSet$ and $\SoV$ fixed and implicit.
 
Fix a finite subset of $\FoV$ $\{x_1, \dots, x_k\}$ and denote the set of all formulas whose free variables are contained in $\{x_1, \dots, x_k\}$ as $\CPL(k)$. Thus, the set of sentences can be written as $\CPL(0)$. Recall that a $k$-\emph{type} (sometimes called a \emph{complete type}) is a maximal consistent subset of $\CPL(k)$; sometimes, one also uses the term \emph{partial type} for consistent yet not maximal subsets of $\CPL(k)$. For any given $\Gamma \subseteq \CPL(0)$ and any (partial or total) $k$-type $\Delta$, say that $\Delta$ is \emph{principal over $\Gamma$} if there is $\phi \in \CPL(k)$ consistent with $\Gamma$ s.t. $\forall \psi \in \Delta. \Gamma \gdash \phi \to \psi$. Note here that for complete types, we can assume that $\phi \in \Delta$. Say that a model $\gM = (C, \gamma, I)$ \emph{realizes} a (partial or total) $k$-type $\Delta$ if $\bigcap\limits_{\psi\in\Delta}\lsem \psi \rsem^{x_1,\dots,x_k}_C \neq \emptyset$, where as before 
$$\lsem \phi \rsem^{x_1,\dots,x_k}_C \deq \lbrace (c_1,\dots,c_k) \in C \mid \gM, \vmd{c_1}{x_1}\dots[c_k/x_k]
\models \phi \rbrace;$$ 
a $k$-type is \emph{omitted} by $\gM$ if it is not realized by it.

Note that for complete types, one consequence of being non-principal is that $\Delta$ is neither entailed by $\Gamma$ nor inconsistent with it (maximal consistent sets are closed under finite conjuctions).

\begin{thm}[Omitting Types] \label{th:omt}
  Whenever a set of rules $\Rules$ is \emph{$\bel$-S1SC} for a
  $\Lambda$-structure over $T$ that is adequate for $\bel$, $\Gamma$ is a  consistent set of sentences and $\Delta$ is a \bro complete or partial\brc\ $k$-type non-principal over $\Gamma$, $\Gamma$ has a model omitting $\Delta$.
   \end{thm}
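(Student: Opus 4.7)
The plan is to extend the Henkin-style argument of Theorem~\ref{thm:hilb-complete} by adjoining to the family of inferences $\mathit{Inf}$ a further block designed to prevent realisation of $\Delta$, and then to reapply Goldblatt's Abstract Henkin Principle (Theorem~\ref{th:goldhenkin}). To keep omission witnesses out of the way of the existing Henkin witnesses, I would partition $\FoV$ into two disjoint infinite blocks $\FoV_{0} \disjoint \FoV_{1}$, drawing naming and pasting witnesses for Claim~\ref{cl:named_and_pasted} from $\FoV_{0}$ as before and reserving $\FoV_{1}$ as a pool of ``omission witnesses''. For every $\vec y = (y_{1}, \dots, y_{k}) \in \FoV_{1}^{k}$ I would add the inference
\[
\mathit{inf}_{\vec y} \deq \langle\,\{\sbst{\psi}{\vec x}{\vec y} \mid \psi \in \Delta\},\, \botF\,\rangle
\]
to the collection already in play. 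A consistent MCS closed under $\mathit{inf}_{\vec y}$ cannot contain every $\sbst{\psi}{\vec x}{\vec y}$ with $\psi \in \Delta$, so by the Truth Lemma this will force $(|y_{1}|, \dots, |y_{k}|)$ to miss at least one member of $\Delta$.

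The principal step is to verify condition~\eqref{eq:finrespect} of Theorem~\ref{th:goldhenkin} for every $\mathit{inf}_{\vec y}$, and this is where non-principality of $\Delta$ is used. Let $X \finin \CPL$ with $\Gamma \cup X$ consistent, let $\vec y \in \FoV_{1}^{k}$, and suppose for contradiction that $\Gamma \cup X \gdash \sbst{\psi}{\vec x}{\vec y}$ for every $\psi \in \Delta$. Up to $\alpha$-renaming one may assume that $\vec x$ is fresh for $X$ and that $\vec x, \vec w$ are disjoint, where $\vec w$ enumerates $\FV(\phi_{X}) \setminus \vec y$ and $\phi_{X} \deq \bigwedge X$. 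By the deduction theorem (Remark~\ref{rem:noinfinitary}) and generalisation of $\vec w$ across the implication---legitimate because no $w \in \vec w$ is free in $\sbst{\psi}{\vec x}{\vec y}$---one obtains $\Gamma \gdash (\exF \vec w.\phi_{X}) \impF \sbst{\psi}{\vec x}{\vec y}$. Generalising further over $\vec y$ (not free in $\Gamma$) and then instantiating with $\vec x$ via \axref{En2}, with the double substitution $\sbst{\sbst{\psi}{\vec x}{\vec y}}{\vec y}{\vec x}$ collapsing back to $\psi$ because $\vec y \subseteq \FoV_{1}$ is fresh for $\psi$, I would conclude
\[
\Gamma \gdash \phi^{*} \impF \psi \quad \text{for all }\psi \in \Delta, \qquad \text{with } \phi^{*} \deq \sbst{(\exF \vec w.\phi_{X})}{\vec y}{\vec x} \in \CPL(k).
\]
A symmetric generalisation-and-instantiation argument shows $\Gamma \cup \{\phi^{*}\}$ consistent: from $\Gamma \gdash \notF\phi^{*}$ one would recover $\Gamma \gdash \notF\phi_{X}$ (using that $\vec x$ is not free in $\Gamma$ and the fresh-renaming of $\vec x$ in $\phi_X$), contradicting consistency of $\Gamma \cup X$. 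Hence $\phi^{*}$ witnesses that $\Delta$ is principal over $\Gamma$, contradicting the hypothesis of the theorem.

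With \eqref{eq:finrespect} now established for $\mathit{Inf} \cup \{\mathit{inf}_{\vec y} \mid \vec y \in \FoV_{1}^{k}\}$, still of countable cardinality, Theorem~\ref{th:goldhenkin} produces an MCS $\Gamma^{*} \supseteq \Gamma$ closed under all inferences simultaneously. The quasi-Henkin model $\gM_{\Gamma^{*}}$ is then built exactly as in the proof of Theorem~\ref{thm:hilb-complete}: Claim~\ref{cl:named_and_pasted} goes through unchanged because naming and pasting witnesses are still drawn from $\FoV_{0}$, and Claim~\ref{clm:psix} together with $\bel$-S1SC yields the transition structure $\gamma$. The Truth Lemma gives $\gM_{\Gamma^{*}} \modelsCA \Gamma$. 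For omission, every $k$-tuple of $C_{\Gamma^{*}}$ can be written as $(|y_{1}|, \dots, |y_{k}|)$ with $\vec y \in \FoV_{1}^{k}$, because the naming witness for $\exF x.\,x \eqF y \in \Gamma^{*}$ can always be chosen in $\FoV_{1}$; were such a tuple to realise $\Delta$, the Truth Lemma would place every $\sbst{\psi}{\vec x}{\vec y}$ in $\Gamma^{*}$, violating closure under $\mathit{inf}_{\vec y}$.

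The main obstacle is the quantifier manipulation in the second paragraph, where a principalising formula $\phi^{*} \in \CPL(k)$ must be distilled from a derivation whose intermediate formulas carry free variables outside $\vec x$. Handling this cleanly requires care with the substitutability conventions fixed in \S~\ref{sec:ss} and amounts essentially to a variant of the Generalisation-on-Constants Theorem tailored to the Enderton-style, constant-free setting favoured in the paper. A subsidiary bookkeeping task is ensuring via the split $\FoV_{0} \disjoint \FoV_{1}$ that omission witnesses do not collide with the naming and pasting witnesses of the underlying completeness construction.
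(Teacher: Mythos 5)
Your overall strategy coincides with the paper's: enlarge the inference family of the completeness proof by ``omission inferences'' $\langle\{\psi[\vec z/\vec x]\mid\psi\in\Delta\},\bot\rangle$, verify Goldblatt's condition~\eqref{eq:finrespect} for them via non-principality, and then run the quasi-Henkin construction unchanged. Your middle paragraph --- deduction theorem, existential closure over the remaining free variables of the finite extension, renaming back to $\vec x$ to distill a principalizing formula $\phi^{*}\in\CPL(k)$ consistent with $\Gamma$ --- is exactly the argument the paper gives (there $\phi^{*}$ is called $\delta'$), and it is the real content of the theorem. That part is fine.

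The gap is in your final omission step. You index the omission inferences only over tuples $\vec y\in\FoV_{1}^{k}$ from a reserved block, and then assert that every $k$-tuple of $C_{\Gamma^{*}}$ can be written as $(|y_{1}|,\dots,|y_{k}|)$ with $\vec y\in\FoV_{1}^{k}$ ``because the naming witness for $\exists x.\,x\eqF y$ can always be chosen in $\FoV_{1}$.'' This does not follow. Theorem~\ref{th:goldhenkin} only guarantees that $\Gamma^{*}$ is \emph{closed under} the naming inferences; it gives you no control over which witness does the witnessing. For the formula $\exists x.\,x\eqF y$ with $y\in\FoV_{0}$, closure under $\mathit{Inf}_{\tsc{name}a}$ is satisfied vacuously by the witness $y$ itself (since $y\eqF y\in\Gamma^{*}$ by \axref{En5}), so nothing forces any $y'\in\FoV_{1}$ to satisfy $y\eqF y'\in\Gamma^{*}$. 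Consequently an element $|y|$ with $y\in\FoV_{0}$ may have no $\FoV_{1}$-representative, and a tuple containing it escapes all of your omission inferences --- it could realize $\Delta$. The partition $\FoV_{0}\disjoint\FoV_{1}$ buys you nothing anyway (the naming/pasting verifications only need \emph{some} cofinally many fresh variables, which survive adding more inferences), so the clean repair is the paper's: index the omission inferences over \emph{all} $k$-tuples of distinct variables of $\FoV$, for which your non-principality argument goes through verbatim. Alternatively, keep the split but add and verify further inferences $\langle\{\neg(y\eqF y')\mid y'\in\FoV_{1}\},\bot\rangle$ for each $y\in\FoV_{0}$ to force every element to carry an $\FoV_{1}$-name; as it stands, that step is missing.
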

   
\begin{proof}
We only need to refine somewhat the proof of the completeness theorem by using a richer set of inferences than $Inf$. Consider $$Inf_\Delta = Inf \cup  \{ \langle \{\sigma[z_1\dots z_k/x_1\dots x_k] \mid \sigma \in  \Delta \} , \bot \rangle \mid z_1, \dots, z_k \text{ distinct els. of } \FoV \}$$ %and $Fr = \CPL(k)$. 
(we have not formally defined simultaneous substitution, but it should be clear how to extend conventions introduced in \S~\ref{sec:ss}). We claim that the condition \ref{eq:finrespect} of Theorem \ref{th:goldhenkin} is satisfied with $\kappa = \omega$. For assume it is not. Then there exists a finite set $\Delta \subseteq \CPL$   and a finite tuple of distinct variables $z_1, \dots, z_k$ s.t. (*) $\Gamma \cup \Delta$ is consistent but  
\begin{center}
 $\Gamma  \gdash \bigwedge \Delta \to \sigma[z_1\dots z_k/x_1\dots x_k] $ for every $\sigma \in \Delta$.
\end{center}
Let $\vec{z'}$ be a sequence containing all the variables in $\Delta$ different from $z_1\dots z_k$ and $\delta \deq \exists\vec{z'}.\bigwedge \Delta$. Then we have
 \begin{center}
(**) $\Gamma  \gdash \delta \to \sigma[z_1\dots z_k/x_1\dots x_k] $ for every $\sigma \in \Delta$
\end{center}
and consequently, setting $\delta'$ to be $\delta[x_1\dots x_k/z_1\dots z_k]$
 \begin{center}
 (***) $\Gamma  \gdash  \delta' \to \sigma$ for every $\sigma \in \Delta$
 
(in deriving (**) and (***) we obviously use the fact that $\Gamma$ is a set of \emph{sentences}).
 \end{center}

 \takeout{As $\Delta$ is not principal over $\Gamma$ and $\delta' \in \CPL(k)$, we have that $\delta' \not\in \Delta$, hence 
$\neg\delta' \in \Delta$. By (***) this means that $\Gamma  \gdash \neg\delta'$, and using again renaming and the fact that $\Gamma$ is a set of sentences, we obtain  a contradiction with  (*).} 
At the same time, $\Gamma$ being a set of sentences yields that $\delta'$ is consistent with $\Gamma$ by virtue of (*). This entails a contradiction with non-principality of $\Delta$ over $\Gamma$. The rest proceeds as in the completeness proof.
%Thus, by Theorem \ref{th:goldhenkin}, there exists a $k$-type $\Gamma' \supseteq \Gamma$ closed under $Inf_\Delta$.
\end{proof}   

\begin{rem}
 Goldblatt \cite[\S~8.2]{goldblatt1993:abstract} points out this can be extended to simultaneously omitting a countable set of non-principal types. 
 \end{rem}

\newcommand{\is}[2]{#1\,\mathsf{is}\,#2}
\newcommand{\isna}[1]{\is{#1}{\mathsf{nat}}}
\newcommand{\isnu}[2]{\is{#1}{\underline{#2}}}

  \noindent
 Here is a typical application adapted from the monograph of Chang and Keisler \cite[Ch 2.2, p. 83]{ChangK90}: $\omega$-logic, $\omega$-rule and $\omega$-completeness. 
 
 Given any $T$, $\Lambda$ and $\Rules$ within the scope of our completeness result, extend $\Lambda$ with a countable family of propositional atoms (cf. Definition \ref{def:atom}) $\{\isnu{}{n} \mid n \in \Nat \} \cup \{ \isna{} \}$. By similar considerations as in Remark \ref{rem:atomsafe}, the completeness result is not affected by such extensions. 
 
 \begin{rem}
Of course, we could alternatively extended $\Sigma$ with corresponding predicate symbols, which would be even easier from the point of view of directly applying Theorems~\ref{th:goldhenkin} and~\ref{th:omt}, but would also have a less coalgebraic flavour.
 \end{rem}
 %\tlnew{Present with constants, predicates or predicate liftings?} Let
 
 \noindent
 Consider now the following set of sentences in the extended language:
 
 \begin{align*}
 \Gamma_N = & \, \{\forall x. \isnu{x}{n} \to (\isna{x} \wedge \neg \isnu{x}{m})  \mid n, m \in \Nat, n \neq m\}  \, \cup \\
 & \, \{\forall x, y. \isnu{x}{n} \wedge  \isnu{y}{n} \to x = y \mid n \in \Nat\} \, \cup \\
 & \{ \exists x. \isnu{x}{n} \mid n \in \Nat \}.
 \end{align*}
 
 \noindent
 An $\omega$-\emph{model} is any model of $\Gamma_N$ where, moreover, the denotation of $\isna{}$ is the set-theoretical sum of all ``$\isnu{}{n}$''. A theory $\Gamma$ is \takeout{$\omega$-\emph{consistent} if \takeout{$\Gamma \cup \Gamma_N$ has an $\omega$-model, i.e., (assuming we work with structures within the scope of our completeness result) if} there is no formula $\phi(x)$ s.t. $\Gamma \cup \Gamma_N$ entail the following set of sentences:

\begin{equation} \label{eq:omco}
 \{\forall x. \isnu{x}{n} \to \phi(x) \mid n \in \Nat\} \cup \{\exists x. \isna{x} \wedge \neg\phi(x)\}. 
\end{equation}

\noindent
$\Gamma$ is} $\omega$-\emph{complete} if $\Gamma \cup \Gamma_N$ is closed under the $\omega$-\emph{rule}\takeout{formalizing this notion of consistency}:

\begin{equation*} 
\nrule{\rul{\omega}}{\forall x. \isnu{x}{n} \to \phi(x) \quad n \in \Nat}{\forall x. \isna{x} \to \phi(x)}
\end{equation*}

\noindent
One can use the Omitting Types theorem to show the following:

\begin{cor}
Assume that a set of rules $\Rules$ is \emph{$\bel$-S1SC} for a
  $\Lambda$-structure over $T$ is adequate for $\bel$, and $\Gamma$ is a set of sentences s.t. $\Gamma \cup \Gamma_N$ is consistent.
  \takeout{
  \begin{itemize}
  \item} If $\Gamma$ is $\omega$-complete, then $\Gamma$ has an $\omega$-model.
  \takeout{
  \item If $\Gamma$ has $\omega$-model, then $\Gamma$ is $\omega$-consistent.
  \end{itemize}}
\end{cor}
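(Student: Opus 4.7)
The plan is to apply the Omitting Types Theorem (Theorem~\ref{th:omt}) to the consistent set of sentences $\Gamma \cup \Gamma_N$ and the partial $1$-type
\[
\Delta(x) := \{\isna{x}\} \cup \{\neg\isnu{x}{n} \mid n \in \Nat\},
\]
which encodes ``being a non-standard element of $\isna{}$''. An $\omega$-model of $\Gamma$ arises precisely as a model of $\Gamma \cup \Gamma_N$ that omits $\Delta$: $\Gamma_N$ already forces $\isnu{x}{n} \to \isna{x}$, and omission of $\Delta$ supplies the converse direction, so that the denotation of $\isna{}$ coincides with the set-theoretic union of the denotations of the $\isnu{}{n}$.

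The crux is to verify that $\Delta$ is non-principal over $\Gamma \cup \Gamma_N$. Suppose for contradiction that some $\phi(x) \in \CPL(1)$ is consistent with $\Gamma \cup \Gamma_N$ and satisfies $\Gamma \cup \Gamma_N \gdash \phi(x) \to \sigma$ for every $\sigma \in \Delta$. I would first use classical propositional reasoning to rewrite each $\Gamma \cup \Gamma_N \gdash \phi(x) \to \neg\isnu{x}{n}$ as $\Gamma \cup \Gamma_N \gdash \isnu{x}{n} \to \neg\phi(x)$. Because $\Gamma \cup \Gamma_N$ consists entirely of sentences, the standard Generalization Theorem available for Enderton-style axiomatizations (cf.\ Remark~\ref{rem:noinfinitary}) then yields $\Gamma \cup \Gamma_N \gdash \forall x.(\isnu{x}{n} \to \neg\phi(x))$ for every $n \in \Nat$. $\omega$-completeness of $\Gamma$, i.e.\ closure of $\Gamma \cup \Gamma_N$ under the $\omega$-rule, gives $\Gamma \cup \Gamma_N \gdash \forall x.(\isna{x} \to \neg\phi(x))$. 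Combining this with the generalization of the sentence $\phi(x) \to \isna{x}$ (itself obtained from $\isna{x} \in \Delta$) produces $\Gamma \cup \Gamma_N \gdash \forall x. \neg\phi(x)$, contradicting the assumed consistency of $\phi(x)$ with $\Gamma \cup \Gamma_N$.

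Having established non-principality, Theorem~\ref{th:omt} (applicable because the hypotheses on $\Lambda$, $\Rules$, $T$ and $\bel$ are inherited unchanged, and $\Gamma \cup \Gamma_N$ is a consistent set of sentences in the countable language) yields a model $\gM$ of $\Gamma \cup \Gamma_N$ omitting $\Delta$. To conclude, I would observe that $\gM$ is indeed an $\omega$-model: every $c$ with $\gM \models \isna{c}$ must, by omission of $\Delta$, satisfy $\isnu{c}{n}$ for some $n \in \Nat$, while $\Gamma_N$ supplies the converse, so that $\isna{}^{\gM}$ equals the union of the singletons interpreting the $\isnu{}{n}$. The principal bookkeeping point---and the only real obstacle---is ensuring that each application of generalization is licit (i.e.\ that $x$ is not free in any premise), which is automatic because $\Gamma \cup \Gamma_N$ contains only sentences, and that $\omega$-completeness is invoked as a closure property of $\Gamma \cup \Gamma_N$ under $\gdash$-derivations rather than as an additional inference rule of the calculus.
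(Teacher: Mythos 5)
Your proposal is correct and follows essentially the same route as the paper: the same non-principal partial type $\Delta(x) = \{\isna{x}\}\cup\{\neg\isnu{x}{n}\mid n\in\Nat\}$, the same use of $\omega$-completeness to rule out a principal formula (you argue by contradiction where the paper argues contrapositively, a cosmetic difference), and the same appeal to Theorem~\ref{th:omt}. Your explicit checks that generalization is licit over a set of sentences and that the omitting model is indeed an $\omega$-model only spell out what the paper's sketch leaves implicit.
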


\begin{proof}[Sketch]
\takeout{Only the first part requires proving.} Consider 
\[
\Delta(x) = \{ \neg(\isnu{x}{n}) \mid n \in \Nat\} \cup \{\isna{x}\}.
\]
\takeout{If $\Delta(x)$ is inconsi}
Pick any $\phi(x) \in \CPL(1)$ s.t. $\isna{x} \wedge \phi(x)$ is  consistent with $\Gamma \cup \Gamma_N$. Then 
\begin{align*}
& \Gamma \cup \Gamma_N \not\gdash \forall x. (\isna{x} \to \neg\phi(x)) & \\
\Longrightarrow\qquad & \Gamma \cup \Gamma_N \not\gdash \forall x. (\isnu{x}{n} \to \neg\phi(x))  \text{ for some } n \in \Nat & \text{(by } \omega-\text{completeness),} \\
 \Longleftrightarrow\qquad & \Gamma \cup \Gamma_N \not\gdash  \phi(x) \to \neg(\isnu{x}{n}) \text{ for some } n \in \Nat.
\end{align*}

\noindent
On the other hand, if
\[
 \Gamma \cup \Gamma_N \gdash \isna{x} \to \neg\phi(x) \]
 and at the same time
 \[
  \Gamma \cup \Gamma_N \gdash \phi(x) \to \isna{x},
 \]
  then $\phi(x)$ is inconsistent with $\Gamma \cup \Gamma_N$. Thus,  $\Delta(x)$ \takeout{containing $\isna{x}$} is non-principal over $\Gamma \cup \Gamma_N$ and hence, $\Gamma \cup \Gamma_N$ has a model omitting it. Such a model is an $\omega$-model.
%Assume there exists 
%$\phi(x)$ s.t.  $\Gamma \cup \Gamma_N$ entail all sentences in \refeq{eq:omco}.
\end{proof}

\noindent
As discussed by Chang and Keisler \cite[Ch 2.2]{ChangK90}, it follows that we can extend our deductive apparatus with $\Gamma_N$ as additional axioms and $\omega$-rule as an additional rule of proof and consistency in this extended system is equivalent to the existence of an $\omega$-model.

%An analogous application is:

 %An application of the above theorem can be used to show, for example that $\CPL$-theories over countable $\bel$-S1SC structures allowing infinite branching also have globally finitely branching models---and similar examples can be constructed, e.g., concerning boundedness of neighbourhood-like operators. 
 
% \begin{exa}\ \label{ex:omit}
 %\begin{enumerate}
 %\item 
 %Here is a  typical application, which can be adjusted for many different functors and structures given above. For simplicity, we present it in the Kripke setting. 
 %\tlnew{well \dots}
 %\takeout{Let $\Delta$ be empty, let $\gM$ be the model given by the natural numbers with the transition structure corresponding to reverse strict order, i.e., $\omega^{op}$ and let $\Gamma$ be the $\CPL$-theory of this model. Consider
%  \[
%\Delta = \{ \exists y_1 \dots y_n. \bigwedge\limits_{i\leq n,j < i}x\Diamond \lcmh z:z = y_i\rcmh \wedge y_i \neq y_j \mid n \in \Nat \}.
%\]
%This (partial) $1$-type is clearly finitely consistent with $\Gamma$ and hence can be extended to a complete $\Delta'$, which is non-principal over $\Gamma$.}
 %\]
 %This is clearly a consistent set of sentences: Kripke structures ``allow infinite branching'' (otherwise, we would be talking about $\Pow_\omega$ rather than $\Pow$). 
 %\end{enumerate}
 %\end{exa}
 
 %\noindent
 Such examples are worth contrasting with  the incompleteness result (Theorem \ref{th:noncompact}) we are going to present next. 

%This covers the two classes of structures in the statement of Theorem \ref{th:completeness}. Since both formulation and proof are entirely analogous to the standard relational case, we omit the details and refer the reader to \cite[\S~8.2]{goldblatt1993:abstract}; let us only note that the fact we used variables instead of Henkin constants (making use of advantages of an Enderton-style axiomatization) does not lead to any complications in the proof, in fact making it even simpler in some cases. 
%\end{rem}

%\tlnt{@Tadeusz: Include a formulation and a proper proof.}
%Actually, it seems slightly questionable whether this
%  holds. Consider a complete type, in graded CPL, containing 
% t = {not (x <=k> T) | k in N}.
% This type is not isolated (in fact, no formula of t implies another).
% Omitting it means we impose finite branching. 
% So the omitting types theorem would say that 
% every satisfiable theory has finitely branching countable models,
% provided that t is a type in it *and* non-isolated, which
% means in particular that t should not follow from the theory.
% So to get a counterexample we need a theory that
% admits finite branching somewhere in the model
% but has no models that are globally finitely branching.

%\todo{TL: turn it into a proper theorem?}

%\todo{shall I include at least the formulation?}

%\subsection{Neighbourhood-like Case}

%\subsection{Kripke-like Case}

\subsection{\texorpdfstring{$\omega$}{Omega}-boundedness and Failure of Completeness} \label{sec:omega}

\noindent
In this subsection, we show that there is a substantial gap between
S1SC and finitary S1SC as conditions allowing for strong completeness,
by proving that within a larger class of \emph{$\omega$-bounded}
structures, the bounded structures are the only ones that satisfy
compactness. Here, $\omega$-boundedness of an operator means
informally that its satisfaction can always be established by looking
only at a finite subset of the successors, without however requiring a
fixed bound on their number. In examples for this property, we
concentrate on cases additionally satisfying finitary one-step
compactness (Definition~\ref{def:os-compact}), a condition %that would
%similarly be seen as
 essentially necessary for overall compactness and
that will moreover become important in our forays into model theory
(\S~\ref{sec:modeltheory}).  In the whole subsection, to keep things
simple we work with unary $\hearts \in \PrLSet$.

\begin{defi}[$\omega$-Bounded operators]
A modal operator $\hearts$ is \emph{$\omega$-bounded} if for each
set $X$ and each $A\subseteq X$,
\begin{equation*}
  \Sem{\hearts}_X(A)=\bigcup_{B\finsubset A}\Sem{\hearts}_X(B).
\end{equation*}
\end{defi}

\begin{exa}[Nonstandard subdistributions] \label{ex:hyperprob} We
  generally write $\SDist$ for the discrete subdistribution functor,
  i.e.\ $\SDist(X)$ consists of real-valued discrete measures $\mu$ on
  $X$ such that $\mu(X)\le 1$, and for maps $f$, $\mu(f)$ takes image
  measures. As a variant of this functor, we consider the the discrete
  subdistributions functor $\SDistN$ where measures take values in
  real-closed fields. Explicitly: we intend to model Markov chains
  with non-standard probabilities; these consist of a set $X$ of
  states, and at each state $x$ an $R_x$-valued transition
  distribution $\mu_x$, where $R_x$ is a real-closed field (i.e.\ a
  model of the first-order theory of the
  reals). % \lsnote{is this equivalent to
  % having the same hyperreal field in all states for purposes of
  % satisfiability of sets of formulas?}
  These structures are coalgebras for the functor $T$ which maps a set
  $X$ to the set of pairs $(R,\mu)$ where $R$ is a real-closed field
  and $\mu$ is an $R$-valued discrete subdistribution on~$X$ (again
  meaning that $\mu(X)\le 1$). This functor is in fact class-valued,
  which however does not affect the applicability of our coalgebraic
  analysis (which never requires iterated application of the
  coalgebraic type functor, e.g.\ it does not use the terminal
  sequence).  We take the modal signature $\Lambda$ to consist of the
  operators $\langle p\rangle$ (`with probability more than $p$') for
  $p\in[0,1]\cap\Rat$.

  We show that the $\langle p\rangle$ are $\omega$-bounded and that the arising
  logic $\Lang$ is finitary one-step compact. To see the former, let
  $(R,\mu)\in TX$ and let $A\subseteq X$ such that $\mu\models \langle p\rangle A$,
  i.e.\ $\sum_{x\in A}\mu(x)>p$. Then there exists $B\finsubset A$
  such that $\sum_{x\in B}\mu(x)>p$, i.e.\ $\mu\models \langle p\rangle B$. Since
  $\langle p\rangle$ is clearly monotone, this implies that
  $\Sem{\langle p\rangle}_X(A)=\bigcup_{B\finsubset A}\Sem{\langle p\rangle}_X(B)$, as required.

  To show that $\Lang$ is finitary one-step compact, let
  $\Phi\subseteq\Prop(\Lambda(\Pfin(X)))$ be finitely satisfiable.
  Extend the standard language of real arithmetic with a constant
  symbol $c_x$ for each element of $X$, obtaining a language $L$. Then
  satisfaction of a formula in $\Prop(\Lambda(\Pfin(X)))$ by
  $\mu\in TX$ translates into a first-order formula over $L$ with
  $c_x$ representing $\mu(x)$; specifically, the translation $t$
  commutes with the Boolean connectives and translates formulas
  $\langle p\rangle A$ with $A\in\Pfin(X)$ into $\sum_{x\in A}c_x>p$. Applying $t$
  to $\Phi$ and introducing additional formulas $c_x\ge 0$ for all
  $x\in X$ and $\sum_{x\in A}c_x\le 1$ for all $A\in\Pfin(X)$ thus
  produces a finitely satisfiable, and hence satisfiable, set of
  first-order formulas over $L$. A model of this set consists of a
  real-closed field $R$ and interpretations $\hat c_x\in R$ of the
  constants $c_x$ such that putting $\mu(x)=\hat c_x$ defines a
  discrete subdistribution (note that $\sum_{x\in A}\hat c_x\le 1$ for
  all $A\in\Pfin(X)$ implies $\sum_{x\in X}\hat c_x\le 1$), which
  then yields a model $(R,\mu)$ of $\Phi$.
  % ,
  % w.l.o.g.\ maximal with this property. Let $I$ be the set of finite
  % subsets of $\Phi$, and let $\ultra$ be an ultrafilter over the
  % filter on $I$ generated by the principal upsets w.r.t.\ subset
  % inclusion. Then let $R$ be the ultrapower $\prod_\ultra \Real$ of
  % the standard reals. Now extend the standard language of real
  % arithmetic with a constant symbol $c_x$ for each element of $X$,
  % obtaining a language $L$. Then satisfaction of a formula
  % $\phi\in\Prop(\Lambda(\Pfin(X)))$ by $\mu\in TX$ translates into a
  % first-order formula over $L$, with $c_x$ representing
  % $\mu(x)$. Satisfiability of a finite subset $\Psi$ of $\Phi$ means
  % that we have a model of $\Psi$ over $R$ and hence over $\Real$, as
  % $R$ and $\Real$ are elementarily equivalent. This yields a model of
  % $L$ over $R$ which by the standard argument from the
  % ultraproduct-based proof of the compactness theorem satisfies the
  % first-order correspondent of $\Phi$, and this induces an $R$-valued
  % discrete distribution $\mu$ on $X$ satisfying $\Phi$.
\end{exa}

\begin{exa}[Zero-dimensional subdistributions] \label{ex:zerodis} Fix
  a zero-dimensional closed (hence compact) subset $Z\subseteq[0,1]$,
  e.g.\ a discrete set or the Cantor space, and let $\SDistZ$ be the
  associated \emph{zero-dimensional discrete subdistributions
    functor}, i.e.\ the subfunctor of the subdistribution functor
  $\SDist$ where probabilities of finite sets of states are restricted
  to take values in $Z$:
  \begin{equation*}
    \SDistZ(X)=\{\mu\in\SDist(X)\mid\forall A\in\Pfin(X).\,\mu(A)\in Z\}.
  \end{equation*}
  % One may wonder whether such measures exist outside
  % the finite. One example for a measure with countable support that
  % takes values in the Cantor space, represented in ternary by
  % restricting the digits to 0 and 2, is given by taking the
  % probabilities of countably many singletons to be values which have
  % pairwise disjoint sets of positions in their ternary
  % representation having the digit 2. Moreover, 
  Moreover, we restrict the probabilities~$p$ in operators $\langle p\rangle$ to be
  such that $(p,1]\cap Z$ is clopen in $Z$; since $Z$ is
  zero-dimensional, there exist enough such $p$ to separate all values
  in~$Z$. As before, all these operators are $\omega$-bounded.  It
  remains to show that the logic is finitary one-step compact. So let
  $\Phi\subseteq\Prop(\Lambda(\Pfin(X)))$ be finitely
  satisfiable. Note that the space $Z^X$, equipped with the product
  topology, is compact. We equip $\SDistZ(X)$ with the subspace
  topology in $Z^X$. % , so that
  % $\SDistZ(X)$ is zero-dimensional. 
  Observe that the condition $\forall A\in\Pfin(X).\,\mu(A)\in Z$
  already implies $\mu(X)\le 1$; since for $A\in\Pfin(X)$, the
  summation map $Z^A\to Z$ is continuous (this would fail for
  infinite~$A$), it follows that $\SDistZ(X)$ is closed in $Z^X$,
  hence compact.

  By the restriction placed on the indices $p$ in modal operators
  $\langle p\rangle$, and again using continuity of finite summation, we have that
  for every formula $\langle p\rangle A$ with $A\in\Pfin(X)$, the extension
  \begin{equation*}
    \Sem{\langle p\rangle A}=\{\mu\in\SDistZ(X)\mid \mu(A)>p\}
  \end{equation*}
  is clopen in $\SDistZ(X)$. As clopen sets are closed under Boolean
  combinations, we thus have that the extension of every formula in
  $\Prop(\Lambda(\Pfin(X)))$ is clopen in $\SDistZ(X)$.  Let $\FA$
  denote the family of clopens induced in this way by formulas in
  $\Phi$. Finite satisfiability of $\Phi$ implies that~$\FA$ has the
  finite intersection property, and hence has non-empty intersection
  by compactness of $\SDistZ(X)$. It follows that $\Phi$ is
  satisfiable.
\end{exa}
\noindent \takeout{Structures which are $\omega$-bounded without being $k$-bounded behave very differently to those within the scope of our completeness results. As an immediate corollary of the Omitting Types Theorem, we have:
 
\begin{cor} \label{cor:omega}
Assume that a set of rules $\Rules$ is \emph{$\bel$-S1SC} for a
  $\Lambda$-structure over $T$ is adequate for $\bel$. If $\hearts$ is $\omega$-bounded, then it is $k$-bounded for some $k \in \Nat$.
\end{cor}

\begin{proof}
Of course, whenever $\bel(\hearts) \neq \infty$, there is nothing to prove. Thus, assume that $\bel(\hearts) = \infty$, $\hearts$ is $\omega$-bounded, but not $k$-bounded for any $k \in \Nat$. Consider $\Sigma = \{ P \}$ and
\begin{align*}
  \Delta(x) = & \{\neg x\hearts\cmh{y}{P(y)}\} \; \cup \\
           & \{\exists y_1,...,y_k. (P(y_1) \wedge \dots \wedge P(y_k) \wedge x\hearts\cmh{y}{y=y_1 \vee ... \vee y=y_k}) \mid k \in \omega\}.
\end{align*}
Consider any $\phi(x)$ s.t. $\not\gdash \phi(x) \to x\hearts\cmh{y}{P(y)}$.

\takeout{By assumption, for every $k \in \Nat$ there is $C_k$ and $B_k$ s.t.
\begin{equation}%\label{eq:bounded}  %\\ = 
\bigcup_{B\subseteq A, \#B\le k}\Sem{\PrL}_{C_k}(B) \subsetneq \Sem{\PrL}_C(A_k)
\end{equation}
(we get monotonicity by $\omega$-boundedness, hence we can rule out the strict inequality in the opposite direction).}
\end{proof}

There is also a natural converse to this result: a kind of
incompleteness theorem.}  Structures which are $\omega$-bounded
without being $k$-bounded fail strong completeness. To state this
observation in full generality, we require the notion of
\emph{propositional atom} as defined previously
(Definition~\ref{def:atom}).

\begin{thm} \label{th:noncompact} Whenever a $\Lambda$-structure makes
  some $\PrL \in \PrLSet$ $\omega$-bounded without being $k$-bounded
  for any $k \in \omega$, strong completeness fails whenever either
  $\Sigma$ contains a predicate symbol of positive arity or $\PrLSet$
  contains a propositional atom.
\end{thm}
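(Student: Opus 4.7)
The goal is to build a $\gdash$-consistent set of sentences that has no model, thereby refuting strong completeness. Fix a unary $\heartsuit \in \Lambda$ that is $\omega$-bounded yet not $k$-bounded for any $k \in \omega$. Let $\chi(y)$ denote $P(y,\ldots,y)$ if $\Sigma$ contains a predicate symbol $P$ of arity $n \ge 1$, and $y\,p$ if $\Lambda$ contains a propositional atom $p$ (Definition \ref{def:atom}); in both cases $\chi(y)$ has $y$ as its unique free variable, and by choosing either $I(P)$ freely or $\gamma$'s projection to the $2$-factor given by Definition \ref{def:atom}, one can force $\lsem\chi\rsem^{y}_{\gM}$ to equal any desired subset of the universe. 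Set
\[
\Gamma := \{\exists x.\, x\heartsuit\lceil y : \chi(y)\rceil\} \cup \{\psi_k \mid k \ge 1\},
\]
where
\[
\psi_k = \forall x,y_1,\ldots,y_k.\Big(\bigwedge_{i\le k}\chi(y_i) \to \neg\big(x\heartsuit\lceil y : y=y_1 \lor \cdots \lor y=y_k\rceil\big)\Big).
\]

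\emph{No model.} Suppose $\gM,v\models\Gamma$. Pick $c \in C$ witnessing the existential, and let $A = \lsem\chi(y)\rsem^{y}_{\gM}$. Then $\gamma(c) \in \lsem\heartsuit\rsem_C(A)$, so $\omega$-boundedness gives $B = \{b_1,\ldots,b_k\} \finsubset A$ with $\gamma(c)\in\lsem\heartsuit\rsem_C(B)$. Assigning $v'(x)=c$ and $v'(y_i)=b_i$ makes each $\chi(y_i)$ true (since $b_i \in A$) and $x\heartsuit\lceil y : \bigvee_i y=y_i\rceil$ true (since the extension of the comprehension is exactly $B$), directly contradicting $\psi_k$.

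\emph{Finite satisfiability.} Given any finite $\Gamma_0\subseteq\Gamma$, let $k_0$ be the largest index with $\psi_{k_0}\in\Gamma_0$. Monotonicity follows from $\omega$-boundedness, so failure of $k_0$-boundedness produces $C$, $A\subseteq C$ with $A \ne \emptyset$ (the empty case is trivially $k_0$-bounded), and $t\in TC$ with $t\in\lsem\heartsuit\rsem_C(A)$ but $t\notin\lsem\heartsuit\rsem_C(B)$ for every $B\subseteq A$ with $|B|\le k_0$. Build a coalgebra $\gamma$ on $C$ with $\gamma(c_0)=t$ for some $c_0\in C$ and with $\lsem\chi\rsem^{y}=A$: in the predicate case simply stipulate $I(P)$ so that $P(a,\ldots,a)$ holds iff $a\in A$ and let $c_0$ be any element of $A$; in the atom case, writing $t=(t',b)$, choose $c_0\in A$ when $b=\top$ and $c_0\in C\setminus A$ when $b=\bot$ (if $A=C$ and $b=\bot$, enlarge $C$ by a fresh point and transport $t$ forward along the inclusion, which preserves both $t\in\lsem\heartsuit\rsem(A)$ and $t\notin\lsem\heartsuit\rsem(B)$ for $B\subseteq A$ by naturality of $\lsem\heartsuit\rsem$). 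Then $v(x):=c_0$ verifies the existential while every $\psi_k$ with $k\le k_0$ holds by choice of $t$.

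By soundness (Theorem \ref{thm:hilb-soundness}) each finite $\Gamma_0$ is $\gdash$-consistent, and since $\gdash$ is finitary, $\Gamma$ itself is $\gdash$-consistent; combined with the absence of models, this refutes strong completeness. The only delicate point is the atom-case bookkeeping, where the second component of $t$ forces whether $c_0$ lies in $A$; monotonicity plus naturality of the predicate lifting under set inclusions let us adjust $C$ and $c_0$ to accommodate either possibility.
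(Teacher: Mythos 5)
Your overall strategy is the paper's: exhibit a set that is finitely satisfiable (hence $\gdash$-consistent by soundness and finitarity of $\gdash$) using witnesses of the failure of $k$-boundedness, and unsatisfiable as a whole by $\omega$-boundedness. The paper works with a set $\Delta(x)$ carrying a single free variable $x$ — essentially $x\hearts\lceil y\col P(y)\rceil$ together with your $\psi_k$ left open in $x$, for all $k\in\omega$ \emph{including} $k=0$ — whereas you close everything under quantifiers to obtain genuine sentences, existentially for the positive formula and universally for the $\psi_k$. (Your unnegated first formula is indeed the sign needed for a model of the full set to witness failure of $\omega$-boundedness.) The predicate case of your argument is essentially fine once you (i) include $k=0$ or add $\exists y.\chi(y)$: as written, a structure with $\Sem{\hearts}_C(\emptyset)\neq\emptyset$ can satisfy your whole $\Gamma$ with $\lsem\chi\rsem=\emptyset$, all $\psi_k$ with $k\ge 1$ being vacuous; and (ii) say what $\gamma$ does off $c_0$ — since your $\psi_k$ are universally quantified in $x$, \emph{every} state's image must avoid the sets $\Sem{\hearts}_C(B)$, which is repaired by taking $\gamma$ constantly equal to $t$.

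The genuine gap is in the propositional-atom case, and it is created precisely by the universal quantification over $x$ that your move to sentences forces. With $\chi(y)=y\,p$, the extension $A'=\lsem\chi\rsem$ is not a free parameter: it equals $\{c\mid \pi_2\gamma(c)=\top\}$ and is determined pointwise by the coalgebra map. Realizing $A'=A$ therefore forces every state in $A$ to take a value in $T'C\times\{\top\}$ and every state outside $A$ a value in $T'C\times\{\bot\}$, and your universally quantified $\psi_k$ then require \emph{each} of these values — not only $\gamma(c_0)=t$ — to lie outside $\Sem{\hearts}_C(B)$ for all $B\subseteq A$ with $|B|\le k_0$. The witness $t$ supplies such a value only in the slice $T'C\times\{\pi_2(t)\}$; nothing in $\omega$-boundedness, monotonicity or naturality guarantees that the opposite slice contains one (Definition~\ref{def:atom} does not make $\Sem{\hearts}$ independent of the $2$-component, and one can define natural, $\omega$-bounded, non-$k$-bounded liftings for which all of $T'C\times\{\bot\}$ lies in every $\Sem{\hearts}_C(B)$). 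Your ``delicate point'' paragraph only adjusts the polarity of $c_0$ itself. The paper's free-variable formulation sidesteps this: only the one designated state is constrained by the modal formulas, and the remaining states merely need the correct atom polarity, which is available since $T'C\neq\emptyset$. To keep genuine sentences you would need to relativize the $\psi_k$ (e.g.\ guard $x$ by $\chi(x)$ or by $\neg\chi(x)$, with the polarity chosen by a pigeonhole argument over the witnesses for the various $k$), which is more than your proposal currently does.
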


\begin{proof}
  Assume $P \in \SoV$ is a predicate symbol of positive arity,
  w.l.o.g.\ unary, and $\PrL \in \PrLSet$ is as in the statement of
  theorem.
  Consider 
\begin{align*}
  \Delta(x) = & \{\neg x\hearts\cmh{y}{P(y)}\} \; \cup \\
           & \{\forall y_1,...,y_k. (P(y_1) \wedge \dots \wedge P(y_k) \to \neg x\hearts\cmh{y}{y=y_1 \vee ... \vee y=y_k}) \mid k \in \omega\}.
\end{align*}  
  %$\Delta(x)$ used in the proof of Corollary \ref{cor:omega}. %i.e., for any $k \in \omega$ there exists a witnessing coalgebraic model $\gM_k = (C_k,\gamma_k, I_k)$ s.t.
Clearly, every finite subset of $\Delta(x)$ is satisfiable in a model
based on a coalgebra witnessing the failure of $k$-boundedness for a
suitably large $k$. However, a coalgebraic model satisfying the whole
$\Delta(x)$ would witness the failure of $\omega$-boundedness. This means
that~$\Delta(x)$ is a counterexample to compactness, and hence no
finitary deduction system can be strongly complete. The proof for the
case where $\PrLSet$ contains a propositional atom is entirely
analogous.
\end{proof}
\begin{exa}
  The probabilistic instances of CPL given by interpreting the
  probabilistic modalities $\langle p\rangle$ over nonstandard or
  zero-dimensional subdistributions, respectively, and are
  $\omega$-bounded but fail to be $k$-bounded for any $k$. Hence they
  fail to be compact by Theorem~\ref{th:noncompact} (once equipped
  with propositional atoms) although they satisfy finitary one-step
  compactness (Examples~\ref{ex:hyperprob} and~\ref{ex:zerodis}).
\end{exa}

\section{Correspondence with Coalgebraic Modal Logic} \label{sec:cml}

\noindent We next compare the expressivity of CPL with that of
various coalgebraic modal and hybrid logics. 

\subsection{Coalgebraic Standard Translation for CML}

\noindent
%\texttt{DP: unfinished: adapt to syntax introduced above}
%\\
%\texttt{KS: Just used the introduced syntax.}
The formulas $\MoF\SoV$ of pure (coalgebraic) modal logic in the modal signature $\PrLSet$ over $\SoV$ (now all elements of $\SoV$ are assumed to be of arity 1) are given by the grammar:

%\begin{multline*}
%{\scriptsize
%\begin{equation*}
$$ \MoF\SoV \quad \phi, \psi ::=   P \mid \botF \mid \phi \impF \psi \mid 
\heartsuit (\phi_1, \dots, \phi_n),$$
%\end{equation*}
%}
where $P \in \SoV$. 

Satisfaction is defined with respect to $\gM = (C,\gamma, I)$ 
%a coalgebra $\gamma: C \to TC$, a predicate interpretation $I: \SoV \to \powset{C}$ 
and a specific point $c \in C$ in a standard way, see e.g. \cite{SchroderP10fossacs,SchroderPattinson10}.
%{\tiny
%\begin{equation*}
%\begin{array}{@{\gamma,I,c \vDash}lll}
%& \Pvr & \myiff c \in I(\Pvr) \\
%& \botF & \text{never} \\
%& \phi \impF \psi & \myiff \gamma, I, c \vDash \psi %\text{\,or\,} \gamma, I, c \nvDash \phi \\
%& 
%\PrL \phi_1\dots\phi_n & \myiff \\
%\multicolumn{3}{c}{\gamma(c) \in \denot{\PrL}_C(\{c \in C \mid \gamma, I, c \vDash \phi_1\},\dots,\{c \in C \mid \gamma, I, c \vDash \phi_n\})}
%\end{array}
%\end{equation*}
%}
%{\small
\begin{defprop}
\label{defprop:st}
Define the \emph{coalgebraic standard translation} as 
\begin{align*}
\ST{x}(P) & \deq \inF{x}{P}, \\ 
\ST{x}(\heartsuit (\phi_1, \dots, \phi_n)) & \deq x \heartsuit \lcomp  x : \ST{x}(\phi_1) \rcomp \dots \lcomp x : \ST{x}(\phi_n) \rcomp,\\ 
\ST{x}(\bot) & \deq \bot,\\ 
\ST{x}(\phi\to\psi) & \deq \ST{x}(\phi)\to\ST{x}(\psi).
%\\
\end{align*} 
%follows:
%\medskip
%
%{\scriptsize
%\begin{tabular}{>{$}r<{$}@{ \deq \; }>{$}l<{$}@{\qquad}>{$}l<{$}@{ \deq \; }>{$}l<{$}}
%\ST{x}(\Pvr) & \inF{x}{\Pvr} & %\\% ST{x}(y) & x \eqF y \\%[0.7mm] 
%\ST{x}(\botM) & \botF \\ 
%\ST{x}(\phi \impM \psi) & \multicolumn{3}{l}{$\ST{x}(\phi) \impF \ST{x}(\psi)$} \\%& 
%\ST{x}(\PrL\phi_1\dots\phi_n) & \multicolumn{3}{l}{}\\ \multicolumn{4}{c}{$x\PrL\cmh{x}{\ST{x}(\phi_1)}\dots\cmh{x}{\ST{x}(\phi_n)}$} %& \ST{x}(\gloM\phi) & \forall x.\ST{x}%%(\phi)\\begin{equation*}0.7mm]  
%%%%\ST{x}(\DAb{y}{\phi}) & \sbst{\ST{x}(\phi)}{y}{x} & \ST{x}(@_y \phi) & \sbst{\ST{x}(\phi)}{x}{y}\\begin{equation*}0.7mm] 
%%%%\ST{x}(\forall y.\phi)  & \forall y.\ST{x}(\phi) & \ST{x}(\forall P.\phi)  & \forall P.\ST{x}(\phi)
%\end{tabular}
%}
%\medskip
Then for any $\phi \in \MoF\SoV$ and any $\gM = (C, \gamma, I), \valu, c$, %C \to TC$, interpretation $I$, valuation $v$ %$I: \SoV \to \powset{C}$, $v: \FoV \to C$ 
we have $\gM, c \vDash \phi$ iff $\gM, \valu[ x \mapsto c] \vDash \ST{x}(\phi)$.
\end{defprop}
%}

For example, $\ST{x}(\PrL\PrL P) = x\PrL\cmh{x}{x\PrL\cmh{x}{P(x)}}$. This definition is more straightforward than the standard translation into FOL of modal logic over ordinary Kripke frames. Moreover, $\ST{x}$ uses only one variable from $\FoV$, namely $x$ itself. 
In fact, we can immediately observe that
\begin{prop}
\label{prop:modalsyntactic}
Whenever $\SoV$ consists entirely of \emph{unary} predicate symbols, the subset of $\phi \in \FoF(\SoV)$ obtained as the image of $\ST{x}$ for a fixed $x \in \FoV$ consists precisely of equality-free and quantifier-free formulas in the variable $x$.
\end{prop}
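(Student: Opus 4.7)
My plan is to prove the two inclusions of the ``precisely'' separately, by parallel straightforward structural inductions that simply unfold the recursive definition of $\ST{x}$; nothing deeper is required.

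For the forward inclusion (every formula in the image of $\ST{x}$ is equality-free, quantifier-free, and uses only the variable $x$), I would induct on $\phi \in \MoF(\SoV)$. The atomic case $\ST{x}(P)=P(x)$ uses only $x$ and introduces no equality or quantifier; the cases of $\bot$ and implication preserve all three properties componentwise. For the modal case, $\ST{x}(\heartsuit(\phi_1,\dots,\phi_n)) = x\,\heartsuit\,\lcomp x:\ST{x}(\phi_1)\rcomp\dots\lcomp x:\ST{x}(\phi_n)\rcomp$: by the induction hypothesis each $\ST{x}(\phi_i)$ uses only $x$, and the enclosing formula introduces $x$ both at the outer position of $\heartsuit$ and at every comprehension binder, so again only $x$ appears and no $\eqF$ or quantifier is introduced.

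For the converse, I would induct on the structure of a CPL formula $\psi \in \FoF(\SoV)$ satisfying the three constraints. By hypothesis the grammar clauses for $y_1\eqF y_2$ and $\uqF$ are ruled out, while the cases $\psi=\bot$ and $\psi\equiv\alpha\impF\beta$ are immediate from the induction hypothesis. Unarity of every $P\in\SoV$ together with the ``only $x$ appears'' constraint forces atomic formulas to have the shape $P(x)=\ST{x}(P)$. The only genuine case is the modal one, $\psi\equiv t\,\heartsuit\,\lcomp y_1:\psi_1\rcomp\dots\lcomp y_n:\psi_n\rcomp$; here the constraint forces $t=x$ and $y_i=x$ for every $i$, while each $\psi_i$ inherits the three constraints, so by the induction hypothesis $\psi_i=\ST{x}(\psi_i')$ for some $\psi_i'\in\MoF(\SoV)$, whence $\psi=\ST{x}(\heartsuit(\psi_1',\dots,\psi_n'))$.

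The whole argument is entirely routine, and there is no real obstacle; what is worth stressing is just the conceptual reason the characterization is sharp, namely that $\ST{x}$ deliberately reuses $x$ as its comprehension variable, so the syntactic criterion ``only the variable $x$ occurs'' precisely matches what the recursion can produce. The one point to get right at the outset is to read ``in the variable $x$'' as ``whose only variable is $x$'' (the natural reading in view of Definition~\ref{defprop:st}); the weaker ``$x$ is the only free variable'' reading would demand an additional $\alpha$-renaming step in the modal case, which is only always available when no $\psi_i$ contains $x$ free, so the stronger reading is the intended one.
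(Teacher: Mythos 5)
Your proof is correct and is exactly the routine double structural induction that the paper has in mind when it presents Proposition~\ref{prop:modalsyntactic} as something one can ``immediately observe'' (no explicit proof is given there). Your closing remark on reading ``in the variable $x$'' as ``the only variable occurring is $x$'' rather than ``the only free variable is $x$'' is a correct and worthwhile clarification, since the latter reading would indeed require an $\alpha$-renaming that can be blocked by free occurrences of $x$ inside a comprehension.
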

%

%\tlnt{If we keep the section on correspondence with (extended) modal propositional formalisms, we can decide to follow more closely what ten Cate and Blackburn are doing in their Studia Logica paper: discuss systematically, e.g., Henkin-style completeness for various hybrid formalisms starting from H(@). In this way, we would throw 4th conference paper into the mix: your STACS work on named models. I think it's anyway healthier to incorporate it than to refer to a conference paper with no proof details in the completeness section.}

\subsection{Hybrid Languages} \label{sec:hybrid}

\noindent In this section, we establish the equivalence of CPL with the hybrid languages
$\HyFDG$ and $\HyFUA$. Both correspondences also
hold for ordinary predicate logic over relational structures
(FOL) and extend to CPL. We take this
as yet another indication that CPL is natural and
well-designed both as a generalization of FOL and ``the'' predicate
logic
cousin of existing coalgebraic formalisms.

\noindent This is our main, but not the only motivation. We progress towards this result step-by-step, extending the modal  language gradually
with new hybrid constructs. In this way, we reveal that a similar
correspondence exists between natural fragments of CPL and weaker
hybrid languages, most importantly between quantifier-free CPL and
$\HyFDA$. %---something which has no analogue in the FOL case. 

\noindent Again, obviously the correspondence between fragments of CPL and extensions of CML is tighter than in the case of FOL and ML only due to the modal flavour of CPL. However, results such as Corollary \ref{cor:hyfda} are useful spadework: any model-theoretic tool to be developed---say, a variant of E-F games---would be adequate for an extended coalgebraic modal formalism (e.g., $\HyFDA$) \textbf{iff} it is adequate for the corresponding fragment of CPL (e.g., the variable-free fragment), so we are free to work with whichever formalism we find more convenient at a given moment. %This is closely related to our present research efforts. 
 The straightforward correspondence also provides a good starting point for an extension of research programme sketched in \cite{tencate2005:model}---see Remark \ref{rem:tencate} at the end of this section.

\noindent Given a supply of \emph{world variables} $\WoV$ that we are going to keep fixed and implicit---in fact, as stated below, \emph{near} identical to $\FoV$---we  define the following \emph{coalgebraic hybrid languages}

\medskip 

\begin{tabular}{>{$}r<{$}@{ \qquad $\phi, \psi ::=$ \; }>{$}l<{$}@{\hspace{1ex}$\mid$\hspace{1ex}}>{$}l<{$}} %\hline

\rule{0pt}{3ex}

%\HyFA &  z \mid P \mid \botF \mid \phi \impF \psi \mid 
%\heartsuit (\phi_1, \dots, \phi_n) \mid @_z\phi \\
%\HyFp{\gloM} &  z \mid P \mid \botF \mid \phi \impF \psi \mid 
%\heartsuit (\phi_1, \dots, \phi_n) \mid \gloM\phi \\
%\HyFp{\daa} &    z \mid P \mid \botF \mid \phi \impF \psi \mid 
%\heartsuit (\phi_1, \dots, \phi_n) \mid \daa z.\phi\\
\HyFDA &    z \mid P \mid \botF \mid \phi \impF \psi \mid 
\heartsuit (\phi_1, \dots, \phi_n) \mid @_z\phi & \daa z.\phi\\
\HyFDG &    z \mid P \mid \botF \mid \phi \impF \psi \mid 
\heartsuit (\phi_1, \dots, \phi_n) \mid \gloM\phi & \daa z.\phi \\
\HyFUA &    z \mid P \mid \botF \mid \phi \impF \psi \mid 
\heartsuit (\phi_1, \dots, \phi_n) \mid @_z\phi & \forall z.\phi  \\
%\rule{0pt}{3ex}
  %\hline
\end{tabular}

\medskip

\medskip\noindent
where $z \in \WoV$. We refer the reader to, e.g, \cite{SchroderPattinson10,BlackburnC06sl,tencate2005:model} %CateL07:hylo,Hodkinson2006:ndjfl,Sano2011} 
for the semantics. The extension of the standard translation to
these formalism is unproblematic in some cases, just like in the case of ordinary hybrid logic over Kripke frames:
%
%\begin{align*}
$$
\ST{x}(z)  \deq x \eqF z, \quad 
\ST{x}(\gloM\phi)  \deq \forall x.\ST{x}(\phi), \quad
\ST{x}(\forall z.\phi)  \deq \forall z.\ST{x}(\phi).
$$
%\end{align*}
%
\noindent
One is tempted to put forward also
%
%\begin{align*}
$$
\ST{x}(@_z\phi)  \deq \ST{x}(\phi)[z/x], \quad
\ST{x}(\daa z.\phi) \deq \ST{x}(\phi)[x/z].
$$
%\end{align*}
%
%where $[z/x]$ denotes, as usual, \emph{capture-avoiding substitution}  
%
However, with other clauses remaining the same, this would
violate our convention that $[z/x]$
 is used only when $z$ is \emph{substitutable} for $x$; we would need
 to interpret it as \emph{capture-avoiding} substitution. Sadly, this in turn would entail
forsaking the luxury of using just one designated variable
for comprehension. Guillame Malod
(see \cite{Cate:2005:CHL}) observed that if we restrict the supply of variables, a translation along the above lines---indeed first proposed in the literature, which also goes to show that the present discussion is less trivial than it might seem---would fail even when embedding the hybrid logic over Kripke frames in the two-variable fragment of FOL. Malod's counterexample used nesting of modalities
of level two, but as our translation uses just one designated
variable,  $\ST{}$ would go wrong already on formulas of depth one. Just
consider $\ST{x}(\daa z.\Diamond z)$: were we careless about capture of bound variables, we would obtain $x \Diamond \lcomp x : x = x \rcomp$,
which is a formula with a completely different meaning. 
 There are two ways out. First is to redefine

\begin{align}
\STwr{x}(@_z\phi) & \deq \forall x. (x = z \to \ST{x}(\phi)),\\
\STwr{x}(\daa z.\phi) & \deq \forall z. (x = z \to \ST{x}(\phi)).\label{eq:stwr}
\end{align}
%

%\medskip

The second is to keep $\ST{}$ for hybrid formulas as defined above and change the modal clause instead: 

\begin{equation} \label{eq:stcompr}
\ST{x}(\heartsuit (\phi_1, \dots, \phi_n)) \deq x \heartsuit \lcomp  y: \ST{y}(\phi_1) \rcomp \dots \lcomp y: \ST{y}(\phi_n) \rcomp,\\ 
\end{equation}

%\medskip
\noindent
where $y$ is the first (in some fixed enumeration)  variable \emph{not used} in $\ST{x}(\phi_1), \dots, \ST{x}(\phi_n)$; by \emph{not used} here we mean both free and bound usage. Furthermore, to ensure that the translation works correctly, we have to assume that \emph{neither $x$ nor $y$} appears in $\WoV$. While the requirement to use more bound variables can be cumbersome---particularly for infinite sets of formulas---we prefer this option, as it makes it easier to characterize weaker hybrid languages as suitable syntactic fragments of CPL.

We can now state a generalization of both Proposition \ref{defprop:st} and corresponding results from the hybrid logic literature---see, e.g., \cite{BlackburnC06sl} for references:

\begin{prop} \label{prop:sthybrid} For any hybrid formula $\phi$ and any
  $\gM = (C, \gamma, I), \valu, c$, we have
  $\gM, \valu, c \vDash \phi$ iff
  $\gM, \valu[ x \mapsto c] \vDash \ST{x}(\phi)$.
\end{prop}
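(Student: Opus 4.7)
The argument proceeds by structural induction on $\phi$. To make the inductive step for modal formulas go through, I would strengthen the statement to the following: for every hybrid formula $\phi$ and every variable $u$ that is neither in $\WoV$ nor bound inside $\phi$, one has $\gM, \valu, c \vDash \phi$ iff $\gM, \valu[u \mapsto c] \vDash \ST{u}(\phi)$. This strengthening is essential because the clause \eqref{eq:stcompr} recursively calls $\ST{y}$ with a \emph{fresh} variable $y$ distinct from the top-level reference variable $x$, so the induction hypothesis cannot be applied if the reference variable is frozen to a single $x$ throughout.

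The base cases are immediate. For a world variable $z \in \WoV$, since $u \notin \WoV$ we have $u \neq z$, so $\valu[u \mapsto c](z) = \valu(z)$, and $\gM, \valu, c \vDash z$ iff $\valu(z) = c$ iff $\gM, \valu[u \mapsto c] \vDash u \eqF z$; the cases for unary predicates, $\botF$, and implication are equally routine.

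For the modal case $\heartsuit(\phi_1, \dots, \phi_n)$, clause \eqref{eq:stcompr} picks a variable $y$ fresh for the translations of the $\phi_i$; by the fresh-variable convention $y \neq u$ and $y \notin \WoV$. Applying the strengthened IH to each $\phi_i$ with $y$ in the role of the reference variable, I obtain $\lsem \ST{y}(\phi_i) \rsem^y_{\gM, \valu[u \mapsto c]} = \lbrace d \in C \mid \gM, \valu, d \vDash \phi_i \rbrace$, using that neither the update at $u$ nor the one at $y$ affects the interpretation of the free world variables of $\phi_i$ (all of which lie in $\WoV$). Since $\valu[u \mapsto c](u) = c$, both sides of the desired equivalence then reduce to $\gamma(c) \in \lsem \heartsuit \rsem_C( \lsem \phi_1 \rsem_{\gM,\valu}, \dots, \lsem \phi_n \rsem_{\gM,\valu})$ via the semantic clauses for $\heartsuit$ in CML and in CPL respectively. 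The cases $@_z\phi$ and $\daa z.\phi$ follow by combining the IH with the standard FOL substitution lemma, after verifying that the fresh-variable discipline built into $\ST{}$ guarantees substitutability of $z$ for $u$, respectively of $u$ for $z$, in $\ST{u}(\phi)$. The cases $\gloM\phi$ and $\forall z.\phi$ are immediate from the IH and the semantic clauses for the global modality and universal quantification.

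\textbf{Main obstacle.} The delicate point is precisely the modal case: one must verify that swapping the reference variable from the current $u$ to a freshly chosen $y$ inside the recursive call preserves semantic content, which is only possible because $x, y \notin \WoV$ and because the induction hypothesis has been strengthened to range over an arbitrary reference variable rather than a single fixed one. A secondary but routine bookkeeping task is to maintain the substitutability side-conditions required for the $@_z$ and $\daa z$ clauses, keeping track of the variables bound by the fresh-variable choices in the intermediate translations.
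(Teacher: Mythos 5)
Your proof is correct, and it follows the route the paper itself intends: the paper states Proposition~\ref{prop:sthybrid} without proof, as a routine generalization of Proposition~\ref{defprop:st} by structural induction, and your strengthening of the induction hypothesis to an arbitrary reference variable $u\notin\WoV$ is exactly the bookkeeping needed to make the fresh-comprehension-variable clause \eqref{eq:stcompr} and the substitution-based clauses for $@_z$ and $\daa z$ go through.
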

\noindent
As is well-known in the hybrid logic community---see again
\cite{BlackburnC06sl} for references---there is also a translation in
the reverse direction for sufficiently expressive hybrid
languages. This also generalizes to our setting, see Table
\ref{tab:ht}.

\begin{table}
%\scriptsize
\hrule
\caption{\label{tab:ht}Coalgebraic Hybrid Translation from quantifier-free CPL to $\HyFDA$}

\vspace{1mm}

\begin{tabular}{>{$}r<{$}@{ \deq \; }>{$}l<{$}@{\hspace{1cm}}>{$}r<{$}@{ \deq \; }>{$}l<{$}}
\HT(\inF{x}{P}) & @_xP & \HT(x \eqF y) & @_xy\\[0.7mm] 
\HT(\botF) & \botM & \HT(\phi \impF \psi) & \HT(\phi) \impM \HT(\psi) \\[0.7mm] 
%\HT(\forall x.\phi)  & \forall x.\HT(\phi) & \multicolumn{2}{l}{} \\begin{equation*}0.7mm]
\multicolumn{4}{c}{$\HT(x\PrL\lcomp y_1: \phi_1\rcomp \dots\lcomp y_n : \phi_n \rcomp) \quad \deq \quad @_x\PrL (\daa y_1.\HT(\phi_1),\dots,\daa y_n.\HT(\phi_n))$
} 
\end{tabular}

\vspace{1mm}
\hrule
\end{table}

\begin{prop} \label{prop:htda} For any $\phi \in \FoF$ and any
  $\gM = (C, \gamma, I), \valu, c$, we have
\begin{center}
  $\gM, \valu, c \vDash \HT(\phi)$ iff
  $\gM, \valu[ x \mapsto c] \vDash \phi$.
\end{center}
\end{prop}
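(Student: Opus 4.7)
The plan is to prove the proposition by structural induction on $\phi \in \FoF$, restricted (as per the domain of $\HT$ in Table~\ref{tab:ht}) to the quantifier-free fragment. The base cases are immediate from the semantics of the hybrid satisfaction operator $@_z$: since $\gM, \valu, c \vDash @_z\psi$ iff $\gM, \valu, \valu(z) \vDash \psi$, the formulas $\HT(P(x)) = @_x P$ and $\HT(x\eqF y) = @_x y$ reduce to exactly the conditions defining CPL satisfaction of the atomic formulas under the relevant valuation. The case $\HT(\botF) = \botM$ is trivial. For the implication case, $\HT(\phi\impF\psi) = \HT(\phi)\impM\HT(\psi)$, and the biconditional follows by applying the inductive hypothesis to each side.

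The only substantive case is the modal one, where $\phi = x\PrL\lcomp y_1:\phi_1\rcomp\dots\lcomp y_n:\phi_n\rcomp$ and $\HT(\phi) = @_x\PrL(\daa y_1.\HT(\phi_1),\dots,\daa y_n.\HT(\phi_n))$. I would unfold the outer $@_x$, shifting the evaluation state to $\valu(x)$, and then apply the standard coalgebraic-modal semantics to reduce the question to membership $\gamma(\valu(x)) \in \lsem\PrL\rsem_C(E_1,\dots,E_n)$, where $E_i$ is the extension of $\daa y_i.\HT(\phi_i)$ under $\valu$. Because of the semantic clause for the binder $\daa$, one has
\[
E_i = \{d \in C \mid \gM, \valu[y_i\mapsto d], d \vDash \HT(\phi_i)\}.
\]
Invoking the inductive hypothesis pointwise then rewrites this as $\{d \mid \gM, \valu[y_i\mapsto d] \vDash \phi_i\}$, which by definition~\refeq{eq:deny} is precisely $\lsem \phi_i\rsem^{y_i}_{\gM}$. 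These are the very sets appearing in the CPL clause for the modal construct, so the two sides coincide.

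The main obstacle, such as it is, lies in carefully managing the two hybrid constructs $@_x$ and $\daa y_i$ in the modal case: the former relocates the ``current state'' to the one denoted by $x$, while the latter must range over potential successor states and rebind $y_i$ accordingly, so that together they mimic the coalgebraic interpretation via predicate liftings. One should also verify that variable conventions (the tacit identification of $\WoV$ with $\FoV$, and the absence of clashes with comprehension binders) do not disturb the translation; in particular, since the $\HT$ clause for the modal construct rebinds each $y_i$ fresh under $\daa$, no capture issues arise. Everything else reduces to routine unfolding of definitions, and the induction goes through uniformly for any choice of $\Lambda$-structure.
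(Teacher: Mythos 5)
The paper states Proposition~\ref{prop:htda} without proof, and your structural induction over the quantifier-free fragment — unfolding $@_x$ and then $\daa y_i$ in the modal case so that the extension of $\daa y_i.\HT(\phi_i)$ collapses to $\lsem\phi_i\rsem^{y_i}_{\gM}$ — is exactly the routine argument the authors intend, so the proposal is correct and matches the (omitted) proof. The only step worth making fully explicit is that the inductive hypothesis literally yields $\gM,\valu[y_i\mapsto d][x\mapsto d]\vDash\phi_i$ with $x$ the designated variable of the statement, so one needs the paper's convention that this $x$ lies outside $\WoV$ (hence is not free in $\phi_i$ unless it coincides with $y_i$, in which case the extra update is idempotent) — which is precisely the variable-convention check you flag at the end.
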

\noindent
Combining Propositions \ref{prop:htda} and \ref{prop:sthybrid}, we
get:
\begin{cor} \label{cor:hyfda} Whenever $\SoV$ consists purely of unary
  predicates \bro and no function symbols\brc, $\HyFDA$ is expressively
  equivalent to the quantifier-free fragment of $\FoF$, assuming
  $\FoV$ contains $\WoV$ plus a disjoint infinite supply of additional
  individual variables \bro used for comprehension\brc.
\end{cor}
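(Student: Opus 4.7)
The plan is to obtain the equivalence by combining Propositions \ref{prop:sthybrid} and \ref{prop:htda} already in place, with additional bookkeeping to verify that the respective translations land in the appropriate target fragments. Truth preservation comes for free from these propositions, so essentially all that remains is a syntactic inspection of the two translations.

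First, I would check by induction on the hybrid formula $\phi$ that $\ST{x}(\phi)$, where $\ST{}$ is taken in the version whose modal clause \refeq{eq:stcompr} uses a fresh comprehension variable $y$, produces a quantifier-free CPL formula whenever $\phi \in \HyFDA$. The atomic clauses yield $x \eqF z$ and $\inF{x}{P}$; the Boolean clauses are immediate; the modal clause introduces only a $\PrL$-binder with a comprehension variable and no $\forall$; and the $@_z$ and $\daa z$ clauses produce ordinary (capture-avoiding) substitutions $\ST{x}(\phi)[z/x]$ and $\ST{x}(\phi)[x/z]$ respectively, which are well-defined precisely because the fresh comprehension variables supplied in \refeq{eq:stcompr} guarantee substitutability. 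This is where the hypothesis that $\FoV$ contains $\WoV$ together with a disjoint infinite reservoir of comprehension variables is used: it ensures fresh $y$ is always available in the modal clause and never collides with a world variable.

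Second, I would check by induction on quantifier-free $\phi \in \FoF$ that $\HT(\phi)$, defined by Table~\ref{tab:ht}, always lies in $\HyFDA$. Every clause in that table produces only constructs permitted in $\HyFDA$: the nominal atoms $@_xy$, the satisfaction operator $@_x$, the downarrow binder $\daa$, modal operators, and the Booleans; crucially, there is no clause for $\forall$, which is harmless since we restrict to the quantifier-free fragment. Unary predicates are handled by $@_x P$, which depends on $\SoV$ consisting of unary symbols (and no function symbols, which are absent from our language anyway).

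Combining these two syntactic observations with the truth-preservation statements of Propositions~\ref{prop:sthybrid} and~\ref{prop:htda} yields the two-way expressive equivalence. There is no genuine obstacle here beyond careful bookkeeping with fresh variables; the real work has already been absorbed by the preceding discussion of capture and by the design of $\ST{}$ and $\HT$. The only subtle point that deserves explicit mention is that \emph{both} directions of the translation preserve quantifier-free/$\HyFDA$-membership simultaneously, so that no roundtrip through quantifiers is required.
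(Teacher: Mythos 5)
Your proof is correct and follows essentially the same route as the paper, which obtains the corollary directly by combining Propositions \ref{prop:sthybrid} and \ref{prop:htda}. The extra syntactic bookkeeping you supply (checking that $\ST{x}$ with the modal clause \refeq{eq:stcompr} stays quantifier-free, and that $\HT$ lands in $\HyFDA$) is precisely the implicit content of the paper's one-line argument.
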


\begin{rem}[Quantifier-free CPL as the bounded fragment of FOL]
  In the case of ordinary FOL, the fragment equivalent to $\HyFDA$ is
  characterized as the \emph{bounded fragment}, see, e.g.,
  \cite{ArecesTenCate07}. In fact, our formula
  $x \hearts \lcomp y: \phi \rcomp$, despite being quantifier-free on
  the surface, can be described as a form of bounded
  quantification. This can be formalized as a result stating that over
  coalgebras for the covariant powerset functor (Kripke frames),
  quantifier-free $\FoF$ is equivalent to the bounded-fragment of
  ordinary FOL, where the role of $\hearts$ in $\FoF$ is played by the
  binary relation symbol $R$ in FOL; details are left to the reader.
\end{rem}

\begin{rem}[Chang's original syntax]
  As already mentioned, our syntax is slightly different to the
  original one proposed by Chang \cite{Chang73}. In that paper,
  there were no explicit comprehension variables and even in the
  enriched syntax which allowed constants and function terms, the term
  on the left-hand side of $\hearts$ had to be a variable. This
  variable was reused then on the right side of $\hearts$ as the
  comprehension variable. In other words, Chang's $x \hearts \phi(x)$
  was equivalent to ours $x \hearts \lcomp x: \phi(x) \rcomp$. In
  presence of quantifiers, which can be used to simulate the effect of
  capture-avoiding substitution as in $\STwr{}$ (this trick in fact
  stems back to Alfred Tarski), the two languages are obviously
  equivalent. But when considering fragments, as we do here, the
  equivalence breaks down; without quantifiers, Chang's syntax does
  not allow (\ref{eq:stwr}) and simple renaming of the comprehension
  variable on the right-hand side of $\hearts$ as in
  (\ref{eq:stcompr}) is not possible either.
\end{rem}
\noindent
There are two usual routes in hybrid logic to achieve full first-order
expressivity. One is to add universal quantifiers over $\WoV$ in
presence of the satisfaction operator $@$. The other is to add the
global modality $\gloM$ in presence of the downarrow binder
$\daa$\,. The hybrid translation is extended then as follows:
\begin{align*}
\HT_{\forall@}(\forall x.\phi) & \deq \forall x.\HT(\phi) \\
\HT_{\gloM\daa}(\forall x.\phi) & \deq \daa y.\gloM\daa x.\gloM(y \to \phi)
\end{align*}
In $\HT_{\gloM\daa}$ we need the proviso that $y$ is not occurring in the whole formula. 

\begin{thm} \label{thm:expequivalence}
$\HyFDG$, $\HyFUA$ and $\FoF$ are expressively equivalent.
\end{thm}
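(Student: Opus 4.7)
The plan is to establish the theorem via a hub-and-spoke strategy, with $\FoF$ as the hub. Specifically, I would show four translations: $\HyFDG \to \FoF$ and $\HyFUA \to \FoF$ via the standard translation $\ST{x}$, together with $\FoF \to \HyFUA$ via $\HT_{\forall@}$ and $\FoF \to \HyFDG$ via $\HT_{\gloM\daa}$. Pairwise equivalence then follows by composition. The two directions into $\FoF$ are actually immediate corollaries of Proposition~\ref{prop:sthybrid}, since $\ST{x}$ was already defined on the clauses $\gloM\phi$ and $\forall z.\phi$ that distinguish $\HyFDG$ and $\HyFUA$ from $\HyFDA$. So the real work is to verify correctness of the two extended hybrid translations announced just before the statement.

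For $\HT_{\forall@}$, I would induct on $\phi \in \FoF$, extending Proposition~\ref{prop:htda}. The base cases, Boolean clauses and the modal clause are treated just as in the proof of that proposition (of which $\HT_{\forall@}$ agrees with $\HT$ on quantifier-free formulas). For the new clause, unfolding the $\HyFUA$-semantics of $\forall x$ and applying the induction hypothesis gives $\gM, \valu, c \vDash \HT_{\forall@}(\forall x.\phi)$ iff for every $c' \in C$, $\gM, \valu[x\mapsto c'], c \vDash \HT_{\forall@}(\phi)$, which by IH reduces to $\gM, \valu[x\mapsto c'] \vDash \phi$ for every $c'$, i.e.\ $\gM, \valu \vDash \forall x.\phi$.

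For $\HT_{\gloM\daa}$, the induction is the same except for the $\forall$ clause, where one must verify that the gadget $\daa y.\gloM\daa x.\gloM(y \to \phi)$ faithfully implements universal quantification. Intuitively, $\daa y$ records the current world as the nominal $y$; the outer $\gloM\daa x$ universally ranges over all worlds, binding each in turn to $x$; and the inner $\gloM(y \to \phi)$ then \emph{returns} to the original world (selected by the clause $y$ acting as a world-identity test) to evaluate $\phi$ there, but now with $x$ bound to an arbitrary element of $C$. The freshness of $y$ with respect to the whole formula ensures this bookkeeping does not interfere with other bindings occurring in $\phi$, and the application of the induction hypothesis then yields precisely the semantics of $\forall x.\phi$.

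The main obstacle is the bookkeeping in the $\HT_{\gloM\daa}$ case: the hybrid semantics involves a distinguished ``current world'' whereas $\FoF$'s semantics is purely environment-based, so one must set up the inductive invariant so that the current world always coincides with the intended value of the distinguished variable. In particular, one needs to check the interaction with modal operators translated via the comprehension gadget of equation~\eqref{eq:stcompr}, to verify that the world shifts induced by inner $\gloM$'s and $\daa$'s do not disturb the meaning of the formerly ``current'' variable across the simulated universal quantification. Once this invariant is phrased correctly, the remaining cases of the induction are routine.
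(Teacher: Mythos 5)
Your proposal is correct and follows essentially the same route as the paper, which leaves the proof implicit: expressive equivalence is obtained by composing the extended standard translation into $\FoF$ (Proposition~\ref{prop:sthybrid}, using $\STwr{x}$ for $@$ and $\daa$) with the reverse translations $\HT_{\forall@}$ and $\HT_{\gloM\daa}$ extending Proposition~\ref{prop:htda}, and your verification of the gadget $\daa y.\gloM\daa x.\gloM(y \to \phi)$ is exactly the one intended. The only slip is cosmetic: in the $\HT_{\gloM\daa}$ direction the modal clause to track is the one from Table~\ref{tab:ht} (with $@_x$ rendered as $\gloM(x\to\cdot)$), not equation~\eqref{eq:stcompr}, which belongs to the opposite translation.
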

\noindent
As we can use $\STwr{x}$ now and keep reusing $x$ as the comprehension variable, it is enough to assume that $\FoV = \WoV \cup \{x\}$.
Since $@_z\phi$ is definable in presence of $\gloM$ (as $\gloM(z \to \phi)$), $\daa$ is definable by the universal quantifier over $\WoV$ (as $\forall z.(z \to \phi)$) and $\gloM$ is definable by combination of $\forall$ and $@$ (as $\forall y.@_y\phi$, where $y$ is not used in $\phi$), we get in fact seven equivalent languages: $\FoF$, Chang's original language,  $\HyFDG$, $\HyFUA$, $\HyFDG$ with $@$, $\HyFUA$ with $\daa$ and the 
jumbo hybrid language with all connectives introduced above.

\begin{rem} \label{rem:tencate}
  The equivalences stated here extend to the case of hybrid languages and $\FoF$ enriched with quantification over predicates (i.e., second-order languages). It would be interesting to follow more thoroughly the program of \emph{coalgebraic abstract model theory} both above and below $\FoF$. See Ten Cate's PhD Thesis \cite{tencate2005:model} for spadework in abstract model theory below first-order logic.
\end{rem}

\medskip

%where ``$y \ntoc \dots$'' abbreviates ``$y$ is fresh for $\dots$''.

%As the translation and proof are straightforward adaptations of those given in \cite{BlackburnC06sl}, we don't give the details. 
%; this will be discussed in detail in the full version. %\todo{more?}

% \subsection{Aside on quantified modal logic}

% \tlnt{@Lutz: kick it out and replace by a remark. This subsection added because referees were confused about whether CPL is really QML. For the time being incorporated verbatim from a note I wrote a while ago. Will polish it later. Anyway, should it stay?}

% \input{qml_lmcs}

%\subsection{Syntactic Characterization} \label{sec:syntactic}

\subsection{Semantic Correspondence: The Van Benthem-Rosen Theorem} \label{sec:vbr}

%\tlnt{Do we move even this remaining one page to the JAL paper as well?}

Our Proposition \ref{prop:modalsyntactic} provides a \emph{syntactic} characterization of the modal fragment of our language. In a companion paper \cite{SchroderPL15:jlc}, we develop a \emph{semantic}, Van Benthem-Rosen style characterization. To compare these two characterizations, let us briefly recall the details. %analogous to the one provided in \cite{SchroderP10fossacs}. %As it turns out, we can in fact use the results of that paper to provide a Van Benthem-Rosen theorem for our language.

In the context of standard Kripke models, expressiveness of modal logic is characterized by van Benthem's theorem: modal logic is the bisimulation invariant fragment of
first-order logic %. More precisely, every first order formula over
in the corresponding signature. 
%for the signature containing one binary relation and unary predicate
%symbols. %that is invariant under bisimulation is semantically
%equivalent to the standard translation of a modal formula. 
 The finitary analogue of this theorem \cite{Rosen97} states that every
formula that is bisimulation invariant \emph{over finite models} is
equivalent \emph{over finite models} to a modal formula. In the
coalgebraic context, replace bisimilarity with behavioural equivalence
\cite{Staton11:lmcs}. Moreover, we need to assume that the language
has `enough' expressive power; e.g., we cannot expect that
bisimulation invariant formulas are equivalent to CML formulas over
the empty similarity type. %Under this
%eminently natural assumption, we can give a semantic
%characterisation of coalgebraic modal logic as a fragment of first
%order logic. We begin by making 
%the notion of `enough' modalities
%precise, relating the modal similarity type with the $\Lambda$-structure (\S~\ref{sec:ss}).
This is made precise as follows:

\begin{defi}
A $\Lambda$-structure is \emph{separating} if, for every set
$X$, every element $t \in TX$ is uniquely determined by the set
$\lbrace (\hearts, A) \mid \hearts \in \Lambda \mbox{ $n$-ary}, A
\in \Pow(X)^n, t \in \lsem \hearts \rsem_X(A) \rbrace$.
\end{defi}

\noindent
Separation is in general a less restrictive condition than those we needed for completeness proofs. In particular, separation automatically obtains for Kripke semantics. %Of all examples introduced in \S~\ref{sec:ss}, the only one which fails it is coalition logic. 
It was first used to establish the Hennessy-Milner property
for coalgebraic logics \cite{Pattinson04,Schroder08}. 
%and it is easy to see that all our examples are indeed separating.

%(\S~\ref{sec:ss}).
%Under this assumption, the van Benthem-Rosen property can now be
%established using a previous result \cite[Theorem 24]{SchroderP10fossacs} that
%relies on a somewhat less natural three-sorted language. %As in the
%case of Kripke semantics, we provide two results, reading
%`equivalence' either as equivalence over finite models (the Rosen
%property) or as equivalence over the class of all models (the van
%Benthem property). 

%\tlnt{After we're done with the JAL paper, the part below needs to be
 % suitably adjusted}

\begin{thm}[\cite{SchroderPL15:jlc}] \label{th:vanbenthem}
Suppose that the structure is separating and $\phi(x)$ is a
CPL formula with one free variable. Then $\phi$ is invariant
under behavioural equivalence \bro over finite models\brc\ iff it is equivalent to an infinitary CML formula with finite modal rank \bro over finite models\brc.
%\ednote{DP: infinitary formulas $\to$ introduction?}
\end{thm}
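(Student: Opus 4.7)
The proof splits into the usual two directions, with the forward direction being substantially harder, especially in the finite-model case.

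For the easy direction ($\Leftarrow$), I would proceed by straightforward induction on the structure of infinitary CML formulas, showing that if $c$ and $c'$ are behaviourally equivalent states (witnessed by a cospan of coalgebra morphisms), then $c \vDash \phi$ iff $c' \vDash \phi$ for every infinitary CML $\phi$. The base case for propositional atoms uses the definition of behavioural equivalence on the interpretation $I$, while the modal case is the standard calculation: naturality of the predicate liftings $\Sem{\heartsuit}$ means that coalgebra morphisms $f: C \to D$ satisfy $\gamma(c) \in \Sem{\heartsuit}_C(\Sem{\vec\phi}^{\vec y}) \iff \delta(f(c)) \in \Sem{\heartsuit}_D(f[\Sem{\vec\phi}^{\vec y}])$ whenever the $\Sem{\phi_i}$ are $f$-saturated, which follows inductively. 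Modal rank being finite is preserved trivially along the induction. The infinitary conjunctions/disjunctions pose no additional problem.

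For the hard direction ($\Rightarrow$) over \emph{all} models, the plan is to exploit \emph{separation}: for each $n \in \Nat$, define behavioural equivalence truncated at depth $n$ (``$n$-equivalence''), and show that the $n$-equivalence type of a pointed model $(\gM, c)$ is characterized by a single infinitary CML formula $\chi_{\gM,c}^{n}$ of modal rank at most $n$. The key lemma here is that separation lets us distinguish distinct elements of $TC$ by the collection of modal formulas they satisfy, so the inductive step from rank $n$ to rank $n+1$ really does capture one additional layer of behavioural structure. Then, given a CPL formula $\phi(x)$ invariant under behavioural equivalence, one shows by a compactness-style saturation argument (using $\omega$-saturated elementary extensions in the sense of classical model theory applied to the standard translation into FOL) that $\phi$ is already preserved under $n$-equivalence for some $n$. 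From there, $\phi$ is equivalent to the disjunction $\bigvee \{\chi_{\gM,c}^{n} \mid \gM, c \vDash \phi\}$, which has finite rank $n$.

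For the case of \emph{finite} models, the saturation technique is unavailable, and the main obstacle is replacing it with a locality argument in the spirit of Rosen and Otto. I would proceed by first establishing (or importing from the companion paper) a Gaifman-type locality theorem for CPL: every CPL formula $\phi(x)$ is equivalent, on all models, to a Boolean combination of sentences and formulas whose truth at $x$ depends only on the Gaifman neighbourhood of $x$ of some fixed radius $r$ determined by the quantifier/modal depth of $\phi$. The hard part will be interfacing CPL's comprehension-style modalities with the standard Gaifman machinery: one must define a suitable notion of ``Gaifman graph'' of a coalgebraic model, which in the coalgebraic setting is nontrivial when the functor $T$ is not finitary or even when it does not naturally yield a binary accessibility relation. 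Once locality is in place, $n$-equivalence on the $r$-neighbourhood suffices for invariance, and one again constructs the defining CML formula from the bounded-depth behavioural type of the neighbourhood; invariance under behavioural equivalence then lets one eliminate the dependence on finite models and express $\phi$ uniformly as a rank-$n$ infinitary CML formula.
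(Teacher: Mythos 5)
First, a caveat: the paper does not actually prove Theorem~\ref{th:vanbenthem} itself --- it is imported wholesale from the companion paper \cite{SchroderPL15:jlc} --- so there is no in-text proof to compare against step by step. Judged on its own terms, your easy direction and your finite-model branch are on the right track: the companion paper does indeed go via a Gaifman/locality argument in the style of Otto's proof of Rosen's theorem, together with rank-$n$ behavioural equivalence and characteristic infinitary formulas whose adequacy rests exactly on separation (the Hennessy--Milner-type ingredient you identify).

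The genuine gap is in your unrestricted-models branch. You propose ``a compactness-style saturation argument (using $\omega$-saturated elementary extensions \dots applied to the standard translation into FOL).'' This fails under the stated hypotheses for two reasons. First, there is no standard translation of general CPL into classical FOL: the translation $\STcpl$ of \S~\ref{sec:ss} is specific to the relational instance, and for, say, neighbourhood or probabilistic structures no such reduction is available, so classical $\omega$-saturated elementary extensions of ``the FOL translation'' are not even defined. Second, and more fundamentally, the saturation/ultrapower machinery is simply not available for arbitrary separating $\Lambda$-structures: this very paper proves (Theorem~\ref{th:noncompact}) that CPL over $\omega$-bounded but not $k$-bounded structures fails compactness, and quasi-ultraproducts are only shown to exist under one-step compactness or boundedness assumptions (Theorems~\ref{thm:ultra-s1sc} and~\ref{thm:ultra}) --- none of which follow from separation. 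So the classical van Benthem route collapses, and the unrestricted case must also be handled by the locality argument; this is precisely why the companion paper treats both cases uniformly à la Rosen/Otto rather than giving a separate saturation proof for arbitrary models. Your sketch should therefore drop the saturation branch and run the locality argument (with a coalgebraic notion of Gaifman neighbourhood, which you correctly flag as the delicate point) for both the finite and the unrestricted case.
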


%\end{proof}

\noindent
If we deal with finite similarity types only, the conclusion can be
strengthened:

\begin{thm}[\cite{SchroderPL15:jlc}] \label{th:vbsep}
Suppose that the structure is separating, $\Lambda$ is finite  and $\phi(x)$ is a
CPL formula with one free variable. Then
$\phi$ is invariant under behavioural equivalence \bro over finite
models\brc\ iff $\phi$ is equivalent to a \textbf{finite} CML formula \bro over finite
models\brc. 
\end{thm}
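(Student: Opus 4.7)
\medskip

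\noindent\textbf{Proof plan.} The starting point is Theorem~\ref{th:vanbenthem}: under the separation assumption, $\phi(x)$ is equivalent (over all models, respectively over finite models) to some \emph{infinitary} CML formula $\Phi$ of finite modal rank, say of rank $\le k$. The task is thus to replace $\Phi$ with a \emph{finite} CML formula. Before doing so, I would observe that $\phi(x)$ mentions only finitely many symbols from $\SoV$, so without loss of generality we may assume that the relevant fragment of $\SoV$ is finite (in the setup of the translation these are unary predicate symbols playing the role of propositional atoms in CML, as in Proposition~\ref{prop:modalsyntactic}).

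The crucial lemma to establish is the following: if $\PrLSet$ and $\SoV$ are both finite, then for every $k\in\Nat$ there are, up to logical equivalence on the class of models under consideration, only finitely many CML formulas of modal rank $\le k$. This is proved by a routine induction on $k$. At rank $0$ we have Boolean combinations of the finitely many atoms in $\SoV$, which form a finite Boolean algebra modulo tautological equivalence. For the induction step, a formula of rank $\le k+1$ is a Boolean combination of formulas of the shape $\heartsuit(\psi_1,\dots,\psi_n)$ with $\heartsuit \in \Lambda$ (finitely many, of bounded arity) and each $\psi_i$ of rank $\le k$; by the inductive hypothesis, each $\psi_i$ ranges over a finite set up to equivalence, so the set of modalised formulas of rank $\le k+1$ is finite up to equivalence, and hence so is the Boolean closure.

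With this lemma in hand, finishing is straightforward. Choose a finite set $F$ of representatives for the equivalence classes of CML formulas of rank $\le k$. Since $\Phi$ is, by construction, an infinitary Boolean combination of such formulas, and each summand is equivalent to a unique member of $F$, the semantic content of $\Phi$ is determined by a subset of $F$ (e.g.\ the set of $\psi\in F$ such that $\Phi\to\psi$ holds, together with the set $\{\psi\in F\mid \psi\to\Phi\}$). Expressing membership in these subsets as a finite disjunction of finite conjunctions over $F$ yields a finite CML formula $\Phi'$ equivalent to $\Phi$, and hence to $\phi(x)$. The relativised versions (over finite models only) work identically, as the inductive construction of $F$ and the reconstruction of $\Phi'$ can be performed against whichever class of models Theorem~\ref{th:vanbenthem} was applied to.

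The main technical point to be careful about is the precise notion of rank and equivalence in infinitary CML: one must verify that the rank bound transferred from Theorem~\ref{th:vanbenthem} is preserved by the representation as a Boolean combination, and that ``equivalence up to the class $\mathcal{K}$'' used to collapse $F$ to a finite set is compatible with the equivalence under which Theorem~\ref{th:vanbenthem} delivers $\Phi$. Neither is problematic, but both should be spelled out for the finite-model variant, where one must confirm that finite-model equivalence classes of rank-$\le k$ formulas remain finite under the finiteness assumptions on $\PrLSet$ and $\SoV$. Beyond this bookkeeping, the argument is essentially a standard Rosen-style finitisation.
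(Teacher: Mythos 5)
The paper does not actually prove this statement: both Theorem~\ref{th:vbsep} and Theorem~\ref{th:vanbenthem} are imported from the companion paper \cite{SchroderPL15:jlc}, so there is no in-paper argument to compare yours against. Judged on its own, your derivation of the strengthened conclusion from Theorem~\ref{th:vanbenthem} is the standard finitisation argument and is essentially sound: reduce to the finitely many predicate symbols occurring in $\phi$, show that over a finite similarity type and finitely many atoms there are only finitely many rank-$\le k$ formulas up to equivalence, and collapse the infinitary formula accordingly.

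Two points should be tightened. First, your key lemma must be stated for \emph{infinitary} formulas, since the Boolean constituents of $\Phi$ are themselves infinitary of rank $\le k$; the right formulation is that every infinitary CML formula of rank $\le k$ is equivalent (over the relevant class) to one of finitely many \emph{finite} formulas, proved by induction on $k$. The induction step needs that $\heartsuit(\psi_1,\dots,\psi_n)$ is equivalent to $\heartsuit(\psi_1',\dots,\psi_n')$ when $\psi_i\equiv\psi_i'$; this is automatic here because the modalities are interpreted by predicate liftings applied to the extensions $\lsem\psi_i\rsem$, so semantic congruence holds without any axiomatic assumption, but it is worth saying. Second, for the finite-model variant all equivalences must be read as equivalence over finite models; this only coarsens the equivalence relation, so the finiteness of the set of representatives is preserved, and the reduction to a finite predicate signature is harmless because the truth of $\phi$ depends only on the reduct. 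With these adjustments the argument goes through.
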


In fact, we can combine Theorem \ref{th:vbsep} with the syntactic characterization of Proposition \ref{prop:modalsyntactic} to obtain

\begin{cor}
Whenever $\SoV$ consists entirely of \emph{unary} predicate symbols %(and there are no function symbols)
 and the structure is separating,  the behaviourally-invariant \bro over finite structures\brc\ formulas of $\FoF$ in one-free variable are up to equivalence \bro over finite structures\brc\ precisely the equality-free and quantifier-free formulas in the single-variable fragment of $\FoF$.
\end{cor}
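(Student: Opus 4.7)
The plan is to combine the two ingredients cleanly: Theorem \ref{th:vbsep} characterises the behaviourally-invariant (over finite structures) formulas of CPL as exactly those equivalent over finite models to a finite CML formula, and Proposition \ref{prop:modalsyntactic} pins down the syntactic shape of the standard translation $\ST{x}(\cdot)$ under the unarity assumption on $\SoV$. The corollary is essentially the conjunction of these two statements, mediated by the standard translation.

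First, for the ``invariant $\Longrightarrow$ syntactic shape'' direction, suppose $\phi(x) \in \FoF$ is invariant under behavioural equivalence over finite models. By Theorem \ref{th:vbsep} (applicable since the structure is separating and, once $\Lambda$ is fixed in practice, finite), there exists a finite CML formula $\psi \in \MoF\SoV$ such that $\phi(x)$ and $\ST{x}(\psi)$ are equivalent over finite models (using Proposition \ref{defprop:st} to transport the CML formula $\psi$ back into $\FoF$). By Proposition \ref{prop:modalsyntactic}, since $\SoV$ consists entirely of unary predicate symbols, $\ST{x}(\psi)$ is an equality-free, quantifier-free formula in the single variable $x$. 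This supplies the required equivalent in the claimed fragment.

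Conversely, for the ``syntactic shape $\Longrightarrow$ invariant'' direction, suppose $\phi(x) \in \FoF$ is equality-free and quantifier-free with $x$ as its only variable. By Proposition \ref{prop:modalsyntactic} (using the fact that the image of $\ST{x}$ is \emph{precisely} this fragment), there exists $\psi \in \MoF\SoV$ with $\phi(x) = \ST{x}(\psi)$. By Proposition \ref{defprop:st}, for every pointed coalgebraic model $(\gM, c)$ and every valuation $\valu$ sending $x$ to $c$, we have $\gM, \valu \vDash \phi(x)$ iff $\gM, c \vDash \psi$. Since pure CML formulas are invariant under behavioural equivalence (which is a standard and self-contained observation about the semantics of predicate liftings on coalgebras), so is $\phi(x)$, both over arbitrary and over finite structures.

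There is really no hard step here: the proof is a one-paragraph dovetailing of the syntactic characterisation of the image of $\ST{x}$ (Proposition \ref{prop:modalsyntactic}), the semantic correctness of $\ST{x}$ (Proposition \ref{defprop:st}), and the finitary van Benthem--Rosen theorem for CPL (Theorem \ref{th:vbsep}). The only place where one has to be a touch careful is in invoking Theorem \ref{th:vbsep}: it guarantees equivalence to a \textbf{finite} CML formula under the hypotheses of separation and finiteness of $\Lambda$, and it is exactly this finite CML formula whose standard translation lies in the claimed equality-free, quantifier-free, single-variable fragment. The unarity assumption on $\SoV$ is used only to apply Proposition \ref{prop:modalsyntactic}; it plays no role in the semantic invariance direction beyond ensuring that the syntactic fragment in question really is the image of $\ST{x}$.
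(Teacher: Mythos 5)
Your proof is correct and follows exactly the route the paper intends: the corollary is stated there as an immediate combination of Theorem~\ref{th:vbsep} with Propositions~\ref{prop:modalsyntactic} and~\ref{defprop:st}, which is precisely the dovetailing you carry out in both directions. Your remark that Theorem~\ref{th:vbsep} additionally needs $\Lambda$ finite (a hypothesis the corollary's statement leaves implicit) is a fair and accurate observation rather than a gap in your argument.
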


%\input{correspondencesection}

%%%%%%%%%%%%%%%%%%%%%%%%%%%%%%%%%%%%%%%%%%%%%%%%%%%%%%%%%%%%%%%%%%%%%%%%%%%%%%%%%%5$%%%%%%%%%%%%%%%%%%%%%%%%%%%%%%%%%%%%%%%%%%%%%%55
%%%%%%%%%%%%%%%%%%%%%%%%%%%%%%%%%%%%%%%%%%%%%%%%%%%%%%%%%%%%%%%%%%%%%%%%%%%%%%%%%%%%%%%%%%%%%%%%%%%%%%%%%%%%%%%%%%%%%%%%%%%%%%%%5
%%%%%%%%%%%%%%%%%%%%%%%%%%%%%%%%%%%%%%%%%%%%%%%%%%%%%%%%%%%%%%%%%%%%%%%%%%%%%%%%%%%%%%%%%%%%%%%%%%%%%%%%%%%%%%%%%%%%%%%%%%%%%%%%%%%%%
\section{First Steps in Coalgebraic Model Theory} \label{sec:modeltheory}

We proceed to outline the beginning of coalgebraic model theory,
taking a look at ultraproducts and the downwards L\"owenheim-Skolem
theorem. In the course of the technical development, we will import a
result on one-step cutfree complete rule sets established in earlier
work~\cite{Schroder06,PattinsonSchroder10}.

Recall that if $\ultra$ is an ultrafilter on an index set~$I$ and
$(X_i)$ is an $I$-indexed family of nonempty sets, then the \emph{ultraproduct}
$\prod_\ultra X_i$ is defined as
\begin{equation*}\textstyle
  \prod_\ultra X_i=\big(\prod_{i\in I} X_i\big)/\sim
\end{equation*}
where $\sim$ is the equivalence relation on $\prod_{i\in I}X_i$ defined by
\begin{equation*}
  (x_i)\sim(y_i)\iff \{i\in I\mid x_i=y_i\}\in\ultra.
\end{equation*}
One may regard $\ultra$ as a $\{0,1\}$-valued measure on $I$; under
this reading, the above definition says that $(x_i)$ and $(y_i)$ are
identified under $\sim$ if they are almost everywhere equal. We write
elements of $\prod_\ultra X_i$ and $\prod_{i\in I} X_i$ just as $x$,
omitting notation for equivalence classes and accessing the $i$-th
component as $x_i$.

Observe that if $X=\prod_\ultra X_i$ is an ultraproduct of sets and
$(A_i)$ is a family of subsets $A_i\subseteq X_i$, then
\begin{equation}\label{eq:ultraprod-pred}
  A=\textstyle\prod_\ultra A_i:=\{x\mid \{i\mid x_i\in A_i\}\in\ultra\}
\end{equation}
is a well-defined subset of $X$ (this is in fact just the way unary
predicates are standardly extended from the components to the
ultraproduct). Subsets of ultraproducts that are of this form are
called \emph{admissible}.
\begin{lem}\label{lem:fin-admissible}
 All finite subsets of ultraproducts are admissible. 
\end{lem}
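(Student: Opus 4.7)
The plan is to establish the lemma in two simple steps: first show that singletons in an ultraproduct are admissible, and then show that admissible subsets are closed under finite unions. Every finite subset then follows as a finite union of singletons.

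For the singleton step, given $x\in X=\prod_\ultra X_i$, I would set $A_i=\{x_i\}$ for each $i\in I$ (picking any representative of the equivalence class; the choice is irrelevant up to $\sim$). By the definition in~\eqref{eq:ultraprod-pred}, a class $y$ belongs to $\prod_\ultra A_i$ exactly when $\{i\mid y_i=x_i\}\in\ultra$, which by definition of $\sim$ means $y\sim x$, i.e.\ $y=x$ in $X$. Hence $\{x\}=\prod_\ultra A_i$ is admissible.

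For the closure step, suppose $A=\prod_\ultra A_i$ and $B=\prod_\ultra B_i$ are admissible. I claim $A\cup B=\prod_\ultra(A_i\cup B_i)$. Indeed, $x\in\prod_\ultra(A_i\cup B_i)$ iff $\{i\mid x_i\in A_i\}\cup\{i\mid x_i\in B_i\}\in\ultra$. Since $\ultra$ is an \emph{ultrafilter} (hence prime), a union of two sets lies in $\ultra$ iff at least one of them does, which translates to $x\in A$ or $x\in B$. So admissibility is preserved under binary, and hence finite, unions. Since the empty set is trivially admissible (take all $A_i=\emptyset$, noting $\emptyset\notin\ultra$), every finite subset of $X$ is a finite union of singletons and therefore admissible.

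I do not expect a serious obstacle here: the only subtlety is that the argument for unions relies essentially on $\ultra$ being prime, which is precisely the ultrafilter property and not merely a filter property. This is also why the statement is tailored to ultraproducts rather than to reduced products over arbitrary filters.
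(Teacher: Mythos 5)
Your proof is correct: the singleton case and closure under finite unions (using primeness of the ultrafilter) together give the result, and the well-definedness caveat about representatives is handled properly. The paper states this lemma without proof, and your argument is exactly the routine one the authors evidently intend (equivalently, one can take $A_i=\{x^1_i,\dots,x^n_i\}$ for a finite set $\{x^1,\dots,x^n\}$ in one step), so there is nothing further to reconcile.
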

\noindent Ultraproducts of coalgebras will not be determined uniquely;
instead, we give a property-oriented definition and later show
existence.

\begin{defi}[Quasi-Ultraproducts of Coalgebras]\label{def:ultra} 
  Let $(C_i)=(X_i,\xi_i)_{i\in I}$ be a family of $T$-coalgebras, and
  let $\ultra$ be an ultrafilter on $I$. A coalgebra $\xi$ on the set-ultraproduct $X=\prod_\ultra X_i$ is called a
  \emph{quasi-ultraproduct} of the $C_i$ if for every family $(A_i)$
  of subsets $A_i\subseteq X_i$, every $x\in\prod_\ultra X_i$, and
  every $\hearts\in\Lambda$,
  \begin{equation}\label{eq:ultra}
    \xi(x)\in\Sem{\hearts}_X\textstyle\prod_\ultra A_i \iff 
\{i\in I\mid \xi_{i}(x_{i}) \in \lsem \PrL \rsem_{C_{i}} (A_{i})\}
%    \{i\in I\mid \xi_i(x_i)\models\hearts A_i\}
\in\ultra.
  \end{equation}
\end{defi}
\noindent The notion of quasi-ultraproduct extends naturally to
coalgebraic models using the standard definition to extend the
interpretation of predicates (as indicated above,
Equation~\eqref{eq:ultraprod-pred} recalls the case of unary
predicates).

The definition of quasi-ultraproducts is designed in such a way that
\L{}o\'{s}'s theorem, which in the view of ultrafilters as
$\{0,1\}$-valued measures states that the ultraproduct satisfies
exactly those formulas that hold in almost all its components, extends
to coalgebras:
\begin{thm}[Coalgebraic \L{}o\'{s}'s Theorem] \label{th:clos}
If $\gM=(C,\gamma,V)$ is a
  quasi-ultraproduct of $\gM_i=(C_i,\gamma_i,V_i)$ for the ultrafilter $\ultra$,
  then for every tuple $(a^1,\dots,a^n)$ of states in $C$, where
  $a^k=(a^k_i)_{i\in I}$, and for every CPL formula $\phi(x_1,\dots,x_n)$,
  %\begin{equation*}
    $C\models\phi(a^1,\dots,a^n) \iff \{i\mid C_i\models\phi(a^1_i,\dots,a^k_i)\}\in\ultra$.
  %\end{equation*}
\end{thm}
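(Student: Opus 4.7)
The plan is to prove this by induction on the structure of the CPL formula $\phi$, with the inductive hypothesis formulated uniformly over all free-variable contexts so that it can be applied under binders and under the comprehension brackets.

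\textbf{Base and Boolean cases.} For atomic predicate formulas $P(\vec{x})$ and equalities $y_1 = y_2$, the statement reduces to the standard definition by which predicates and equality are lifted to ultraproducts of models. The case $\bot$ is trivial, and the case $\phi \impF \psi$ follows from the fact that $\ultra$ is closed under finite intersections and decides every subset (either $A \in \ultra$ or $I \setminus A \in \ultra$).

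\textbf{Quantifier case.} For $\forall x.\phi$, I would use the ordinary argument for first-order \L os, with Axiom of Choice supplying witnesses. The $(\Leftarrow)$ direction uses that $\{i\mid\gM_i\models\forall x.\phi(\vec a_i)\}$ is contained in $\{i\mid\gM_i\models\phi(b_i,\vec a_i)\}$ for any fixed $b$, and then applies the IH. For $(\Rightarrow)$, if $\gM \models \forall x.\phi(\vec a)$ but some ``bad'' set $S = \{i \mid \exists c_i \in X_i.\ \gM_i \not\models \phi(c_i, \vec a_i)\}$ lies in $\ultra$, pick (by AC) witnesses $b_i \in X_i$ on $S$ and arbitrary $b_i$ elsewhere; the resulting $b \in \prod_\ultra X_i$ contradicts the IH.

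\textbf{Modal case.} This is where the quasi-ultraproduct property is used. Consider $\phi = x\heartsuit\lcmh y_1:\phi_1\rcmh\dots\lcmh y_n:\phi_n\rcmh$. The key auxiliary observation, which I would state as a sublemma, is that under the IH,
\begin{equation*}
\lsem\phi_j\rsem^{y_j}_{\gM, v} \;=\; \textstyle\prod_\ultra \lsem\phi_j\rsem^{y_j}_{\gM_i, v_i},
\end{equation*}
where $v_i$ denotes the componentwise projection of $v$. This is immediate by unfolding Definition of $\lsem\cdot\rsem^{y_j}$: for $b \in \prod_\ultra X_i$, membership on the left means $\gM \models \phi_j(b, \vec a)$, which by IH is equivalent to $\{i \mid \gM_i \models \phi_j(b_i, \vec a_i)\} \in \ultra$, which is exactly membership on the right as defined in \eqref{eq:ultraprod-pred}. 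With this in hand, I would chain
\begin{align*}
\gM \models \phi(\vec a)
&\iff \gamma(v(x)) \in \lsem\heartsuit\rsem_X\bigl(\lsem\phi_1\rsem^{y_1}_{\gM,v},\dots,\lsem\phi_n\rsem^{y_n}_{\gM,v}\bigr)\\
&\iff \gamma(v(x)) \in \lsem\heartsuit\rsem_X\bigl(\textstyle\prod_\ultra\lsem\phi_1\rsem^{y_1}_{\gM_i,v_i},\dots,\prod_\ultra\lsem\phi_n\rsem^{y_n}_{\gM_i,v_i}\bigr)\\
&\iff \{i \mid \gamma_i(v_i(x)) \in \lsem\heartsuit\rsem_{X_i}(\lsem\phi_1\rsem^{y_1}_{\gM_i,v_i},\dots)\} \in \ultra,
\end{align*}
where the last equivalence is precisely condition \eqref{eq:ultra} in the definition of quasi-ultraproduct, applied to the admissible tuple of sets.

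\textbf{Main obstacle.} The genuine content is localised in the modal case: one must ensure that the tuples of extension sets fed into $\lsem\heartsuit\rsem$ are really of the ultraproduct form $\prod_\ultra A_i$ required by Definition~\ref{def:ultra}. This is exactly the IH rephrased for extensions, so the whole proof hinges on maintaining the IH at the level of truth sets rather than merely pointwise satisfaction; once that is set up, \eqref{eq:ultra} does the rest. The Boolean and quantifier cases are routine transfers from classical \L os.
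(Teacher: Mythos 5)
Your proof is correct and follows essentially the same route as the paper's: induction over formulas, with the Boolean and quantifier cases handled as in classical \L{}o\'{s} and the modal case discharged by the quasi-ultraproduct property. The sublemma identifying $\lsem\phi_j\rsem^{y_j}_{\gM,v}$ with $\prod_\ultra\lsem\phi_j\rsem^{y_j}_{\gM_i,v_i}$ is exactly the (implicit) bridge the paper relies on when it says the modal case is ``exactly by the quasi-ultraproduct property,'' so you have merely made explicit what the paper leaves to the reader.
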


%\subsection*{Proof of Lemma \ref{lem:fin-admissible}}

%\begin{proof}
%  Short proof: finite sets are extensions of formulas of the type
%  $x \eqF c_1\lor\dots\lor x \eqF c_k$, where $c_1,\dots,c_k$ are
%  constants. Long proof: Let $A=\{a^1,\dots,a^k\}\subseteq\prod_\ultra
%  X_i$, where $\ultra$ is an ultrafilter on $I$. For $i\in I$, put
%  $A_i=\{a^1_i,\dots,a^k_i\}$. Then $A=\prod_\ultra A_i$: $\subseteq$
%  is clear. Conversely, let $x\in\prod_\ultra A_i$. Then for some
%  $j\in\{1,\dots,k\}$, the set $\{i\mid x_i=a^j_i\}$ is in $\ultra$,
%  and then $x=a^j$ in $\prod_\ultra X_i$. %\qed
%\end{proof}

\begin{proof}
  Induction over formulas. The cases for Boolean operators and
  quantifiers are as in the classical case, and the case for modal
  operators is exactly by the quasi-ultraproduct property. %\qed
\end{proof}

\noindent
From this theorem, we obtain the usual applications, in particular
compactness (the latter with literally the same proof as in the
classical case).
% Namely: Take, for simplicity, a finitely satisfiable set Phi of
% sentences. Base set of the ultrafilter is Pfin(Phi),
% form the filter of upsets of finite sets. This is really a filter:
% Intersection of upsets of A and B is upset of A cup B.
% Thus have ultrafilter above this. Take ultraproduct of models 
% of finite subsets of Phi; then each phi in Phi is satisfied almost 
% everywhere, as the upset of {phi} is in the filter.  
The question is, of course, when quasi-ultraproducts
exist. A core observation is
\begin{lem}\label{lem:loc-finsat-full}
  In the notation of Definition~\ref{def:ultra}, the demands placed on
  $\xi(x)$ by Condition~\eqref{eq:ultra} constitute a finitely
  satisfiable set of one-step formulas.
\end{lem}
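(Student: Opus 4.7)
The plan is to fix a finite subset $\Phi$ of the set of demands and exhibit some $t \in TX$ satisfying every formula in $\Phi$. Each such demand is a literal $\hearts\prod_\ultra A_i$ or its negation; let $A^1,\dots,A^n$ with $A^j = \prod_\ultra A^j_i$ enumerate the finitely many admissible subsets of $X$ appearing in $\Phi$. The key idea is to transport some $\xi_{i^*}(x_{i^*}) \in TX_{i^*}$ into $TX$ via $Tg$ for a cleverly chosen $g\colon X_{i^*}\to X$: if I can arrange $g^{-1}[A^j]=A^j_{i^*}$ for every $j$, then naturality of each $\Sem{\hearts}$ gives $Tg(\xi_{i^*}(x_{i^*}))\in\Sem{\hearts}_X(A^j)$ iff $\xi_{i^*}(x_{i^*})\in\Sem{\hearts}_{X_{i^*}}(A^j_{i^*})$, so choosing $i^*$ carefully will make $t:=Tg(\xi_{i^*}(x_{i^*}))$ satisfy $\Phi$.

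The construction of $g$ runs through the Boolean atoms of the finite family. For $S\subseteq\{1,\dots,n\}$ and $i\in I$, set $B^S_i = \bigcap_{k\in S}A^k_i \cap \bigcap_{k\notin S}(X_i\setminus A^k_i)$ and $K_i=\{S\mid B^S_i\neq\emptyset\}$. Since $K_i$ takes values in the finite set $2^{2^{\{1,\dots,n\}}}$, the ultrafilter property supplies a unique $K^*$ with $I^*:=\{i\mid K_i=K^*\}\in\ultra$. Using the basic fact that $\prod_\ultra B_i$ is non-empty exactly when $\{i\mid B_i\neq\emptyset\}\in\ultra$, I would then verify that the non-empty cells of the Boolean partition of $X$ induced by $A^1,\dots,A^n$ are precisely $\{\prod_\ultra B^S_i\mid S\in K^*\}$, and that each $A^j$ decomposes as $\bigcup_{S\ni j}\prod_\ultra B^S_i$---in other words, the partition structure at a ``generic'' index $i\in I^*$ matches the partition structure on $X$.

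To finish, I pick $i^*$ in the intersection of $I^*$ with the finitely many ultrafilter sets prescribed by the literals in $\Phi$ (non-empty as a finite intersection of members of $\ultra$), apply the Axiom of Choice to select $y^S\in\prod_\ultra B^S_i$ for each $S\in K^*$, and define $g(x)=y^S$ whenever $x\in B^S_{i^*}$. The matching of partitions yields $g^{-1}[A^j]=A^j_{i^*}$ for all $j$, and the naturality argument above then shows that $t:=Tg(\xi_{i^*}(x_{i^*}))$ satisfies every literal in $\Phi$. The step requiring most care is the partition bookkeeping---aligning the non-emptiness patterns $K_i$ with the partition on the ultraproduct---but once this alignment is established, transporting the witness across $Tg$ is essentially automatic.
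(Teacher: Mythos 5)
Your proof is correct, but it takes a genuinely different route from the paper's. The paper argues by contradiction: if the finitely many demands had no common solution in $TX$, then by one-step cutfree completeness of the set of all one-step sound rules (Lemma~\ref{lem:os-cutfree-complete}, imported from earlier work) there would be a sound rule $\svA/\svX$ whose premise holds in $X$ under the induced valuation while the demands force $\xi(x)\models\neg\svX\tau$; the classical \L{}o\'s theorem transfers validity of the premise to almost all factors $X_i$, one-step soundness then yields $TX_i\models\svX\tau_i$ for almost all $i$, contradicting the fact that $\{i\mid\xi_i(x_i)\models\neg\svX\tau_i\}\in\ultra$. You instead construct an explicit witness: align the Boolean atoms of the finitely many families with those of the ultraproduct at a ``generic'' index $i^*$ (your bookkeeping with $K_i$ and $K^*$ is sound, since $\prod_\ultra$ commutes with finite Boolean operations and $\prod_\ultra B_i\neq\emptyset$ iff $\{i\mid B_i\neq\emptyset\}\in\ultra$), build $g\colon X_{i^*}\to X$ with $g^{-1}[A^j]=A^j_{i^*}$, and transport $\xi_{i^*}(x_{i^*})$ along $Tg$ using naturality of the predicate liftings. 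What your approach buys is self-containedness: it needs only naturality and elementary ultrafilter combinatorics, no appeal to cutfree completeness or to first-order \L{}o\'s, and it produces a concrete element of $TX$. What the paper's approach buys is uniformity: the same cutfree-completeness device is reused essentially verbatim in Lemma~\ref{lem:loc-finsat-fin} and in both L\"owenheim-Skolem proofs. One small point you should make explicit: the demands are indexed by \emph{families} $(A_i)$, not by admissible subsets, and distinct families can induce the same $A=\prod_\ultra A_i$; since any two such families agree on a set in $\ultra$ (otherwise one exhibits a point separating the two ultraproducts), the induced demands coincide and choosing one representative family per admissible set, as you do, is harmless --- but this observation is needed for your choice of $i^*$ to serve all literals mentioning the same $A^j$.
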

\noindent In the proof of this lemma, and on several further
occasions, we will need the fact that the set of all one-step sound
one-step rules is \emph{one-step cutfree
  complete}~\cite{PattinsonSchroder10}. Instead of repeating the
definition of this term, we state the relevant property directly.  
\begin{lem}\label{lem:os-cutfree-complete}
  \cite[Theorem~18 with proof]{Schroder06} Let $\Phi$ be a finite
  subset of $\Lambda(\Pow(C))\cup\neg\Lambda(\Pow(C))$, where
  $\neg\Lambda(\Pow(C))=\{\neg\hearts A\mid \hearts
  A\in\Lambda(\Pow(C))\}$.
  If $\Phi$ is one-step unsatisfiable, then there exists a sound
  one-step rule $\svA/\svX$ and a valuation $\tau:\SchV\to\Pow(C)$
  such that $C\models\svA\tau$ and $\svX\tau=\neg\Land\Phi$.
\end{lem}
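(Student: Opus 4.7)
The plan is to exhibit, given a one-step unsatisfiable $\Phi$, an explicit sound one-step rule $\svA/\svX$ and valuation $\tau$ that witness the unsatisfiability. I describe the construction for unary modalities; the general case uses tuples of characteristic vectors in an entirely analogous way. First I would list $\Phi=\{\epsilon_i\hearts_i A_i\mid i=1,\dots,m\}$ with each $\epsilon_i\in\{+,-\}$ and $A_i\subseteq C$, introduce one fresh schematic variable $\sva_i$ per literal, and set $\tau(\sva_i)=A_i$. Taking
\[
\svX \;=\; \bar\epsilon_1\hearts_1\sva_1\orF\dots\orF\bar\epsilon_m\hearts_m\sva_m
\]
(the disjunction of negations of the literals of $\Phi$) already delivers $\svX\tau=\neg\Land\Phi$ on the nose.

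The heart of the construction is the premise~$\svA$. For each $s\in 2^m$ write $\gamma_s=\Land_{i:s_i=1}\sva_i\andF\Land_{i:s_i=0}\neg\sva_i$, and let $S\subseteq 2^m$ be the set of those vectors that are actually realised as $\chi(c):=(1_{c\in A_i})_{i\le m}$ for some $c\in C$. Put $\svA=\Land_{s\in 2^m\setminus S}\neg\gamma_s$. Then $C\models\svA\tau$ holds by the very definition of $S$. To verify one-step soundness, I would take an arbitrary set $C'$ and valuation $\tau':\SchV\to\Pow(C')$ with $C'\models\svA\tau'$. Since the premise forces every element of $C'$ to have a characteristic vector in $S$, there is a well-defined map $f':C'\to S$ given by $f'(c')_i=1\iff c'\in\tau'(\sva_i)$, satisfying $(f')^{-1}[\bar A_i^S]=\tau'(\sva_i)$ where $\bar A_i^S=\{s\in S\mid s_i=1\}$. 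Using the axiom of choice, pick for each $s\in S$ a witness $c_s\in C$ with $\chi(c_s)=s$; this yields a section $g:S\to C$ with $g^{-1}[A_i]=\bar A_i^S$.

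Now suppose towards a contradiction that some $t'\in TC'$ satisfies every literal $\epsilon_i\hearts_i\tau'(\sva_i)$, i.e.\ witnesses the failure of $\svX\tau'$. Applying naturality of $\Sem{\hearts_i}$ along $f'$ transfers this to $Tf'(t')\in\Sem{\hearts_i}_S^{\epsilon_i}(\bar A_i^S)$ for each~$i$; applying naturality along $g$ transfers further to $t:=Tg(Tf'(t'))\in TC$ satisfying each $\epsilon_i\hearts_i A_i$, i.e.\ the whole of $\Phi$—contradicting one-step unsatisfiability.

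Finally, I would massage $\svA/\svX$ into strict compliance with Definition~\ref{def:hilonestep}, which requires each schematic variable to occur at most once in $\svX$ and every variable of $\svA$ to occur in $\svX$: rename duplicate occurrences in $\svX$ to fresh variables, conjoin the corresponding equivalences $\sva\eqvF\sva'$ to $\svA$, and extend $\tau$ consistently; drop any variables of $\svA$ that happen not to appear in $\svX$ by eliminating them propositionally. I expect no real obstacle in this step—it is pure syntactic bookkeeping. The conceptually delicate ingredient is the design of $\svA$: it must be strong enough that any model of it admits the ``Boolean typing'' map $f'$ into $S$, yet of course true on $C$ itself, and these two demands are exactly what the $\gamma_s$-formulation balances.
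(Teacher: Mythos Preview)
Your argument is correct and is essentially the standard proof from the cited reference \cite{Schroder06}: encode the Boolean type of each point relative to the finitely many argument sets, take $\svA$ to rule out the unrealised types, and use naturality of the predicate liftings along the resulting type map $f'$ and a chosen section $g$ to transport a hypothetical witness in $TC'$ back to~$TC$. Two minor remarks: in your specific construction each $\sva_i$ already occurs exactly once in $\svX$ and every variable of $\svA$ occurs in $\svX$, so the final massaging step is in fact vacuous; and since $S\subseteq 2^m$ is finite, the choice of the section $g$ does not require the axiom of choice.
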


\noindent We then proceed as follows with the open proof of
Lemma~\ref{lem:loc-finsat-full}:
\begin{proof}[Proof (Lemma~\ref{lem:loc-finsat-full})]
  Fix finitely many instances of (\ref{eq:ultra}) (keeping the same
  notation) for families of sets $(A^j_i)_{i\in I}$ and sets
  $A^j=\prod_\ultra A^j_i$, $j=1,\dots,k$. We regard these sets as
  extensions of unary predicates $P^j$ over the $X_i$ and over $X$,
  respectively. If the corresponding instances of (\ref{eq:ultra}) do
  not have a solution $\xi(x)$ in $TX$, then by
  Lemma~\ref{lem:os-cutfree-complete} we have a sound one-step rule
  $\svA/\svX$ and a valuation $\tau:\SchV\to\Pow(X)$ such that
  $X\models\svA\tau$ but the instances of (\ref{eq:ultra}) for
  $A^1,\dots,A^k$ demand $\xi(x)\models\neg\svX\tau$. Let
  $\sva_1,\dots,\sva_k$ be the schematic variables appearing in
  $\svA/\svX$; w.l.o.g.\ $\tau(\sva_j)=A^j$ for $j=1,\dots,k$. Then
  $X$ satisfies the first-order sentence $\forall z.(\svA \sigma)$
  where $\sigma(\sva_j)=P^j(y)$. By \L{}o\'{s}'s theorem (in fact already
  by its classical version), there exists $B\in\ultra$ such that
  $X_i\models\forall z.(\svA\sigma)$ and hence $X_i\models\svA\tau_i$
  for all $i\in B$, where $\tau_i(\sva_j)=A^j_i$. By one-step
  soundness of $\svA/\svX$ this implies $TX_i\models\svX\tau_i$ for
  all $i\in B$. But our formulation above that the instances of
  (\ref{eq:ultra}) for $A^1,\dots,A^k$ demand
  $\xi(x)\models\neg\svX\tau$ means more explicitly (and using the
  fact that $\ultra$ is an ultrafilter) that
  $\{i\in I\mid\xi_i(x_i)\models\neg\svX\tau_i\}\in\ultra$, so that we
  have a contradiction. %%\qed
\end{proof}

\noindent
From Lemma \ref{lem:loc-finsat-full}, our first existence criterion
for quasi-ultraproducts is immediate: 
\begin{thm}\label{thm:ultra-s1sc}
  If a $\Lambda$-structure is one-step compact
  \bro Definition~\ref{def:os-compact}\brc, then it has quasi-ultraproducts.
\end{thm}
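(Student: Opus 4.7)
The plan is to obtain the coalgebra structure $\xi\colon X\to TX$ pointwise, using Lemma~\ref{lem:loc-finsat-full} to reduce the existence of a suitable $\xi(x)\in TX$ to a question of one-step satisfiability, and then discharging that question with the one-step compactness hypothesis. Concretely, fix an ultrafilter $\ultra$ on $I$ and coalgebras $(C_i)=(X_i,\xi_i)$, and let $X=\prod_\ultra X_i$. For each $x\in X$, define
\begin{equation*}
  \Phi_x \;=\; \{\hearts A \mid \hearts\in\Lambda,\; A=\textstyle\prod_\ultra A_i \text{ admissible},\; \{i\mid \xi_i(x_i)\in\Sem{\hearts}_{C_i}(A_i)\}\in\ultra\}
\end{equation*}
together with the negated literals $\neg\hearts A$ corresponding to the instances where the index set is \emph{not} in $\ultra$. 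By construction $\Phi_x\subseteq\BSch(\Lambda(\Pow(X)))$, and any $t\in TX$ with $t\models\Phi_x$ is precisely an element that satisfies all instances of the biconditional in~\eqref{eq:ultra} at~$x$.

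The finite satisfiability of $\Phi_x$ is exactly the content of Lemma~\ref{lem:loc-finsat-full}: any finite subset of $\Phi_x$ lists finitely many instances of~\eqref{eq:ultra}, and the lemma asserts that such a finite fragment is jointly satisfiable in $TX$. Invoking one-step compactness (Definition~\ref{def:os-compact}) on $\Phi_x$ therefore yields an element $t_x\in TX$ with $t_x\models\Phi_x$. Using the Axiom of Choice, define $\xi(x)\deq t_x$ for every $x\in X$. This gives a coalgebra $(X,\xi)$, and verification of the quasi-ultraproduct condition~\eqref{eq:ultra} is immediate: for any family $(A_i)$ and any $\hearts$, exactly one of $\hearts\bigl(\prod_\ultra A_i\bigr)$ or its negation lies in $\Phi_{x}$, depending on whether $\{i\mid \xi_i(x_i)\in\Sem{\hearts}_{C_i}(A_i)\}$ belongs to $\ultra$, and $\xi(x)=t_x$ satisfies that literal by choice.

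There is essentially no hard step beyond what has already been done: Lemma~\ref{lem:loc-finsat-full} did the actual combinatorial work (appealing in turn to one-step cutfree completeness via Lemma~\ref{lem:os-cutfree-complete} and classical \L{}o\'s), and the present theorem is really just a clean packaging of that lemma together with the compactness hypothesis. The only minor point to check is that the atoms occurring in $\Phi_x$---modalities applied to admissible subsets of $X$---are legitimate one-step atoms over $\Pow(X)$; they are, since admissible sets are in particular subsets of $X$, so Definition~\ref{def:os-compact} applies directly with no need for the finitary variant. Note that this is also why we require full (not finitary) one-step compactness: nothing forces the admissible sets $\prod_\ultra A_i$ to be finite, even when each $A_i$ is.
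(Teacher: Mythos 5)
Your proof is correct and is exactly the argument the paper intends (the paper merely declares the theorem ``immediate'' from Lemma~\ref{lem:loc-finsat-full}): you assemble the literals demanded by Condition~\eqref{eq:ultra} into a set $\Phi_x$ of one-step formulas over $\Pow(X)$, observe that Lemma~\ref{lem:loc-finsat-full} gives finite satisfiability (which in particular rules out contradictory literals arising from different families inducing the same admissible set), and apply one-step compactness pointwise to choose $\xi(x)$. Your closing remark about why full rather than finitary one-step compactness is needed is also accurate and matches the role of Theorem~\ref{thm:ultra} for the bounded case.
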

\begin{exa}
  The above criterion applies in particular to all neighbourhood-like
  logics. It thus subsumes Chang's original ultraproduct
  construction~\cite{Chang73}
\end{exa}
\noindent Like for our completeness results, an alternative is to
require bounded operators:
\begin{thm}\label{thm:ultra}
  If a $\Lambda$-structure is finitary one-step compact
  \bro Definition~\ref{def:os-compact}\brc\ and all its operators are bounded,
  then it has quasi-ultraproducts.
\end{thm}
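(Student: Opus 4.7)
The plan is to mirror the proof of Theorem~\ref{thm:ultra-s1sc}, using boundedness to compensate for the absence of full one-step compactness. Fix $x\in X=\prod_\ultra X_i$ and consider the task of choosing $\xi(x)\in TX$ so that (\ref{eq:ultra}) is satisfied for every family $(A^j_i)_{i\in I}$ and every $\hearts$. The key preliminary claim is a reduction: it suffices to arrange (\ref{eq:ultra}) only for those families in which every $A^j_i\subseteq X_i$ has cardinality at most $k_{\hearts,j}$, the boundedness bound for $\hearts$ in the $j$-th argument. Once this reduction is in place, the restricted set of constraints on $\xi(x)$ lives in $\Prop(\Lambda(\Pfin(X)))$, and its finite satisfiability is a routine adaptation of Lemma~\ref{lem:loc-finsat-full}: the argument via Lemma~\ref{lem:os-cutfree-complete} and the classical form of \L{}o\'s' theorem is unchanged, and the valuation $\tau$ produced there now takes values in $\Pfin(X)$ since the predicates $P^j$ have finite interpretations at every component. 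Finitary one-step compactness then yields a common satisfier $\xi(x)$, and the coalgebra $\xi$ is assembled by the axiom of choice over all $x\in X$.

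To justify the reduction, fix an arbitrary family $(A^j_i)$ and let $A^j=\prod_\ultra A^j_i$. In the forward direction of (\ref{eq:ultra}), apply boundedness at $X$: $\xi(x)\in\Sem{\hearts}_X(A^1,\dots,A^n)$ implies $\xi(x)\in\Sem{\hearts}_X(B^1,\dots,B^n)$ for some finite $B^j\subseteq A^j$ with $\#B^j\le k_{\hearts,j}$. By Lemma~\ref{lem:fin-admissible} each such $B^j$ is admissible, say $B^j=\prod_\ultra B^j_i$, and one can arrange (on a set in $\ultra$) that $B^j_i\subseteq A^j_i$ and $\#B^j_i\le k_{\hearts,j}$; the bounded-size instance of (\ref{eq:ultra}) then gives $\{i\mid \xi_i(x_i)\in\Sem{\hearts}_{C_i}(B^1_i,\dots,B^n_i)\}\in\ultra$, and by componentwise monotonicity (implicit in boundedness) this set is contained in $\{i\mid \xi_i(x_i)\in\Sem{\hearts}_{C_i}(A^1_i,\dots,A^n_i)\}$, whence the latter lies in $\ultra$. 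The converse direction proceeds symmetrically: from $\{i\mid \xi_i(x_i)\in\Sem{\hearts}_{C_i}(A^1_i,\dots,A^n_i)\}\in\ultra$, boundedness at each $C_i$ together with the axiom of choice yields finite $B^j_i\subseteq A^j_i$ of size $\le k_{\hearts,j}$ witnessing $\xi_i(x_i)\in\Sem{\hearts}_{C_i}(B^1_i,\dots,B^n_i)$ on a set in $\ultra$ (making arbitrary choices outside); the bounded-size instance of (\ref{eq:ultra}) then places $\xi(x)$ in $\Sem{\hearts}_X(\prod_\ultra B^1_i,\dots,\prod_\ultra B^n_i)$, which by monotonicity is contained in $\Sem{\hearts}_X(A^1,\dots,A^n)$.

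The main obstacle is keeping the two uses of boundedness---at $X$ in the forward direction and at each $X_i$ in the converse---properly aligned via Lemma~\ref{lem:fin-admissible}, so that the uniform size bounds $k_{\hearts,j}$ on the witnesses $B^j$ and $B^j_i$ are preserved throughout and so that the componentwise choices of $B^j_i$ assemble into a legitimate admissible subset $\prod_\ultra B^j_i\subseteq A^j$ of the requisite cardinality. The remaining bookkeeping (finite satisfiability of the restricted constraint set, then invocation of finitary one-step compactness) is a straightforward variant of the arguments already deployed for Theorem~\ref{thm:ultra-s1sc} and Lemma~\ref{lem:loc-finsat-full}.
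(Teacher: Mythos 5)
Your proposal is correct and takes essentially the same route as the paper: your restricted constraint set is, up to presentation (bounded-size families versus tuples of elements, and finitely interpreted unary predicates versus constants in the adaptation of Lemma~\ref{lem:loc-finsat-full}), exactly the set $\Psi$ of Lemma~\ref{lem:loc-finsat-fin}, whose finite satisfiability is combined with finitary one-step compactness to produce $\xi(x)$. Your explicit two-directional lifting via boundedness, admissibility of finite sets, and monotonicity is precisely the paper's reduction of~(\ref{eq:ultra}) to the \L{}o\'{s} equivalence for formulas $x\hearts\cmh{z}{z=y^1\vee\dots\vee z=y^k}$.
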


\noindent The proof needs the following lemma.
\begin{lem}\label{lem:loc-finsat-fin}
  Let $(C_i)=(X_i,\xi_i)_{i\in I}$ be a family of $T$-coalgebras, and
  let $\ultra$ be an ultrafilter on $I$. Let $X$ be the ultraproduct
  $\prod_\ultra X_i$, and let $x\in X$. Then the set
  \begin{equation*}
    \Psi=\{ \epsilon\hearts\{y^1,\dots,y^k\}\mid \{i\mid\xi_i(x_i)\models
    \epsilon\hearts\{y^1_i,\dots,y^k_i\}\}\in\ultra\}
  \end{equation*}
  of one-step formulas \bro where $\hearts$ ranges over $\PrLSet$, the
  $y^i$ range over $X$, and $\epsilon$ stands for either negation or
  nothing\brc\ is finitely satisfiable.
\end{lem}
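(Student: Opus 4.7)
The plan is to mimic the proof of Lemma~\ref{lem:loc-finsat-full}, with the essential difference that all sets involved are finite (and hence admissible by Lemma~\ref{lem:fin-admissible}), which lets us transfer statements between $X$ and almost all $X_i$ using the \emph{classical} \L{}o\'{s} theorem rather than any coalgebraic analogue. So, suppose for contradiction that some finite $\Psi_0\subseteq\Psi$ is one-step unsatisfiable. Each member of $\Psi_0$ has the form $\epsilon\hearts\{y^1,\dots,y^k\}$ with $y^1,\dots,y^k\in X$, so all sets occurring in $\Psi_0$ are finite subsets of $X$.

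By one-step cutfree completeness (Lemma~\ref{lem:os-cutfree-complete}) applied to $\Psi_0$, there exist a sound one-step rule $\svA/\svX$ and a valuation $\tau:\SchV\to\Pow(X)$ with $X\models\svA\tau$ and $\svX\tau=\neg\bigwedge\Psi_0$. Since every schematic variable of a one-step rule appears in the conclusion (Definition~\ref{def:hilonestep}), $\tau$ is forced, on the variables actually occurring in the rule, to take values among the finite sets appearing in $\Psi_0$. Writing $\tau(\sva_j)=\{y^{j,1},\dots,y^{j,k_j}\}$ for each such $\sva_j$, and setting $A^j_i:=\{y^{j,1}_i,\dots,y^{j,k_j}_i\}\subseteq X_i$, Lemma~\ref{lem:fin-admissible} gives $\tau(\sva_j)=\prod_\ultra A^j_i$.

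Now encode each $\tau(\sva_j)$ by the first-order formula $\phi_j(z):= (z\eqF y^{j,1})\orF\dots\orF(z\eqF y^{j,k_j})$ and let $\sigma$ be the substitution $\sva_j\mapsto\phi_j(z)$. Then $X\models\svA\tau$ is equivalent to the first-order sentence $X\models\forall z.\,\svA\sigma$, whose parameters are the finitely many $y^{j,l}\in X$. By the classical \L{}o\'{s} theorem, the set $B=\{i\in I\mid X_i\models\forall z.\,\svA\sigma_i\}$, where $\sigma_i$ replaces each $y^{j,l}$ by its $i$-th component, lies in $\ultra$. Unwinding, $B=\{i\mid X_i\models\svA\tau_i\}$ with $\tau_i(\sva_j)=A^j_i$. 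By one-step soundness of $\svA/\svX$, for every $i\in B$ we have $TX_i\models\svX\tau_i$, hence in particular $\xi_i(x_i)\models\svX\tau_i=\neg\bigwedge\Psi_{0,i}$, where $\Psi_{0,i}$ denotes the componentwise version of $\Psi_0$.

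This yields the desired contradiction: by the defining property of $\Psi$, for each $\phi\in\Psi_0$ the set $\{i\mid\xi_i(x_i)\models\phi_i\}$ belongs to $\ultra$, and since $\Psi_0$ is finite, so does their intersection, i.e.\ $\{i\mid\xi_i(x_i)\models\bigwedge\Psi_{0,i}\}\in\ultra$, incompatibly with what we just derived. The main conceptual obstacle is the bookkeeping that justifies applying the classical \L{}o\'{s} theorem: one has to make sure that admissibility of all sets occurring in $\Psi_0$ lets us convert the essentially ``Boolean'' premise $\svA\tau$ into a genuine first-order sentence with finitely many parameters before transferring to the components; the rest is parallel to the proof of Lemma~\ref{lem:loc-finsat-full}.
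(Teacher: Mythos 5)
Your proof is correct and follows essentially the same route as the paper's: both arguments invoke Lemma~\ref{lem:os-cutfree-complete} on an unsatisfiable finite subset, encode the premise $\svA\tau$ as a first-order sentence over equality atoms with the finitely many elements of the involved sets as parameters, transfer it to almost all components via the classical \L{}o\'{s} theorem, and derive a contradiction from one-step soundness together with the ultrafilter's closure under finite intersections. Your explicit appeal to Lemma~\ref{lem:fin-admissible} just makes precise a piece of bookkeeping the paper leaves implicit.
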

\begin{proof}
  Analogous to Lemma~\ref{lem:loc-finsat-full}, using atoms of the
  form $z=c$ in place of unary predicates. In more detail: if a finite
  subset $\Phi$ of $\Psi$ is one-step unsatisfiable, then by
  Lemma~\ref{lem:os-cutfree-complete} there exist a one-step sound
  rule $\svA/\svX$ and a valuation $\tau$ such that
  $\svX\tau=\neg\Land\Phi$ and $X\models\svA\tau$. Now
  $X\models\svA\tau$ is semantically equivalent to
  $X\models\forall z.\svA_0$ where $\svA_0$ is propositional formula
  over atoms of the form $z=c$, where $c$ ranges over constants
  denoting elements of the involved finite subsets of $X$ in an
  extended first-order structure based on $X$. Then
  $\{i\mid X_i\models\forall z.\,\svA_0\}\in\ultra$ by (the classical
  version of) \L{}o\'s's theorem, where we interpret constants in
  $X_i$ by taking the $i$-th component of the interpretation in
  $X$~(this is just the way the interpretation of constants in the
  factors relates to that in the ultraproduct, classically). Hence
  $\{i\mid X_i\models\svA\sigma_i\}\in\ultra$, where $\sigma_i$
  replaces $\{y^1,\dots,y^k\}$ with $\{y^1_i,\dots,y^k_i\}$, and hence
  $\{i\mid TX_i\models\svW\}\in\ultra$, contradiction as in
  Lemma~\ref{lem:loc-finsat-full}.
\end{proof}
\begin{proof}[Proof (Theorem~\ref{thm:ultra})]
  By Lemma~\ref{lem:loc-finsat-fin} and finitary one-step compactness,
  there exists $\xi(x)$ satisfying the set $\Psi$ from
  Lemma~\ref{lem:loc-finsat-fin}. To show~\ref{eq:ultra} for
  $A\subseteq X$, we regard $A$ as the extension of a unary predicate
  $P$. Then $\xi(x)\models\hearts A$ is equivalent to
  \begin{equation*}
    x\models\exists y^1,\dots,y^k.\,(P(y^1)\land\dots\land P(y^k)\land x
    \hearts\cmh{z}{z=y^1,\dots,z=y^k}).
  \end{equation*}
  Thus it suffices to prove the \L{}o\'s equivalence for formulas
  $x \hearts\cmh{z}{z=y^1,\dots,z=y^k}$. This, however, is exactly
  what satisfaction of $\Psi$ by $\xi(x)$ guarantees.
\end{proof}

%The full proof is again deferred to the Appendix.

%\subsection{The Downward L\"owenheim-Skolem Theorem} \label{sec:dls}

\noindent For operators that are $\omega$-bounded but not $k$-bounded
for any $k$, the ultraproduct construction cannot be available, in
consequence of Theorem~\ref{th:noncompact}. However, the downward
L\"owenheim-Skolem theorem does survive under the weaker assumption of
$\omega$-boundedness:

\begin{thm}[Downward L\"owenheim-Skolem Theorem]\label{thm:dls}
  Over $\omega$-bounded finitary one-step compact
  $\Lambda$-structures, $\CPL(\PrLSet,\SoV)$ satisfies the downward
  L\"owenheim-Skolem theorem; that is, every model of infinite
  cardinality $\kappa$ has, for every infinite cardinal
  $\lambda\le\kappa$, an elementary substructure of cardinality
  $\lambda$.
\end{thm}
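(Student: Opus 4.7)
The plan is to adapt the classical Tarski--Vaught construction to the coalgebraic setting, using $\omega$-boundedness in place of bounded quantification on the modal side and finitary one-step compactness to equip the substructure with a coalgebra, in a manner closely parallel to the ultraproduct construction. The key ingredient is a coalgebraic Tarski--Vaught criterion: a subset $C' \subseteq C$ of a model $\gM = (C,\gamma,I)$ carries an elementary substructure provided (i) $C'$ is closed under classical existential witnesses, and (ii) whenever $c \in C'$ and $\gM,v \models c\heartsuit\lcmh y{:}\phi\rcmh$ for a valuation $v$ into $C'$, there exists a \emph{finite} $B \subseteq \Sem{\phi}^y_\gM \cap C'$ with $\gamma(c) \in \Sem{\heartsuit}_C(B)$. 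Condition~(ii) is the modal analogue of the classical Tarski--Vaught clause: $\omega$-boundedness provides finite witnesses inside $\gM$, and (ii) demands that at least one such witness already lies in $C'$.

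Next, I would build $C'$ by the standard chain construction. Starting from any $C_0 \subseteq C$ of cardinality $\lambda$, at stage $n{+}1$ and for each formula $\phi$ and finite tuple $\vec a$ from $C_n$, I add a classical existential witness for $\exists x.\phi(x,\vec a)$ when needed and, for each modal assertion $c\heartsuit\lcmh y{:}\phi(y,\vec a)\rcmh$ satisfied at some $c \in C_n$, add a finite witness $B \subseteq \Sem{\phi(y,\vec a)}^y_\gM$ as guaranteed by $\omega$-boundedness. Each stage grows by at most $\lambda$ elements, so after $\omega$ stages $C' = \bigcup_n C_n$ has cardinality $\lambda$ and satisfies (i) and (ii).

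Next, I would equip $C'$ with $\gamma': C' \to TC'$ via finitary one-step compactness. For each $c \in C'$, the set
\[
\Psi_c = \{\,\epsilon\heartsuit A \mid \heartsuit \in \Lambda,\; A \in \Pfin(C'),\; \gamma(c) \models \epsilon\heartsuit A \text{ in } TC\,\},
\]
with $\epsilon$ either negation or nothing as in Lemma~\ref{lem:loc-finsat-fin}, is a set of one-step formulas over $\Pfin(C')$. I claim $\Psi_c$ is finitely satisfiable, by a Gentzen-style argument essentially identical to that lemma: one-step cutfree completeness (Lemma~\ref{lem:os-cutfree-complete}) turns any failure of finite satisfiability over $C'$ into a one-step sound rule whose application to $\gamma(c) \in TC$ contradicts $\gamma(c) \models \Land \Psi_c$. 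Finitary one-step compactness then yields $\gamma'(c) \in TC'$ satisfying $\Psi_c$, and the interpretation of predicates restricts to $C'$ straightforwardly.

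Finally, elementarity of the inclusion $\gM' \hookrightarrow \gM$ follows by induction on formulas. Atomic, Boolean, and quantifier cases are classical, the quantifier case using closure~(i); the modal case combines $\omega$-boundedness on both sides with the defining property of $\gamma'$ on $\Pfin(C')$. Unfolding, $\gM',v \models c\heartsuit\lcmh y{:}\phi\rcmh$ is equivalent via $\omega$-boundedness in $C'$ to the existence of a finite $B \subseteq \Sem{\phi}^y_{\gM'}$ with $\gamma'(c) \models \heartsuit B$; by construction of $\gamma'$ and the induction hypothesis $\Sem{\phi}^y_{\gM'} = \Sem{\phi}^y_\gM \cap C'$, this is equivalent to the existence of a finite $B \subseteq \Sem{\phi}^y_\gM \cap C'$ with $\gamma(c) \models \heartsuit B$, which by closure~(ii) and $\omega$-boundedness in $C$ matches $\gM,v \models c\heartsuit\lcmh y{:}\phi\rcmh$. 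The main obstacle is the finite satisfiability claim for $\Psi_c$ over $\Pfin(C')$: transferring satisfaction of a finite subset at $\gamma(c) \in TC$ to an element of $TC'$ does not follow from naturality of predicate liftings alone, since the image of $T\iota : TC' \to TC$ need not contain $\gamma(c)$, and requires the Gentzen-style detour via one-step cutfree completeness exactly as in Lemma~\ref{lem:loc-finsat-fin}.
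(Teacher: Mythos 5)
Your proposal is correct and follows essentially the same route as the paper: closure under (finitely non-deterministic) witness functions supplied by $\omega$-boundedness, construction of the coalgebra structure on the substructure via finitary one-step compactness, with finite satisfiability of $\Psi_c$ established through one-step cutfree completeness (Lemma~\ref{lem:os-cutfree-complete}), and an induction for elementarity whose modal case uses $\omega$-boundedness and monotonicity on both sides. The only detail you elide is the transfer of the Boolean premise $\svA\tau$ from $C'$ to $C$ before invoking one-step soundness over $TC$; this is exactly the paper's Lemma~\ref{lem:prop-fin}, and it holds because the valuation takes values in finite sets while $C'$ is infinite, so every $\tau$-valuation occurring in $C$ already occurs in $C'$.
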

\noindent Here, we use the term \emph{elementary substructure} in the
usual way to designate first-order substructures whose elements
satisfy the same formulas as they do in the original model; we
explicitly do \emph{not} require that the coalgebra structure on the
substructure forms a subcoalgebra.

 \newcommand{\xhphi}{x\PrL\cmh{y}{\phi}}

The proof needs the following simple lemma.
\begin{lem}\label{lem:prop-fin}
  Let $Y$ be an infinite subset of $X$, $\tau:\SchV\to\Pfin(Y)$ and
  $\svA\in\BSch(\SchV)$. Then $Y\models\svA\tau$ iff
  $X\models\svA\tau$.
\end{lem}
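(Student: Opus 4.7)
The plan is to reduce the equivalence to a purely propositional observation that exploits how the valuation $\tau$ sits inside $Y$. Fix the finitely many schematic variables $\sva_1,\dots,\sva_n$ occurring in $\svA$, and write $V_i=\tau(\sva_i)$, each a finite subset of $Y$. For any ambient set $Z\supseteq Y$ and any $z\in Z$, membership $z\in\tau_Z(\svA)$ depends only on the characteristic vector $(\mathbf{1}_{z\in V_1},\dots,\mathbf{1}_{z\in V_n})\in\{0,1\}^n$, because Boolean operations on $\Pow(Z)$ act pointwise. In particular, for any $y\in Y$ this characteristic vector is the same whether computed in $\Pow(Y)$ or $\Pow(X)$, which yields the identity $\tau_X(\svA)\cap Y=\tau_Y(\svA)$.

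The second key observation concerns elements of $X\setminus Y$. Since each $V_i\subseteq Y$, every such element has characteristic vector $(0,\dots,0)$. Hence either $X\setminus Y\subseteq\tau_X(\svA)$ or $(X\setminus Y)\cap\tau_X(\svA)=\emptyset$, according to whether the Boolean evaluation of $\svA$ with all $\sva_i$ set to $0$ yields $\top$ or $\bot$. Crucially, since $Y$ is infinite and each $V_i$ is finite, the set $Y_0:=Y\setminus\bigcup_{i=1}^n V_i$ is nonempty; every $y\in Y_0$ likewise has the zero characteristic vector, so $y\in\tau_Y(\svA)$ iff the all-false evaluation of $\svA$ is $\top$.

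With these two facts in hand, the two directions are immediate. If $X\models\svA\tau$, i.e.\ $\tau_X(\svA)=X$, intersecting with $Y$ and using the identity gives $\tau_Y(\svA)=Y$. Conversely, suppose $Y\models\svA\tau$. Then the identity yields $\tau_X(\svA)\supseteq Y$, and picking any $y\in Y_0\subseteq\tau_Y(\svA)$ shows that $\svA$ is true under the all-false assignment, which by the second observation forces $X\setminus Y\subseteq\tau_X(\svA)$ as well; together these give $\tau_X(\svA)=X$.

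There is no real obstacle here: the lemma is essentially a bookkeeping statement, and the only ingredient beyond the definitions is the infinitude of $Y$ combined with the finiteness of each $V_i$, used to produce a witness $y\in Y_0$ that transfers the all-false truth value of $\svA$ from $Y$ to $X\setminus Y$.
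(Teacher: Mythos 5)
Your proof is correct and follows essentially the same route as the paper's: both arguments reduce everything to the characteristic vector (the paper calls it the $\tau$-valuation) of a point, note that every point of $X\setminus Y$ has the all-false valuation, and use the infinitude of $Y$ against the finiteness of the sets $\tau(\sva)$ to find a witness in $Y$ with that same valuation. Your write-up merely makes explicit the intersection identity $\tau_X(\svA)\cap Y=\tau_Y(\svA)$ and the two directions, which the paper compresses into the remark that every valuation occurring in $X$ occurs in $Y$.
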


\begin{proof}
  Only finitely many $\sva\in\SchV$ are relevant, so we can, for the
  rest of the proof, assume that $\SchV$ is finite. Define the
  $\tau$-valuation of $x\in X$ as the valuation $\kappa:\SchV\to 2$
  given by $\kappa(\sva)=\top$ iff $x\in\tau(\sva)$. Then the claim of
  the lemma is equivalent to saying that every $\tau$-valuation
  occurring in $X$ occurs also in $Y$. Now if $x\in X\setminus Y$,
  then the $\tau$-valuation of $x$ is everywhere false; this valuation
  occurs also in $Y$, as $\SchV$ and the $\tau(\sva)$ are
  finite. %%\qed
\end{proof}

\begin{proof}[Proof (Theorem~\ref{thm:dls})]
  Let $\gM=(C,\gamma,I)$ be a coalgebraic model of cardinality
  $\kappa$. Pick Skolem functions for all formulas $\exists x.\,\phi$
  as usual, and for every formula $x\PrL\cmh{y}{\phi}$ a finitely
  non-deterministic Skolem function
  $f_{\xhphi}:C^{\FV(\xhphi)}\to\Pfin(C)$ with the property that for
  every valuation $\eta\in C^{\FV(\xhphi)}$,
  $f_{\xhphi}(\eta)\finsubset\Sem{\phi}_{C,\eta}$ and
  \begin{equation*}
    C,\eta\models \xhphi\iff \gamma(\eta(x))\models
    \hearts f_{\xhphi}(\eta).
  \end{equation*}
  (Such a function $f_{\xhphi}$ exists because $\hearts$ is
  $\omega$-bounded.) Pick a countably infinite subset $Y_0\subseteq C$
  and let $Y$ be the closure of $Y_0$ under the Skolem functions, in
  the case of the non-deterministic Skolem functions $f_{\xhphi}$ in
  the sense that $f_{\xhphi}[Y]\subseteq Y$. Then $Y$ is countable: it
  consists of the possible values of countably many finitely
  non-deterministic finite Skolem terms.

  It remains to define a coalgebra structure $\zeta$ on $c\in Y$ in
  such a way that
  \begin{equation}\label{eq:ls-coherence}
    \zeta(c)\models \hearts A \iff \gamma(c)\models
    \hearts A
  \end{equation}
  for all $A\finsubset Y$; that is, we have to prove that the set
  \begin{equation*}
   \Psi \deq \{\epsilon\hearts A \mid \gamma(c)\models\epsilon
    \hearts A\}
  \end{equation*}
  of one-step formulas over $\Pfin(Y)$ is satisfiable over $Y$ (where
  $\hearts$ ranges over $\PrLSet$, $A$ ranges over $\Pfin(Y)$, and
  $\epsilon$ ranges over $\{\cdot,\neg\}$). By finitary one-step
  compactness, it suffices to prove that $\Psi$ is finitely
  satisfiable. Assume the contrary; then by
  Lemma~\ref{lem:os-cutfree-complete} there exists a sound one-step
  rule $\svA/\svX$ valuation $\tau:\SchV\to\Pow(Y)$ such that
  $Y\models\svA\tau$ and $\svX\tau$ propositionally contradicts some
  finite subset $\Psi_0$ of $\Psi$. By Lemma~\ref{lem:prop-fin},
  $C\models\svA\tau$, and hence $C\models\svX\tau$; therefore,
  $\Psi_0$ is unsatisfiable over $C$, in contradiction to the fact
  that $\gamma(c)$ satisfies $\Psi$ by construction.

  Since $\Psi$ is satisfiable, we have a coalgebra structure $\zeta$
  satisfying~\eqref{eq:ls-coherence}. It follows by induction over the
  formula structure that for every coalgebraic first-order formula
  $\phi$ and every valuation $v$ in $Y$,
  \begin{equation*}
    \gN,v\models\phi
    \quad\text{iff}\quad \gM,v\models\phi:
  \end{equation*}
  where $\gN=(Y,\zeta,J)$ and $J$ is the induced substructure obtained
  by restricting $I$ to $Y$. The Boolean cases are trivial. The case
  for existential quantification is as in the classical case. The case
  $\xhphi$ is as follows: $\gN,v\models \xhphi$ iff
  $\zeta(v(x))\models\hearts\Sem{\phi}_{\gN,v}
  =\hearts(\Sem{\phi}_{\gM,v}\cap Y)$
  (where the equality holds by induction) iff (by
  $\omega$-boundedness) $\zeta(v(x))\models\hearts A$ for some
  $A\finsubset\Sem{\phi}_{\gM,v}\cap Y$, equivalently
  $\gamma(v(x))\models\hearts A$ by~\eqref{eq:ls-coherence}. The
  latter implies $\gM,v\models \xhphi$ by monotonicity; conversely,
  $\gM,v\models \xhphi$ implies
  $\gamma(v(x))\models\hearts f_{\xhphi}(v)$ by construction,
  and $f_{\xhphi}(v)\finsubset\Sem{\phi}_{\gM,v}\cap Y$. %%\qed
\end{proof}

%\texttt{ It should be possible to modify the preceding theorem to
%  require something like $\omega_1$-accessibility instead.}

%\todo{Lutz, if this is the case, can you make it precise? LS: This is
%  probably wrong since it gives rise to infinitary Skolem terms, which
%  in absence of a depth limitation would fail to form a countable set.}
\begin{exa}
  The above version of the downward L\"owenheim-Skolem theorem applies
  to our main bounded examples (relational, graded, and positive
  Presburger modalities) as well as to probabilistic modalities over
  non-standard or zerodimensional subdistributions, respectively,
  which are $\omega$-bounded but not $k$-bounded for any~$k$
  (Examples~\ref{ex:hyperprob} and~\ref{ex:zerodis}).
\end{exa}
\noindent Finally, we note that the downward L\"owenheim-Skolem theorem
holds also for the one-step compact case; this is in mild
generalization of a corresponding result for the neighbourhood case
proved already by Chang~\cite{Chang73}.
\begin{thm}\label{thm:dlsn}
  Over one-step compact $\Lambda$-structures, $\CPL(\PrLSet,\SoV)$
  satisfies the downward L\"owenheim-Skolem theorem \bro in the same
  formulation as in Theorem~\ref{thm:dls}\brc.
\end{thm}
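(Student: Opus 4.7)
The proof mirrors that of Theorem~\ref{thm:dls} but replaces the $\omega$-bounded machinery by full one-step compactness. Start with $Y_0\subseteq C$ of cardinality~$\lambda$ and close it under Skolem functions for all CPL existential formulas, obtaining $Y$ of cardinality~$\lambda$; let $J$ be the restriction of $I$ to $Y$. Without $\omega$-boundedness, the finite modal Skolem witnesses used in Theorem~\ref{thm:dls} are unavailable, and moreover the analogue of Lemma~\ref{lem:prop-fin} for $\Pow$-valuations fails. We therefore replace ``finite subset'' by ``formula-definable subset'' throughout. For each $c\in Y$, consider the set of one-step constraints to impose on $\zeta(c)\in TY$:
\begin{equation*}
  \Psi_c = \{\epsilon\hearts(\Sem{\phi}_{\gM,v}\cap Y) \mid \phi\in\CPL,\; v \text{ a valuation in } Y,\; \hearts\in\Lambda,\; \gamma(c)\models\epsilon\hearts\Sem{\phi}_{\gM,v}\},
\end{equation*}
with $\epsilon$ ranging over $\{\cdot,\neg\}$. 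If we can produce a $\zeta$ with $\zeta(c)\models\Psi_c$ for all~$c$, an induction on formula structure (the quantifier case handled by Tarski--Vaught thanks to Skolem closure) yields that $\gN=(Y,\zeta,J)$ is an elementary substructure of~$\gM$.

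By one-step compactness, satisfiability of $\Psi_c$ reduces to finite satisfiability; the latter is established, jointly with the elementary-substructure property, by a simultaneous induction on modal rank~$n$ of the two clauses: (a)~the subfamily $\Psi_c^{\leq n}\subseteq\Psi_c$ restricted to formulas~$\phi$ of modal rank~$\leq n$ is satisfiable in $TY$ for every $c\in Y$; (b)~for any~$\zeta$ with $\zeta(c)\models\Psi_c^{\leq n}$ for all~$c$, the model $\gN_\zeta$ preserves CPL formulas of modal rank~$\leq n$. Clause~(b) at $n{+}1$ is routine given~(a): the modal case reduces via~(b) at~$n$ to $\Sem{\phi}_{\gN_\zeta,v}=\Sem{\phi}_{\gM,v}\cap Y$, and then directly appeals to the constraint on~$\zeta(c)$ in $\Psi_c^{\leq n+1}$.

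For clause~(a) at step $n{+}1$, take a finite $\Phi=\{\epsilon_i\hearts_i(\Sem{\phi_i}_{\gM,v_i}\cap Y)\}\subseteq\Psi_c^{\leq n+1}$ with each~$\phi_i$ of modal rank~$\leq n$. If $\Phi$ is unsatisfiable in~$TY$, Lemma~\ref{lem:os-cutfree-complete} yields a sound one-step rule $\svA/\svX$ and a valuation $\tau:\SchV\to\Pow(Y)$ with $\tau(\sva_j)=\Sem{\phi_j}_{\gM,v_j}\cap Y$, $Y\models\svA\tau$, and $\svX\tau=\neg\Land\Phi$. Substituting $\phi_j(z,v_j)$ for each $\sva_j$ in $\svA$ produces a CPL formula $\svA^*(z)$ of modal rank~$\leq n$. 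By clause~(a) at~$n$ (IH), pick $\zeta^{(n)}$ with $\zeta^{(n)}(c)\models\Psi_c^{\leq n}$ for all~$c$; by clause~(b) at~$n$, $\gN_{\zeta^{(n)}}$ preserves rank-$\leq n$ formulas. Hence $Y\models\svA\tau$ translates to $\gN_{\zeta^{(n)}}\models\forall z.\svA^*(z)$, which by preservation lifts to $\gM\models\forall z.\svA^*(z)$, i.e.\ $C\models\svA\hat\tau$ for $\hat\tau(\sva_j)=\Sem{\phi_j}_{\gM,v_j}$. One-step soundness of $\svA/\svX$ then gives $\gamma(c)\models\svX\hat\tau=\neg\Land\{\epsilon_i\hearts_i\Sem{\phi_i}_{\gM,v_i}\}$, contradicting $\gamma(c)\models\epsilon_i\hearts_i\Sem{\phi_i}_{\gM,v_i}$ for each~$i$ guaranteed by construction of~$\Phi$.

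To conclude, every finite subset of $\Psi_c=\bigcup_n\Psi_c^{\leq n}$ lies in some $\Psi_c^{\leq n}$ and is thus satisfiable by~(a); so $\Psi_c$ is satisfiable in~$TY$ by one-step compactness. Choosing $\zeta(c)$ realizing~$\Psi_c$ for every~$c$ and invoking clause~(b) at every rank, we obtain that $\gN=(Y,\zeta,J)$ is an elementary substructure of~$\gM$ of cardinality~$\lambda$. The main obstacle is organising this simultaneous induction so as to avoid circularity: finite satisfiability at rank~$n{+}1$ crucially depends on preservation of rank-$\leq n$ formulas, which in turn depends on producing a~$\zeta$ that satisfies the rank-$\leq n$ constraints established at the previous step.
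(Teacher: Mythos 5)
Your proof is correct, but it takes a genuinely different route from the paper's. The paper sidesteps your rank induction entirely by enlarging the Skolem closure: in addition to the usual witnesses for existential formulas, it fixes, for every one-step \emph{sound} rule $R=\svA/\svX$ and every substitution $\sigma$, a finitary Skolem function $f_{R,\sigma}$ that, whenever $\neg\svX^{x,y}\sigma$ holds at a point of $C$ (with $\svX^{x,y}\sigma$ the CPL rendering of $\svX\sigma$), returns a witness to $\neg\svA\sigma$; closing $Y$ under these as well, the output of Lemma~\ref{lem:os-cutfree-complete} becomes immediately self-contradictory, since the witness to $\neg\svA\tau$ already lies in $Y$ and so $Y\not\models\svA\tau$. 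You instead keep only the standard first-order Skolem hull and recover the missing transfer $Y\models\svA\tau\Rightarrow C\models\svA\hat\tau$ by the interleaved induction on modal rank, bootstrapping through the intermediate coalgebras $\zeta^{(n)}$. Your closing worry about circularity is unfounded: ordering the claims as $B(0), A(0), B(1), A(1),\dots$ (with $B(0)$ being plain first-order preservation, independent of any $\zeta$) makes each step depend only on previously established ones, and the final gluing of $\bigcup_n\Psi_c^{\leq n}$ by one more application of one-step compactness is sound. The trade-off is clear: the paper's argument is a one-shot affair with no auxiliary coalgebra structures, at the price of a more exotic Skolemization; yours shows that the plain Skolem hull already suffices, at the price of the double induction and of invoking one-step compactness at every rank rather than once. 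One small point worth making explicit if you write this up: the valuation $\tau$ produced by Lemma~\ref{lem:os-cutfree-complete} is pinned down on the variables of $\svA$ by the matching $\svX\tau=\neg\Land\Phi$ (every schematic variable of $\svA$ occurs in $\svX$), which is what legitimizes defining $\svA^*$ by substituting the corresponding $\phi_j$.
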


\begin{proof}
  Let $\Phi$ be a set of coalgebraic first-order formulas in
  $\CPL(\PrLSet,\SoV)$, and let $\gM=(C,\gamma,I)$ be such that
  $\gM\models\Phi$. Pick Skolem functions for all formulas
  $\exists x.\,\phi$ as usual, and for every one-step sound one-step
  rule $R=\svA/\svX$ fix a Skolem function that given an element
  $x\in C$ satisfying some instance of $\neg\svX$ picks an element of
  $C$ that satisfies the corresponding instance of $\neg\svA$. More
  precisely: let $\sigma:\FoV\to\CPL(\PrLSet,\SoV)$ be a
  substitution, let $x,y$ be variables with $x$ fresh, let
  $\svX^{x,y}\sigma$ be the formula obtained by replacing in $\svX$
  each modal operator application $\PrL \sva$ with
  $x\PrL\cmh{y}{\sigma(\sva)}$, and let $v$ be a valuation such that
  $\gM,v\models\neg\svX^{x,y}\sigma$. Then there exists a $y$-variant
  $v'$ of $v$ such that $\gM,v'\models\neg\svA\sigma$, and the Skolem
  function $f_{R,\sigma}$ assigns such a $v'(y)$ to
  $v|_{\FV(\svX^{x,y}\sigma)}$. As $\FV(\svX^{x,y}\sigma)$ is finite,
  $f_{R,\sigma}$ is a finitary function, so that closing a given
  subset $Y_0\subseteq C$ of cardinality $|Y_0|=\lambda$ under the
  Skolem functions yields a set $Y\subseteq C$ of the same cardinality
  $|Y|=\lambda$.

  The coalgebra structure $\zeta$ that we are to define on $Y$ has to
  satisfy the \emph{coherence} condition
  \begin{equation*}
    \zeta(c)\models\PrL(\Sem{\rho}^y_v\cap Y)\text{ iff }
    \gamma(c)\models\PrL\Sem{\rho}^y_v
  \end{equation*}
  for all $c\in Y$, all formulas $\rho$, and all valuations $v$ in
  $Y$, where the second condition is by definition equivalent to
  $c\in\Sem{x\PrL\cmh{y}{\rho}}^x_v$. Once this is established, we can
  show as usual that $\gN=(Y,\zeta,J)$, with $J$ interpreting $\Sigma$
  by restricting $I$ to $Y$, is an elementary substructure of
  $(C,\gamma,I)$, and we are done.

  Now assume that $\zeta(c)$ as required fails to exist, which means
  that the set $\Phi$ of constraints of the form
  $\epsilon\PrL(\Sem{\rho}^y_v\cap Y)$ (where $\epsilon$ stands for
  either nothing or negation) on $\zeta(c)$ determined by the
  coherence condition is one-step unsatisfiable. By one-step
  compactness, already some finite subset $\Phi_0$ of $\Phi$ is
  unsatisfiable. By Lemma~\ref{lem:os-cutfree-complete} and the format
  of $\Phi$, there exist a sound one-step rule $R=\svA/\svX$ and a
  substitution $\sigma:\SchV\to\CPL(\PrLSet,\SoV)$ such that
  $Y\models\svA\tau$ and $\svX\tau=\neg\Land\Phi_0$ where
  $\tau(\sva)=\Sem{\sigma(\sva)}_v^y\cap Y$. However, by construction
  of~$\Phi$ we have $\gamma(c)\in\Sem{\neg\svX}\hat\tau$, where
  $\hat\tau$ is the $\Pow(C)$-valuation sending $\sva\in \SchV$ to
  $\Sem{\sigma(\sva)}^y_v$, and hence
  $\gM,v\models\neg\svX^{x,y}\sigma$ where $x$ is fresh and we assume
  w.l.o.g.\ that $v(x)=c$. Then
  $f_{R,\sigma}(v|_{\FV(\svX^{x,y}\sigma)})\in\Sem{\neg\svA}\tau$,
  in contradiction to $Y\models\svA\tau$. %%\qed
\end{proof}
\begin{exa}
  Besides the plain neighbourhood case, Theorem~\ref{thm:dlsn} covers
  all instances of CPL defined by imposing rank-1 frame conditions on
  neighbourhood frames, e.g.\ CPL over monotone neighbourhood frames
  and various deontic logics.
\end{exa}

%\begin{proof}
%\end{proof}

%\todo{******}

%\medskip

%\todo{TL: Finitary compactification business?}

%\todo{TL: for the time being, EF section is in a separate file}

%\input{efsection}

\section{Proof theory} \label{sec:proof}

\subsection{Sequent system for CPL} \label{sec:axiom}

In \S~\ref{sec:completeness}, we have seen 
a complete Hilbert calculus for coalgebraic predicate logic. 
The present goal is a %cut-free sequent system.
%\noindent Here, we are complementing the axiomatisation of coalgebraic 
%predicate logic by a
 cut-free, complete sequent calculus.
Our basis is the system $\mathsf{G1c}$ of \cite{Troelstra:1996:BPT} that we extend with modal
rules describing the (fixed) $\Lambda$-structure. Our %proof-theoretic    
 treatment of equality, on the other hand, is inspired by Kanger~\cite{Kanger1957}, Degtyarev and Voronkov~\cite{DegtyarevVoronkov2001} and Seligman \cite{Seligman01:jlc}. %Among these references, as far as we know, \cite{Seligman01:jlc} alone discusses how to eliminate the cut rule. 
 In fact, the syntactic cut-elimination proof presented here is based on Seligman's ideas. 
%proof-theoretic argument in our setting to establish a 
 % syntactic cut-elimination.

%\tlntnew{finish}
 
 We take \emph{sequents} to be pairs $(\Gamma, \Delta)$, written $\Gamma \To \Delta$ 
where $\Gamma, \Delta \subseteq \Lang$ are finite multisets. 
The sequent calculus for coalgebraic predicate logic contains four types of rules: the standard logical and structural rules for first-order logic, rules for equality and rules for the modal operators. The logical rules are standard as in Table \ref{tab:fo-g-rules}. The formula introduced in the conclusion of a logical rule is called the \emph{principal} formula of the rule. 
 This applies, in particular, to the structural rules in Table \ref{tab:fo-g-rules}: the formula $\phi$ in the conclusion is the principal one. Note that, somewhat counterintuitively, in the equality rules the formula $x = y$ in the conclusion is the \emph{context}, i.e., the only  \emph{non-principal} formula and all the remaining ones are \emph{principal}!  %\tlntnew{Terminology for equality is a mess}
%In structural rules, the formula $\phi$ in the conclusion of a rule is a {\em principal} formula. In the equality rules in Table \ref{tab:fo-g-rules}, the formula $x = y$ in the conclusion is called a \emph{context} or a \emph{non-principal} formula and the other formulas in the conclusion is called \emph{principal formulas}. 

%In our terminology, structural rules have principal formulas, but equality rules do not. 
%In particular, structural rules the formula $\phi$ in the conclusion of a rule is a {\em principal} formula. 
%Also in the equality rules in Table \ref{tab:fo-g-rules}, the formula introduced in the conclusion of a logical rule is called the \emph{principal} formula of the rule. In particular we note that $x = y$ in the conclusion of an equality rule is not principal.  \tlntnew{Hm, really?}

To account for the modal operators, we incorporate the one-step rules
$\Rules$ into the sequent system. In principle, we just generate a
sequent rule within CPL from every one-step rule in~$\Rules$. Only for
presentational purposes, we factor this process through an alternative
modal rule format where propositional operators are fully dissolved
into sequents:
\begin{defi}
  A rule
  \begin{equation*}
    \vcenter{\infer{\hearts_1 \vec{\sva}_1, \dots, \hearts_n \vec{\sva}_n \To \hearts_{n+1}
	\vec{\sva}_{n+1}, 
	\dots, \hearts_{n+m} \vec{\sva}_{n+m}}{\Gamma_1 \To \Delta_1  \cdots  \Gamma_k \To
	\Delta_k}}
  \end{equation*}
  \emph{represents} a one-step rule $\svA/\svX$ \emph{in sequent
    format} if $\svA$ is propositionally equivalent to
  $\Land_{i=1}^k((\Land\Gamma_i)\to(\Lor\Delta_i)$, and $\svX$ is
  propositionally equivalent to
  $(\Land_{j=1}^n\PrL_j\vec\sva_j)\to(\Lor_{j=n+1}^m\PrL_j\vec\sva_j)$. We
  transfer the existing syntactic restrictions on one-step rules
  according to Definition~\ref{def:hilonestep} to this format by
  requiring that every schematic variable occurring in the premise
  occurs also in the conclusion, and every schematic variable occurs
  at most once in the conclusion.
\end{defi}
\noindent It is clear that every one-step rule can be represented in
sequent format (just transform the premise into conjunctive normal
form and then trivially translate disjunctive clauses into sequents in
both premise and conclusion). Subsequently, we generate a sequent rule
$\Seq(R)$ in CPL syntax, adding weakening contexts $\Sigma$, $\Theta$
to both the conclusion and all the premises:
\[
\infer{\Sigma, z \hearts_1 \boldsymbol{\lcomp x_{1} \col \phi_1 \rcomp}, \dots, z \hearts_n \boldsymbol{\lcomp x_{n} \col \phi_n \rcomp} \To \\ z \hearts_{n+1} \boldsymbol{\lcomp x _{n+1}\col \phi_{n+1} \rcomp}, \dots, z \hearts_{n+m} \boldsymbol{\lcomp x_{n+m} \col \phi_{n+m} \rcomp, \Theta}
}{
\Sigma, \Gamma_1 \sigma_{\boldsymbol{x}}^{y}  \To  \Delta_1 \sigma_{\boldsymbol{x}}^{y}, \Theta &\cdots &\Sigma, \Gamma_k \sigma_{\boldsymbol{x}}^{y}  \To  \Delta_k \sigma_{\boldsymbol{x}}^{y}, \Theta
}
\]
with syntactic details as summarized in Table \ref{tab:fo-g-rules}.
The formulas $z \hearts_i \boldsymbol{ \lcomp x \col \phi_i \rcomp}$
are the principal formulas of $\Seq(R)$.

\newcommand{\yfresh}{\mathbf{\dagger}_y}

\begin{table}[htbp]
\hrule

\small

\caption{Sequent System of Coalgebraic Predicate Logic}
\label{tab:fo-g-rules}

\vspace{\tbskip}

In all the rules below, $\yfresh$ means that $y$ is \textbf{fresh in the conclusion}.

\vspace{\tbskip}

Axioms
\[
\infer[\rul{Ax}]{\phi \To \phi}{}
\quad
\infer[\rul{L\bot}]{\bot \To }{}
\quad
\infer[\rul{R=}]{ \To x = x}{}
\]

\vspace{\tbskip}

%\vspace{1mm}
Logical Rules
\[
\infer[\rul{R\to}]{\Gamma \To \Delta, \phi \to \psi}{\phi, \Gamma \To \Delta, \psi} \quad 
\infer[\rul{L\to}]{\phi \to \psi, \Gamma \To \Delta}{\Gamma \To \Delta, \phi & \psi, \Gamma \To \Delta}
\]
\[
\infer[\rul{R \forall}\yfresh]{\Gamma \To \Delta, \forall x. \phi}{\Gamma \To \Delta, \phi[y/x]} \quad 
\infer[\rul{L\forall}]{\forall x. \phi, \Gamma \To \Delta}{\phi[z/x], \Gamma \To \Delta}
\]
%\hrule

\vspace{\tbskip}

Equality Rules
\[
\infer[\rul{L=_{1}}]{x = y, \Gamma[y/z] \To \Delta[y/z]}{x = y, \Gamma[x/z] \To \Delta[x/z]} \quad 
\infer[\rul{L=_{2}}]{x = y, \Gamma[x/z] \To \Delta[x/z]}{x = y, \Gamma[y/z] \To \Delta[y/z]}
\]

\vspace{\tbskip}

Modal Rules $\Seq(\Rules)$: for every one-step rule $R \in \Rules$, 
\[
\infer[{\Seq(R)\yfresh}]{\deduce{z \hearts_{n+1} \boldsymbol{\lcomp x_{n+1}
	\col \phi_{n+1} \rcomp}, \dots, z \hearts_{n+m} \boldsymbol{\lcomp x_{n+m}
	\col \phi_{n+m} \rcomp}, \Theta}{\Sigma, z \hearts_1 \boldsymbol{\lcomp
	x_{1} \col \phi_1 \rcomp}, \dots, z \hearts_n \boldsymbol{\lcomp x_{n}
	\col \phi_n \rcomp} \To}}{
\Sigma, \Gamma_{1} \sigma_{\boldsymbol{x}}^{y}  \To \Delta_{1} \sigma_{\boldsymbol{x}}^{y}, \Theta  
&\cdots &\Sigma, \Gamma_{k} \sigma_{\boldsymbol{x}}^{y} \To \Delta_{k} \sigma_{\boldsymbol{x}}^{y}, \Theta } 
\]
where

\vspace{\tbskip}

\begin{itemize}
%\item $y$ is \textbf{fresh in the conclusion}, 
\item $R$ is represented in sequent format as \, $\vcenter{\infer{\hearts_1 \vec{\sva}_1, \dots, \hearts_n \vec{\sva}_n \To \hearts_{n+1}
	\vec{\sva}_{n+1}, 
	\dots, \hearts_{n+m} \vec{\sva}_{n+m}}{\Gamma_1 \To \Delta_1 & \cdots & \Gamma_k \To
	\Delta_k}}$ \vspace{\tbskip}
\item $\boldsymbol{\lcomp x_{i} \col \phi_i \rcomp} = \lcomp x^{1}_{i} \col \phi_i^1 \rcomp
\dots \lcomp x^{\arty\PrL}_{i} \col \phi_i^{\arty\PrL} \rcomp$ is a finite sequence of comprehension formulas according to $\arty\hearts_i$, and
\item the substitution $\sigma_{\boldsymbol{x}}^{y}$ sends $\sva_{i}^{j}$ to the formula $\phi_{i}^{j}[y/x_{i}^{j}]$ of $\Lang$. 
\end{itemize}
%\hrule
%\vspace{1mm}

\vspace{\tbskip}

Structural Rules

\vspace{\tbskip}

\[
\infer[\rul{RW}]{\Gamma \To \Delta,  \phi}{\Gamma \To \Delta} \quad 
\infer[\rul{LW}]{\phi, \Gamma \To \Delta}{\Gamma \To \Delta}
\]
\[
\infer[\rul{RC}]{\Gamma \To \Delta, \phi}{\Gamma \To \Delta, \phi, \phi} \quad 
\infer[\rul{LC}]{\phi, \Gamma \To \Delta}{\phi, \phi, \Gamma \To \Delta}
\]
%\hrule
%\vspace{1mm}

\vspace{\tbskip}

Cut Rule (optional)
\[
\infer[\rul{Cut}]{\Gamma,\Sigma \Rightarrow \Delta, \Theta}{\Gamma \Rightarrow \Delta, \phi & \phi, \Sigma \Rightarrow \Theta}
\]
\hrule
\vspace{1mm}
\end{table}

\begin{exa}
\label{ex:rulesets}
\noindent
%If $\mathsf{K}$ is the (one-step sound and one-step complete)
Recall from Example \ref{ex:s1sc} that the rule set for the normal modal logic $\mathsf{K}$  consists of rules that are represented in sequent format as
\begin{equation*} 
\infer[\rul{K_n}]{\Diamond \sva \To \Diamond \svb_1, \dots,
\Diamond \svb_n}{\sva \To \svb_1, \dots, \svb_n}
\end{equation*}
for all $n \geqslant 0$. We obtain the following first-order version
\begin{equation*} 
\infer[\rul{\Seq(K_n)}\yfresh]{\Sigma, z \Diamond \lcomp x_{0} \col \phi_0 \rcomp \To z \Diamond
	\lcomp x_{1} \col \phi_1 \rcomp, \dots,  z \Diamond \lcomp x_{n}
	\col \phi_n \rcomp, \Theta}{
\Sigma, \phi_{0}[y/x_{0}] \To \phi_{1}[y/x_{1}],\dots,\phi_{n}[y/x_{n}], \Theta
}
\end{equation*}
(where $y$ is fresh in the conclusion) 
by the previous definition. Recall also  that  modal neighbourhood semantics is
axiomatised by a one-step rule that is represented in sequent format as
\begin{equation*} 
\infer[\rul{C}]{\Box \sva \To \Box \svb}{\sva \To \svb \quad \svb \To \sva},
\end{equation*}
which expresses that $\Box$ is a congruential operator. The first order version of $\rul{C}$  %$(\mathsf{C})$ 
 then reads
\begin{equation*}
\infer[{\Seq(\rul{C})\yfresh}]{\Sigma, z \Box \lcomp x_{0} \col \phi_0 \rcomp \To z \Box
	\lcomp x_{1} \col \phi_1 \rcomp, \Theta}{
\Sigma, \phi_{0}[y/x_{0}] \To \phi_{1}[y/x_{1}], \Theta & 
\Sigma, \phi_{1}[y/x_{1}] \To \phi_{0}[y/x_{0}], \Theta & }
\end{equation*}
(where $y$ is fresh in the conclusion) 
which %, as we see in this paper,
 provides a complete and, as we are going to see below, cut-free
axiomatisation of Chang's original logic.
\end{exa}

%*******

\noindent We write $\Seq\Rules \entails \Gamma \To \Delta$ if
$\Gamma \To \Delta$ can be derived using the logical rules, equality
rules, and axiom rules of Table \ref{tab:fo-g-rules}, together with
the rules $\Seq(R)$ from Table \ref{tab:fo-g-rules} for every rule
$R \in \Rules$.We write $\Seq\Rules\Cut \entails \Gamma \To \Delta$ if
the \emph{cut rule} $\rul{Cut}$ of Table \ref{tab:fo-g-rules} is used
additionally.  If $\mathfrak{M} = (C, \gamma, I)$ is a first-order
model over a $\Lambda$-structure, we write
$\mathfrak{M}, \valu \models \Gamma \To \Delta$ if
$\mathfrak{M}, \valu \models \Land \Gamma \to \Lor \Delta$ and, as
usual $\mathfrak{M} \models \Gamma \To \Delta$ if
$\mathfrak{M}, \valu \models \Gamma \To \Delta$ for all variable
assignments $\valu$ and finally $ \models \Gamma \To \Delta$ if
$\mathfrak{M} \models \Gamma \To \Delta$ for all first-order models
$\mathfrak{M}$ over the corresponding structure, which we elide in
the notation.

\begin{prop}
\label{prop:soundness_seq_rule}
For any one-step rule $R \in \Rules$ and any model $\mathfrak{M}$ = $(C,\gamma,I)$, $\Seq(R)$ preserves the validity on $\mathfrak{M}$. 
\end{prop}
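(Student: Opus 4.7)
The plan is to reduce validity of the CPL-level rule $\Seq(R)$ on $\mathfrak{M}$ to one-step soundness of the underlying rule $R = \svA/\svX$ (which holds by Assumption~\ref{conv:1ss}). Fix a valuation $\valu$ and assume $\mathfrak{M}, \valu \models \Sigma$, $\mathfrak{M}, \valu \not\models \Theta$, and $\mathfrak{M}, \valu \models z\hearts_i \boldsymbol{\lcomp x_i \col \phi_i \rcomp}$ for each $i = 1, \dots, n$. The goal is to produce $j \in \{n+1, \dots, n+m\}$ with $\mathfrak{M}, \valu \models z\hearts_j \boldsymbol{\lcomp x_j \col \phi_j \rcomp}$.

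The first step is to translate the CPL-level data into a one-step valuation over $C$. Using the semantics of the modal operator and a standard substitution lemma, define $\tau \colon \SchV \to \Pow(C)$ by
\[
\tau(\sva_i^j) \;\deq\; \lsem \phi_i^j[y/x_i^j] \rsem^{y}_{\mathfrak{M},\valu} \;=\; \lsem \phi_i^j \rsem^{x_i^j}_{\mathfrak{M},\valu},
\]
where the equality uses that $y$ is fresh in the conclusion (and in particular in $\phi_i^j$). Under this choice of $\tau$, the defining clause~\eqref{eq:deny} of the modal semantics immediately gives the translation
\[
\mathfrak{M}, \valu \models z \hearts_i \boldsymbol{\lcomp x_i \col \phi_i \rcomp} \;\iff\; \gamma(\valu(z)) \in \lsem \hearts_i \rsem_C(\tau(\sva_i^1), \dots, \tau(\sva_i^{\arty{\hearts_i}})) \;\iff\; \gamma(\valu(z)) \models_{TC,\tau} \hearts_i \vec{\sva}_i,
\]
and similarly for the modal formulas on the right.

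The second step is to see that validity of the premises on $\mathfrak{M}$ forces $C, \tau \models \svA$. Indeed, fix any $i \leq k$ and any $c \in C$, and apply the assumed validity of $\Sigma, \Gamma_i \sigma_{\boldsymbol{x}}^y \To \Delta_i \sigma_{\boldsymbol{x}}^y, \Theta$ at the valuation $\valu[c/y]$. Since $y$ is fresh in the conclusion, it occurs in neither $\Sigma$ nor $\Theta$, so the standing assumptions $\mathfrak{M}, \valu \models \Sigma$ and $\mathfrak{M}, \valu \not\models \Theta$ transfer to $\valu[c/y]$. Hence $\mathfrak{M}, \valu[c/y] \models \bigwedge \Gamma_i \sigma_{\boldsymbol{x}}^y \to \bigvee \Delta_i \sigma_{\boldsymbol{x}}^y$, which by the definition of $\tau$ is exactly $c \in \tau(\bigwedge \Gamma_i \to \bigvee \Delta_i)$. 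As $c$ was arbitrary, $\tau(\bigwedge \Gamma_i \to \bigvee \Delta_i) = C$ for each $i$, which is precisely $C, \tau \models \svA$.

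Finally, by one-step soundness of $R$ (Assumption~\ref{conv:1ss}), $TC, \tau \models \svX$, and in particular $\gamma(\valu(z)) \models_{TC,\tau} \svX$. Combined with the translation established in the first step and the assumption that $\gamma(\valu(z)) \models_{TC,\tau} \hearts_i \vec{\sva}_i$ for $i \leq n$, the propositional shape of $\svX$ (namely $\bigwedge_{j\le n}\hearts_j\vec{\sva}_j \to \bigvee_{j>n}\hearts_j\vec{\sva}_j$) yields some $j \in \{n+1, \dots, n+m\}$ with $\gamma(\valu(z)) \models_{TC,\tau} \hearts_j \vec{\sva}_j$, i.e.\ $\mathfrak{M}, \valu \models z \hearts_j \boldsymbol{\lcomp x_j \col \phi_j \rcomp}$, as required. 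The only delicate point, to be discharged carefully, is the bookkeeping in the first two steps: the freshness of $y$ in the conclusion is used both to make the substitution lemma applicable and to keep $\Sigma, \Theta$ invariant under $y$-modifications, so that the per-$c$ satisfaction of the premises assembles into the single statement $C, \tau \models \svA$.
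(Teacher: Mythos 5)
Your proof is correct and follows essentially the same route as the paper's: both define the one-step valuation $\tau(\sva_i^j)=\lsem\phi_i^j[y/x_i^j]\rsem^y_{\mathfrak{M},\valu}$, use the freshness of $y$ both to identify this with $\lsem\phi_i^j\rsem^{x_i^j}_{\mathfrak{M},\valu}$ and to keep $\Sigma,\Theta$ invariant under $y$-modification, derive $C,\tau\models\svA$ pointwise from the validity of the premises, and then invoke one-step soundness of $R$ to conclude. There is no substantive difference from the paper's argument.
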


\begin{proof}
Let $R \in \Rules$ be represented in sequent format as $$\infer[]{\hearts_1 \vec{\sva}_1, \dots, \hearts_n \vec{\sva}_n \To \hearts_{n+1} \vec{\sva}_{n+1}, \dots, \hearts_{n+m} \vec{\sva}_{n+m}}{\Gamma_1 \To \Delta_1 & \cdots & \Gamma_k \To \Delta_k}.$$ By Assumption~\ref{conv:1ss}, $R$ is one-step sound. Let $\mathfrak{M}$ = $(C,\gamma, I)$ be a model. To show that $\Seq(R)$ preserves validity, assume that all of $\Sigma, \Gamma_i \sigma_{{\boldsymbol x}}^{y} \To \Delta_i  \sigma_{{\boldsymbol x}}^{y}, \Theta$ ($1\leqslant i \leqslant k$) are valid in $\mathfrak{M}$. Fix any variable assignment $v$ on $C$. To show that the conclusion of $\Seq(R)$ is true at $M,v$, assume that $M,v \models \bigwedge  \Sigma$ and $M,v \not\models \bigvee \Theta$. 
Our goal is to show that $$M,v \models \bigwedge_{1 \leqslant i \leqslant n} z \hearts_i \boldsymbol{\lcomp x_{i} \col \phi_i \rcomp} \to \bigvee_{1 \leqslant j\leqslant m} z \hearts_{n+j}\boldsymbol{\lcomp x_{n+j} \col \phi_{n+j}\rcomp},$$ i.e., 
\begin{center}
%\begin{footnotesize}
if $\gamma(v(z)) \in  \bigcap_{1 \leqslant i \leqslant n} \lsem \hearts_i \rsem_{C,v}(\lsem \phi_i^{1}  \rsem^{x_{i}^{1}}_{C,v}, \ldots,  \lsem \phi_i^{\arty\hearts_{i}}  \rsem^{x_{i}^{\arty\hearts_{i}}}_{C,v})$ \\
then $\gamma(v(z)) \in  \bigcup_{1 \leqslant j \leqslant m} \lsem \hearts_{n+j} \rsem_{C,v}(\lsem \phi_{n+j}^{1}  \rsem^{x_{n+j}^{1}}_{C,v}, \ldots,  \lsem \phi_{n+j}^{\arty\hearts_{n+j}}  \rsem^{x_{n+j}^{\arty \hearts_{n+j}}}_{C,v})$ . 
%\end{footnotesize}
\end{center}

\noindent Let us define a valuation $\tau: \SchV \to \Pow(C)$ by $\tau(\boldsymbol{p_{i}^{j}})$ = $\lsem \phi_i^{j} [y/x_{i}^{j}] \rsem^{y}_{C,v}$. To show that $C, \tau \models  \bigwedge \Gamma_i \to \bigvee \Delta_i$ for all $1\leqslant i \leqslant k$, let us fix any $c \in C$. 
Since $y$ is fresh in the conclusion of $\Seq(R)$, it follows from $M,v \models \bigwedge  \Sigma$ and $M,v \not\models \bigvee \Theta$ that $M,v[c/y] \models \bigwedge  \Sigma$ and $M,v[c/y] \not\models \bigvee \Theta$. Then from our assumption of the validity of all premises of $\Seq(R)$ on a pair $(M,v)$, we obtain $M,v[c/y] \models \Gamma_i \sigma_{{\boldsymbol x}}^{y} \To \Delta_i  \sigma_{{\boldsymbol x}}^{y}$, which implies $c \in \tau(\bigwedge \Gamma_i \to \bigvee \Delta_i)$, as desired. Since $R$ is one-step sound, we have that $TC, \tau \models {\bigwedge}_{1\leqslant i \leqslant n} \hearts_i \vec{\sva}_i \to {\bigvee}_{1\leqslant j \leqslant m} \hearts_{n+j} \vec{\sva}_{n+j}$. Because $\tau(\boldsymbol{p_{i}^{j}})$ = $\lsem \phi_i^{j} [y/x_{i}^{j}] \rsem^{y}_{C,v}$ = $\lsem \phi_i^{j}  \rsem^{x_{i}^{j}}_{C,v}$ by freshness of $y$, we can conclude our desired implication above. 
\end{proof}

\noindent We show soundness and completeness of the sequent system $\Seq\Rules$ by translating into, and from, the Hilbert system $\Hilb\Rules$ which is known to be (semantically) complete when $\Rules$ is strongly one-step complete. Note that $\Hilb\Rules$ does not include the the \bdpl-axioms. 
Before showing that both systems $\Hilb\Rules$ and $\Seq\Rules\mathsf{Cut}$ have the same deductive power, we note one consequence of the congruence rule (Remark~\ref{rem:cong}) provided that the rules absorb congruence. We introduce the concept of absorption in a slightly more general form which will be used later. 

\begin{defi}
We say that a finite set $\mathbb{S}$ of sequents \emph{covers} a finite set $\mathbb{S}'$ of sequents if each element $\Gamma \Rightarrow \Delta$ of $\mathbb{S}'$ contains some element $\Pi \Rightarrow \Sigma$ of $\mathbb{S}$ in the sense that $\Pi \subseteq \Gamma$ and $\Sigma \subseteq \Delta$.  
We write $\mathbb{S} \rhd \mathbb{S}'$ if $\mathbb{S}$ covers $\mathbb{S}'$ where we identify sequents with singleton sets. A set $\Rules$ of rules \emph{absorbs} a rule $\Sigma_{1} \To \Theta_{1}, \cdots, \Sigma_{m} \To \Theta_{m}/ \Sigma \To \Theta$ if there exists a rule $R = \Gamma_{1}\To \Delta_{1}, \cdots, \Gamma_{n} \To \Delta_{n} / \Gamma_{R} \To \Delta_{R} \in \Rules$ such that 
\[
\{ \Sigma_{1} \To \Theta_{1}, \ldots, \Sigma_{m} \To \Theta_{m} \} \rhd 
\{ \Gamma_{1} \To \Delta_{1}, \ldots, \Gamma_{n} \To \Delta_{n}  \}
\]
and $\Gamma_{R} \To \Delta_{R} \rhd \Sigma \To \Theta$. 
A rule set \emph{absorbs congruence} if it absorbs the rule 
\[
\infer[\rul{\mathsf{Cong}\hearts}]{
\hearts(p_1, \ldots, p_n) \To \hearts(q_1, \ldots, q_n)
}{
p_1 \To q_1 & \cdots & p_n \To q_n &  q_1 \To p_1 & \cdots & q_n \To p_n 
} 
\]
and it 
\emph{absorbs monotonicity of $\hearts$ in the $i$-th argument} if the rule
\[
\infer[\Moni]{\hearts(p_1, \dots, p_n)
\To \hearts(p_1, \dots, p_{i-1}, q_i, p_{i+1}, \dots p_n)}{p_i \To q_i}
\]
is absorbed.
\end{defi}

\begin{lem}
\label{lem:ad_cong}
When $\Rules$ absorbs congruence, the following congruence rule
\begin{equation*}
\infer[\mbox{$\rul{Cong}\hearts$}]{\Sigma, z \hearts \boldsymbol{\lcomp x_{0} : \phi_{0} \rcomp} \To  z \hearts \boldsymbol{\lcomp x_{1} : \phi_{1} \rcomp}, \Theta}{
\{ \Sigma, \phi_{0}^{j}[y/{x_{0}^{j}}]  \To \phi_{1}^{j}[y/{x_{1}^{j}}], \Theta & \Sigma, \phi_{1}^{j}[y/{x_{1}^{j}}] \To \phi_{0}^{j}[y/{x_{0}^{j}}], \Theta \,|\, 1 \leqslant j \leqslant n \} 
}
\end{equation*}
$($where $y$ is fresh in the conclusion and $n$ is the arity of $\heartsuit$$)$ is admissible in $\Seq\Rules$ and $\Seq\Rules\Cut$.
\end{lem}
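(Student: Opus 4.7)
The plan is to exploit the absorption hypothesis directly: the assumption gives us some rule $R \in \Rules$ that syntactically subsumes $\rul{Cong}\hearts$ at the level of schematic sequents, and then the rule $\Seq(R)$ applied in the concrete CPL setting, together with weakening, will simulate $\rul{Cong}\hearts$ step by step. Since $\Seq\Rules \subseteq \Seq\Rules\Cut$, admissibility in the former immediately gives admissibility in the latter, so I would only argue about $\Seq\Rules$.

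Unpacking the definition of absorption, we obtain a rule
\[
R \;=\; \Gamma_1 \To \Delta_1,\, \ldots,\, \Gamma_k \To \Delta_k \;/\; \Gamma_R \To \Delta_R \;\in\; \Rules
\]
such that, first, for every $i \leqslant k$ there exist $j \leqslant n$ and a choice of orientation making either $p_j \in \Gamma_i, q_j \in \Delta_i$ or $q_j \in \Gamma_i, p_j \in \Delta_i$ hold, and second, $\Gamma_R \subseteq \{\hearts(p_1,\dots,p_n)\}$ and $\Delta_R \subseteq \{\hearts(q_1,\dots,q_n)\}$. The key step is to instantiate $\Seq(R)$ with the substitution $\sigma_{\boldsymbol{x}}^{y}$ sending $p_j \mapsto \phi_0^j[y/x_0^j]$ and $q_j \mapsto \phi_1^j[y/x_1^j]$ (with $y$ the fresh variable from the conclusion of $\rul{Cong}\hearts$), and with weakening contexts $\Sigma$, $\Theta$ taken exactly as in the conclusion of $\rul{Cong}\hearts$. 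Under this substitution, each premise $\Sigma, \Gamma_i \sigma_{\boldsymbol{x}}^{y} \To \Delta_i \sigma_{\boldsymbol{x}}^{y}, \Theta$ of $\Seq(R)$ contains either $\Sigma, \phi_0^j[y/x_0^j] \To \phi_1^j[y/x_1^j], \Theta$ or its reverse as a subsequent, both of which are by hypothesis among the premises of $\rul{Cong}\hearts$. So each premise of $\Seq(R)$ follows from a premise of $\rul{Cong}\hearts$ by applications of $\rul{LW}$ and $\rul{RW}$.

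Applying $\Seq(R)$ then yields $\Sigma, \Gamma_R \sigma'_{\boldsymbol{x}}^{y} \To \Delta_R \sigma'_{\boldsymbol{x}}^{y}, \Theta$, where $\sigma'$ is the comprehension-level version of $\sigma$ turning $\hearts(p_1,\dots,p_n)$ into $z\hearts\boldsymbol{\lcomp x_0 : \phi_0 \rcomp}$ and $\hearts(q_1,\dots,q_n)$ into $z\hearts\boldsymbol{\lcomp x_1 : \phi_1 \rcomp}$. Because $\Gamma_R \To \Delta_R$ is covered by (i.e.\ contained in) $\hearts(p_1,\dots,p_n) \To \hearts(q_1,\dots,q_n)$, what we obtain is a subsequent of the desired conclusion $\Sigma, z\hearts\boldsymbol{\lcomp x_0 : \phi_0 \rcomp} \To z\hearts\boldsymbol{\lcomp x_1 : \phi_1 \rcomp}, \Theta$; a final application of $\rul{LW}$ and/or $\rul{RW}$ introduces the remaining principal formulas if $\Gamma_R$ or $\Delta_R$ happens to be empty, and we are done. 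The only thing to verify is the freshness side-condition on $y$ in $\Seq(R)$, but $y$ is fresh in the conclusion of $\rul{Cong}\hearts$ by assumption and appears in neither $\Sigma$, $\Theta$, nor the principal formulas, so this is immediate.

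There is no real obstacle here: the definition of absorption has been arranged precisely to make this bookkeeping go through, and the argument is purely structural, making no use of cut, so admissibility in $\Seq\Rules$ and in $\Seq\Rules\Cut$ follow together.
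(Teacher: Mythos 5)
Your proof is correct and follows essentially the same route as the paper's: fix the absorbing rule $R$ provided by the absorption hypothesis, instantiate $\Seq(R)$ with the substitution sending the schematic variables to $\phi_{i}^{j}[y/x_{i}^{j}]$, obtain its premises from those of $\rul{Cong}\hearts$ by weakening (using the covering of the premises), and finish with weakening on the conclusion (using the covering of $\Gamma_{R}\To\Delta_{R}$ by $\hearts\vec{p}\To\hearts\vec{q}$). The only cosmetic difference is that you spell out the freshness check and the $\Seq\Rules\Cut$ case explicitly, which the paper leaves implicit.
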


\begin{proof}
Since $\Rules$ absorbs congruence, we can find a one-step rule $R$ =  $\Gamma_{1}\To \Delta_{1}, \cdots, \Gamma_{m} \To \Delta_{m} / \Gamma_{R} \To \Delta_{R} \in \Rules$ such that $\{ p_{j} \To q_{j}, q_{j} \To p_{j} \,|\, 1 \leqslant j \leqslant n \} \rhd 
\{ \Gamma_{1} \To \Delta_{1}, \ldots, \Gamma_{m} \To \Delta_{m}  \}$ and $\Gamma_{R} \To \Delta_{R} \rhd \hearts(p_1, \ldots, p_n) \To \hearts(q_1, \ldots, q_n)$. Fix such a one-step rule $R$. 
Assume that $\Sigma, \phi_{0}^{j}[y/{x_{0}^{j}}]  \To \phi_{1}^{j}[y/{x_{1}^{j}}], \Theta$ and $\Sigma, \phi_{1}^{j}[y/{x_{1}^{j}}] \To \phi_{0}^{j}[y/{x_{0}^{j}}], \Theta$ are derivable in $\Seq\Rules$ for all $1 \leqslant j \leqslant n$. 
Let us define the substitution $\sigma_{\boldsymbol{x}}^{y}$ which sends each $\boldsymbol{p}^{j}_{i}$ to a formula $\phi^{j}_{i}[y/x_{i}^{j}]$, where $i$ = $0$ or $1$. 
Since $\{ p_{j} \To q_{j}, q_{j} \To p_{j} \,|\, 1 \leqslant j \leqslant n \} \rhd 
\{ \Gamma_{1} \To \Delta_{1}, \ldots, \Gamma_{m} \To \Delta_{m}  \}$, we can obtain the derivability of 
$\Sigma, \Gamma_{k}\sigma_{\boldsymbol{x}}^{y} \To \Delta_{k} \sigma_{\boldsymbol{x}}^{y}, \Theta$ ($1 \leqslant k \leqslant m$) in $\Seq\Rules$ with the help of weakening rules. Since $R \in \Rules$, the covering $\Gamma_{R} \To \Delta_{R} \rhd \hearts(p_1, \ldots, p_n) \To \hearts(q_1, \ldots, q_n)$ and the weakening rules allow us to obtain the derivability of $\Sigma, z \hearts \boldsymbol{\lcomp x_{0} : \phi_{0} \rcomp} \To  z \hearts \boldsymbol{\lcomp x_{1} : \phi_{1} \rcomp}, \Theta$ in $\Seq\Rules$, as required. 
\end{proof}

\noindent By our equality rules, the following lemma is immediate. 

\begin{lem}
\label{lem:arb_repl}
The replacement axiom $x=y, \phi[x/z] \To \phi[y/z]$ is derivable in $\Seq\Rules$. 
\end{lem}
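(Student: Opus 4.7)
The plan is to give a very short derivation in three steps, exploiting the flexibility of the substitution pattern in the equality rules. First I would take the initial sequent $\phi[x/z] \To \phi[x/z]$ from $\rul{Ax}$, then weaken it via $\rul{LW}$ to $x=y,\phi[x/z] \To \phi[x/z]$, and finally apply $\rul{L=_{1}}$ with an asymmetric reading of the two occurrences of $\phi[x/z]$.

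The key trick is that $\rul{L=_{1}}$ passes from $x = y, \Gamma[x/z] \To \Delta[x/z]$ to $x = y, \Gamma[y/z] \To \Delta[y/z]$, and we have full freedom in choosing a $\Gamma$ and $\Delta$ whose instances happen to match the displayed sequent. I would pick $\Gamma = \{\phi[x/z]\}$ on the antecedent side, \emph{viewing it as a formula in which $z$ does not occur freely} (which is legitimate because the outer substitution $\phi[x/z]$ has already replaced every free $z$ in $\phi$ by $x$). Then both $\Gamma[x/z]$ and $\Gamma[y/z]$ are simply $\{\phi[x/z]\}$, so the antecedent is unaffected by the rule. On the succedent side I would pick $\Delta = \{\phi\}$, the \emph{original} formula with $z$ still free, so that $\Delta[x/z] = \{\phi[x/z]\}$ matches the premise while $\Delta[y/z] = \{\phi[y/z]\}$ is exactly what we want in the conclusion. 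Applying the rule then yields $x = y, \phi[x/z] \To \phi[y/z]$ directly.

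There is essentially no obstacle here: the whole content of the argument is the observation that the rule scheme $\rul{L=_{1}}$ is schematic in how one parses a given syntactic formula as a substitution instance, and one can deliberately parse one occurrence as ``depending on $z$'' and the other as ``not depending on $z$.'' The only minor point worth recording explicitly is that this parsing depends on $\phi[x/z]$ being free of $z$, which follows from the paper's conventions on substitution (in particular, the substitution does not cross binders that rebind $z$, and it is only applied when $x$ is substitutable for $z$ in $\phi$). A dual derivation using $\rul{L=_{2}}$ starting from $\phi[y/z] \To \phi[y/z]$ works equally well, which further underscores why the author calls the lemma immediate.
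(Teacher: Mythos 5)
Your derivation is correct and is exactly what the paper has in mind: the paper offers no explicit proof beyond the remark that the lemma is ``immediate'' from the equality rules, and your three-step derivation (axiom, left weakening, one application of $\rul{L=_{1}}$ with the asymmetric parsing of the two occurrences of $\phi[x/z]$ as substitution instances) is the intended argument. Your side remark that the parsing relies on $\phi[x/z]$ containing no free $z$ is the right point to flag, and it is indeed guaranteed by the paper's substitution conventions when $x$ and $z$ are distinct.
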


\noindent
One direction of the translation between the two proof systems can
now be given as follows:
\begin{thm}
\label{thm:nb_h2g}
Suppose that $\Rules$ absorbs congruence and let $\Hilb\Rules \entails
\phi$. Then $\Seq\Rules\Cut \entails \To \phi$.
\end{thm}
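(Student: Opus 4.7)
The plan is to argue by induction on the length of a $\Hilb\Rules$-derivation. For the inductive step, Modus Ponens is handled directly by the cut rule: given sequent derivations of $\To \phi$ and $\To \phi\to\psi$, apply $\rul{L\to}$ and $\rul{Cut}$ to obtain $\To \psi$. The real work lies in showing that every axiom scheme of Table~\ref{tab:folaxioms} is derivable in $\Seq\Rules\Cut$. Throughout, the universal prefix $\forall \vec{y}$ is discharged at the end by repeated application of $\rul{R\forall}$ with fresh variables (this is always possible because we show the open formula under the prefix).

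For the basic axioms, standard derivations suffice: \axref{En1} uses only propositional rules of $\mathsf{G1c}$; \axref{En2}--\axref{En4} follow by routine combinations of $\rul{L\forall}$, $\rul{R\forall}$, and $\rul{R\to}$, respecting substitutability. The equality axioms \axref{En5} and \axref{En6}.1 are obtained from $\rul{R=}$ together with Lemma~\ref{lem:arb_repl}, and \axref{En6}.2 is an instance of Lemma~\ref{lem:arb_repl} taking $\phi$ to be a modal formula in $z$. For \axref{Alpha}, I will appeal to the derived congruence rule of Lemma~\ref{lem:ad_cong}: its premises are of the form $\phi[y/z] \To \phi[u/z][y/u]$ and the symmetric converse, which under the implicit freshness of $u$ for $\phi$ collapse to $\phi[y/z] \To \phi[y/z]$ (an instance of $\rul{Ax}$).

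The main obstacle is \axref{Onestep($\Rules$)}. Fix a one-step rule $R = \svA/\svX \in \Rules$ with sequent-format representation having premises $\Gamma_i \To \Delta_i$ ($1 \le i \le k$) and the conclusion corresponding to $\hearts_1 \vec{\sva}_1,\dots,\hearts_n\vec{\sva}_n \To \hearts_{n+1}\vec{\sva}_{n+1},\dots,\hearts_{n+m}\vec{\sva}_{n+m}$, and fix a substitution $\sigma$ into $\Lang$. I will construct a derivation of the sequent
\[
\forall x.\svA\sigma \To [\sigma,x,z]\svX,
\]
after which $\rul{R\to}$ and $\rul{R\forall}$ on $z$ and $\vec{y}$ give the full axiom. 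Instantiate $\forall x.\svA\sigma$ via $\rul{L\forall}$ at a fresh variable $y$; since $\svA$ is propositionally equivalent to $\bigwedge_i((\bigwedge\Gamma_i)\to(\bigvee\Delta_i))$, a propositional derivation with $\rul{Cut}$ yields each sequent $\Gamma_i\sigma_{\boldsymbol{x}}^{y} \To \Delta_i\sigma_{\boldsymbol{x}}^{y}$ from $\svA\sigma[y/x]$, where the comprehension variables $\boldsymbol{x}$ are read off from the target modal formulas. Apply $\Seq(R)$ (whose freshness condition on $y$ is met by construction) to obtain the sequent-format conclusion, and finally translate its propositional combinations of modal formulas back into $[\sigma,x,z]\svX$ by another round of propositional/$\rul{Cut}$ reasoning.

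The hardest part is the bookkeeping in the \axref{Onestep($\Rules$)} case: matching the Hilbert-style single-formula presentation of $\svA$ and $\svX$ to the multiset-of-sequents format required by $\Seq(R)$, ensuring that the same fresh $y$ works uniformly across all premises, and keeping the weakening contexts $\Sigma,\Theta$ in $\Seq(R)$ (here empty) consistent. Everything else is essentially a verification, but this translation is where the use of $\rul{Cut}$ is genuinely indispensable, foreshadowing why the subsequent cut-elimination argument requires a separate, delicate treatment.
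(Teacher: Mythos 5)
Your proposal is correct and follows essentially the same route as the paper: simulate Modus Ponens with $\rul{Cut}$ and $\rul{L\to}$, then derive each axiom scheme, using the equality rules for \axref{En5}--\axref{En6}, the admissible congruence rule of Lemma~\ref{lem:ad_cong} for \axref{Alpha} (with premises collapsing to instances of $\rul{Ax}$), and a combination of $\rul{L\forall}$ at a fresh $y$, propositional reasoning, and $\Seq(R)$ with $\forall x.\svA\sigma$ carried along as the weakening context $\Sigma$ for \axref{Onestep($\Rules$)}. One correction to your closing remark: $\rul{Cut}$ is \emph{not} needed in the \axref{Onestep($\Rules$)} case --- the top sequents there are propositional tautologies in the subformulas $\Gamma_i\sigma[y/x]$, $\Delta_i\sigma[y/x]$ and hence cut-free derivable in $\mathsf{G1c}$, and likewise the final passage between the sequent-format conclusion and $[\sigma,x,z]\svX$ uses only derived $\rul{L\land}$/$\rul{R\lor}$ rules (via contraction); the paper stresses that the \emph{only} use of $\rul{Cut}$ in the entire proof is the simulation of Modus Ponens, which is what makes the subsequent cut-elimination theorem deliver a genuinely cut-free complete calculus.
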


\begin{proof}
First, we demonstrate admissibility of modus ponens in $\Seq\Rules\Cut$ by
\[
\infer[\rul{Cut}]{\To \psi}{
\To \phi
&
\infer[\rul{Cut}]{\phi \To \psi}{\To \phi \to \psi
&
\phi \to \psi, \phi \To \psi}
}
\]
where the derivability of $\phi \to \psi, \phi \To  \psi$ is easily established by $\rul{L \to}$. Note that this is the only place in this proof where we need $\rul{Cut}$. %, in our proof of the theorem, we need $\rul{Cut}$ only for this admissibility. 
Hence, it suffices to show that all the axioms of $\Hilb\Rules$ 
(recall Table \ref{tab:fo-g-rules}) are derivable in cut-free $\Seq\Rules$. 
%(here we do not need the cut rule). 
All of equality axioms \axref{En5}, \axref{En6}.1 and \axref{En6}.2 are derivable by the equality axiom $\rul{R=}$ and the equality rules $\rul{L=_{i}}$. 
Moreover, since this is easy to show for logical but non-modal axioms, 
we focus on \axref{Cong}, \axref{Alpha} and $\axref{Onestep($\Rules$)}$. Firstly, the derivability of \axref{Cong} follows from Lemma \ref{lem:ad_cong}. 
Secondly, for  \axref{Alpha} we have the following derivation:
\[
\infer[\rul{Cong}\hearts]{
z \hearts {\lcomp x_{0} : \phi_{0} \rcomp} \cdots {\lcomp x_{i} : \phi_{i} \rcomp} \cdots {\lcomp x_{n} : \phi_{n} \rcomp} \To  z \hearts {\lcomp x_{0} : \phi_{0} \rcomp}\cdots {\lcomp u : \phi_{i}[u/x_{i}] \rcomp} \cdots {\lcomp x_{n} : \phi_{n} \rcomp}
}{
\{ \phi_{j}[y/{x_{j}}]  \To \phi_{j} [y/{x_{j}}]  \,|\, j \neq i \} 
&
\phi_{i}[y/{x_{i}}] \To \phi_{i}[u/{x_{i}}][y/u]
&
\phi_{i}[u/{x_{i}}][y/u] \To \phi_{i}[y/{x_{i}}]
}
\]
where we note that $\rul{Cong}\hearts$ is admissible by Lemma \ref{lem:ad_cong}. 
All the premises are axioms since $u$ is assumed to be fresh in $\phi_{i}$.  
Finally, let us move to the provability of $\axref{Onestep($\Rules$)}$. 
Suppose that $R = \Gamma_1 \To \Delta_1, \dots, \Gamma_k \To \Delta_k / \Gamma_{R} \To \Delta_{R}$ is a one-step rule as in Definition \ref{def:hilonestep}. With the help of contraction rules, we note that the following are derivable rules in $\Seq\Rules$: for any finite multiset $\Theta$, 
\[
\infer[\rul{L \land}]{\bigwedge \Theta, \Gamma \To \Delta}{\Theta, \Gamma \To \Delta} \quad
\infer[\rul{R \lor}]{\Gamma \To \Delta, \bigvee \Theta}{\Gamma \To \Delta, \Theta}.
\]
We obtain the following derivation where $N$ = $\{1,...,n \}$, $M$ = $\{n+1,...,n+m\}$ and $\pi_{i}$ is an abbreviation of $(\Land \Gamma_i \to \Lor \Delta_i)\sigma$:
\begin{equation*}
\infer=[\rul{L \land}, \rul{R \land}]{\forall x. (\pi_{1} \land \cdots \land \pi_{n}), \bigwedge \{ x \heartsuit_{i} \boldsymbol{\lcomp x:\phi_{i}\rcomp} \,|\, i \in N  \} \To \bigvee \{ x \heartsuit_{i} \boldsymbol{\lcomp x:\phi_{i}\rcomp} \,|\, i \in M \}}{
\infer[\rul{\Seq(R)}]{\forall x. (\pi_{1} \land \cdots \land \pi_{n}), \{ x \heartsuit_{i} \boldsymbol{\lcomp x:\phi_{i}\rcomp} \,|\, i \in N \} \To \{ x \heartsuit_{i} \boldsymbol{\lcomp x:\phi_{i}\rcomp} \,|\, i \in M \}}{
\infer[\rul{L\forall}]{\{ \forall x. (\pi_{1} \land \cdots \land \pi_{n}), (\Gamma_{i} \sigma)[y/x] \To (\Delta_{i} \sigma)[y/x] \,|\, 1 \leqslant i \leqslant k \}}{
{\{ \pi_{1}[y/x] \land \cdots \land \pi_{n}[y/x], (\Gamma_{i} \sigma)[y/x] \To (\Delta_{i} \sigma)[y/x] \,|\, 1 \leqslant i \leqslant k \}}
}
}
}
\end{equation*}
which shows derivability of the axiom $\axref{Onestep($\Rules$)}$ as the top sequent is readily seen to be derivable in $\Seq\Rules$.
\end{proof}

\noindent
For the converse direction, absorption of congruence is not required.
\begin{thm}
\label{thm:nb_g2h}
Suppose that $\Seq\Rules\Cut \entails \Gamma \To \Delta$. Then $\Hilb\Rules
\entails \Land \Gamma \to \Lor \Delta$.
\end{thm}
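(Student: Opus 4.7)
The plan is to proceed by induction on the height of a derivation of $\Gamma \To \Delta$ in $\Seq\Rules\Cut$, showing for each inference rule that the corresponding translated conclusion $\Land \Gamma \to \Lor \Delta$ is derivable in $\Hilb\Rules$ from the translated premises. The base cases are easy: $\rul{Ax}$ gives a propositional tautology $\phi\to\phi$; $\rul{L\bot}$ gives $\bot\to\bot$ (identifying the empty disjunction with $\bot$); and $\rul{R=}$ gives $x=x$, which is axiom \axref{En5}. The structural rules and the purely propositional logical rules ($\rul{L\to}$, $\rul{R\to}$, $\rul{LW}$, $\rul{RW}$, $\rul{LC}$, $\rul{RC}$) translate directly via propositional reasoning inside $\Hilb\Rules$, and $\rul{Cut}$ translates to an application of modus ponens combined with propositional manipulation. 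For the quantifier rules $\rul{L\forall}$ and $\rul{R\forall}$, the Enderton-style axioms \axref{En2}--\axref{En4} together with the deduction theorem available in $\Hilb\Rules$ (see Remark~\ref{rem:noinfinitary}) handle the translation; the freshness side condition on $\rul{R\forall}$ is exactly what is needed to apply the generalization step over a free variable via \axref{En4} (after first generalising over parameters if required, as in the proof of fact~(a) in \S\ref{sec:proofco}).

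The equality rules $\rul{L=_1}$ and $\rul{L=_2}$ require the Leibniz substitution principle: from $x=y$, every atomic formula containing $x$ is propositionally equivalent to the result of replacing that occurrence by $y$. For ordinary atomic predicate formulas this is supplied by \axref{En6}.1, and for modal atoms by \axref{En6}.2; by a straightforward subinduction on formula structure (using \axref{Alpha} to deal with bound comprehension variables, and propagating equivalences through Booleans, quantifiers, and modal contexts via congruence, obtainable from the one-step axiom schemes as in Remark~\ref{rem:cong}) we obtain $x=y \to (\chi[x/z] \leftrightarrow \chi[y/z])$ in $\Hilb\Rules$ for any formula $\chi$. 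This suffices to rewrite both $\Gamma$ and $\Delta$ under the hypothesis $x=y$.

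The crucial case is the modal rule $\Seq(R)$. Assume by induction that for each $i\le k$ we have derived in $\Hilb\Rules$ the formula $\Land(\Sigma,\Gamma_i\sigma_{\boldsymbol{x}}^{y}) \to \Lor(\Delta_i\sigma_{\boldsymbol{x}}^{y},\Theta)$. Since $y$ is fresh in $\Sigma$, $\Theta$, and in the principal modal formulas of the conclusion, we may move $\Sigma$ and $\Theta$ across the implication using propositional reasoning and then apply \axref{En4} to universally quantify $y$, obtaining $\Land\Sigma \to \forall y.\big(\Land\Gamma_i\sigma_{\boldsymbol{x}}^{y} \to \Lor\Delta_i\sigma_{\boldsymbol{x}}^{y}\big) \lor \Lor\Theta$ for each $i$. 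Conjoining these and distributing the quantifier across the conjunction (using \axref{En3}) yields the antecedent of an instance of \axref{Onestep($\Rules$)} for the rule $R$, in which the substitution $\sigma$ sends each schematic variable $\boldsymbol{p}^{j}_{i}$ to $\phi_i^j[y/x_i^j]$ and the consequent is precisely $[\sigma,y,z]\svX$. Unpacking this consequent via \axref{Alpha} (to rename the bound comprehension variable $y$ back to $x_i^j$ in each $z\hearts_i\lcomp y:\phi_i^j[y/x_i^j]\rcomp$) delivers the disjunction of the principal modal formulas on the right of the conclusion, and propositional manipulation with the $\Sigma$, $\Theta$ contexts then yields the required translation of the conclusion.

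The main obstacle I expect is bookkeeping in the modal case: carefully matching the substitution $\sigma_{\boldsymbol{x}}^{y}$ used in $\Seq(R)$ with the substitution $\sigma$ used in the axiom scheme \axref{Onestep($\Rules$)}, and ensuring that $\alpha$-renaming via \axref{Alpha} correctly lines up the comprehension variables in each modal atom so that the conclusion of the sequent rule is obtained literally rather than up to equivalence. The equality case also requires some attention to detail because the Leibniz lemma must be proved by a subinduction on formula structure that passes under quantifiers and modalities, but this is routine given the availability of \axref{En6}.2 and congruence.
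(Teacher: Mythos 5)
Your proposal is correct and follows essentially the same route as the paper's own (much terser) proof: induction on the sequent derivation, with the equality rules handled via a Leibniz substitution lemma $x=y\to(\phi[x/w]\to\phi[y/w])$ built from \axref{En6}.1--2, and the modal rules $\Seq(R)$ handled by instantiating \axref{Onestep($\Rules$)} with the substitution $\sva_i^j\mapsto\phi_i^j[y/x_i^j]$ and then $\alpha$-renaming the comprehension variables via \axref{Alpha}. The extra detail you supply on generalising over the fresh variable $y$ and distributing the quantifier with \axref{En3} is exactly the bookkeeping the paper leaves implicit.
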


\begin{proof}
It suffices to show that all the translations of the axioms and rules of $\Seq\Rules$ are derivable in $\Hilb\Rules$. We can easily handle the cases of the axioms and rules for logical connectives of first-order logic. The provability of the translation of $\rul{L=_{i}}$ follows from the provability of $x = y \to (\phi[x/w] \to \phi[y/w])$. As for $\hearts \in \Lambda$, the provability of the translation of $\Seq(R)$ follows from $\axref{Onestep($\Rules$)}$ and \axref{Alpha}. 
\end{proof}

\noindent
As a corollary, we obtain (for the time being, in a calculus with
cut) both soundness and completeness of the sequent calculus.

\begin{cor}
\label{cor:onestepcompl}
Suppose that $\Rules$ is strongly one-step complete. Then $\Seq\Rules\Cut \entails \Gamma \To \Delta$ iff $\models \Gamma \To \Delta$. 
\end{cor}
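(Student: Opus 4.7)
The plan is to reduce the corollary to the two translation theorems together with soundness and completeness of the Hilbert calculus $\Hilb\Rules$, handling the two directions of the biconditional separately. Strong one-step completeness of $\Rules$ makes $\Rules$ $\bel$-S1SC for the trivially adequate boundedness signature $\bel$ with every component equal to $\infty$, which voids the $\bdpl$ axioms and places us squarely within the hypotheses of Theorem \ref{thm:hilb-complete} and Theorem \ref{thm:hilb-soundness}.

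For the soundness direction, the cleanest route will be to apply Theorem \ref{thm:nb_g2h} to translate $\Seq\Rules\Cut \entails \Gamma \To \Delta$ into $\Hilb\Rules \entails \Land\Gamma \to \Lor\Delta$, and then invoke Hilbert soundness (Theorem \ref{thm:hilb-soundness}). A direct alternative would be to check rule-by-rule that every ingredient of $\Seq\Rules\Cut$ preserves validity: the axioms and the logical, equality, and structural rules are standard, the modal rules $\Seq(R)$ are covered by Proposition \ref{prop:soundness_seq_rule}, and $\rul{Cut}$ preserves validity as usual.

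For the completeness direction, I would set $\phi := \forall \vec{x}.\,(\Land \Gamma \to \Lor \Delta)$, where $\vec{x}$ lists the variables free in $\Gamma, \Delta$. Then $\models \Gamma \To \Delta$ forces $\phi$ to be valid, so $\{\neg \phi\}$ has no model, and Theorem \ref{thm:hilb-complete} (in its contrapositive form) yields $\Hilb\Rules \entails \phi$. Next I would apply Theorem \ref{thm:nb_h2g} to obtain $\Seq\Rules\Cut \entails\,\To \phi$. The final step is to derive $\Gamma \To \Delta$ from $\To \phi$ by routine sequent manipulations: instantiating each outer quantifier in $\phi$ with the corresponding free variable via $\rul{L\forall}$, unpacking the propositional structure of $\Land\Gamma \to \Lor\Delta$ with $\rul{L\to}$ and structural rules, and combining with $\To \phi$ using $\rul{Cut}$.

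The main obstacle is that Theorem \ref{thm:nb_h2g} is stated under the hypothesis that $\Rules$ absorbs congruence, whereas Remark \ref{rem:cong} guarantees only that S1SC rule sets \emph{derive} the congruence rule. My plan to circumvent this is to enlarge $\Rules$ by explicitly adding the congruence rule for every modality in $\Lambda$: this is one-step sound (so Assumption \ref{conv:1ss} is preserved) and leaves the sequent derivability relation unchanged, because the added rule was already derivable in $\Seq\Rules$, while the enlarged rule set trivially absorbs congruence and remains S1SC. After this enlargement, Theorem \ref{thm:nb_h2g} applies and the argument goes through.
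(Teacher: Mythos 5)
Your overall decomposition is exactly the paper's: both directions are reduced to the translation theorems (Theorems \ref{thm:nb_h2g} and \ref{thm:nb_g2h}) combined with soundness and completeness of $\Hilb\Rules$ (Theorems \ref{thm:hilb-soundness} and \ref{thm:hilb-complete}), and your extra bookkeeping (taking the universal closure of $\Land\Gamma\to\Lor\Delta$, then recovering the sequent via $\rul{L\forall}$, $\rul{L\to}$ and $\rul{Cut}$) is a correct elaboration of what the paper leaves implicit.

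The divergence is in how you discharge the hypothesis of Theorem \ref{thm:nb_h2g} that $\Rules$ absorbs congruence, and this is where there is a genuine gap. The paper imports \cite[Proposition 5.12]{PattinsonSchroder10}, which shows that strongly one-step complete rule sets already absorb congruence, so Theorem \ref{thm:nb_h2g} applies to $\Rules$ itself. You instead pass to $\Rules' = \Rules\cup\{\rul{Cong}\hearts\}$ and claim that this leaves the sequent derivability relation unchanged ``because the added rule was already derivable in $\Seq\Rules$.'' That claim is not justified by anything in the paper: Remark \ref{rem:cong} only gives derivability of the congruence rule at the \emph{one-step} level, and there is no stated result transferring one-step derivability of a rule $R_0$ to derivability (or admissibility) of its sequent incarnation $\Seq(R_0)$ in $\Seq\Rules\Cut$. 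Worse, the natural route to such admissibility --- translate the premises into $\Hilb\Rules$ via Theorem \ref{thm:nb_g2h}, reason there, and translate back --- requires Theorem \ref{thm:nb_h2g} for the original $\Rules$, i.e.\ exactly the absorption property you are trying to avoid; so as written your argument only establishes completeness of the enlarged calculus with rule set $\Rules'$, not of $\Seq\Rules\Cut$. The gap is reparable, either by proving that one-step derivability lifts to sequent derivability in the presence of $\rul{Cut}$, or more simply by citing the absorption result as the paper does, but as it stands the conservativity step is the missing link.
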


\begin{proof}
By Theorems \ref{thm:nb_h2g} and \ref{thm:nb_g2h} in conjunction
with soundness and completeness of $\Hilb\Rules$ (Theorem
\ref{thm:hilb-complete}). The absorption of congruence was shown in \cite[Proposition 5.12]{PattinsonSchroder10}. 
\end{proof}

\noindent A paradigmatic example of a set of rules satisfying the assumptions of Corollary \ref{cor:onestepcompl} is $\mathsf{C}$ and its CPL translation $\Seq(\mathsf{C})$ from Example \ref{ex:rulesets} above. 

As we have seen in \S~\ref{sec:completeness}, 
the assumption of \emph{strongly} one-step complete rule
sets  limits available examples to ``essentially neighbourhood-like'' ones. 
This is why we also gave a complete Hilbert-style
axiomatisation also for \emph{bounded} operators (recall Definition \ref{def:bounded}). 

Note that $k$-boundedness of $i$-th argument of Definition \ref{def:bounded}  implies in particular that $\PrL$ is monotonic in the $i$-th 
argument. Examples of bounded modalities include the standard
$\Diamond$ of relational modal logic interpreted over Kripke frames,
graded modalities over multigraphs and we refer to
\cite{SchroderPattinson10} for more examples. In the Hilbert-calculus, 
boundedness was reflected syntactically by the axiom
\begin{multline*}  \axiom{BdPL}{}_{k,i} \, \yprf
(x\PrL\cmh{y_1}{\phi_1}\dots\cmh{y_n}{\phi_n} \eqvF \exists z_1\dots
z_k.(x\PrL\cmh{y_1}{\phi_1}\dots\cmh{y_{i-1}}{\phi_{i-1}} \\ 
\qquad \cmh{y_i}{y_i \eqF z_1 \vee\dots \vee y_i \eqF
z_k}\cmh{y_{i+1}}{\phi_{i+1}}\dots\cmh{y_n}{\phi_n} \wedge
\bigwedge\limits_{j \leqslant k}\sbst{\phi_i}{z_j}{y_i}))
\end{multline*}
where each $z_{i}$ is fresh for all the $y_i$s and $\phi_{i}$s.
The derivability predicate induced by extending the Hilbert calculus
$\Hilb\Rules$ by the boundedness axiom above gives completeness
under weaker conditions.
\begin{defi}
We write $\Bound\Hilb\Rules \entails \phi$ if $\phi$ is derivable in
$\Hilb\Rules$ where additionally $\axiom{BdPL}_{k, i}$ is used
for every operator that is $k$-bounded in the $i$-th argument. 
\end{defi}
\noindent
Strictly speaking, the derivability predicate $\Bound\Hilb\Rules$
should include information about precisely which operators are
assumed to be $k$-bounded in the $i$-th argument, but this will
always be clear from the context. In the presence of boundedness,
completeness of the Hilbert-calculus has been established under weaker
conditions (see Theorem \ref{thm:hilb-complete}). 

\noindent We can reflect boundedness in the sequent calculus by 
adding a paste rule, similar in spirit to the paste rule of hybrid 
logic \cite[\S~7]{BlackburnEA01} which was generalised to a
coalgebraic setting in \cite{SchroderPattinson10}. In a sequent
setting, this rule takes the form
\begin{equation*}
\inferrule*[Right=\mbox{$\rul{\Pastek}$}]{
	\Gamma, x \hearts {\lcomp x_{1} \col \phi_{1} \rcomp} \cdots {\lcomp x_{i-1} \col \phi_{i-1} \rcomp} \lcomp y: \bigvee_{1 \leqslant j \leqslant k} y = z_{j}  \rcomp{\lcomp x_{i+1} \col \phi_{i+1} \rcomp} \cdots {\lcomp x_{n} \col \phi_{n} \rcomp}, \\\\ 
	\phi[z_{1}/y],..., \phi [z_{k}/y] \To \Delta \\ z_1, \dots, z_k \, \mathrm{fresh}
}{\Gamma, x \hearts {\lcomp x_{1} \col \phi_{1} \rcomp} \cdots {\lcomp x_{i-1} \col \phi_{i-1} \rcomp} \lcomp y:\phi \rcomp {\lcomp x_{i+1} \col \phi_{i+1} \rcomp} \cdots {\lcomp x_{n} \col \phi_{n} \rcomp} \To \Delta},
\end{equation*}
where $z_{1}$, ..., $z_{k}$ are pairwise distinct fresh variables. Additional use of the above paste-rule in the
system $\Seq\Rules$ is denoted by $\Bound\Seq\Rules$, that is, we
write $\Bound\Seq\Rules \entails \Gamma \To \Delta$ if $\Gamma \To
\Delta$ is derivable in $\Seq\Rules$ where $\Pastek$ may
additionally be applied for every modality that is $k$-bounded in
the $i$-th argument.

%In what follows, we assume that $\mathcal{R}$ absorbs the congruence
%rules for all $\hearts \in \Lambda$ and monotonicity of all 
%operators that are $k$-bounded in the $i$-th argrument.

When $\Rules$ absorbs congruence and monotonicity of all  operators that are $k$-bounded in the $i$-th argument, we note that Lemmas \ref{lem:ad_cong} and \ref{lem:arb_repl} hold also for $\Bound\Seq\Rules$. 

\begin{thm}
\label{thm:b_h2g}
Suppose that $\Rules$ absorbs congruence and monotonicity in the
$i$-th argument of every operator that is $k$-bounded in the $i$-th
argument. 
Then $\Bound\Hilb\Rules \entails \phi$ implies that $\Bound\Seq\Rules\Cut \entails \To \phi$.
\end{thm}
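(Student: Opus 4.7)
The proof follows the template of Theorem \ref{thm:nb_h2g}. Modus ponens is simulated by $\rul{Cut}$ exactly as there, and all axiom schemes of $\Hilb\Rules$ except $\axref{BdPL}_{k,i}$ are derived in $\Bound\Seq\Rules$ by reusing the derivations from that proof; in particular, the congruence schema is dispatched by Lemma \ref{lem:ad_cong}, which remains available in $\Bound\Seq\Rules$. The only new task is to derive each instance of $\axref{BdPL}_{k,i}$; since this axiom is a biconditional, the two implications are handled separately and combined with $\rul{R\to}$ and $\rul{R\land}$.

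For the forward implication
\[
x\PrL\dots\cmh{y_i}{\phi_i}\dots \To \exists z_1\dots z_k.\bigl(x\PrL\dots\cmh{y_i}{\textstyle\bigvee_{j\leq k}y_i \eqF z_j}\dots \wedge \textstyle\bigwedge_{j\leq k}\phi_i[z_j/y_i]\bigr),
\]
pick variables $w_1,\dots,w_k$ fresh for the above sequent. The sequent
\[
x\PrL\dots\cmh{y_i}{\textstyle\bigvee_j y_i \eqF w_j}\dots,\ \phi_i[w_1/y_i],\dots,\phi_i[w_k/y_i] \To \exists \vec{z}.(\dots)
\]
is derivable by $\rul{R\land}$ and $k$ applications of $\rul{R\exists}$ with witnesses $\vec{w}$. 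A single application of the paste rule $\Pastek$ (with $\vec{w}$ as its fresh parameters) then yields the forward implication.

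For the backward implication, note that the analogue of Lemma \ref{lem:ad_cong} for monotonicity holds in $\Bound\Seq\Rules$: the absorption of $\Moni$ by $\Rules$ ensures, by exactly the argument of Lemma \ref{lem:ad_cong}, admissibility of the CPL-level sequent rule
\[
\inferrule*{\Sigma,\ \rho[y/y_i] \To \psi[y/y_i],\ \Theta\quad(y\text{ fresh in conclusion})}{\Sigma,\ x\PrL\dots\cmh{y_i}{\rho}\dots \To x\PrL\dots\cmh{y_i}{\psi}\dots,\ \Theta}
\]
Instantiating $\rho := \bigvee_{j\leq k} y_i \eqF z_j$, $\psi := \phi_i$, and $\Sigma := \{\phi_i[z_1/y_i],\dots,\phi_i[z_k/y_i]\}$, the task reduces to deriving
\[
\phi_i[z_1/y_i],\dots,\phi_i[z_k/y_i],\ \textstyle\bigvee_{j\leq k} y \eqF z_j \To \phi_i[y/y_i]
\]
for fresh $y$; this is proved by $\rul{L\lor}$ followed by $k$ applications of the replacement axiom (Lemma \ref{lem:arb_repl}, plus symmetry of equality). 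The existential and conjunction on the LHS of the backward implication are then removed by $\rul{L\exists}$ (with the $z_j$ as fresh eigenvariables) and $\rul{L\land}$.

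The main technical concern is bookkeeping of freshness: the $w_j$ witnessing $\rul{R\exists}$ in the forward direction must be chosen disjoint from the free variables of $x\PrL\dots\cmh{y_i}{\phi_i}\dots$ (so that $\Pastek$ applies), while the $z_j$ eliminated by $\rul{L\exists}$ in the backward direction must be fresh for the succedent $x\PrL\dots\cmh{y_i}{\phi_i}\dots$ and for the eigenvariable $y$ introduced by the monotonicity rule. Both requirements can be met by sufficient renaming; no deeper obstruction arises, because the paste rule and absorbed monotonicity together exactly mirror the two directions of the semantic equivalence that the axiom $\axref{BdPL}_{k,i}$ encodes.
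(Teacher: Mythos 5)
Your proof is correct and follows essentially the same route as the paper's: everything except \axref{BdPL} is inherited from the proof of Theorem \ref{thm:nb_h2g}, the left-to-right direction of \axref{BdPL} is obtained by $\rul{\Pastek}$ applied after introducing the existential with fresh witnesses, and the right-to-left direction uses the admissible monotonicity rule (from absorption of $\Moni$) together with $\rul{L\exists}$, $\rul{L\land}$ and the replacement axiom of Lemma \ref{lem:arb_repl}. In fact your write-up is somewhat more explicit than the paper's, which only displays the unary $1$-bounded instance of the right-to-left derivation.
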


\begin{proof}
First of all, if $\Rules$ absorbs monotonicity in the $i$-th argument of $\heartsuit \in \Lambda$, the rule
\begin{equation*}
\infer[\rul{Mon}_i]{\Sigma, z \heartsuit \boldsymbol{\lcomp x : \phi \rcomp} \To  z \heartsuit \lcomp x_{1}: \phi_{1}\rcomp \dots \lcomp x_{i-1}: \phi_{i-1} \rcomp \lcomp x_{i}: \psi \rcomp \lcomp x_{i+1}: \phi_{i+1} \rcomp \dots \lcomp x_{n}: \phi_{n}\rcomp, \Theta}{\Sigma, \phi_{i}[y/x_{i}] \To \psi[y/x], \Theta}
\end{equation*}
(where $y$ is fresh in the conclusion) is admissible in $\Bound \Seq \Rules$ (and $\Bound \Seq \Rules \Cut$).  
Almost all the arguments are the same as the proof of Theorem \ref{thm:nb_h2g}, except that we need to show the provability of  \axiom{BdPL} by $\rul{\mathsf{Paste}}$ (note that the only place we need the cut rule is the derivability of Modus Ponens). 
More precisely, we can show the left-to-right
implication of \axiom{BdPL} by means of $\rul{\Pastek}$ %$(\hearts(k)L)$ 
and $\rul{{Mon}_i}$ gives the reverse direction. For example, when 
$\heartsuit$ is unary and 1-bounded, the derivability of the right-to-left direction of \axiom{BdPL}
is demonstrated as follows. 
\begin{equation*}
\infer[\rul{L\exists}]{\exists z. (x \heartsuit \lcomp  y: y=z \rcomp  \land \phi[z/y])  \To x \heartsuit \lcomp y:\phi \rcomp
}{
\infer[\rul{L \land}]{x \heartsuit \lcomp  y: y = w \rcomp  \land \phi[w/y])  \To x \heartsuit \lcomp y:\phi \rcomp
}{
\infer[\rul{\mathsf{Mon}}]{x \heartsuit \lcomp  y: y = w \rcomp, \phi[w/y] \To x \heartsuit \lcomp y:\phi \rcomp
}{
%\infer[\rul{L =_{2}}]{
v = w, \phi[w/y] \To \phi[v/y]
%}{
%v = w, \phi[w/y] \To \phi[w/y]
%}
}
}
},
\end{equation*}
%\tlntnew{Please be consistent: is it $L=_2$ or $=L_2$?} 
where the top sequent is the replacement axiom, which is derivable by Lemma \ref{lem:arb_repl}.
\end{proof}

\noindent
The reverse direction of Theorem \ref{thm:b_h2g} is established
analogously to Theorem \ref{thm:nb_g2h} and again absorption
properties are not needed.
\begin{thm}
\label{thm:b_g2h}
$\Bound\Seq\Rules\Cut \entails \Gamma \To \Delta$ only if
$\Bound\Hilb\Rules \entails \Land \Gamma \to \Lor \Delta$.
\end{thm}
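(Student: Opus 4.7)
The plan is to follow exactly the same strategy as in the proof of Theorem~\ref{thm:nb_g2h}, namely induction on the derivation $\Bound\Seq\Rules\Cut \entails \Gamma \To \Delta$, showing that the translation of each axiom and each inference rule of $\Bound\Seq\Rules\Cut$ is derivable in $\Bound\Hilb\Rules$. The base cases ($\rul{Ax}$, $\rul{L\bot}$, $\rul{R=}$), the inductive cases for the logical and structural rules, the equality rules $\rul{L=_i}$ (which translate via the substitutivity of equals, in turn provided by \axref{En6}), the rules $\Seq(R)$ for $R\in\Rules$ (handled by \axref{Onestep($\Rules$)} together with \axref{Alpha}), and the cut rule (propositional modus ponens) are all treated verbatim as in Theorem~\ref{thm:nb_g2h}. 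The only genuinely new case is therefore $\Pastek$, and its translation is exactly what the axiom scheme \axref{BdPL} is designed to discharge.

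Spelling out this new case, suppose inductively that
\[
\Bound\Hilb\Rules\entails \Land\Gamma\land x\hearts\dots\cmh{y}{\textstyle\bigvee_{j\le k}y\eqF z_j}\dots\land\bigwedge_{j\le k}\phi[z_j/y]\impF\Lor\Delta,
\]
with $z_1,\dots,z_k$ fresh in the conclusion of the $\Pastek$ application (hence fresh in $\Gamma,\Delta,x,y$ and the other $\phi_i$). Rearranging propositionally, this is equivalent to
\[
\Bound\Hilb\Rules\entails \Land\Gamma\land\neg\Lor\Delta\impF\neg\Bigl(x\hearts\dots\cmh{y}{\textstyle\bigvee_{j\le k}y\eqF z_j}\dots\land\bigwedge_{j\le k}\phi[z_j/y]\Bigr).
\]
Because the $z_i$ are fresh, I can apply universal generalization (available in Enderton-style systems when the bound variable is not free in any open assumption) and push the quantifiers past $\Land\Gamma\land\neg\Lor\Delta$, obtaining the equivalent
\[
\Land\Gamma\land\neg\Lor\Delta\impF\neg\exists z_1\dots z_k.\Bigl(x\hearts\dots\cmh{y}{\textstyle\bigvee_{j\le k}y\eqF z_j}\dots\land\bigwedge_{j\le k}\phi[z_j/y]\Bigr).
\]
Now instantiate the axiom \axref{BdPL} for $\hearts$ at the $i$-th coordinate with $\phi_i\deq\phi$: this gives precisely the equivalence between $x\hearts\dots\cmh{y}{\phi}\dots$ and the existentially quantified formula on the right. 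Substituting, one derives $\Land\Gamma\land\neg\Lor\Delta\impF\neg x\hearts\dots\cmh{y}{\phi}\dots$, i.e.\ $\Land\Gamma\land x\hearts\dots\cmh{y}{\phi}\dots\impF\Lor\Delta$, which is exactly the translation of the conclusion of $\Pastek$.

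No absorption properties are required here because in the $\Hilb$-to-$\Seq$ direction we needed absorption to realize semantic rules as syntactic derivations, whereas here we only need to \emph{translate} the syntactic rules of $\Seq$ into $\Hilb$, and the single axiom \axref{BdPL} is tailored precisely for this. I do not foresee a real obstacle: the only mildly delicate point is the freshness bookkeeping when moving quantifiers through the context $\Land\Gamma\land\neg\Lor\Delta$, but this is exactly the side condition imposed on $\Pastek$ and is by now standard, being the same manoeuvre used in the translation of $\rul{R\forall}$ in the proof of Theorem~\ref{thm:nb_g2h}.
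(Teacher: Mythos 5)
Your proposal is correct and follows essentially the same route as the paper: the paper's own proof simply notes that the only new case relative to Theorem~\ref{thm:nb_g2h} is $\Pastek$ and that it is discharged by \axiom{BdPL}, which is exactly the case you work out (your detailed derivation — rearranging the inductive hypothesis, generalizing over the fresh $z_j$ to form the existential, and then applying the \axiom{BdPL} equivalence — is a faithful expansion of that one-line remark, and your observation that no absorption properties are needed in this direction matches the paper as well).
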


\begin{proof}
The only difference from the proof of Theorem \ref{thm:nb_h2g} is to need to care about the translation of $\rul{Paste}$. However, we can easily establish this by the axiom $\mathrm{BDLP}$. 
\end{proof}

\noindent
As in the non-bounded case we obtain semantic soundness and
completeness, but under weaker coherence conditions.
\begin{cor}
Suppose that $\Rules$ is strongly finitary one-step complete. Then $\Bound\Seq\Rules\Cut \entails \Gamma \To \Delta$ iff $\models \Gamma \To \Delta$.
\end{cor}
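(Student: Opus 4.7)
The proof should exactly parallel that of Corollary~\ref{cor:onestepcompl}, just routing through the bounded versions of the Hilbert/sequent correspondence. The plan is to split into soundness and completeness and then glue together the pieces that were already established above.

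For the soundness direction ($\Bound\Seq\Rules\Cut \entails \Gamma \To \Delta$ implies $\models\Gamma\To\Delta$), I would simply note that every rule of $\Bound\Seq\Rules\Cut$ preserves validity in any model based on a $\Lambda$-structure adequate for $\bel$: the logical, equality, structural and cut rules are standard; the modal rules $\Seq(R)$ are handled by Proposition~\ref{prop:soundness_seq_rule}; and the paste rule $\Pastek$ is sound because of $k$-boundedness in the $i$-th argument (this is essentially the semantic content of axiom \axref{\bdpl}, already invoked in Theorem~\ref{thm:hilb-soundness}). Alternatively, one may simply compose Theorem~\ref{thm:b_g2h} with soundness of $\Bound\Hilb\Rules$ from Theorem~\ref{thm:hilb-soundness}.

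For the completeness direction, the chain is: assume $\models \Gamma \To \Delta$, equivalently $\models \Land\Gamma\to\Lor\Delta$. Since strong finitary S1SC together with boundedness of all operators (in their bounded arguments) is precisely the content of $\bel$-S1SC for the signature $\bel$ recording the known boundedness levels (cf.\ the remark after Definition~\ref{def:finitarysc}), Theorem~\ref{thm:hilb-complete} yields $\Bound\Hilb\Rules \entails \Land\Gamma\to\Lor\Delta$. Applying Theorem~\ref{thm:b_h2g} then gives $\Bound\Seq\Rules\Cut\entails {\To}\Land\Gamma\to\Lor\Delta$, and propositional manipulation (repeated $\rul{L\to}$, $\rul{L\land}$, $\rul{R\lor}$ plus a single cut against $\Land\Gamma\to\Lor\Delta$) turns this into $\Bound\Seq\Rules\Cut\entails \Gamma\To\Delta$.

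The only non-bookkeeping step is checking the side conditions of Theorem~\ref{thm:b_h2g}, namely that $\Rules$ absorbs congruence and absorbs monotonicity in every coordinate where boundedness is assumed. Absorption of congruence follows from strong one-step completeness on the unbounded coordinates (by \cite[Proposition~5.12]{PattinsonSchroder10}, exactly as in the proof of Corollary~\ref{cor:onestepcompl}); absorption of monotonicity in each bounded coordinate follows from finitary S1SC together with the one-step soundness of the monotonicity rule $\Moni$ on that coordinate, via Lemma~\ref{lem:os-cutfree-complete} applied at the finite-powerset level. This is the main (and really the only) obstacle, but it is entirely analogous to the argument already made explicit in Corollary~\ref{cor:onestepcompl}, and can be disposed of in a single short observation before invoking the two bridging theorems.
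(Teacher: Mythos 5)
Your proposal is correct and takes essentially the same route as the paper's own proof, which simply combines Theorems~\ref{thm:b_h2g} and~\ref{thm:b_g2h} with soundness and completeness of $\Bound\Hilb\Rules$ (Theorem~\ref{thm:hilb-complete}) and notes that absorption of congruence and monotonicity follows from (strong, finitary) one-step completeness as in Proposition~5.12 of Pattinson--Schr\"oder. Your additional bookkeeping (converting $\To\Land\Gamma\to\Lor\Delta$ into $\Gamma\To\Delta$, and the slightly different way of discharging monotonicity absorption) is a harmless elaboration of the same argument.
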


\begin{proof}
By Theorems \ref{thm:b_h2g} and \ref{thm:b_g2h} in conjunction
with soundness and completeness of $\Bound\Hilb\Rules$ (Theorem
\ref{thm:hilb-complete}). 
Note that absorption of congruence and monotonicity follows from (strong,
finitary) one-step completeness as in \cite[Proposition 5.12]{PattinsonSchroder10}.
\end{proof}

\noindent
A canonical example of a rule set satisfying the assumptions of the above corollary can be obtained by taking $\mathsf{K}$ of Example \ref{ex:rulesets} and extending it with $\Pastek$ for $i = k = n =1$.

\subsection{Admissibility of Cut}
\label{sec:cutelim}

When we try to prove the admissibility of $\rul{Cut}$ in first-order logic (or $\Seq \Rules$), we encounter difficulties with the rules of contraction. That is, the following derivation: 
\[
\mathcal{D} =
{
\infer[\rul{Cut}]{\Gamma,\Sigma \To \Delta, \Theta}{
\infer[\rul{RC}]{\Gamma \To \Delta, \phi}{\deduce{\Gamma \To \Delta, \phi,\phi}{\mathcal{D}'}} 
& 
\deduce{\phi,\Sigma \To \Theta}{\mathcal{D}''}
},
}
\]
may be transformed into:
\[
\infer=[\rul{LC}, \rul{RC}]{\Gamma,\Sigma \To \Delta, \Theta}{
\infer[\rul{Cut}]{\Gamma,\Sigma,\Sigma \To \Delta, \Theta, \Theta}{
\infer[\rul{Cut}]{\Gamma,\Sigma \To \Delta, \Theta,\phi}{
{\deduce{\Gamma \To \Delta, \phi,\phi}{\mathcal{D}'}} 
& 
\deduce{\phi,\Sigma \To \Theta}{\mathcal{D}''}
}
&
\deduce{\phi,\Sigma \To \Theta}{\mathcal{D}''}
}
},
\]
but this derivation does not provide us with a reduction in terms of the number of sequents above the application of $\rul{Cut}$ in $\mathcal{D}$. This is why Gentzen introduced the following generalized form of $\rul{Cut}$: 
\[
\infer[\rul{Mcut}]{\Gamma,\Sigma \To \Delta, \Theta}{\Gamma \To \Delta, \phi^{m} & \phi^{n},\Sigma \To \Theta}
\]
where $n$, $m \geqslant 1$ and $\phi^{k}$ stands for $k$ copies of $\phi$ and ``$Mcut$'' is a shorthand of ``multi-cut'' (sometimes also called ``mix''). Since $\rul{Cut}$ is a special case of the new rule of $\rul{Mcut}$, it suffices for us to prove the admissibility of $\rul{Mcut}$ in a given sequent system to obtain the admissibility of $\rul{Cut}$ in the system. 

Moreover, we note that \emph{a priori} we cannot expect that an application of $\rul{Mcut}$ between two instances of modal rules can be moved up without changing the conclusion: 
the set $\Rules$ of one-step rules can possibly consist of a single rule, 
and an application of $\rul{Mcut}$ between this rule and itself may not be derivable. 
We therefore need to impose an additional requirement to deal with this case. 

\begin{defi}
Let $\mathbb{S}$ be a finite set of sequents. 
The set of all sequents that can be derived from premises in $\mathbb{S}$ using (only) {\em one application} of $\rul{Mcut}$ is denoted by $\MCut(\mathbb{S})$.  
A rule set $\Rules$ \emph{absorbs multicut}, if for all pairs $(R_{1},R_{2})$ of rules in $\Rules$:
\[
\infer[{R_{1}}]{\Gamma_{R_{1}} \To \Delta_{R_{1}}, (\hearts \vec{\sva}\,)^{m}}{\Gamma_{11} \To \Delta_{11} &\cdots & \Gamma_{1r_{1}} \To \Delta_{1r_{1}}}
\quad
\infer[{R_{2}}]{(\hearts \vec{\sva}\,)^{n}, \Gamma_{R_{2}} \To \Delta_{R_{2}}}{\Gamma_{21} \To \Delta_{21} &\cdots & \Gamma_{2r_{2}} \To \Delta_{2r_{2}}}
\]
there is a rule $R$ = $\Gamma_{1} \To \Delta_{1}, \cdots, \Gamma_{k} \To \Delta_{k} / \Gamma_{R} \To \Delta_{R}  \in \Rules$ such that:
\[
\MCut(\Gamma_{11} \To \Delta_{11}, \ldots, \Gamma_{1r_{1}} \To \Delta_{1r_{1}},\Gamma_{21} \To \Delta_{21}, \ldots, \Gamma_{2r_{2}} \To \Delta_{2r_{2}} ) \rhd \{ \Gamma_{1} \To \Delta_{1}, \ldots, \Gamma_{k}\To \Delta_{k} \}
\]
and $\Gamma_{R} \To \Delta_{R} \rhd \Gamma_{R_{1}}, \Gamma_{R_{2}} \To \Delta_{R_{1}}, \Delta_{R_{2}}$.
\end{defi}

\begin{lem}
\label{lem:rename}
If $\Seq\Rules \vdash \Gamma \To \Delta$ and $y$ is fresh in $\Gamma$ and $\Delta$, then $\Seq\Rules \vdash \Gamma[y/x] \To \Delta[y/x]$ with the same height of derivation. 
\end{lem}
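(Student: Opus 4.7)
\smallskip
\noindent\textbf{Proof plan.} The proof proceeds by straightforward induction on the height $h$ of the derivation $\mathcal{D}$ of $\Gamma \To \Delta$ in $\Seq\Rules$. For the base case, the axioms $\rul{Ax}$, $\rul{L\bot}$ and $\rul{R=}$ are all stable under substitution: $(\phi \To \phi)[y/x]$ is $\phi[y/x] \To \phi[y/x]$, the formula $\bot$ is unaffected by $[y/x]$, and $(\To z=z)[y/x]$ is either $\To z=z$ or $\To y=y$, each again an instance of $\rul{R=}$.

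For the inductive step, consider the last rule $R$ in $\mathcal{D}$. When $R$ is structural ($\rul{LW}$, $\rul{RW}$, $\rul{LC}$, $\rul{RC}$), propositional ($\rul{L\to}$, $\rul{R\to}$), an equality rule ($\rul{L=_1}$, $\rul{L=_2}$), or $\rul{L\forall}$, the result follows by applying the IH to each premise (whose free variables are contained among those of the conclusion, so $y$ remains fresh there) and then reapplying $R$; the commutation of $[y/x]$ with the rule follows from the convention that whenever a substitution is written, the substituted term is substitutable. In the $\rul{L\forall}$ case one uses the standard fact that if $u$ is substitutable for $x'$ in $\phi$ and $y$ is substitutable for $x$ in $\phi[u/x']$, then $\phi[u/x'][y/x]=\phi[y/x][u'/x']$ where $u'=u[y/x]$ (which is either $u$ or $y$ and is substitutable in $\phi[y/x]$).

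The genuinely interesting cases are $\rul{R\forall}$ and $\Seq(R)$, where the premise of the rule introduces a variable $z$ required to be \emph{fresh in the conclusion}. Concretely for $\rul{R\forall}$, the rule is
\[
\infer{\Gamma \To \Delta', \forall x'.\phi}{\Gamma \To \Delta', \phi[z/x']}
\]
with $z$ fresh in the conclusion. The subtle point is that, in the statement of the lemma, nothing prevents $z=y$. If $z=y$, then a naive application of the IH with $[y/x]$ to the premise would yield a derivation whose principal formula in the proposed final step no longer has a fresh eigenvariable. The remedy is to first invoke the IH itself as an $\alpha$-renaming step: pick $w$ fresh for $\Gamma,\Delta',\phi,x,y$, and apply the IH to the premise with the substitution $[w/z]$ (legal because $z$ is fresh in $\Gamma,\Delta'$ and hence the freshness hypothesis holds); this produces a derivation of $\Gamma \To \Delta',\phi[w/x']$ of the same height. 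Now $w\neq y$, so a second application of the IH with $[y/x]$ produces $\Gamma[y/x]\To \Delta'[y/x],\phi[w/x'][y/x]=\Gamma[y/x]\To\Delta'[y/x],\phi[y/x][w/x']$, to which $\rul{R\forall}$ applies with eigenvariable $w$ (fresh in the new conclusion by construction). The case of $\Seq(R)$ is handled identically: the eigenvariable $y$ of the rule template can first be renamed away from the target variable using the IH, after which $[y/x]$ commutes past the rule.

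The main obstacle is thus purely the bookkeeping around rules with eigenvariables; it is resolved, as is standard, by using the height-preserving IH itself to carry out a preliminary $\alpha$-renaming of the eigenvariable, which is legitimate precisely because eigenvariables do not appear in the conclusion and therefore satisfy the freshness hypothesis of the lemma.
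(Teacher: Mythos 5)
The paper states Lemma~\ref{lem:rename} without proof, so there is no official argument to measure yours against; what you have written is the expected height-preserving induction, and your handling of the genuinely delicate cases---the eigenvariable rules $\rul{R\forall}$ and $\Seq(R)$, where you first use the height-preserving induction hypothesis itself to rename the eigenvariable away from the incoming variable---is exactly the standard and correct device.

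There is, however, one concrete gap. You classify $\rul{L\forall}$ among the rules ``whose free variables are contained among those of the conclusion, so $y$ remains fresh there'', and for $\rul{L\forall}$ this is false: its premise is $\phi[z/x'],\Gamma\To\Delta$ for an \emph{arbitrary} witness variable $z$ with no freshness side condition, so nothing prevents $z=y$. In that case $y$ occurs free in the premise, the freshness hypothesis of the lemma fails for the premise, and the induction hypothesis cannot be invoked with the substitution $[y/x]$. (For the structural, propositional and equality rules your containment claim does hold---in $\rul{L=_{1}}$ and $\rul{L=_{2}}$ both terms involved in the rewriting already occur in the equality retained in the conclusion---so $\rul{L\forall}$ is the only offender.) The repair is precisely the renaming trick you already deploy for $\rul{R\forall}$: when $z=y$, first apply the induction hypothesis to the premise with $[w/y]$ for a completely fresh $w$. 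This is legitimate because $y$ is fresh for the conclusion and hence for $\forall x'.\phi$, $\Gamma$ and $\Delta$, so the only free occurrences of $y$ in the premise are those sitting at the witness positions; the renaming therefore yields $\phi[w/x'],\Gamma\To\Delta$ at the same height. A second application of the induction hypothesis with $[y/x]$ then gives $\phi[y/x][w/x'],\Gamma[y/x]\To\Delta[y/x]$, and $\rul{L\forall}$ with witness $w$ closes the case. With this case added (and the analogous bookkeeping when $z=x$, which is easy since then the conclusion either already contains $x$ via $z$ or the substitution acts trivially), the induction goes through.
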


%We now turn to cut-elimination, where the majority of the cases are straightforward, and in fact identical to the cut-elimination proof in first-order logic with equality. 

\begin{lem}[Hauptsatz]
\label{lem:hauptsatz}
Let $\mathcal{D}$ be a derivation in the system $\Seq\Rules$ extended with $\rul{Mcut}$ in  the following form:
\[
{
\infer[\rul{Mcut}]{\Gamma, \Sigma \To \Delta, \Theta}{\deduce[\mathcal{D}_{\texttt{L}}]{\Gamma \To \Delta, \phi^{m}}{} & \deduce[\mathcal{D}_{\texttt{R}}]{\phi^{n}, \Sigma \To \Theta}{}},}
\]
where $\mathcal{D}_{\texttt{L}}$ and $\mathcal{D}_{\texttt{R}}$ contain no application of $\rul{Mcut}$, the last rule of $\mathcal{D}$ is the only application of $\rul{Mcut}$ in $\mathcal{D}$. Then $\Gamma,\Sigma\To \Delta,\Theta$ is derivable in $\Seq \Rules$. 
\end{lem}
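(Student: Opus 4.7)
My plan is to proceed by a lexicographic double induction: the outer induction runs on the complexity (number of connectives, modalities and quantifiers) of the cut formula $\phi$, and the inner induction on the sum $h(\mathcal{D}_{\texttt{L}}) + h(\mathcal{D}_{\texttt{R}})$ of heights of the two subderivations. The strategy then follows the classical Gentzen template: at each step, either the cut formula is not principal in the last rule of one of the subderivations, in which case I permute the multi-cut upward (reducing the inner measure), or $\phi$ is principal on both sides, in which case I reduce to cuts on strictly smaller formulas (reducing the outer measure).

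The base cases cover the situations where at least one of $\mathcal{D}_{\texttt{L}}, \mathcal{D}_{\texttt{R}}$ is an instance of an axiom ($\rul{Ax}$, $\rul{L\bot}$ or $\rul{R=}$); these are handled immediately by weakening of the other subderivation (for which height-preserving weakening must first be verified, but this is routine). For the reduction step in the non-principal cases, the standard complications occur at quantifier and equality rules: when $\mathcal{D}_{\texttt{L}}$ ends in $\rul{R\forall}$ with eigenvariable $y$, I rename $y$ throughout the subderivation using Lemma \ref{lem:rename} so that it does not clash with $\Sigma, \Theta, \phi$, and only then permute the cut; an analogous move handles $\rul{L\forall}$ with a potentially captured substituted variable. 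The equality rules $\rul{L=_i}$ are subtle because the equality atom $x=y$ is context while the surrounding formulas are principal, so permuting a multi-cut past them requires verifying that the substitutions $[x/z]$, $[y/z]$ commute with the cut formula $\phi$ (which they do, since $\phi$ is unchanged by the rule). The contraction rules $\rul{LC}, \rul{RC}$ are exactly the reason for working with multi-cut rather than cut: they can be absorbed into the multiplicities $m,n$ without increasing the cut rank.

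For the principal reduction cases, the Boolean and quantifier clauses are standard: for $\phi = \psi_1 \to \psi_2$, replace the cut by two cuts on $\psi_1$ and $\psi_2$; for $\phi = \forall x.\psi$, instantiate the fresh eigenvariable from $\rul{R\forall}$ via Lemma \ref{lem:rename} with the witness term used in the matching $\rul{L\forall}$, then cut on $\psi[z/x]$. The genuinely coalgebraic case is when $\phi = z\hearts \boldsymbol{\lcomp x:\phi\rcomp}$ is principal on both sides, i.e.\ both $\mathcal{D}_{\texttt{L}}$ and $\mathcal{D}_{\texttt{R}}$ end in instances of $\Seq(R_1), \Seq(R_2)$ with $R_1, R_2 \in \Rules$. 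This is where absorption of multicut is used: it provides a rule $R \in \Rules$ whose premises are covered by cuts among the premises of $R_1$ and $R_2$ (these cuts are on strictly smaller formulas, namely the $\phi_i^j[y/x_i^j]$, so the outer induction hypothesis applies), and whose conclusion covers the desired combined endsequent. Weakening contexts $\Sigma, \Theta$ are handled by the ambient weakening contexts of $\Seq(R)$; eigenvariable freshness for the new $\Seq(R)$ instance is arranged via Lemma \ref{lem:rename} before performing the absorbed cuts.

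The main obstacle I anticipate is the careful bookkeeping in the modal principal case: one must simultaneously ensure (i) that the fresh eigenvariables of the two premise $\Seq$-rules are aligned (again by Lemma \ref{lem:rename}) so that the cuts on the subformulas $\phi_i^j[y/x_i^j]$ make sense, (ii) that the weakening contexts $\Sigma, \Theta$ from $R_1$ and $R_2$ are correctly merged into the single weakening context of $R$, and (iii) that the multi-cut multiplicities $m, n$ on the principal modal formulas are correctly accounted for when invoking the absorption hypothesis (which is stated for single occurrences but extends to multiplicities by iterating absorption and applying contraction). A secondary delicate point, worth verifying in detail, is the interaction between $\rul{L=_i}$ and modal rules: an instance of $\rul{L=_i}$ appearing in the subderivations above a modal cut can rewrite bound comprehension variables, so one must check that the rewriting commutes with the freshness side-conditions on $\Seq(R)$; this is handled by choosing the eigenvariables of the modal rules disjointly from any variable appearing in pending equations, again invoking Lemma \ref{lem:rename}.
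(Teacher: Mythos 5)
Your proposal matches the paper's proof in all essentials: the same double induction on cut-formula complexity and the size of the two cut-free subderivations, upward permutation in the non-principal cases using the height-preserving renaming lemma, the standard principal reductions for implication and the universal quantifier, absorption of multicut for the case where both last rules are modal rules with the cut formula principal, and the context-versus-principal analysis for the equality rules. Two small corrections to your sketch: the axiom cases $\rul{R=}$ on the left and $\rul{L\bot}$ on the right are \emph{not} dispatched immediately by weakening but must fall back on the permutation argument over the opposite subderivation (and the $\rul{Ax}$ case needs contraction, not weakening, to collapse the multiplicity $n$), and the paper's notion of multicut absorption is already formulated with multiplicities $(\hearts\vec{\sva})^{m}$, $(\hearts\vec{\sva})^{n}$, so no iteration of single-cut absorption is required in the modal principal case.
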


\begin{proof}
First of all, we introduce some terminology used only in this proof. Let $\mathcal{D}$ be the derivation in question. %of the form stated in the claim. 
 We say that $\phi$ is a {\em cut formula} of $\mathcal{D}$, and we define the complexity $\texttt{c}(\mathcal{D})$ as the complexity of the cut formula $\phi$, i.e., the length or the number of connectives including the logical and modal connectives. Moreover, we define $\texttt{w}(\mathcal{D})$ as the total number of sequents in $\mathcal{D}_{\texttt{L}}$ and $\mathcal{D}_{\texttt{R}}$. Our proof of the statement of the claim is shown by the double induction on $(\texttt{c}(\mathcal{D}), \texttt{w}(\mathcal{D}))$ (note that $\texttt{c}(\mathcal{D}) \geqslant 0$ and $\texttt{w}(\mathcal{D})\geqslant 2$). Let us denote the last applied rule (or axiom, possibly) of a derivation $\mathcal{E}$ by $\texttt{rule}(\mathcal{E})$. We divide our argument into the following (exhaustive) cases:
\begin{enumerate}
\item One of $\texttt{rule}(\mathcal{D}_{\texttt{L}})$ and $\texttt{rule}(\mathcal{D}_{\texttt{R}})$ is an axiom. \item One of $\texttt{rule}(\mathcal{D}_{\texttt{L}})$ and $\texttt{rule}(\mathcal{D}_{\texttt{R}})$ is a structural rule. 
\item One of $\texttt{rule}(\mathcal{D}_{\texttt{L}})$ and $\texttt{rule}(\mathcal{D}_{\texttt{R}})$ is a logical rule or a modal rule and the cut formula is not principal in the rule.
\item Both $\texttt{rule}(\mathcal{D}_{\texttt{L}})$ and $\texttt{rule}(\mathcal{D}_{\texttt{R}})$ are logical rules for the same logical connective and the cut formula is principal in each of the rules. 
\item Both $\texttt{rule}(\mathcal{D}_{\texttt{L}})$ and $\texttt{rule}(\mathcal{D}_{\texttt{R}})$ are modal rules and the cut formula is principal in each of the rules. 
\item One of $\texttt{rule}(\mathcal{D}_{\texttt{L}})$ and $\texttt{rule}(\mathcal{D}_{\texttt{R}})$ is an equality rule. 
\end{enumerate}
Let us check each case one by one. 
\begin{enumerate}
\item One of $\texttt{rule}(\mathcal{D}_{\texttt{L}})$ and $\texttt{rule}(\mathcal{D}_{\texttt{R}})$ is an axiom:
We have four cases since it is impossible that $\texttt{rule}(\mathcal{D}_{\texttt{L}})$ is $\rul{L \bot}$ or $\texttt{rule}(\mathcal{D}_{\texttt{L}})$ is $\rul{R=}$.  Firstly, when $\texttt{rule}(\mathcal{D}_{\texttt{L}})$ is $\rul{Ax}$, let the derivation be
\[
\infer[\rul{Mcut}]{\phi, \Sigma \To \Theta}{
\infer[\rul{Ax}]{\phi \To \phi}{}
&
\deduce[\mathcal{D}_{\texttt{R}}]{\phi^{n}, \Sigma \To \Theta}{}}.
\]
When $n$ = $1$, we already obtain the derivability of $\phi, \Sigma \To \Theta$ in $\Seq \Rules$.  When $n \geqslant 2$, $\phi, \Sigma \To \Theta$ is obtained from $\phi^{n}, \Sigma \To \Theta$ by finitely many applications of $\rul{LC}$. 

Secondly, when $\texttt{rule}(\mathcal{D}_{\texttt{R}})$ is $\rul{Ax}$, the argument is similar to the previous case where $\texttt{rule}(\mathcal{D}_{\texttt{L}})$ is $\rul{Ax}$. 

Thirdly, when $\texttt{rule}(\mathcal{D}_{\texttt{L}})$ is $\rul{R=}$, we need to look at what the last rule $\texttt{rule}(\mathcal{D}_{\texttt{R}})$ is, where $\mathcal{D}$ is of the following form:
\[
{
\infer[\rul{Mcut}]{\Sigma \To \Theta}{\infer[\rul{R=}]{\To x=x}{} & \deduce[\mathcal{D}_{\texttt{R}}]{(x=x)^{n}, \Sigma \To \Theta}{}}.
}
\]
If $\texttt{rule}(\mathcal{D}_{\texttt{R}})$ is an axiom, then it should be $\rul{Ax}$ and we have already checked this case in our second case of this item. Otherwise, $\texttt{rule}(\mathcal{D}_{\texttt{R}})$ is a structural rule, a logical rule, a modal rule or an equality rule. These cases will be discussed below (especially (2), (3) and (6)), so we leave them out for now. %they need not be written out here. 
It is, however, noted that the cut formula $x=x$ is not principal in the case (3). 

Fourthly, when $\texttt{rule}(\mathcal{D}_{\texttt{R}})$ is $\rul{L\bot}$, then we need to look at the last rule $\texttt{rule}(\mathcal{D}_{\texttt{L}})$, where $\mathcal{D}$ is 
\[
{
\infer[\rul{Mcut}]{\Gamma \To \Delta }{\deduce[\mathcal{D}_{\texttt{L}}]{\Gamma \To \Delta, \bot^{m}}{} & \infer[\rul{L \bot}]{\bot \To}{}}
}.
\]
If $\texttt{rule}(\mathcal{D}_{\texttt{L}})$ is an axiom, it should be $\rul{Ax}$ and we have already checked such case in the first case of this item. Otherwise, $\texttt{rule}(\mathcal{D}_{\texttt{L}})$ must be a structural rule, a logical rule, an modal rule or an equality rule. Again these cases will be discussed below (especially (2), (3) and (6), where  it is noted that the cut formula $\bot$ is not principal in the case (3)), so we leave them out for now. %they need not be written out here. 

\item One of $\texttt{rule}(\mathcal{D}_{\texttt{L}})$ and $\texttt{rule}(\mathcal{D}_{\texttt{R}})$ is a structural rule: all arguments for this case are standard, so we deal only with the case where $\texttt{rule}(\mathcal{D}_{\texttt{L}})$ is $\rul{RC}$, i.e., $\mathcal{D}$ is of the following form:
\[
{
\infer[\rul{Mcut}]{\Gamma, \Sigma \To \Delta, \Theta}{\infer[\rul{RC}]{\Gamma \To \Delta, \phi^{m}}{\deduce[\mathcal{D}_{\texttt{L}}']{\Gamma \To \Theta, \phi^{m+1}}{}} & \deduce[\mathcal{D}_{\texttt{R}}]{\phi^{n}, \Sigma \To \Theta}{}},
}
\]
since multicut plays an essential role. This derivation is transformed into: 
\[
\infer[\rul{Mcut}]{\Gamma,\Sigma \To \Delta, \Theta}{
\deduce[\mathcal{D}_{\texttt{L}}']{\Gamma \To \Delta, \phi^{m+1}}{}
&
\deduce[\mathcal{D}_{\texttt{R}}]{\phi^{n}, \Sigma \To \Theta}{}
}
\]
where the application of $\rul{Mcut}$ is eliminable since the complexity of the derivation is the same as $\texttt{c}(\mathcal{D})$ and  the weight of the derivation is smaller than $\texttt{w}(\mathcal{D})$. 
\item One of $\texttt{rule}(\mathcal{D}_{\texttt{L}})$ and $\texttt{rule}(\mathcal{D}_{\texttt{R}})$ is a logical rule or a modal rule and the cut formula is not principal in the rule: Our argument for logical rules are standard, so we focus on the case where one of the rules is a modal rule $\Seq(R)$. Let $\texttt{rule}(\mathcal{D}_{\texttt{L}})$ is $\Seq(R)$. Then our derivation $\mathcal{D}$ is of the following form:
\[
\scalebox{0.83}{\infer[\rul{Mcut}]{\deduce{z \hearts_{n+1} \boldsymbol{\lcomp x
	\col \phi_{n+1} \rcomp}, \dots, z \hearts_{n+m} \boldsymbol{\lcomp x
	\col \phi_{n+m} \rcomp}, \Theta', \Theta}{\Sigma, \Sigma', z \hearts_1 \boldsymbol{\lcomp
	x \col \phi_1 \rcomp}, \dots, z \hearts_n \boldsymbol{\lcomp x
	\col \phi_n \rcomp} \To }}{
\infer[{\Seq}(R)\yfresh]{\deduce{
	z \hearts_{n+1} \boldsymbol{\lcomp x
	\col \phi_{n+1} \rcomp}, \dots, z \hearts_{n+m} \boldsymbol{\lcomp x
	\col \phi_{n+m} \rcomp}, \Theta', \phi^{m}}{\Sigma', z \hearts_1 \boldsymbol{\lcomp
	x \col \phi_1 \rcomp}, \dots, z \hearts_n \boldsymbol{\lcomp x
	\col \phi_n \rcomp} \To }}{
\deduce[\mathcal{D}_{\texttt{L}_{1}}]{\Sigma', (\Gamma_{1} \sigma) [y/x] \To (\Delta_{1} \sigma) [y/x], \Theta', \phi^{m}}{}
\cdots \deduce[\mathcal{D}_{\texttt{L}_{k}}]{\Sigma', (\Gamma_{k} \sigma) [y/x] \To (\Delta_{k} \sigma) [y/x], \Theta',\phi^{m}}{}   } 
& \deduce[\mathcal{D}_{\texttt{R}}]{\phi^{n}, \Sigma \To \Theta}{}}
}\]
where $\yfresh$ in the application of $\Seq(R)$ means that $y$ is fresh in the conclusion. For each $\mathcal{D}_{\texttt{L}_{i}}$, we apply height-preserving substitution $[z/y]$ for a fresh variable $z$ in the conclusion of $\mathcal{D}$ and we obtain the following derivation:
\[
\infer[\rul{Mcut}]{\Sigma, \Sigma', (\Gamma_{i} \sigma) [z/x] \To (\Delta_{i} \sigma) [z/x], \Theta',\Theta}{
\deduce[\text{$\mathcal{D}_{\texttt{L}_{i}} [z/y]$} ]{\Sigma', (\Gamma_{i} \sigma) [z/x] \To (\Delta_{i} \sigma) [z/x], \Theta', \phi^{m}}{}
& 
\deduce[\mathcal{D}_{\texttt{R}}]{\phi^{n}, \Sigma \To \Theta}{}}
.
\]
We can eliminate the last application of $\rul{Mcut}$ since the complexity of the derivation is the same as $\texttt{c}(\mathcal{D})$ and  the weight of the derivation is smaller than $\texttt{w}(\mathcal{D})$. Finally we apply the same rule $\Seq(R)$ to obtain the desired conclusion. When $\texttt{rule}(\mathcal{D}_{\texttt{R}})$ be $\Seq(R)$, the argument is similar to the case just discussed. 

\item Both $\texttt{rule}(\mathcal{D}_{\texttt{L}})$ and $\texttt{rule}(\mathcal{D}_{\texttt{R}})$ are logical rules for the same logical connective and the cut formula is principal in each of the rules: We have two cases, i.e., two cases where the cut formula is of the form $\phi \to \psi$ or of the form $\forall x.\phi$. Here we only deal with the case where the cut formula is of the form $\forall x.\phi$. Then the derivation $\mathcal{D}$ is of the following form: 
\[
{
\infer[\rul{Mcut}]{\Gamma, \Sigma \To \Delta, \Theta}{
\infer[\rul{R\forall}\yfresh]{\Gamma \To \Delta, (\forall x.\phi)^{m}}
{
\deduce[\mathcal{D_{\texttt{L}}'}]{\Gamma \To \Delta, (\forall x.\phi)^{m-1}, \phi[y/x]}{}
} 
&\infer[\rul{L\forall}]{
(\forall x.\phi)^{n}, \Sigma \To \Theta
}{
\deduce[\mathcal{D}_{\texttt{R}}']{\phi[z/x], (\forall x.\phi)^{n-1}, \Sigma \To \Theta
}{}
}
}
.
}
\]
With the help of our height-preserving substitution, 
we can consider a multicut between $\mathcal{D}_{\texttt{L}}'$ and $\mathcal{D}_{\texttt{R}}$:
\[
\infer[\rul{Mcut}]{\Gamma,\Sigma \To \Delta,\Theta, \phi[z/x]}{
\deduce[\text{$\mathcal{D_{\texttt{L}}'}[z/y]$}]{\Gamma \To \Delta, (\forall x.\phi)^{m-1}, \phi[z/x]}{}
&
\deduce[\mathcal{D}_{\texttt{R}}]{(\forall x.\phi)^{n}, \Sigma \To \Theta}{}
},
\]
and then by induction hypothesis (the complexity of this derivation is the same as $\mathcal{D}$ but the weight is smaller than the original $\mathcal{D}$) we now know that $\Gamma,\Sigma \To \Delta,\Theta, \phi[z/x]$ is derivable in $\Seq\Rules$ without multicuts by a derivation $\mathcal{E}_{1}$. Let us also consider a multicut between $\mathcal{D}_{\texttt{L}}$ and $\mathcal{D}_{\texttt{R}}'$:
\[
\infer[\rul{Mcut}]{\phi[z/x],\Gamma,\Sigma \To \Delta,\Theta}{
\deduce[\mathcal{D}_{\texttt{L}}]{\Gamma \To \Delta, (\forall x.\phi)^{m}}{}
&
\deduce[\mathcal{D}_{\texttt{R}}']{\phi[z/x], (\forall x.\phi)^{n-1}, \Sigma \To \Theta}{}
},
\]
and then by induction hypothesis (the complexity of this derivation is the same as $\mathcal{D}$ but the weight is smaller than the original $\mathcal{D}$) we now know that $\phi[z/x],\Gamma,\Sigma \To \Delta,\Theta$ is derivable in $\Seq\Rules$ without multicuts by a derivation $\mathcal{E}_{2}$.  Now let us take a cut between $\mathcal{E}_{1}$ and $\mathcal{E}_{2}$: 
\[
\infer[\rul{Mcut}]{
\Gamma,\Gamma,\Sigma,\Sigma  \To \Delta,\Delta,\Theta,\Theta
}{
\deduce[\mathcal{E}_{1}]{\Gamma,\Sigma \To \Delta,\Theta, \phi[z/x]}{}
&
\deduce[\mathcal{E}_{2}]{\phi[z/x],\Gamma,\Sigma \To \Delta,\Theta}{}
}
\]
and the conclusion of this derivation is derivable in $\Seq\Rules$ without multicuts by induction hypothesis because the complexity of this derivation (i.e., the length of $\phi[z/x]$) is strictly smaller than $\texttt{c}(\mathcal{D})$. Finally, finitely many applications of contraction rules enables us to obtain the derivability of $\Gamma,\Sigma \To \Delta,\Theta$ in $\Seq\Rules$, as desired. 
\item Both $\texttt{rule}(\mathcal{D}_{\texttt{L}})$ and $\texttt{rule}(\mathcal{D}_{\texttt{R}})$ are modal rules and the cut formula is principal in each of the rules: 
Let $\texttt{rule}(\mathcal{D}_{\texttt{L}})$ = $\Seq(R_{1})$ and $\texttt{rule}(\mathcal{D}_{\texttt{R}})$ = $\Seq(R_{2})$ where we can assume: 
%\begin{footnotesize}
\[
R_{1} = 
\vcenter{
\infer[]{\hearts_1 \vec{\sva}_1, \dots, \hearts_a \vec{\sva}_a \To \hearts_{a+1}
	\vec{\svb}_1, \dots, \hearts_{a+b} \vec{\svb}_b, (\hearts \vec{\sva})^{n}}{\Gamma_{11} \To \Delta_{11} & \cdots & \Gamma_{1k}\To \Delta_{1k}}},
\]
\[
R_{2} = 
\vcenter{
\infer[]{(\hearts \vec{\sva})^{m}, \spadesuit_1 \vec{\sva}_1', \dots, \spadesuit_c \vec{\sva}_c' \To \spadesuit_{c+1}
	\vec{\svb}_1', \dots, \spadesuit_{c+d} \vec{\svb}_d'}{\Gamma_{21} \To \Delta_{21} & \cdots & \Gamma_{2l}\To \Delta_{2l}}
},
\]
%\end{footnotesize}
because the cut formula is principal in both rules. In what follows, we assume that all of $\vec{\sva}_{i}$, $\vec{\svb}_{j}$,  $\vec{\sva}_{i}'$, $\vec{\svb}_{j}'$ are distinct. 
So $\mathcal{D}_{\texttt{L}}$ is of the following form:
%\begin{footnotesize}
\[
\infer[{\Seq(R_{1})}]{\deduce{ z \hearts_{a+1} \boldsymbol{\lcomp x_{a+1}
	\col \phi_{a+1} \rcomp}, \dots, z \hearts_{a+b} \boldsymbol{\lcomp x_{a+b}
	\col \phi_{a+b} \rcomp}, (z \hearts \boldsymbol{\lcomp x
	\col \phi \rcomp})^{m}, \Theta_{1}}{\Sigma_{1}, z \hearts_1 \boldsymbol{\lcomp
	x_{1} \col \phi_1 \rcomp}, \dots, z \hearts_{a} \boldsymbol{\lcomp x_{a}
	\col \phi_a \rcomp} \To}}{
\deduce[\mathcal{D}_{\texttt{L}_{1}}']{\Sigma_{1}, \Gamma_{11} \sigma_{\boldsymbol{x}}^{y_{1}} \To \Delta_{11} \sigma_{\boldsymbol{x}}^{y_{1}}, \Theta_{1}}{}  
&\cdots &\deduce[\mathcal{D}_{\texttt{L}_{k}}']{\Sigma_{1}, \Gamma_{1k}\sigma_{\boldsymbol{x}}^{y_{1}}  \To \Delta_{1k} \sigma_{\boldsymbol{x}}^{y_{1}}, \Theta_{1}}{}
}
\]
%\end{footnotesize}
\noindent and $\mathcal{D}_{\texttt{R}}$ is of the following form:
%\begin{footnotesize}
\[
\infer[{\Seq(R_{2})}]{\deduce{z \spadesuit_{c+1} \boldsymbol{\lcomp x_{c+1}
	\col \psi_{c+1} \rcomp}, \dots, z \spadesuit_{c+d} \boldsymbol{\lcomp x_{c+d}
	\col \psi_{c+d} \rcomp}, \Theta_{2}}{\Sigma_{2},  (z \hearts \boldsymbol{\lcomp x
	\col \phi \rcomp})^{n}, z \spadesuit_1 \boldsymbol{\lcomp
	x_1 \col \psi_1 \rcomp}, \dots, z \spadesuit_c \boldsymbol{\lcomp x_c
	\col \psi_c \rcomp} \To}}{
\deduce[\mathcal{D}_{\texttt{R}_{1}}']{ \Sigma_{2}, \Gamma_{21} \tau_{\boldsymbol{x}}^{y_{2}}  \To \Delta_{21} \tau_{\boldsymbol{x}}^{y_{2}}, \Theta_{2}}{}  
&\cdots &\deduce[\mathcal{D}_{\texttt{R}_{l}}']{\Sigma_{2}, \Gamma_{2l} \tau_{\boldsymbol{x}}^{y_{2}} \To \Delta_{2l} \tau_{\boldsymbol{x}}^{y_{2}}, \Theta_{2}}{}
}.
\]
%\end{footnotesize}

\medskip

We also note that the conclusion of $\mathcal{D}$ is:
%\begin{footnotesize}
\begin{align*}
\Sigma_{1},\Sigma_{2},  & \{ z \hearts_i \boldsymbol{\lcomp x_{i} \col \phi_i \rcomp}  \}_{1 \leqslant i \leqslant a}, \{ z \spadesuit_{j} \boldsymbol{\lcomp x_{j} \col \psi_{j} \rcomp} \}_{1 \leqslant j \leqslant c} \To \\
 & \quad \{ z \hearts_{a+i} \boldsymbol{\lcomp x_{a+i} \col \phi_{a+i} \rcomp} \}_{1 \leqslant i \leqslant b}, \{z \spadesuit_{c+j} \boldsymbol{\lcomp x_{c+j} \col \psi_{c+j} \rcomp} \}_{1 \leqslant j \leqslant d}, \Theta_{1}, \Theta_{2}.
\end{align*}
%\end{footnotesize}
Let $y$ be a fresh variable not occurring in this conclusion. By height-preserving substitution, we can obtain derivations $\mathcal{D}'_{\texttt{L}_{i}}[z/y_{1}]$ and $\mathcal{D}'_{\texttt{R}_{j}}[z/y_{2}]$ ($1 \leqslant i \leqslant k$ and $1 \leqslant j \leqslant l$). Since $\Rules$ absorbs multicut, we can find a rule $R = \Gamma_{1} \To \Delta_{1}, \cdots, \Gamma_{e} \To \Delta_{e}/\Gamma_{R} \To \Delta_{R} \in \Rules$ such that 
\begin{itemize}
\item[($\ast_1$)]  $\MCut(\{ \Gamma_{1i} \To \Delta_{1i}\}_{1 \leqslant i \leqslant k},  \{\Gamma_{2j} \To \Delta_{2j}\}_{1 \leqslant j \leqslant l}) \rhd \{ \Gamma_{1} \To \Delta_{1}, \ldots, \Gamma_{e}\To \Delta_{e} \}$ and 
\item[($\ast_2$)] $\Gamma_{R} \To \Delta_{R} \rhd  \{ \hearts_i \vec{\sva}_i \}_{1 \leqslant i \leqslant a}, \{ \spadesuit_j \vec{\sva}_j'\}_{1\leqslant  j \leqslant c} \To \{ \hearts_{a+i} \vec{\svb}_i \}_{1 \leqslant i \leqslant b}, \{ \spadesuit_{c+j} \vec{\svb}_j'\}_{1 \leqslant j \leqslant d}$. 
\end{itemize}
By the clause $(\ast_{1})$ and our derivations $\mathcal{D}'_{\texttt{L}_{i}}[z/y_{1}]$, $\mathcal{D}'_{\texttt{R}_{j}}[z/y_{2}]$, we now use the induction hypothesis (the complexity is the same but the weight becomes smaller than that of $\mathcal{D}$) and weakening rules to obtain the derivability in $\Seq\Rules$ (without multicuts) of 
\[
\Gamma_{i} \sigma \To \Delta_{i} \sigma \quad (1 \leqslant i \leqslant e)
\]
where $\sigma$ is a substitution which is the union of $\sigma_{\boldsymbol{x}}^{y_{1}}$ and $\tau_{\boldsymbol{x}}^{y_{2}}$. 
It follows from the rule $\Seq(R)$, the clause $(\ast_{2})$ and weakening rules that the conclusion of $\mathcal{D}$ is derivable in $\Seq\Rules$ without multicuts, as desired. 

\item One of $\texttt{rule}(\mathcal{D}_{\texttt{L}})$ and $\texttt{rule}(\mathcal{D}_{\texttt{R}})$ is an equality rule: There are three cases that we need to consider. In the first case, $\texttt{rule}(\mathcal{D}_{\texttt{R}})$ is $\rul{L=_{i}}$ where at least one occurrence of the cut formulas is not principal in $\rul{L=_{i}}$ and so the cut formula is of the form $x=y$. In the second case, $\texttt{rule}(\mathcal{D}_{\texttt{L}})$ is $\rul{L=_{i}}$ but all occurrences of the cut formula are principal. In the third case, $\texttt{rule}(\mathcal{D}_{\texttt{R}})$ is $\rul{L=_{i}}$ and all occurrences of the cut formula are principal. Since our argument for the third case is almost similar to the one for the second case, we focus on the first and the second cases in what follows. 

Firstly, consider the case when $\texttt{rule}(\mathcal{D}_{\texttt{R}})$ is $\rul{L=_{i}}$ and at least one occurrence of the cut formulas is not principal in $\rul{L=_{i}}$. Without loss of generality, we assume that $i$ = 1. Then our derivation $\mathcal{D}$ is of the following form:
\[
{
\infer[\rul{Mcut}]{\Gamma, \Sigma \To \Delta, \Theta}{\deduce[\mathcal{D}_{\texttt{L}}]{\Gamma \To \Delta, (x=y)^{m}}{} & 
\infer[\rul{L=_{1}}]{x = y, \phi'_{1}[y/w], \ldots, \phi'_{n-1}[y/w], \Sigma'[y/w] \To \Theta'[y/w]}{
\deduce[\mathcal{D}_{\texttt{R}}']{x = y, \phi'_{1}[x/w], \ldots, \phi'_{n-1}[x/w], \Sigma'[x/w] \To \Theta'[x/w]}{}}
}
}
\]
where $\Sigma'[y/w]$ = $\Sigma$, $\Theta'[y/w]$ = $\Theta$, $\phi_{i}'[y/w]$ is $x = y$ and so $x = y, \phi'_{1}[y/w], \ldots, \phi'_{n-1}[y/w]$ is the same as $(x=y)^{n}$. In this case we need to check what is the last rule $\texttt{rule}(\mathcal{D}_{\texttt{L}})$. If $\texttt{rule}(\mathcal{D}_{\texttt{L}})$ is $\rul{Ax}$ (it cannot be $\rul{L\bot}$), or a structural rule, or a logical or modal rule, we can use the same argument in the items (1), (2), (3). If $\texttt{rule}(\mathcal{D}_{\texttt{L}})$ is an equality rule $\rul{L=_{i}}$, then our argument is the same as in the second case below. %So we do not need to repeat the argument here. 
 The remaining case is $\texttt{rule}(\mathcal{D}_{\texttt{L}})$ is an axiom $\rul{R=}$. Then our derivation above $\mathcal{D}$ has the following form:
\[
{
\infer[\rul{Mcut}]{\Sigma \To\Theta}{\infer[\rul{R=}]{\To x=x}{} & 
\infer[\rul{L=_{1}}]{x = x, \phi'_{1}[x/w], \ldots, \phi'_{n-1}[x/w], \Sigma'[x/w] \To \Theta'[x/w]}{
\deduce[\mathcal{D}_{\texttt{R}}']{x = x, \phi'_{1}[x/w], \ldots, \phi'_{n-1}[x/w], \Sigma'[x/w] \To \Theta'[x/w]}{}}.
}
}
\]
Then this derivation is transformed into:
\[
\infer[\rul{Mcut}]{\Sigma \To \Theta}{\infer[\rul{R=}]{\To x=x}{} & 
\deduce[\mathcal{D}_{\texttt{R}}']{x = x, \phi'_{1}[x/w], \ldots, \phi'_{n-1}[x/w], \Sigma'[x/w] \To \Theta'[x/w]}{}
}
\]
and this last application of multicut is eliminable since the complexity is the same as that of $\mathcal{D}$ but the weight becomes smaller. 

Secondly, let $\texttt{rule}(\mathcal{D}_{\texttt{L}})$ be $\rul{L=_{i}}$ and assume that all occurrences of the cut formula are principal. In this case the derivation $\mathcal{D}$ is of the following form:
\[
{
\infer[\rul{Mcut}]{x=y, \Gamma', \Sigma \To \Delta, \Theta}{
\infer[\rul{L=_{1}}]{x= y, \Gamma''[y/w] \To \Delta''[y/w], \phi_{1}'[y/w], \ldots, \phi_{m}'[y/w] }{
\deduce[\mathcal{D}_{\texttt{L}}']{x= y, \Gamma''[x/w] \To \Delta''[x/w], \phi_{1}'[x/w], \ldots, \phi_{m}'[x/w]}{}
} & 
\deduce[\mathcal{D}_{\texttt{R}}]{\phi^{n}, \Sigma \To \Theta}{}}
}
\]
where $\Gamma''[y/w]$ = $\Gamma'$, $\Delta'[y/w]$ = $\Delta$ and $\phi_{i}'[y/w]$ = $\phi$ ($1 \leqslant i \leqslant m$). 
Before transforming this derivation into a multicut-free derivation, we remark that $\phi_{i}'[x/w][y/x]$ = $\phi_{i}'[y/w][y/x]$ = $\phi[y/x]$, 
$\Gamma''[x/w][y/x]$ = $\Gamma''[y/w][y/x]$ = $\Gamma'[y/x]$, 
$\Delta'[x/w][y/x]$ = $\Delta'[y/w][y/x]$ = $\Delta[y/x]$. 
With the help of this remark, the derivation $\mathcal{D}$ is transformed into:
\[
\infer[\rul{LC}]{x=y, \Gamma',\Sigma \To \Delta,\Theta}{
\infer[\rul{L=_{2}}]{x=y, x= y, \Gamma',\Sigma \To \Delta,\Theta}{
\infer[\rul{RW}]{x=y, y= y, \Gamma'[y/x],\Sigma[y/x] \To \Delta[y/x],\Theta[y/x]}{
\infer[\rul{Mcut}]{y= y, \Gamma'[y/x],\Sigma[y/x] \To \Delta[y/x],\Theta[y/x]}{
\deduce[\text{$\mathcal{D}_{\texttt{L}}'[y/x]$}]{y= y, \Gamma'[y/x] \To \Delta[y/x], (\phi[y/x])^{m}}{}
&
\deduce[\text{$\mathcal{D}_{\texttt{R}}[y/x]$}]{(\phi[y/x])^{n}, \Sigma[y/x] \To \Theta[y/x]}{}}
}
}
}
\]
where we note that the first application of multicut is eliminable since the complexity is the same as that of $\mathcal{D}$ but the weight is smaller than that of $\mathcal{D}$ by height-preserving substitution $[y/x]$.  \qedhere
\end{enumerate}
\end{proof}

\begin{thm}[Cut Elimination]
\label{thm:cut-elim-strongly}
Suppose that $\Rules$ absorbs multicut. Then the rule $\rul{Mcut}$ is admissible in $\Seq \Rules$. Therefore, $\rul{Cut}$ is also admissible in $\Seq \Rules$. 
\end{thm}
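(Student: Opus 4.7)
The plan is to bootstrap from the Hauptsatz (Lemma~\ref{lem:hauptsatz}) to show that \emph{any} derivation in $\Seq\Rules$ extended with $\rul{Mcut}$ can be transformed into a $\rul{Mcut}$-free derivation in $\Seq\Rules$. Lemma~\ref{lem:hauptsatz} handles the base case of one application of $\rul{Mcut}$ whose two subderivations are themselves $\rul{Mcut}$-free; the remaining task is essentially to chain applications of the Hauptsatz.

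Concretely, given a derivation $\mathcal{D}$ in $\Seq\Rules + \rul{Mcut}$, I would argue by induction on the number $n$ of occurrences of $\rul{Mcut}$ in $\mathcal{D}$. When $n=0$, there is nothing to do. When $n\ge 1$, pick a topmost occurrence of $\rul{Mcut}$ in $\mathcal{D}$, i.e.\ one whose two immediate subderivations $\mathcal{D}_{\texttt{L}}$ and $\mathcal{D}_{\texttt{R}}$ contain no $\rul{Mcut}$. By Lemma~\ref{lem:hauptsatz}, the conclusion of this topmost $\rul{Mcut}$ is derivable in $\Seq\Rules$ without using $\rul{Mcut}$ at all; call the resulting cut-free subderivation $\mathcal{D}'$. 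Substituting $\mathcal{D}'$ into $\mathcal{D}$ in place of the topmost $\rul{Mcut}$ yields a derivation with strictly fewer than $n$ applications of $\rul{Mcut}$, to which the induction hypothesis applies.

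Since every application of $\rul{Cut}$ is the special case $m=n=1$ of $\rul{Mcut}$, admissibility of $\rul{Cut}$ in $\Seq\Rules$ is an immediate corollary. The one genuine ingredient is the Hauptsatz itself, which was proved using the absorption-of-multicut hypothesis on $\Rules$ precisely to handle the case where both premises of a topmost $\rul{Mcut}$ end with modal rules and the cut formula is principal on both sides.

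The main obstacle, therefore, was already confronted in the proof of Lemma~\ref{lem:hauptsatz}: the present theorem amounts to observing that local multicut-elimination plus induction on the number of cuts suffices globally. No further case analysis on the shape of $\mathcal{D}$ is required, because topmost $\rul{Mcut}$ instances have $\rul{Mcut}$-free subderivations by definition and hence meet the hypothesis of Lemma~\ref{lem:hauptsatz} directly.
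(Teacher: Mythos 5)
Your proposal is correct and follows essentially the same route as the paper: the paper likewise repeatedly selects a topmost application of $\rul{Mcut}$ (whose subderivations are necessarily $\rul{Mcut}$-free), eliminates it via Lemma~\ref{lem:hauptsatz}, and iterates until no multicuts remain, with admissibility of $\rul{Cut}$ following as the special case $m=n=1$. Your explicit induction on the number of $\rul{Mcut}$ occurrences is just a slightly more formal phrasing of the paper's ``repeat the same argument'' step.
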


\begin{proof}
Suppose that a sequent is derivable in the system $\Seq\Rules$ extended with $\rul{MCut}$. Let $\mathcal{E}$ be  such a derivation. Then we focus on one of the topmost applications of $\rul{MCut}$ to show that such application of $\rul{MCut}$ is eliminable, i.e., we show that the derivation whose last applied rule is such multicut can be replaced with a multicut-free derivation of $\Seq \Rules$. This is done using Lemma \ref{lem:hauptsatz}. Once we eliminate one of the topmost applications of $\rul{MCut}$, we repeat the same argument for the remaining topmost applications with the help of Lemma \ref{lem:hauptsatz} to get rid of all applications of $\rul{MCut}$ in the original derivation $\mathcal{E}$. 
\end{proof}

\noindent In what follows, we introduce the notion of \emph{absorption of contraction and cut} and show that jointly they provide a sufficient condition of absorption of multicut. %F

\begin{defi} %\label{def:absocut}
Let $\mathbb{S}$ be a finite set of sequents. 
The set of sequents that can be derived from premises $\mathbb{S}$ using  (only) the \emph{contraction rules} is denoted by $\Con(\mathbb{S})$. Similarly,  the set of all sequents that can be derived from premises in $\mathbb{S}$ using (only) {\em one application} of the \emph{cut rule} is denoted by $\Cut(\mathbb{S})$. A rule set $\Rules$ \emph{absorbs contraction} if, for all rules $R$ = $\Gamma_{1} \To \Delta_{1}, \cdots, \Gamma_{k} \To \Delta_{k} / \Gamma_{R} \To \Delta_{R}  \in \Rules$ and all $\Gamma' \To \Delta' \in \Con(\Gamma_{R} \To \Delta_{R})$ there exists a rule $S = \Sigma_{1} \To \Theta_{1}, \cdots, \Sigma_{l} \To \Theta_{l} / \Gamma_{S} \To \Delta_{S} \in \Rules$ such that 
\[
\Con(\{ \Gamma_{1} \To \Delta_{1}, \ldots, \Gamma_{k} \To \Delta_{k} \}) \rhd \{ \Sigma_{1} \To \Theta_{1}, \ldots, \Sigma_{l} \To \Theta_{l} \}
\]
and $\Gamma_{S} \To \Delta_{S} \rhd \Gamma' \To \Delta'$. A rule set $\Rules$ \emph{absorbs cut}, if for all pairs $(R_{1},R_{2})$ of rules in $\Rules$:
\[
\infer[{R_{1}}]{\Gamma_{R_{1}} \To \Delta_{R_{1}}, \hearts \vec{\sva}}{\Gamma_{11} \To \Delta_{11} &\cdots & \Gamma_{1r_{1}} \To \Delta_{1r_{1}}}
\quad
\infer[{R_{2}}]{\hearts \vec{\sva}, \Gamma_{R_{2}} \To \Delta_{R_{2}}}{\Gamma_{21} \To \Delta_{21} &\cdots & \Gamma_{2r_{2}} \To \Delta_{2r_{2}}}
\]
there is a rule $R$ = $\Gamma_{1} \To \Delta_{1}, \cdots, \Gamma_{k} \To \Delta_{k} / \Gamma_{R} \To \Delta_{R}  \in \Rules$ such that:
\[
\Cut(\Gamma_{11} \To \Delta_{11}, \ldots, \Gamma_{1r_{1}} \To \Delta_{1r_{1}},\Gamma_{21} \To \Delta_{21}, \ldots, \Gamma_{2r_{2}} \To \Delta_{2r_{2}} ) \rhd \{ \Gamma_{1} \To \Delta_{1}, \ldots, \Gamma_{k}\To \Delta_{k} \}
\]
and $\Gamma_{R} \To \Delta_{R} \rhd \Gamma_{R_{1}}, \Gamma_{R_{2}} \To \Delta_{R_{1}}, \Delta_{R_{2}}$.
\end{defi}

\noindent
Informally, absorption of cut and contraction of a rule set allows us 
to replace an application of cut or contraction to the conclusions 
of rules in $\Rules$ by a possibly different rule with possibly 
weaker premises and stronger conclusion. 
While these definitions are purely syntactic, a semantic characterisation has been given in
\cite{PattinsonSchroder10} in terms of \emph{one-step cut-free
completeness}. 
%\ednote{DP: does it make sense to spell this out? \todo{TL: why not? And please add reference, not sure which paper it is...}}
For many $\Lambda$-structures
%, 
including those for 
%probabilistic andgraded modal logic,
 the modal logic $K$ and the logic of (monotone)
neighbourhood frames, one-step cut-free complete rule sets are
known. In particular, these rule sets satisfy absorption of cut, 
contraction and congruence~\cite[\S~5]{PattinsonSchroder10}. 

\begin{lem}
\label{lem:absorb_cont_cut2absorb_mcut}
If the rule set $\Rules$ absorbs contraction and cut then $\Rules$ also absorbs multicut. 
\end{lem}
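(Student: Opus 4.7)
The plan is to reduce absorption of multicut to sequential applications of absorption of contraction followed by absorption of cut. Given rules $R_1, R_2 \in \Rules$ with conclusions $\Gamma_{R_1} \To \Delta_{R_1}, (\hearts\vec{\sva})^m$ and $(\hearts\vec{\sva})^n, \Gamma_{R_2} \To \Delta_{R_2}$, my first step will be to invoke absorption of contraction to collapse the $m$ copies of $\hearts\vec{\sva}$ in $R_1$'s conclusion, and the $n$ copies in $R_2$'s conclusion, down to a single copy each. This yields rules $R_1^c, R_2^c \in \Rules$ whose conclusions respectively cover the contracted sequents $\Gamma_{R_1} \To \Delta_{R_1}, \hearts\vec{\sva}$ and $\hearts\vec{\sva}, \Gamma_{R_2} \To \Delta_{R_2}$, while their premises are covered by $\Con$ applied to the premises of $R_1$ and $R_2$ respectively.

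I would then case-split on whether $\hearts\vec{\sva}$ actually survives in the conclusions of $R_1^c$ and $R_2^c$. If the conclusion of $R_1^c$ does not mention $\hearts\vec{\sva}$ on the right, then it is already a submultiset of $\Gamma_{R_1} \To \Delta_{R_1}$ and so covers the multicut target $\Gamma_{R_1}, \Gamma_{R_2} \To \Delta_{R_1}, \Delta_{R_2}$; set $R := R_1^c$, and proceed analogously when $\hearts\vec{\sva}$ fails to appear on the left in $R_2^c$'s conclusion. Otherwise both conclusions retain $\hearts\vec{\sva}$ in the required polarity, and absorption of cut applied to $R_1^c, R_2^c$ produces the required rule $R \in \Rules$, whose conclusion covers the target sequent by a direct multiset inclusion: after removing one occurrence of $\hearts\vec{\sva}$ from each side, what remains is contained in $\Gamma_{R_1} \cup \Gamma_{R_2} \To \Delta_{R_1} \cup \Delta_{R_2}$.

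The hard part will be verifying the premise condition: premises of $R$ need to be covered by $\MCut$ applied to the premises of $R_1$ and $R_2$, but the above two-stage construction only delivers coverage by $\Cut$ applied to $\Con$-covered premises of $R_1, R_2$. The pivotal observation bridging these is that applying $\rul{Mcut}$ to two sequents $p_1, p_2$ on a cut formula $\phi$ yields exactly the same sequent as first contracting all occurrences of $\phi$ in $p_1$ and $p_2$ down to single copies and then applying ordinary cut on $\phi$. Consequently, cuts performed on the $\phi$-contracted premises are captured by a single $\rul{Mcut}$ applied to the original premises, and since the $\rhd$-relation allows the covering sequent to be strictly smaller than the covered one, a careful multiset bookkeeping exploiting the asymmetry of $\rhd$ allows us to absorb any residual contractions on non-cut formulas that absorption of contraction may have introduced, completing the verification that each premise of $R$ is covered by $\MCut$ applied to the premises of $R_1, R_2$.
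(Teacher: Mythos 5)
The paper states this lemma without proof, so there is no official argument to compare yours against; I am assessing the proposal on its own terms. Your overall decomposition---absorb the contractions collapsing $(\hearts\vec{\sva})^m$ and $(\hearts\vec{\sva})^n$ to single occurrences, case-split on whether the cut formula survives in the resulting conclusions, then absorb the cut---is the natural one, and your verification of the conclusion condition $\Gamma_R \To \Delta_R \rhd \Gamma_{R_1},\Gamma_{R_2}\To\Delta_{R_1},\Delta_{R_2}$ is correct in both branches.

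The gap sits exactly where you say the hard part is, and your proposed repair does not work. Absorption of contraction only guarantees that each premise of $R_1^c$ \emph{contains} some element of $\Con(\text{premises of }R_1)$, i.e.\ is a weakening of a possibly properly contracted premise of $R_1$, and you have no control over which schematic variables get contracted there. The slack in $\rhd$ goes only in the weakening direction: a premise of the final rule $R$ must contain some element of $\MCut(\ldots)$, so surplus material is harmless but \emph{missing copies are fatal}; residual contractions produce missing copies, so ``exploiting the asymmetry of $\rhd$'' runs exactly the wrong way. Concretely, if $R_1$ has premise $\sva,\sva\To\svb$ and absorption of contraction hands you an $R_1^c$ with premise $\sva\To\svb$, which absorption of cut then cuts against a premise $\svb\To\svc$ coming from $R_2^c$ to yield a premise $\sva\To\svc$ of $R$, the only relevant element of $\MCut(\{\sva,\sva\To\svb,\ \svb\To\svc\})$ is $\sva,\sva\To\svc$, and $\sva\To\svc$ does not contain it. Your pivotal observation (multicut equals contraction of the cut formula followed by cut) is true, but it only accounts for contraction in the \emph{conclusions}; it says nothing about the contractions that absorption of contraction induces in the \emph{premises}, which are sequents of plain schematic variables and need not involve the cut formula at all. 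The degenerate case suffers the same defect: the premises of $R:=R_1^c$ are weakenings of contractions of premises of $R_1$ and need not contain any element of $\MCut(\ldots)$ whatsoever. What your construction actually delivers is coverage of the premises of $R$ by one cut applied to weakenings of \emph{contractions} of the original premises---a weaker property that would still suffice for case~(5) of Lemma~\ref{lem:hauptsatz}, since contraction and weakening are available cut-free, but it is not the condition that absorption of multicut asserts. To close the gap you must either show that the premise contractions can be avoided or commuted past the multicut, or else weaken the target condition.
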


\noindent By Theorem \ref{thm:cut-elim-strongly} and Lemma \ref{lem:absorb_cont_cut2absorb_mcut}, we obtain the following. 

\begin{cor}
Suppose that $\Rules$ absorbs contraction and cut. Then $\rul{Cut}$ is also admissible in $\Seq \Rules$.\end{cor}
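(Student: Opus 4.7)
The plan is straightforward: this corollary follows by concatenating Lemma \ref{lem:absorb_cont_cut2absorb_mcut} with Theorem \ref{thm:cut-elim-strongly}. First I would invoke the lemma to upgrade the assumed absorption of contraction and cut into absorption of multicut. Then I would invoke the theorem, whose conclusion already yields $\rul{Mcut}$-admissibility and hence a fortiori $\rul{Cut}$-admissibility in $\Seq\Rules$, since $\rul{Cut}$ is the special case $m=n=1$ of $\rul{Mcut}$. No further work is required at the level of the corollary itself.

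The genuinely substantive step is Lemma \ref{lem:absorb_cont_cut2absorb_mcut}, which the excerpt states without proof. To show that absorption of contraction and cut implies absorption of multicut, I would argue as follows. Fix rules $R_1, R_2 \in \Rules$ whose conclusions contain $m$ copies of $\hearts\vec{\sva}$ on the right and $n$ copies on the left respectively. First, collapse the duplications: by absorption of contraction applied to $R_1$, there is a rule $R_1' \in \Rules$ whose conclusion covers $\Gamma_{R_1}\To\Delta_{R_1},\hearts\vec{\sva}$ (one occurrence) and whose premises are covered by $\Con$ applied to the premises of $R_1$. Analogously obtain $R_2' \in \Rules$ collapsing the $n$ copies on the left. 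Second, apply absorption of cut to the pair $(R_1',R_2')$, yielding a rule $R \in \Rules$ whose conclusion covers $\Gamma_{R_1},\Gamma_{R_2} \To \Delta_{R_1},\Delta_{R_2}$ and whose premises are covered by $\Cut$ applied to the premises of $R_1'$ and $R_2'$.

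The main obstacle I anticipate is verifying the covering relation between the premises of $R$ and the multicuts of the premises of the \emph{original} pair $(R_1,R_2)$, i.e.\ showing
\[
\MCut(\{\Gamma_{1i}\To\Delta_{1i}\}_{i},\{\Gamma_{2j}\To\Delta_{2j}\}_{j})\;\rhd\;\{\Gamma_1\To\Delta_1,\dots,\Gamma_k\To\Delta_k\}
\]
where the right-hand side lists the premises of the constructed $R$. The point is that a single application of $\rul{Mcut}$ on a formula $\phi$ with multiplicities $(p,q)$ is equivalent to first contracting the $p$ copies (respectively $q$ copies) of $\phi$ down to single occurrences and then performing one ordinary $\rul{Cut}$ on $\phi$. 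Hence the chain ``$\Con$ on the original premises, then $\Cut$'' is pointwise absorbed by $\MCut$ on the original premises, and the required covering follows by transitivity of $\rhd$. Once the lemma is secured, the corollary is immediate by the appeal to Theorem \ref{thm:cut-elim-strongly} described above.
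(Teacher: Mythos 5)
Your proposal is correct and follows exactly the paper's own (one-line) argument: the corollary is obtained by combining Lemma~\ref{lem:absorb_cont_cut2absorb_mcut} with Theorem~\ref{thm:cut-elim-strongly}, with $\rul{Cut}$ being the special case $m=n=1$ of $\rul{Mcut}$. Your additional sketch of why absorption of contraction and cut yields absorption of multicut goes beyond the paper, which states that lemma without proof, and the idea of simulating one $\rul{Mcut}$ by contractions followed by a single $\rul{Cut}$ is the natural way to fill that gap.
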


\noindent As an immediate corollary, we obtain completeness of the cut-free
calculus assuming that $\Rules$ is \emph{strongly} one-step
complete:

\begin{cor}
Suppose that $\Rules$ is strongly one-step complete. % over a
%$\Lambda$-structure $T$. 
 Then $\models \Gamma \To \Delta$ iff
$\Seq\Rules \entails \Gamma \To \Delta$.
\end{cor}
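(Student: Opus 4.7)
The plan is to combine the cut-admissibility machinery already developed with the completeness result for the system with cut. The soundness direction, $\Seq\Rules \entails \Gamma \To \Delta$ implies $\models \Gamma \To \Delta$, is immediate: since $\Seq\Rules$ is a subsystem of $\Seq\Rules\Cut$, any derivation in $\Seq\Rules$ is also a derivation in $\Seq\Rules\Cut$, and the latter is sound by Corollary~\ref{cor:onestepcompl}. (Alternatively, soundness can be verified rule-by-rule; the logical, structural and equality rules are standard, while soundness of each $\Seq(R)$ follows directly from Proposition~\ref{prop:soundness_seq_rule}.)

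For the completeness direction, suppose $\models \Gamma \To \Delta$. Then by Corollary~\ref{cor:onestepcompl}, we have $\Seq\Rules\Cut \entails \Gamma \To \Delta$. The task is to convert this derivation into a cut-free one. By Theorem~\ref{thm:cut-elim-strongly}, it suffices to verify that $\Rules$ absorbs multicut. Now by Lemma~\ref{lem:absorb_cont_cut2absorb_mcut}, this is in turn implied by the conjunction of \emph{absorption of cut} and \emph{absorption of contraction}. Thus the whole task reduces to establishing these two absorption properties for any strongly one-step complete rule set.

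The key step is verifying absorption of cut and contraction from strong one-step completeness. This is exactly the content referenced in the paragraph immediately preceding the statement of the corollary: by the semantic characterisation of one-step cut-free completeness in~\cite{PattinsonSchroder10} (and the companion observation already invoked in the proof of Corollary~\ref{cor:onestepcompl} for the case of congruence absorption), any strongly one-step complete rule set can be chosen so as to absorb congruence, cut and contraction. More concretely: given two rules $R_1, R_2 \in \Rules$ with principal formulas of matching shape as in the definition of absorption of cut, the candidate conclusion $\Gamma_{R_1}, \Gamma_{R_2} \To \Delta_{R_1}, \Delta_{R_2}$ with premises drawn from $\Cut(\cdot)$ of the premises of $R_1, R_2$ is one-step sound; by strong one-step completeness, some rule in $\Rules$ must cover it in the sense of $\rhd$, yielding absorption. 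The argument for contraction is analogous.

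The anticipated main obstacle is thus not in the present corollary itself but in the appeal to the background result that strong one-step completeness forces the required absorption properties; once this is granted, the corollary follows by plugging Lemma~\ref{lem:absorb_cont_cut2absorb_mcut} and Theorem~\ref{thm:cut-elim-strongly} into Corollary~\ref{cor:onestepcompl}. In particular, given the chain $\Seq\Rules \subseteq \Seq\Rules\Cut$, $\Seq\Rules\Cut \entails \Gamma \To \Delta$ iff $\Seq\Rules \entails \Gamma \To \Delta$ (by cut admissibility), and $\Seq\Rules\Cut \entails \Gamma \To \Delta$ iff $\models \Gamma \To \Delta$ (by Corollary~\ref{cor:onestepcompl}), the equivalence $\models \Gamma \To \Delta$ iff $\Seq\Rules \entails \Gamma \To \Delta$ follows.
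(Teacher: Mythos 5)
Your proposal is correct and follows essentially the same route as the paper: the paper's proof likewise combines completeness of the system with cut (Corollary~\ref{cor:onestepcompl}) with cut elimination via Theorem~\ref{thm:cut-elim-strongly}, delegating the required absorption properties to Propositions~5.11 and~5.12 of~\cite{PattinsonSchroder10}. Your only looseness is the inline sketch that strong one-step completeness directly forces a \emph{single} rule of $\Rules$ to cover the candidate conclusion---the precise form of this is the one-step cut-free completeness content of the cited propositions---but the paper relies on exactly the same external results there, so this is not a gap relative to its own argument.
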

\begin{proof}
This follows from Theorem \ref{thm:cut-elim-strongly} with the help
of Proposition 5.11 and 5.12 of \cite{PattinsonSchroder10}, the
latter asserting precisely the absorption of cut and congruence.
\end{proof}

\noindent The situation is more complex in presence of bounded
operators where completeness of the Hilbert calculus is only
guaranteed in presence of \axiom{BdPL}, and completeness of the
associated sequent calculus relies on $\rul{\Pastek}$. The
difficulty in a proof of cut-elimination is a cut-end derivation
where a cut is performed on $x \hearts \lcomp y_1 \col \phi_1
\rcomp \dots \lcomp y_n \col \phi_n \rcomp$ which is introduced by
$\Pastek$ and a (one-step) rule where the same formula
is principal. We leave this as an open problem:

\begin{prob}
\label{prob:pastecut}
Is there a way to modify the rules of
$\Bound\Seq\Rules$ so that completeness with respect to
$\Bound\Hilb\Rules$ holds and cut is admissible?
\end{prob}

%%%%%%%%%%%%%%%%%%%%%%%%%%%%%%%%%%%%%%%%%%%%%%%%%%%%%%%%%%%%%%%%%%%%%%%%%%%%%%%%%%%%%%%%%%%%%%%%%%%%%%%%%%%%%%%%%%%%%%%%%%%%%%%%%%%%
%%%%%%%%%%%%%%%%%%%%%%%%%%%%%%%%%%%%%%%%%%%%%%%%%%%%%%%%%%%%%%%%%%%%%%%%%%%%%%%%%%%%%%%%%%%%%%%%%%%%%%%%%%%%%%%%%%%%%%%%%%%%%%%%%%
%%%%%%%%%%%%%%%%%%%%%%%%%%%%%%%%%%%%%%%%%%%%%%%%%%%%%%%%%%%%%%%%%%%%%%%%%%%%%%%%%%%%%%%%%%%%%%%%%%%%%%%%%%%%%%%%%%%%%%%%%%%%%%%%%
\section{Conclusions and Further Work} \label{sec:conclusions}

We have introduced \emph{coalgebraic predicate logic}, a natural
first-order formalism that incorporates coalgebraic modalities and
thus serves as an expressive language for coalgebras. As instances, it
subsumes both standard relational first-order logic and Chang's
first-order logic of neighbourhood systems~\cite{Chang73}; other
instances include a first-order logic of nonmonotone conditionals as well
as first-order logics of integer-weighted relations that include
weighted or (positive) Presburger modalities. We have provided the foundations for proof theory and model theory of  CPL. %We have shown
%completeness of two generic deduction systems, one phrased as a
%Hilbert system and the other as a sequent system. Moreover, we have
%developed the beginnings of a coalgebraic model theory.

In terms of future research, a promising avenue appears to be
coalgebraic finite model theory; in fact, the first result in this
direction is the existing finite version of the coalgebraic van
Benthem-Rosen
theorem~\cite{SchroderP10fossacs,LitakPSS12:icalp,SchroderPL15:jlc}.
It is worth observing that van Benthem-Rosen is a rare instance of a
model-theoretic characterization of a fragment of first-order
predicate logic that remains valid over finite models. The only other
major result of this type we are aware of is the characterization of
existential-positive formulas as exactly those preserved under
homomorphisms %, which was proved valid over finite models by Rossman
\cite{Rossman08:acm}. The result is relevant to constraint
satisfaction problems and to database theory, as existential-positive
formulas correspond to unions of conjunctive queries. Interestingly,
the proof of Rossman's result relies on Gaifman graphs, which also
play a central role in the proof of the coalgebraic Rosen theorem.% in
%\cite{SchroderP10fossacs}. 

Embedding modal operators into a first-order syntax opens up the
possibility of applying modalities to predicates of arity greater
than~$1$; operators of this type are found, e.g., in Halpern's
Type-$1$ probabilistic first-order logic~\cite{Halpern90}. We leave
the ramifications of this option to future investigation.

Possible directions in coalgebraic model theory over unrestricted
models include generalizations of standard results of classical model
theory like Beth definability or interpolation and the Keisler-Shelah
characterization theorem. % A Herbrand theorem could lead towards an
% investigation of logic programming in a general coalgebraic setting.

%Concerning the scope of completeness results: as already noted in Remark \ref{rem:omegarule} in at least some cases where ordinary completeness theorem fails, it should be possible to restore it by means of an $\omega$-rule. The question is how far such methods would take us. Is there any natural example of $\Lambda$-structure s.t. the set of sentences valid in all models based on it is not even recursively enumerable?

%We have already established that our formalism is a very natural
     %      correspondence language for coalgebraic modal logic.
   %    In the classical relational case, some of the most
   %    sophisticated results of modal metatheory are built upon the
   %    interplay between modal and predicate languages.
It remains to be seen which results of \emph{modal model theory} building upon the interplay between modal and predicate languages can be generalized. Specific potential examples include Sahlqvist-type results for suitably well-behaved structures and analogues of results by Fine (does
elementary generation imply canonicity, at least
wherever the coalgebraic J\'{o}nsson-Tarski theorem \cite{KupkeEA05} obtains?)\footnote{Recently, first results in this direction have been announced by Kentar\^o Yamamoto, UC Berkeley.} or Hodkinson \cite{Hodkinson2006:ndjfl}
(is there an algorithm generating a CML axiomatization
for CPL-definable classes of coalgebras?).  %A good overview of such results (apart from the relatively recent one by Hodkinson) can be found in any standard monograph such as \cite{BlackburnEA01}. 

Finally, a natural direction of investigation will be to study models
based on coalgebras for endofunctors on categories other than $\Set$
and corresponding variants of CPL with non-Boolean propositional
bases.

%%%%%%%%%%%%%%%%%%%%%%%%%%%%%%%%%%%%%%%%%%%%%%%%%%%%%%%%%%%%%%%%%%%%%%%%%%%%%%%%%%%%%%%%%%%%%%%%%%%%%%%%%%%%%%%%%%%%%%%%%%%
%%%%%%5%%%%%%%%%%%%%%%%%%%%%%%%%%%%%%%%%%%%%%%%%%%%%%%%%%%%%%%%%%%%%%%%%%%%%%%%%%%%%%%%%%%%%%%%%%%%%%%%%%%%%%%%%%%%%%%%%%%
%%%%%%%%%%%%%%%%%%%%%%%%%%%%%%%%%%%%%%%%%%%%%%%%%%%%%%%%%%%%%%%%%%%%%%%%%%%%%%%%%%%%%%%%%%%%%%%%%%%%%%%%%%%%%%%%%%%%%%%%%%

%\tlnew{don't forget to restore the dedication part of the acknowledgment}

\section*{Acknowledgment}

As usual with matters of (non)compactness and (in)completeness, our gratitude to Erwin R. Catesbeiana is unbounded. We are very obliged to a particularly astute referee who spotted and made us iron out a number of issues. %\takeout{
 We are also grateful to the editors of the Festschrift honoring Ji\v{r}\'i Ad\'amek for making us return to this material, and to  Ji\v{r}\'i himself for decades of broadening the frontiers  of category theory and coalgebra,  in computer science and elsewhere, without which there would be no reason for this Festschrift.%}

%\section*{Acknowledgment}

%The first author gratefully acknowledges the support of EPSRC grant EP/G041296/1. Thanks are also due to all the people who kindly discussed these ideas with us, in particular to Alexander Kurz and Drew Moshier.

%\bibliographystyle{splncs03}
%\bibliographystyle{\filesbibl IEEEtran}
\bibliographystyle{alpha}
\bibliography{coalfol,coalgml,dirk}

\end{document}